\def\showtableofcontents{1}
\newif\ifdraft
\draftfalse

\newif\ifnotes
\notesfalse

\documentclass[11pt,letterpaper]{article}

\ifdraft
\ifnotes
  \usepackage[draft,notes=true,later=true]{dtrt}
\else
  \usepackage[draft,notes=false,later=false]{dtrt}
\fi

\else

\ifnotes
  \usepackage[notes=true,later=true]{dtrt}
\else
  \usepackage[notes=false,later=false]{dtrt}
\fi

\fi 

\usepackage[T1]{fontenc} %
\usepackage{lmodern} %
\usepackage[utf8]{inputenc} %
\usepackage{cmap} %
\usepackage{multirow}
\usepackage{float}
\usepackage{amsmath,amsthm,amssymb,algpseudocode,algorithm,cryptocode}
\usepackage{thmtools}
\usepackage{bm} %
\usepackage{graphicx}
\usepackage{fullpage}
\usepackage{caption}
\usepackage{tikz}
\usetikzlibrary{angles,quotes,quantikz}
\usepackage{physics}
\usepackage{mathtools}
\usepackage[capitalize,nameinlink]{cleveref}
  \crefname{step}{Step}{Steps}
  \crefname{claim}{Claim}{Claims}
  \crefname{appendix}{Appendix}{Appendices}

\makeatletter
\g@addto@macro\appendix{%
  \crefalias{section}{appendix}%
}
\makeatother

\hypersetup{colorlinks={true},linkcolor={blue},urlcolor={red},citecolor={blue}}
\usepackage{xspace} %
\usepackage{paralist}
\usepackage{enumitem}
\usepackage{mdframed}
\usepackage{bbm}
\usepackage{xcolor}
\allowdisplaybreaks

\newtheorem{theorem}{Theorem}[section]
\newtheorem{proposition}[theorem]{Proposition}

\newtheorem{lemma}[theorem]{Lemma}

\newtheorem{corollary}[theorem]{Corollary}

\newtheorem{remark}[theorem]{Remark}
\newtheorem{question}[theorem]{Question}

\theoremstyle{definition}
\newtheorem{definition}[theorem]{Definition}

\theoremstyle{remark}



\def\poly{{\rm poly}}

\newcommand{\Win}{\mathrm{Win}}

\let\oldket\ket
\renewcommand{\ket}[1]{\oldket*{#1}}
\let\oldketbra\ketbra
\renewcommand{\ketbra}[1]{\oldketbra*{#1}}

\newcommand{\cA}{\mathcal{A}}

\newcommand{\wt}{\mathrm{wt}}

\newcommand{\Id}{\mathbf{I}}

\renewcommand{\epsilon}{\varepsilon}

\newcommand{\Var}{\mathsf{Var}}

\newcommand{\stminus}{\frac{\ket{0}-\ket{1}}{\sqrt{2}}}
\newcommand{\stplusi}{\frac{\ket{0}+i\ket{1}}{\sqrt{2}}}
\newcommand{\stminusi}{\frac{\ket{0}-i\ket{1}}{\sqrt{2}}}

\newcommand{\EPR}{\mathrm{EPR}}

\newcommand{\Fid}{\mathrm{F}}

\crefname{equation}{}{}

\usepackage{titling}

\title{Explicit Separations for One-Query Unitary Synthesis}
\author{}
\date{July 28, 2026}
\author{Fangqi Dong \\ Princeton \and Alex Lombardi \\ Princeton \and Fermi Ma \\ NYU}

\begin{document}

\maketitle
\thispagestyle{empty} 
\begin{abstract}
    The unitary synthesis problem (Aaronson-Kuperberg, CCC 2007) asks whether every $n$-qubit unitary $U$ is computable by $\poly(n)$-size quantum circuits relative to some classical oracle $f = f_U$ depending on $U$. Recently, it was proved (Lombardi-Ma-Wright, STOC 2024) that Haar-random unitaries cannot be efficiently synthesized by algorithms that make \emph{one} query (or $\poly(n)$ parallel queries) to an arbitrary classical oracle. 

    In this work, we prove several results about the hardness (and easiness!) of different variants of unitary synthesis. Our results include the following:

    \begin{itemize}
        \item One-query vs. two-query unitary synthesis: we prove one-query lower bounds for synthesizing random permutation unitaries $P\ket{x} = \ket{\pi(x)}$, as well as random alternating-basis phase unitaries $F_2 \cdot H^{\otimes n} \cdot F_1$. This gives one-query lower bounds for ``explicit'' families of unitaries that have efficient (even two-query) unitary synthesis algorithms. 
        \item Upper bound for complex phase unitaries: we also consider complex phase unitaries $\ket{x}\mapsto \alpha_x \ket{x}$, which (similarly to permutations) have a clean two-query synthesis algorithm with no obvious one-query algorithm. However, in this case, we prove an \emph{upper bound}: there are one-query algorithms (relative to binary phase oracles) that constant-approximate these unitaries in diamond distance.
    \end{itemize}

    \noindent In order to prove our one-query lower bounds, we depart from prior work by introducing and analyzing two new cryptographic games -- the \emph{oracle state search game} and the \emph{oracle Choi state game} -- that serve as sources of hardness for unitary synthesis. As compared to the pseudorandomness-based approach of prior work, our framework is mathematically simple, more flexible in what it can prove, and more accurately captures the hardness of synthesizing unitaries that are not ``fully random.'' As a bonus, we obtain a simplification of the state-of-the-art lower bound for general-purpose unitary synthesis. 

Finally, we also use the oracle state search game to prove a new hardness-of-approximation result for quantum programs (synthesizing unitaries relative to quantum advice) for phase unitaries, giving a sharper separation between one-query unitary synthesis and quantum programs. 
     
\end{abstract}

\ifnum\showtableofcontents=1
{
\newpage
\setcounter{tocdepth}{2}
\thispagestyle{empty}
\tableofcontents
\thispagestyle{empty}
}
\newpage
\fi

\setcounter{page}{1}
\section{Introduction}

The unitary synthesis problem, introduced by Aaronson and Kuperberg in 2007 \cite{CCC:AarKup07,aaronson-notes}, asks whether every $n$-qubit unitary $U$ is computable by $\poly(n)$-size quantum circuits relative to some classical oracle $f = f_U$ depending on $U$. Informally, this question asks: does the task of implementing an arbitrary unitary efficiently reduce to that of implementing Boolean functions? A negative answer to this question, as conjectured by \cite{aaronson-notes,STOC:LomMaWri24}, would have far-reaching implications for the fields of unitary complexity theory \cite{ITCS:BEMPQY26} --- necessitating the study of unitary complexity classes with no correspondence to classical complexity or computability theory --- and quantum cryptography \cite{TQC:Kretschmer21,STOC:LomMaWri24}, raising the possibility of computational quantum cryptography that does not rely on separating any traditional complexity classes. 

Despite the question's importance, progress on resolving it has been relatively limited \cite{CCC:AarKup07,aaronson-notes,TQC:Rosenthal22,CCC:INNRY22,SODA:Rosenthal24,STOC:LomMaWri24}, with two main results to date:

\begin{itemize}
    \item \textbf{State synthesis}, or the task of constructing an arbitrary $n$-qubit quantum state $\ket{\psi}$, is \textbf{easy} relative to a ($\ket{\psi}$-dependent) classical oracle \cite{aaronson-notes,CCC:INNRY22,SODA:Rosenthal24}; in fact, there are algorithms making only a single query to a classical oracle \cite{SODA:Rosenthal24}.
    \item Full-fledged \textbf{unitary synthesis} is impossible for algorithms that make a \textbf{single query} to a classical oracle, provided that the input length of the query is $o(2^n)$ \cite{CCC:AarKup07,STOC:LomMaWri24}. 
\end{itemize}
While lower bounds against single-query algorithms may appear limited at first glance, the freedom to query an \emph{arbitrary} classical oracle makes this class of algorithms very powerful. Aside from solving state synthesis, they also easily simulate algorithms making a polynomial number of \emph{parallel} quantum queries to any classical oracle, including on entangled inputs \cite{Yue22,STOC:LomMaWri24}. 

Which families of unitaries are subject to the \cite{STOC:LomMaWri24} one-query lower bound? Roughly speaking, the answer to keep in mind is ``Haar-random unitaries,'' or at least ``unitaries with very large ($2^{\omega(n)}$) randomness complexity.''\footnote{Technically, \cite{STOC:LomMaWri24} study ``reflections about a highly random subspace $S \subset \mathbb C^{2^n}$,'' where $S = \mathrm{Span}\{\ket{\psi_1}, \hdots, \ket{\psi_K}\}$ for i.i.d. states $\ket{\psi_1}, \hdots \ket{\psi_K}$ that are either Haar-random or uniform binary phase states. The former setting also rules out synthesizing Haar-random unitaries.} These unitaries are so (apparently) hard to compute that synthesizing them could require as many as $2^{n/2}$ sequential classical oracle queries \cite{TQC:Rosenthal22}.

In contrast, in this work, we investigate the following question.

\begin{center}
    Can we prove lower bounds for synthesizing \emph{explicit} families of unitaries?
\end{center}
By ``explicit'', we mean families of unitaries that have $\poly(n)$-query synthesis algorithms, so they are explicit (in the usual sense) relative to some classical oracle.\footnote{Of course, fully explicit unitaries have $\poly(n)$-size quantum circuits, i.e., trivial $0$-query synthesis algorithms.} Proving such lower bounds would separate efficient unitary synthesis from $1$-query unitary synthesis, and therefore demonstrate the power of \emph{adaptivity} in unitary synthesis algorithms. 

More speculatively, new one-query lower bounds might be useful for resolving the full unitary synthesis question. The intuition is as follows. The \cite{STOC:LomMaWri24} bound says that a Haar-random unitary cannot be synthesized in one query. But can a Haar-random unitary be synthesized in, say, two queries? A natural idea is to try to invoke the \cite{STOC:LomMaWri24} one-query lower bound twice. The problem, however, is that after the algorithm has performed one query, the operation it must implement on the second query may be ``less random'' than a fully Haar-random unitary, since the algorithm has already made progress toward the target Haar-random unitary. Thus, it seems plausible that proving lower bounds against ``less-than-Haar-random'' unitaries could be a stepping stone toward an adaptive query lower bound.

\subsection{This work}
In this work, we answer this question by studying three natural explicit families of unitaries described below.

\begin{itemize}
    \item \textbf{Permutation unitaries}: for any permutation $\pi$ of $\{0,1\}^n$, we consider the unitary $P = P_\pi$ described by 
    \[ P\ket{x} = \ket{\pi(x)}
    \]
    that applies the permutation $\pi$ to the standard basis.
    \item \textbf{Alternating-basis binary phase unitaries}: for boolean functions $f_1, \hdots, f_t: \{0,1\}^n \rightarrow \{0,1\}$, we study unitaries of the form
    \[ F_t \cdot H^{\otimes n} \cdot \hdots \cdot F_2 \cdot H^{\otimes n} \cdot F_1,
    \]
    where $F_i \ket{x} = (-1)^{f_i(x)} \ket{x}$ and $H^{\otimes n}$ is the $n$-qubit Hadamard transform.
    \item \textbf{Complex phase unitaries}: for any function $\alpha: \{0,1\}^n \rightarrow S^1 \subset \mathbb C$ mapping strings to unit-norm complex numbers, we consider the unitary described by
    \[ F \ket{x} = \alpha(x) \ket{x}.
    \]
\end{itemize}
It is easy to see that all of these unitary families are explicit. The second family has a trivial $t$-query synthesis algorithm, as querying a binary phase oracle is algorithmically equivalent to querying a Boolean function. The first and third families both have \emph{two-query} algorithms. For permutations, compute
\[ \ket{x}\ket{0}\overset{\pi}{\mapsto} \ket{x}\ket{\pi(x)} \overset{\pi^{-1}}{\mapsto} \ket{0} \ket{\pi(x)} \overset{\mathrm{SWAP}}{\mapsto} \ket{\pi(x)} \ket{0},
\]
where the first step queries $\pi$ (XORing the answer onto an auxiliary register initialized to $\ket{0}$) and the second step queries $\pi^{-1}$ to erase the original input. For complex phase unitaries, compute
\[ \ket{x} \ket{0} \overset{\theta_{1:n}}{\mapsto} \ket{x} \ket{\theta(x)_{1\hdots n}} \overset{\text{ctrl-}R}{\mapsto} e^{i\cdot \theta(x)_{1:n}} \ket{x} \ket{\theta(x)_{1: n}} \overset{\theta_{1:n}}{\mapsto} e^{i\cdot \theta(x)_{1: n}} \ket{x} \ket{0} \approx \alpha(x) \ket{x} \ket{0},
\]
where $\theta(x) \in [0, 2\pi)$ is the angle satisfying $e^{i \cdot \theta(x)} = \alpha(x)$ and $\theta(x)_{1:n}$ denotes its $n$-bit truncation. These are both two-query algorithms by the previously mentioned observation about simulating parallel queries.

We ask whether there are one-query synthesis algorithms for all three of these unitary families. Indeed, all three variants capture natural questions about the nature of quantum query algorithms:

\begin{itemize}
    \item For $F_2 H F_1$, this is asking whether inserting a Hadamard change of basis between two phase queries makes them ``inherently sequential.''
    \item For $P$ as well as complex phase $F$, this is asking whether the sequential process of ``compute-then-uncompute'' can be shortcut through the use of a cleverly chosen classical oracle. 
    \item Finally, for complex phase $F$, this relates to another question about the \cite{STOC:LomMaWri24} technique for one-query lower bounds: their approach necessarily rules out one-query synthesis algorithms relative to arbitrary phase oracles, and thus intrinsically cannot separate complex phase unitaries from binary phase unitaries. Can they be separated in some other way?
\end{itemize}

\subsection{Our results}
We prove several results on the synthesis of these three unitary families. For our main results, we prove one-query lower bounds for synthesizing permutation unitaries as well as unitaries of the form $F_2 H F_1$.

\begin{theorem}[informal, see \cref{thm:perm-search-main}]\label{thm:main-permutation-intro}
    There is no efficient one-query unitary synthesis algorithm for random $n$-qubit permutation unitaries. 
\end{theorem}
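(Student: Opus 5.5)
The plan is to reduce one-query synthesis of a random permutation unitary $P_\pi$ to winning an \emph{oracle state search game}, and then show that no efficient adversary wins that game. Fix a one-query algorithm $A^{f}$ with the standard structure: it applies $V_0$ to the input and workspace, makes one query to $O_f$ on some register, and then applies $V_1$. Suppose that for a $1-o(1)$ fraction of permutations $\pi$ there is an oracle $f_\pi$ with $A^{f_\pi}\approx P_\pi$ in diamond distance.

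\textbf{Reduction step.} Run the algorithm on half of a maximally entangled state. This produces, up to small error, the Choi state
\[
\ket{\Phi_\pi}=2^{-n/2}\sum_x \ket{x}\ket{\pi(x)}.
\]
The key observation is that the state just before the query, $V_0\ket{\EPR}$, does not depend on $\pi$. So all $\pi$-dependence must pass through the single query.

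Now write the pre-query state in the query basis as $\sum_q \beta_q \ket{q}\ket{\phi_q}$. After the query, each branch $q$ carries only the bit $f_\pi(q)$, so the post-query state lies in an $\{0,1\}$-labelled family of states indexed by the query string. The final state $V_1(\cdot)$ must then be close to $\ket{\Phi_\pi}$. Since the oracle could give the algorithm only $|\mathrm{query\ length}|$ bits of information per branch, I plan to turn this into an information or search statement.

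\textbf{Search game step.} Define the game as follows. A challenger samples $\pi$, and the adversary, given oracle access to a classical function that it chooses depending on $\pi$, must output a pair $(x,\pi(x))$ for a \emph{uniformly random} $x$ that is not under its control. From a one-query synthesizer we get such an adversary. It measures the first register of the (approximate) Choi state to get a uniform $x$, and the second register then yields $\pi(x)$.

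The crucial point is that measuring the input register first commutes with the computation. Equivalently, we can feed $\ket{x}$ for a random $x$ and measure. A single classical query then has to "locate" $\pi(x)$ among $2^n$ possibilities while the query register was prepared independently of $\pi$.

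\textbf{Main obstacle.} The hard step is showing that one query cannot do this when the oracle is arbitrary and depends on $\pi$. A counting argument over oracles fails, because $f_\pi$ can encode all of $\pi$.

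What I would try first is to exploit the fact that the query input distribution $\{|\beta_q|^2\}$ is fixed independently of $\pi$. I would then bound the fidelity of the output with $\ket{\Phi_\pi}$ by a sum over branches of overlaps $\langle \Phi_\pi | V_1 |q, f(q), \phi_q\rangle$. Next I would apply Cauchy--Schwarz and average over $\pi$, using that $\mathbb{E}_\pi \ketbra{\Phi_\pi}$ has small operator norm on the $\pi$-independent span of the branches. Concretely, each branch state is one of two fixed vectors, so $\pi$ must be chosen from a set of $2^{2^{m}}$ configurations that are reachable. Here $m$ is the number of relevant query branches, which is bounded by a polynomial-size circuit's effective query dimension only if the query is short; otherwise I would use the fact that $\ket{\Phi_\pi}$ for random $\pi$ has overlap at most about $2^{-n}$ on average with any fixed state.

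Combining these via a union/averaging bound over the two choices per branch, with a careful grouping by Schmidt rank, should give success probability $\ll 1$ for $\poly(n)$-size algorithms. This contradicts the assumed synthesis guarantee.
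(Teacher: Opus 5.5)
Your reduction lands in a game that is trivially winnable, so no lower bound can come out of it. After you measure the first half of the Choi state (equivalently, feed a classical $\ket{x}$), the remaining task is just to compute $\pi(x)$ from $x$. One query to the Boolean function $g_\pi(x,y)=y\cdot\pi(x)$, conjugated by Hadamards on the $y$ register, maps $\ket{x}\ket{0^n}\mapsto\ket{x}\ket{\pi(x)}$ exactly. The adversary then outputs $(x,\pi(x))$ with probability $1$.

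So the step you call the main obstacle is an attempt to prove a false statement. Your remark that measuring the input register ``commutes with the computation'' is exactly what kills the argument. It discards the requirement to erase $x$ coherently, which is the entire source of hardness for permutations; this is why the two-query algorithm in the introduction spends its second query on $\pi^{-1}$. For the same reason, the computational-basis family $P\ket{k}=\ket{\pi(k)}$ is useless as a challenge family.

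Your closing sketch does not repair this. A counting or union bound over oracles is hopeless: there are $2^{2^m}$ oracles on $m$-bit queries, while $\pi$ carries only about $n2^n$ bits of randomness. And the small average overlap of $\ket{\Phi_\pi}$ with a \emph{fixed} state says nothing once the post-query state may depend on $f_\pi$.

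Two ideas are missing: a challenge family that preserves the coherence requirement, and a device that removes the maximization over $f$.

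For the first, the paper uses the conjugate-basis states $PH\ket{k}=N^{-1/2}\sum_x(-1)^{k\cdot\pi^{-1}(x)}\ket{x}$ with $k\neq 0$. Synthesizing $P^{-1}$ (which is distributed like $P$) with aux-input correctness $\eta$ wins this search game with probability $\eta$. Alternatively, keeping your Choi state unmeasured and asking the adversary to rotate it back to the EPR state --- the paper's oracle Choi state game --- would also retain the hardness.

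For the second, write $V\ket{\psi_{\pi,k}}=D_{\pi,k}\ket{\wt_V}$ with a $k$-independent weight vector. This is available because $\mathbb{E}_\pi PH\ketbra{k}HP^{-1}$ is maximally mixed on $\ket{+_N}^\perp$ for every $k\neq 0$. Since $O_f$ and $D_{\pi,k}$ are both diagonal, they commute, so the oracle only rotates a fixed unit vector. Hence the win probability is at most $\norm{M_\pi}^2$ with $M_\pi=K^{-1/2}\sum_k\Pi_kD_{\pi,k}$, uniformly in $f$.

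Finally, $M_\pi=\sum_xA_{x,\pi(x)}$ is a combinatorial matrix sum with $\norm{A_{x,y}}\le K^{-1/2}$, variance parameter at most $1/K$, and zero total sum (this is where $k=0$ must be excluded). The Bernstein inequality for combinatorial sums then gives $\mathbb{E}_\pi\norm{M_\pi}^2=O(\log^2M\log^2K/K)$.
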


\begin{theorem}[informal, see \cref{thm:f2hf1h-search-main}]\label{thm:main-f2hf1-intro}
    There is no efficient one-query unitary synthesis algorithm for unitaries of the form $F_2 H F_1$ for random $F_1, F_2$. 
\end{theorem}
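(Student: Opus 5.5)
Suppose, toward a contradiction, that a one-query algorithm synthesizes $F_2 H^{\otimes n} F_1$. We would extract from it a solver for a cryptographic game that is information-theoretically hard, in the spirit of the abstract's ``oracle Choi state game.''

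\textbf{Setup.} Fix uniformly random $f_1,f_2$ and let $U = F_2 H^{\otimes n} F_1$. Suppose there is a $\poly(n)$-size algorithm $A^{g}$ that makes one query to a classical oracle $g = g_{f_1,f_2}$, which may depend arbitrarily on $f_1,f_2$. Suppose further that $A^g$ approximates $U$ in diamond distance with noticeable advantage over trivial.

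\textbf{Step 1: prepare the Choi state.} Apply $A^g$ to half of $n$ EPR pairs. This yields, in one query, a state close to the Choi state
\[
\ket{\Phi_U} = 2^{-n/2}\sum_{x,y} (-1)^{f_1(x)+f_2(y)+x\cdot y}\ket{x}\ket{y}.
\]
Write the algorithm as $\ket{\phi_0}$, then a query to $g$, then a fixed unitary $V$. Everything after the query is $g$-independent, so apply $V^\dagger$ to the Choi state. The conclusion is that a single classical-oracle query maps a $g$-independent input state to $V^\dagger\ket{\Phi_U}\otimes\ket{\mathrm{junk}}$.

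\textbf{Step 2: analyze the query.} A classical oracle query acts diagonally in the query basis, $\ket{z}\ket{b}\mapsto\ket{z}\ket{b\oplus g(z)}$. So the post-query state lies in the span of at most $2^{m}$ terms whose coefficients are fixed by the input amplitudes, where $m=\poly(n)$ is the answer length. Combined with the query-length bound, this gives the key structural fact. The target $V^\dagger\ket{\Phi_U}$ must have large overlap with a state of the form $\sum_z \alpha_z\ket{z}\ket{g(z)}$, where the amplitudes $\alpha_z$ are independent of $(f_1,f_2)$. Equivalently, the $f$-dependence of the target must be ``classically encodable per branch.''

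\textbf{Step 3: counting / second-moment argument.} Over random $f_1,f_2$, compute the expected squared overlap between $V^\dagger\ket{\Phi_{U}}$ and the best state of this form. Here the alternating structure matters. For a single phase $F_1$, the Choi state is a binary phase state, which is easy in one query; that is Rosenthal-style state synthesis. The state $\ket{\Phi_U}$ is also just a phase state over $(x,y)$, with phase $f_1(x)+f_2(y)+x\cdot y$, so Choi-state preparation alone is not hard. The hardness must therefore come from requiring that the \emph{same} one-query algorithm act correctly on inputs other than the maximally entangled one. For instance, it must map $H^{\otimes n}$-basis inputs correctly, which forces coherent use of both $f_1$ on inputs and $f_2$ on outputs.

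So I would instead use a game with two challenges. The first is computational-basis inputs $\ket{x}$, which require producing $(-1)^{f_1(x)}F_2H\ket{x}$, a phase state in $f_2$ with a global sign depending on $f_1(x)$. The second is superposition inputs, which require relative phases $f_1(x)$. The game asks the adversary, given one query, to output $(-1)^{f_1(x)}$-dependent relative information across a superposition of $x$ while simultaneously preparing an $f_2$-phase state.

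\textbf{Main obstacle.} The step I expect to be hardest is bounding the success probability of this game. The natural tool is a ``state search'' style bound. The query output must agree with $\ket{x}\otimes$(phase state of $f_2$) across many $x$ coherently. The phase state of $f_2$ carries $2^n$ bits of entropy while $g(z)$ on each branch has only $\poly(n)$ bits, so the branches must share the $f_2$ information. Coherence across $x$ then forces the query register to be nearly independent of $x$, which destroys the $f_1(x)$ phases.

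I would quantify this tension with a fidelity bound of the form
\[
\mathbb{E}_{f_1,f_2}\,\Fid \le 2^{-\Omega(n)} + \poly(n)\cdot 2^{m-2^n}.
\]
To get it, I would average over $f_1$ first using pairwise independence of the signs, and then union bound over possible oracle answers.
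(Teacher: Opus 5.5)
There is a genuine gap. The proposal never reaches a task that a synthesizer would solve and that you can prove hard, and it has no mechanism for handling an arbitrary oracle queried in superposition.

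\textbf{The dismissal of Steps 1--2 is wrong.} A binary phase state like $\ket{\Phi_U}$ is easy only for a preparer who controls both halves. In Step 1 the algorithm acts only on the $A$ half plus workspace; the reference half is a fixed input it never touches, and the junk must decouple from it. Because the reference is untouched, you may apply $\overline{U}$ to the reference in both input and target without changing the success probability. This turns ``map $\ket{\Psi_{\EPR}}$ to $(\Id\otimes U)\ket{\Psi_{\EPR}}$'' into ``map $(\Id\otimes U^\dagger)\ket{\Psi_{\EPR}}$ back to $\ket{\Psi_{\EPR}}$.'' That is exactly the oracle Choi state game of \cref{def:epr-state-game} for $\{F_1HF_2\}$, which is identically distributed to $\{F_2HF_1\}$ and is proved hard in \cref{cor:fhf-epr}. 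So, contrary to your Step 3, the maximally entangled input already carries the hardness (Haar-average correctness is an affine function of the Choi-game value). Meanwhile the ``two-challenge game'' you pivot to is never defined, and no reduction to it is given.

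\textbf{The paper's main route.} \cref{thm:f2hf1h-search-main} uses the oracle state search game on $\{F_2HF_1H\ket{k}\}_{k\in[K]}$. A synthesizer for $(F_2HF_1H)^\dagger=HF_1HF_2$ wins it with probability at least its aux-input correctness, and this is the same family up to a free $H$ and swapping $f_1,f_2$. Your instinct about Hadamard-basis inputs is exactly why the extra $H$ is there: $F_2HF_1\ket{k}=\pm F_2H\ket{k}$ is recognizable with one query to $f_2$.

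\textbf{Your hardness argument does not work.} The entropy heuristic finds no obstruction. $F_2H\ket{x}$ carries $2^n$ bits of $f_2$-entropy, yet one query produces it from $\ket{x}$ while returning a single bit $f_2(y)$ to each branch $y$. So nothing forces branches to ``share'' $f_2$, or the query register to forget $x$. A union bound over oracles or their answers is also hopeless. The oracle is an arbitrary function of $(f_1,f_2)$ with an exponentially long truth table, queried on exponentially many branches at once, and whenever one oracle succeeds, vastly many do. Your displayed bound with its $2^{m-2^n}$ term therefore has no support.

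\textbf{The missing idea is how the paper removes the oracle entirely.}
\begin{itemize}
\item By \cref{lem:weight-vector-decomposition}, $V\ket{\psi_{R,k}}=D_{R,k}\ket{\wt_V}$, with $D_{R,k}$ diagonal and $\ket{\wt_V}$ independent of $k$; this uses that every challenge state averages to $\Id/N$.
\item The phase oracle is diagonal in the same basis, so it commutes with $D_{R,k}$ and merely replaces $\ket{\wt_V}$ by another unit vector.
\item Orthogonality of the $\Pi_k$ then gives $\Win(\cA\mid R)\le\|M_R\|^2$ for the oracle-free matrix $M_R=K^{-1/2}\sum_k\Pi_kD_{R,k}=\sum_x(-1)^{f_2(x)}Z_x$. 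Here $Z_x=\sum_k\alpha_{k+x}B_{k,x}$, $\alpha_u=N^{-1/2}\sum_y(-1)^{f_1(y)+y\cdot u}$, and the $B_{k,x}$ are fixed matrices.
\item Conditioned on $f_1$, $M_R$ is a matrix Rademacher series in $f_2$. Its first variance term is $\|\mathbb{E}_{f_2}M_RM_R^\dagger\|=\|N_{f_1}\|^2$, where $N_{f_1}=\sum_y(-1)^{f_1(y)}Q_y$ is a second Rademacher series, now in $f_1$.
\item Its second variance term satisfies $\|\mathbb{E}_{f_2}M_R^\dagger M_R\|\le\max_u|\alpha_u|^2/K$, using $B_{k_1,x}^\dagger B_{k_2,x}=0$ for $k_1\neq k_2$. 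This is $O(\log N/K)$ in expectation by Hoeffding.
\end{itemize}
Together these yield $O(\log M\log(MN)/K)$. Nothing in your proposal plays the role of this spectral relaxation or the two-stage concentration.
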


These results both demonstrate separations between the power of one- and two-query unitary synthesis algorithms. Moreover, \cref{thm:main-f2hf1-intro} easily extends to one-query lower bounds for unitaries $F_t H \hdots F_2 H F_1$ with more alternations.

\begin{theorem}[informal, see \cref{cor:f2hf1-t-case-search-main}]
    There is no efficient one-query unitary synthesis algorithm for unitaries of the form $F_t H \hdots F_2 H F_1$ for random $F_1, F_2, \hdots, F_t$. 
\end{theorem}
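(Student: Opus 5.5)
The plan is to reduce the $t$-alternation case to the two-alternation lower bound of \cref{thm:main-f2hf1-intro} by a simulation argument. Suppose, for contradiction, that $\mathcal A$ is an efficient one-query algorithm. Relative to a classical oracle $g = g_{F_1,\dots,F_t}$, it approximately synthesizes $U_t = F_t H F_{t-1} H \cdots H F_1$ for random $F_1,\dots,F_t$ with non-negligible advantage. We will build a one-query algorithm $\mathcal B$ for $F_2 H F_1$.

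\textbf{Construction of $\mathcal B$.} Fix the "extra" phases $F_3,\dots,F_t$ to be independent uniformly random binary phase functions. They are sampled (non-uniformly, or hardwired after averaging) as part of the reduction. Then $F_2 H F_1 = (F_t H \cdots H F_3 H)^{-1}\cdot U_t$ whenever the first two phases of $U_t$ are set to $F_1,F_2$. The only obstacle to applying this directly is that $\mathcal B$ must implement $(F_t H \cdots F_3 H)^{-1} = H F_3 H \cdots H F_t$ without extra queries. Since $F_3,\dots,F_t$ are fixed and known to $\mathcal B$, we handle them as follows. In the query model of \cref{thm:f2hf1-search-main}, the lower bound is stated against algorithms that may depend on arbitrary (even inefficient) advice independent of $F_1,F_2$. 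Alternatively, we absorb them by a hybrid over the choice of which index is the "free pair." The cleanest route is an averaging argument. Since $\mathcal A$ succeeds on average over all $(F_1,\dots,F_t)$, there is a fixing of $F_3,\dots,F_t$ for which $\mathcal A$ succeeds on average over random $(F_1,F_2)$ with at least the same advantage. $\mathcal B$ takes oracle $g' := g_{F_1,F_2,F_3^*,\dots,F_t^*}$, which is a legitimate classical oracle depending only on the target $F_2HF_1$. $\mathcal B$ runs $\mathcal A^{g'}$ and then applies the fixed unitary $H F_3^* H \cdots H F_t^*$.

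\textbf{The efficiency issue.} This last unitary need not be efficient, and that is the main obstacle. I would first try to check whether the lower bound \cref{thm:f2hf1-search-main} (via the oracle Choi state game) is actually robust to post-processing by an arbitrary fixed unitary independent of $F_1,F_2$. This is plausible, since the hardness comes from the game's information-theoretic search structure: distinguishing or producing a Choi state is not helped by a fixed unitary. If so, we are done. If not, I would reorder: take $F_1,F_2$ as the last two phases, $F_t H F_{t-1}$, and prepend the fixed prefix. Either way, a fixed unitary appears. The fallback is to reprove the Choi-state game bound directly for $U_t$, arguing that the game for $F_2HF_1$ embeds $U_t$'s Choi state via the fixed, $F_1,F_2$-independent map.

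Finally, the diamond-distance (or average-case) approximation guarantee transfers, because composing with a fixed unitary preserves distance. Hence $\mathcal B$ achieves the same approximation, contradicting \cref{thm:f2hf1-search-main}.
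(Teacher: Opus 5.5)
Your skeleton matches the paper's. You condition on $F_3,\dots,F_t$ (the paper does this by pulling $\mathbb{E}_{f_3,\dots,f_t}$ outside the supremum over adversaries), and you fold the resulting fixed unitary into the adversary.

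\textbf{The gap.} You never establish the one fact that makes the folding legitimate. You call the inefficiency of $W = HF_3^*H\cdots HF_t^*$ ``the main obstacle,'' end with ``if so, we are done,'' and offer fallbacks. The missing observation is that \cref{thm:f2hf1h-search-main} is not a statement about efficient adversaries at all. It bounds $\mathbb{E}_{f_1,f_2}[\Win(\cA\mid f_1,f_2)]$ for \emph{every} one-query adversary given by an arbitrary isometry $V$ and an arbitrary projective measurement $\{\Pi_k\}$. The only resource parameter in the bound is the workspace dimension $M$. Appending any fixed unitary on the output register, efficient or not, gives another one-query circuit with the same $M$. In the search formulation it merely replaces $\{\Pi_k\}$ by the conjugated measurement, so the bound applies verbatim. ``Efficient'' in the informal statement enters only through $\log M=\poly(n)$ (polynomial query length), never through gate count.

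Your stated reason (``distinguishing or producing a Choi state is not helped by a fixed unitary'') is not the operative one. The operative one is the universal quantifier over $(V,\{\Pi_k\})$. Once you record this, your conditional becomes unconditional and the fallbacks can be dropped.

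\textbf{Comparison with the paper.} With that fix, your argument and the paper's differ only cosmetically. The paper works in the search game and \emph{pre}-composes. It writes $F_tH\cdots F_1H\ket{k}=U_{[3:t]}\,F_2HF_1H\ket{k}$, where $U_{[3:t]}=F_tH\cdots F_3H$ depends only on $f_3,\dots,f_t$. For each fixing of these, an adversary with isometry $V$ on the long family is the adversary with isometry $VU_{[3:t]}$ on the two-phase family. This also gives the corollary's quantitative bound directly. You \emph{post}-compose at the level of channels. That is equally valid, since diamond distance and aux-input fidelity are invariant under unitary post-processing.

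\textbf{Bookkeeping.} First, \cref{thm:f2hf1h-search-main} is proved via the search game, not the Choi game; the Choi analogue is in \cref{sec:choi-hardness}. Second, to derive your contradiction you must route through \cref{prop:search-implies-synthesis-hardness}, plus \cref{prop:diamond-vs-correctness} if you start from diamond distance. Note that the family $F_2HF_1H\ket{k}$ corresponds to synthesizing $HF_1HF_2$. That is $F_2HF_1$ up to swapping the roles of $F_1,F_2$ and one $H$ gate.
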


We remark that in general, it is natural to ask whether more $FH$ alternations make the unitary harder to synthesize. Most aggressively, one could ask:

\begin{question}
    Does synthesizing $F_t H \hdots F_2 H F_1$ require $t$ sequential queries?
\end{question}

We prove this for $t=2$. A positive answer to this question for all $t = \poly(n)$ would prove the unitary synthesis conjecture.

\paragraph{Extension to other interleaving unitaries.} In \cref{sec:choi-hardness}, we extend \cref{thm:main-f2hf1-intro} to the case of alternations $F_2 U_0 F_1$ for a fixed unitary $U_0$ (\cref{thm:fuf-epr}). Of course, if $U_0$ is close to the identity then such unitaries can be approximately synthesized in one query. On the other hand, we prove that if all of the entries of $U_0$ are small (for example, if $U_0$ is a tensor power of any one-qubit unitary with all four entries bounded away from $0$), such unitaries cannot be synthesized in one query. We refer the reader to \cref{sec:choi-hardness} for more details. 

\paragraph{Upper bound for phase unitaries.} On the other hand, we give a constant-factor \emph{unitary synthesis approximation algorithm} in the case of (non-Boolean) phase unitaries!

\begin{theorem}[informal, see \cref{thm:diagonal-positive-main}]
\label{thm:diagonal-positive-main-intro}
There is a $\Omega(1)$-approximate (in diamond distance, see \cref{def:unitary-synthesis}) one-query unitary synthesis algorithm for arbitrary diagonal phase unitaries.
\end{theorem}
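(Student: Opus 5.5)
The approach is to implement $F$ coherently, up to a constant ``shrinkage factor,'' using one query to a binary phase oracle on an enlarged register. I will then argue that the leftover garbage costs only a constant (less than maximal) amount of diamond distance.

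Write $\alpha(x)=e^{i\theta(x)}$. Fix a parameter $m$, which will be large but polynomial in $n$, and let $\omega=e^{2\pi i/m}$. Define the Boolean oracle
\[ g(x,r) = \mathbbm{1}\bigl[\cos(\theta(x)-2\pi r/m) < 0\bigr], \qquad r\in\{0,\dots,m-1\}, \]
so that the phase $(-1)^{g(x,r)}=\mathrm{sign}\cos(\theta(x)-2\pi r/m)$.

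The algorithm has three steps.

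\begin{enumerate}
\item Prepare the ancilla in $\ket{\psi}=m^{-1/2}\sum_r\ket{r}$.
\item Make the single query $\ket{x}\ket{r}\mapsto(-1)^{g(x,r)}\ket{x}\ket{r}$. This maps $\ket{x}\ket{\psi}$ to $\ket{x}\ket{\phi_x}$, where $\ket{\phi_x}=m^{-1/2}\sum_r(-1)^{g(x,r)}\ket{r}$.
\item Apply a fixed (oracle-independent) unitary $V$ on the ancilla with $V^\dagger\ket{0}=\ket{\chi}:=m^{-1/2}\sum_r\omega^{-r}\ket{r}$, for example a QFT. Then trace out the ancilla.
\end{enumerate}

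The key computation is the amplitude on $\ket{x}\ket{0}$:
\[ c(x)=\langle\chi|\phi_x\rangle=\frac1m\sum_r\omega^{r}\,\mathrm{sign}\cos\bigl(\theta(x)-2\pi r/m\bigr). \]
This is a Riemann sum for $\frac{1}{2\pi}\int_0^{2\pi}e^{i\varphi}\,\mathrm{sign}\cos(\varphi-\theta)\,d\varphi$. Substituting $\varphi=\theta+\eta$ shows the integral equals $e^{i\theta}\cdot\frac{1}{2\pi}\int|\cos\eta|\,d\eta=\frac{2}{\pi}e^{i\theta}$. Hence $c(x)=\frac{2}{\pi}\alpha(x)+O(1/m)$ uniformly in $x$. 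The essential point is that the proportionality constant $2/\pi$ is real and independent of $x$, so the ``good branch'' is coherent and proportional to $F$.

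The resulting isometry has the form
\[ W\ket{x}=c(x)\ket{x}\ket{0}+\ket{x}\ket{\gamma_x}, \qquad \ket{\gamma_x}\perp\ket{0}. \]
I then bound the diamond distance between the channel $\Phi(\rho)=\mathrm{Tr}_{\mathrm{anc}}(W\rho W^\dagger)$ and $F\cdot F^\dagger$. For any purified input $\ket{\Psi}=\sum_x\beta_x\ket{x}\ket{e_x}$, the fidelity of the output with $(F\otimes I)\ket{\Psi}$ is at least
\[ \Bigl|\sum_x|\beta_x|^2\,\overline{\alpha(x)}\,c(x)\Bigr|^2\ge\bigl(2/\pi-O(1/m)\bigr)^2. \]
This holds because the ancilla-$\ket{0}$ branch alone already contributes that overlap. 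Fuchs--van de Graaf then gives diamond distance at most $2\sqrt{1-(2/\pi)^2}+o(1)\approx1.54<2$, which is the claimed $\Omega(1)$-approximation.

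Where the definition of approximation demands a smaller constant, I would instead reweight the roots of unity. I expect the main obstacle to be matching the exact diamond-distance convention of \cref{def:unitary-synthesis} and checking that the fidelity lower bound holds uniformly over entangled inputs. The per-$x$ garbage states $\ket{\gamma_x}$ differ across $x$ and cause decoherence. However, they only reduce fidelity through the component orthogonal to $\ket{0}$, which the bound above already discards, so the estimate survives.
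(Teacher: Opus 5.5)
Your proposal is correct, and it takes a genuinely different route from the paper.

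\textbf{The paper's route.} It first treats phases in $\{1,i,-1,-i\}$ exactly. It XORs $F(x)=b_0(x)\|b_1(x)$ into the ancilla $\ket{-}\ket{+i}$, using the function-oracle interface (a phase oracle for $y\cdot F(x)$ conjugated by Hadamards). The first qubit yields $(-1)^{b_0(x)}$ by kickback. The identity $X\ket{+i}=i\ket{-i}$ yields $i^{b_1(x)}$, at the price of flipping the ancilla to an orthogonal state. The output is therefore $D(F)$ applied coherently within $S_0$ and within $S_1$ but dephased between them, giving fidelity $p_0^2+p_1^2\ge 1/2$. General phases are then rounded to the nearest fourth root of unity, so that $\mathrm{Re}(\beta_x)\ge 1/\sqrt2$; this loses another factor of $2$ and gives $1/4$.

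\textbf{Your route.} You make a single binary phase query on a uniform superposition of $m$ half-plane tests. The frequency-one Fourier coefficient of the resulting sign pattern is $\frac{2}{\pi}\alpha(x)+O(1/m)$, a real, $x$-independent shrinkage of the target phase. Projecting the ancilla onto $\ket{\chi}$ then lower bounds the aux-input fidelity by $(2/\pi)^2-O(1/m)\approx 0.405$, uniformly over entangled inputs. This step is sound.

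\textbf{What each buys.} Your route gives a better general constant ($4/\pi^2$ versus $1/4$). Through \cref{prop:composition-aux-correctness}, this would lower the threshold in \cref{cor:diagonal-synthesis} from $3/4$ to roughly $1-4/\pi^2\approx 0.595$. The paper's route gives an exact construction with a two-qubit ancilla and no Riemann-sum error. It also gives the sharper bound $1/2$ for fourth roots of unity, which your stated estimate does not give directly. Your channel does attain that bound up to $O(1/m)$: for angular separation $\Delta\in[0,\pi]$ the Gram entries satisfy $\langle\phi_{x'}|\phi_x\rangle\to 1-2\Delta/\pi$.

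\textbf{Small remarks.}
\begin{itemize}
\item Your step 3 does not change the channel, since $V$ acts only on the ancilla that is traced out. It is purely an analysis device: the fidelity equals $\|\sum_x|\beta_x|^2\overline{\alpha(x)}\ket{\phi_x}\|^2$, and you lower bound this by its overlap with $\ket{\chi}$.
\item A large constant $m$ already suffices, because the $O(1/m)$ error does not depend on $n$.
\item The paper's diamond distance is normalized, so \cref{prop:diamond-vs-correctness} turns your bound into $D_\diamond\lesssim\sqrt{1-4/\pi^2}\approx 0.77$.
\item The formal \cref{thm:diagonal-positive-main} is phrased in terms of aux-input correctness, which is exactly what you prove. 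The reweighting you mention is therefore unnecessary.
\end{itemize}
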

Since existing lower bound techniques also rule out approximation algorithms, this explains why they do not apply to complex phase unitaries!

In addition, through the use of a simple composition theorem, we conclude that to some level of approximation, unitary synthesis algorithms with binary and arbitrary complex phase oracles have the same computational power. 

\begin{corollary}
Any family of unitaries with a $(3/4+\epsilon)$ correct 1-query unitary synthesis algorithm relative to the class of complex phase unitaries also has an $\Omega(\epsilon^2)$-correct 1-query unitary synthesis algorithm relative to binary phase unitaries (or Boolean functions). 
\end{corollary}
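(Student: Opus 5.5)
The corollary should follow from the one-query approximation algorithm for diagonal phase unitaries (\cref{thm:diagonal-positive-main-intro}) via a composition argument. Suppose $\mathcal{A}$ is a one-query algorithm that, for each target $U$, queries a complex phase oracle $F_\alpha\ket{x} = \alpha(x)\ket{x}$ (with $\alpha = \alpha_U$) and implements a channel $\Phi$ with $\mathrm{correctness}(\Phi, U) \ge 3/4+\epsilon$. The plan is to replace the single call to $F_\alpha$ inside $\mathcal{A}$ by the one-query binary-phase simulation $\Psi_\alpha$ given by \cref{thm:diagonal-positive-main-intro}. The composite $\mathcal{A}^{\Psi_\alpha}$ still makes exactly one query, now to a binary phase oracle (equivalently a Boolean function), and its size is polynomial. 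It then suffices to show that its correctness with respect to $U$ is $\Omega(\epsilon^2)$.

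\textbf{Step 1: fix the correctness measure.} I will work with the figure of merit of \cref{def:unitary-synthesis}. I expect it to be (or to be convertible, up to polynomial losses, into) a fidelity-type quantity such as the entanglement fidelity $\langle \Phi_U | (\Id\otimes \Phi)(\ketbra{\EPR}) | \Phi_U\rangle$, or a diamond-distance bound. The thresholds $3/4$ and $\epsilon^2$ suggest a fidelity, with the square arising from a $\sqrt{\cdot}$ conversion between fidelity and trace distance.

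\textbf{Step 2: decompose the simulated oracle.} The key structural claim I would extract from the proof of \cref{thm:diagonal-positive-main} is that the simulation channel has the form
\[
\Psi_\alpha = p\cdot F_\alpha(\cdot)F_\alpha^\dagger + (1-p)\,\mathcal{N},
\]
for some constant $p$ (perhaps $1/4$) and some error channel $\mathcal{N}$. If the theorem only gives a diamond-distance guarantee $\|\Psi_\alpha - F_\alpha\|_\diamond \le 2-\delta$, I would instead write the Choi state of $\Psi_\alpha$ as having overlap at least $\delta'$ with the Choi state of $F_\alpha$. Since the algorithm is linear in the oracle channel, $\mathcal{A}^{\Psi_\alpha} = p\,\mathcal{A}^{F_\alpha} + (1-p)\,\mathcal{A}^{\mathcal{N}}$.

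\textbf{Step 3: bound the error branch (main obstacle).} The naive estimate fails: by convexity the first term contributes at least $p(3/4+\epsilon)$ to the fidelity, but the second term contributes nothing that we control, and $p(3/4+\epsilon)$ is not $\Omega(\epsilon^2)$ in a useful regime. The threshold $3/4$ is the baseline that must be beaten. My guess is that the simulation is really a coherent interpolation: the one-query phase trick realizes something like $\sqrt{p}\,F_\alpha + \sqrt{1-p}\,G$ on an ancilla branch, with $G$ roughly "uncorrelated" with $F_\alpha$. In the Choi picture, the error branch then behaves like a random-phase (dephasing) channel. Its contribution to the overall overlap is bounded in a way that exactly cancels the $3/4$ baseline, leaving an advantage of order $\epsilon$ in amplitude.

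Concretely, I would first try the following. Write the output purified state as $\sqrt{p}\ket{\phi_{\mathrm{good}}} + \sqrt{1-p}\ket{\phi_{\mathrm{bad}}}$. Use the hypothesis $\mathrm{correctness} \ge 3/4+\epsilon$ to show that the good-branch amplitude along the ideal Choi state exceeds the worst-case bad-branch interference by $\Omega(\epsilon)$. The fidelity, being the square of this amplitude, is then $\Omega(\epsilon^2)$. This is where the constant $3/4$ must match the constant $p$ and the structure of the simulation. If the decomposition in Step 2 does not give exactly this, I would fall back on a trace-distance argument: $\mathrm{TD}(\Phi,U) \le \sqrt{1-F}$, combined with the triangle inequality through $\mathcal{A}^{\Psi_\alpha}$. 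I would pick parameters so that the $3/4$ threshold yields a gap $\Omega(\epsilon)$ in trace distance from the maximally-wrong channel, and hence $\Omega(\epsilon^2)$ fidelity.
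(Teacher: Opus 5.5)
Your outer structure, substituting the one-query phase simulation of \cref{thm:diagonal-positive-main} into the unique query slot of $\mathcal A$, is the paper's. However, the proposal never supplies the inequality that makes the numbers work, and the concrete steps you commit to would not get there. The paper's argument is \cref{prop:composition-aux-correctness}. Let $\tau,\sigma,\rho$ be the outputs of $U$, of $\mathcal A$ with the ideal complex phase oracle, and of the composite. Apply the simulation's aux-input correctness $\eta_2=1/4$ to the (arbitrary, entangled) pre-query state of $\mathcal A$, then use monotonicity of fidelity under the post-query channel; this gives $F(\rho,\sigma)\ge 1/4$. The triangle inequality for the Bures angle $\arccos\sqrt{F}$ then gives $F(\rho,\tau)\ge\cos^2\bigl(\arccos\sqrt{3/4+\epsilon}+\pi/3\bigr)=\sin^2\delta$, with $\delta=\pi/6-\arccos\sqrt{3/4+\epsilon}=\Theta(\epsilon)$. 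Since $\arccos\sqrt{3/4}+\arccos\sqrt{1/4}=\pi/2$, this is exactly where the $3/4$ threshold comes from. It also settles your Step 1: the worst-case aux-input fidelity of \cref{def:aux-input-correctness} is what composes. A Choi-state overlap bound for the simulation would not, since the pre-query state of $\mathcal A$ is not maximally entangled.

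\textbf{Step 2 is false for general phases.} The paper's simulation implements $\tfrac12\mathrm{Ad}_{D(G)}+\tfrac12\mathrm{Ad}_{Z_S D(G)}$, with $Z_S=\Pi_{S_0}-\Pi_{S_1}$ and the \emph{rounded} phase $D(G)\neq D(F)$. Generically $D(F)$ is not in the span of its Kraus operators $D(G)\Pi_{S_0}, D(G)\Pi_{S_1}$, so no decomposition $p\,\mathrm{Ad}_{D(F)}+(1-p)\mathcal N$ with $p>0$ exists. (It does exist, with $p=1/2$, for fourth-root phases.)

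\textbf{The naive estimate.} Your diagnosis here is backwards. Fidelity with a pure target is linear and nonnegative in the output state, so such a decomposition would give correctness $\ge p(3/4+\epsilon)=\Omega(1)$ with no threshold at all. The presence of the $3/4$ threshold signals that the argument must be about angles, not mixtures.

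\textbf{Step 3.} Your heuristic, good amplitude $\tfrac12\sqrt{3/4+\epsilon}$ minus worst-case interference $\tfrac{\sqrt3}{2}\sqrt{1/4-\epsilon}$, is exactly $\cos(\theta_1+\theta_2)$. So the intuition is right, but making it rigorous for mixed outputs and worst-case entangled inputs is precisely the Bures-angle triangle inequality, which you never state.

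\textbf{The fallback fails.} The triangle inequality for trace distance, purified distance, or diamond distance (via \cref{prop:diamond-vs-correctness}) only yields $\tfrac{\sqrt3}{2}+\sqrt{1/4-\epsilon}$. This is at least $1$ whenever $\epsilon\le 0.23$, so it is vacuous in the regime that matters.
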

We next describe important conceptual tools used to prove our results --- the oracle state search game and its cousin, the oracle Choi state game  --- which allows us to re-state the above theorems as well as discuss two additional results. 

\subsection{The oracle state search and Choi state games}
The existing one-query unitary synthesis lower bound of \cite{STOC:LomMaWri24} can be thought of as deriving unitary synthesis lower bounds for a unitary $U$ from \emph{upper bounds} on the maximum win probability of a distinguishing task. Letting $[K]\subset [N]$ denote a subset, the task is to distinguish the following two mixed states:

\begin{enumerate}
    \item $\frac1K\sum_{k\in [K]} \ketbra{\psi_k}$ for $\ket{\psi_k} = U^\dagger \ket{k}$ and $[K]\subset [N]$ some fixed subset,
    \item the maximally mixed $n$-qubit state. 
\end{enumerate}
Just as in the unitary synthesis problem, the algorithm (or ``adversary'') is allowed to make a single function query. Thus, \cite{STOC:LomMaWri24} derive unitary synthesis lower bounds from pseudorandomness results. 

Unfortunately, this approach seems to fail (or at least run into serious difficulties) for all of the questions addressed in this paper! We refer the reader to the technical overview (\cref{sec:tech-overview-distinguishing-fails}) for more details, but the upshot is that state pseudorandomness does not seem to naturally capture the hardness of these unitary synthesis tasks. Instead, we introduce a new source of ``cryptographic hardness'' to prove our results, called the ``oracle state search game.'' 

\begin{definition}[see \cref{def:state-search-game}]
    For a collection of states $\{\ket{\psi_k}\}_{k \in [K]}$, the \emph{oracle state search game} is a challenger-adversary game in which:
    \begin{itemize}
        \item The challenger samples a classical string $k \gets [K]$ and sends $\ket{\psi_k}$ to the adversary.
        \item The adversary returns a string $k'$ to the challenger and wins if $k' = k$. 
    \end{itemize}
\end{definition}
For all of the main results/settings of this paper, the states $\ket{\psi_k}$ are mutually orthogonal, so an all-powerful adversary can in fact win this game with probability $1$. We prove our one-query unitary synthesis lower bounds by proving that for appropriate families of $\ket{\psi_k}$, one-query adversaries can only win the state search game with negligible probability. This can be seen as proving the security of a ``single-copy'' variant of a one-way state generator \cite{C:MorYam22} against one-query adversaries; the variant we consider is powerful enough to imply quantum bit commitment \cite{ITCS:BraCanQia23}, thus having similar implications for quantum cryptography as the pseudorandomness notion from \cite{STOC:LomMaWri24}. 

While most of our results are derived using the oracle state search game, we also introduce an \emph{even harder-to-win} cryptographic game whose hardness still rules out unitary synthesis: the oracle Choi state game. 

\begin{definition}[see \cref{def:epr-state-game}]
For a unitary $U$, the \emph{oracle Choi state game} is a challenger-adversary game in which:

\begin{itemize}
    \item The challenger prepares the state $\ket{\psi}_{U^\dagger} = \frac 1 {\sqrt N} \sum_{x\in [N]} U^\dagger \ket{x} \otimes \ket{x}$ and sends the first register to the adversary.
    \item The adversary performs some quantum channel and returns the same register back to the challener.
    \item To decide if the adversary wins, the challenger measures whether the two-register state is the EPR state $\frac 1 {\sqrt N} \sum_{x\in [N]} \ket{x}\otimes \ket{x}$.
\end{itemize}
\end{definition}
We consider the Choi state game to be the weakest natural formulation of average-case unitary synthesis hardness and observe (see \cref{sec:Choi-game-def}) that (1) it is at least as hard as the search game using states $\ket{\psi_k} = U\ket{k}$ and (2) its hardness still suffices to construct quantum bit commitments. While almost all of our results are proved using the search game, we prove \cref{thm:fuf-epr} using the Choi state game, provide some alternative proofs of our main results using the Choi state game in \cref{sec:choi-hardness}, and more generally believe the game to be worthy of future study. 

\subsubsection{Search game formulations of our results}
Our main one-query lower bounds follow from bounds on the probability of winning the oracle state search game.

\begin{theorem}[see \cref{thm:perm-search-main}]
\label{thm:perm-search-main-intro}
Let $P$ be the permutation unitary associated with a uniformly random permutation $\pi$ on $\{0,1\}^n$, and consider the search game for the state family $\{PH\ket{k}\}_{k\in[K]\setminus\{0\}}$. Then every one-query adversary with workspace dimension $M$ satisfies
\[
\mathbb{E}_{\pi}\bigl[\Win(\cA\mid \pi)\bigr]
=O\!\left(\frac{\log^2 M\,\log^2 K}{K}\right).
\]
\end{theorem}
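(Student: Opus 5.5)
The plan is to fix an arbitrary one-query adversary and put it in a canonical form, so that its success probability becomes a quadratic form in the oracle's answers. Then I would use the randomness of $\pi$ to show that this form is uniformly small over all oracles.

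\emph{Canonical form.} Since binary phase queries and Boolean-function queries are interchangeable, assume the adversary acts as follows. It receives $\ket{\psi_k}=PH\ket{k}$ together with a workspace of dimension $M$ and applies a unitary $V$ (which may depend on $\pi$). It then makes one query $O_f=\sum_z (-1)^{f(z)}\ketbra{z}\otimes \Id$ to an oracle $f=f_\pi$, applies a unitary $W$, and measures $\{\Pi_{k'}\}$. Writing $s_z=(-1)^{f(z)}$ and $v_{k,z}=(\ketbra{z}\otimes\Id)V\ket{\psi_k}\ket{0}$, the win probability is
\[
\Win(\cA\mid\pi)=\frac1K\sum_{k}\Bigl\|\Pi_k W\sum_z s_z v_{k,z}\Bigr\|^2 = s^{\top} G_\pi s,
\]
where $G_\pi=\frac1K\sum_k B_k^\dagger B_k$ and $B_k$ is the operator $s\mapsto \Pi_k W\sum_z s_z v_{k,z}$. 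The oracle is chosen after $\pi$, so it suffices to bound $\mathbb{E}_\pi \max_{s\in\{\pm1\}^Z} s^\top G_\pi s$.

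\emph{Baseline without a query.} First I would check that with no query ($s$ constant) the adversary wins with probability $O(1/K)$. For $k\ne 0$ we have $\mathbb{E}_\pi\ketbra{\psi_k}=\frac{1}{N-1}(\Id-\ketbra{\phi_0})$, where $\ket{\phi_0}$ is the uniform superposition, so a query-free adversary essentially sees a maximally mixed state.

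\emph{Reducing to the workspace dimension.} The remaining task is to show that a single sign pattern $s$ cannot amplify this baseline beyond polylogarithmic factors. I would bound the quadratic form by an operator-norm quantity that depends only on the $M$-dimensional workspace and not on the (possibly huge) index set $Z$ of the query register. Concretely, I would group the terms of the form according to the post-query measurement outcome and use the fact that the vectors $\{v_{k,z}\}_z$ live in a space of dimension $N\cdot M$. The goal is an inequality of the shape
\[
\max_s s^\top G_\pi s \;\lesssim\; \log M\cdot \Bigl\|\tfrac1K\textstyle\sum_k \Pi_k\, \mathcal{D}\bigl(W V\ketbra{\psi_k}V^\dagger W^\dagger\bigr)\,\Pi_k\Bigr\| + \tfrac{1}{K},
\]
where $\mathcal{D}$ is a pinching in the query basis. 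The $\log M$ factor would come from a dyadic decomposition of the amplitudes $\|v_{k,z}\|$ into $O(\log M)$ scales, since amplitudes below $1/\mathrm{poly}(M)$ are negligible.

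\emph{Main obstacle: concentration over $\pi$.} I expect this step to be the hardest. We need to show that after pinching, the states $\mathcal{D}(V\ketbra{\psi_k}V^\dagger)$ for different $k$ remain nearly indistinguishable for a typical $\pi$, so that the adversary's ability to single out $k$ stays at $O(\mathrm{polylog}/K)$. My first attempt would be a matrix Chernoff/Bernstein bound for sums over the random permutation, using negative association of permutation entries or a coupling with i.i.d. random functions at cost $O(K^2/N)$. This should give $\|\cdot\|\le O(\log M\log^2 K/K)$ with high probability, with one $\log M$ from the dimension in matrix concentration and the $\log^2 K$ from a union bound over $k$ together with the dyadic scales in $k$. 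Combining this with the previous step yields the claimed $O(\log^2 M\log^2 K/K)$.
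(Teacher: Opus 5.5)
\textbf{There is a genuine gap at the step you call ``reducing to the workspace dimension.''}

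First, a smaller slip. $V$, $W$ and $\{\Pi_k\}$ must not depend on $\pi$; only the oracle may. Otherwise $V=HP^{-1}$ wins with probability $1$ using no query.

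The main problem is the inequality you aim for. For each fixed $\pi$, it bounds $\max_s s^\top G_\pi s$ by $\log M$ times the norm of a \emph{pinched} (decohered) quantity plus $1/K$, and you then want to concentrate the pinched quantity over $\pi$. This cannot work, because pinching erases exactly the phase information that both the adversary and the randomness of $\pi$ act on. Concretely, take $\pi=\mathrm{id}$, $V=\Id$, query register equal to the input register, $W=H^{\otimes n}$ and $\Pi_k=\ketbra{k}$.
\begin{itemize}
\item With the constant sign pattern, this adversary wins with probability $1$.
\item Pinching before $W$ gives $\mathcal D(\ketbra{PHk})=\Id/N$ for every $\pi$ and $k$; pinching after $W$, as you wrote it, gives $\ketbra{k}$.
\item Either way your right-hand side is $O(\log M/K)$.
\end{itemize}
So the per-$\pi$ inequality is false. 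The $+1/K$ baseline holds only on average over $\pi$, and a version of the inequality that holds only on average is no longer a reduction; it is the theorem itself. Separately, the dyadic-scale heuristic does not explain how the dependence on $s$ disappears. Bounding $s^\top G_\pi s$ by $\|G_\pi\|\cdot\|s\|^2$ loses a factor $|Z|$.

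\textbf{The missing idea is the weight-vector rescaling.} Since $\mathbb E_\pi\ketbra{PHk}=\Pi_\perp/(N-1)$ is the same for every $k\neq 0$, do the following.
\begin{itemize}
\item Set $p_i:=\mathbb E_\pi|\langle v_i|PHk\rangle|^2$ for each computational-basis index $i$ of the workspace. Then write $VPH\ket{k}=D_{\pi,k}\ket{\wt_V}$, where $D_{\pi,k}$ is diagonal and $\ket{\wt_V}=\sum_i\sqrt{p_i}\ket{i}$ is a unit vector independent of $k$ and $\pi$.
\item $O_f$ is diagonal, so it commutes with $D_{\pi,k}$, and $O_f\ket{\wt_V}$ is again a unit vector. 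Hence $\Win(\cA\mid\pi)\le\norm{M_\pi}^2$ with $M_\pi=K^{-1/2}\sum_k\Pi_kD_{\pi,k}$.
\item In your language: rescaling coordinate $i$ by $\sqrt{p_i}$ turns every sign pattern into a unit vector, so the maximum over $s$ is bounded by an operator norm with no loss.
\end{itemize}
Crucially, $M_\pi$ remains \emph{linear} in the signs $(-1)^{k\cdot\pi^{-1}(x)}$. After replacing $\pi^{-1}$ by $\pi$, it is a combinatorial sum $\sum_x A_{x,\pi(x)}$ with three properties:
\begin{itemize}
\item $\sum_{x,y}A_{x,y}=0$, which is where $k\neq 0$ is used;
\item $\norm{A_{x,y}}\le K^{-1/2}$, by orthogonality of the $\Pi_k$;
\item the variance parameter is at most $1/K$.
\end{itemize}
The Bernstein inequality for combinatorial matrix sums \cite{Annals:MJCFT14}, applied to the Hermitian dilation, then yields the stated bound.

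Your fallback of coupling $\pi$ with an i.i.d.\ random function at cost $O(K^2/N)$ also fails. $M_\pi$ depends on (projections of) $\pi^{-1}(x)$ for all $N$ inputs $x$. A birthday-type bound applies only when $K$ values are involved; here the permutation and i.i.d.\ distributions are far apart in total variation.
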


\begin{theorem}[see \cref{thm:f2hf1h-search-main}]
\label{thm:f2hf1h-search-main-intro}
Let $f_1,f_2:\{0,1\}^n\to\{0,1\}$ be uniformly random Boolean functions, and let $F_j=\sum_x (-1)^{f_j(x)}\ketbra{x}$ for $j\in\{1,2\}$. For the search game on the family $\{F_2HF_1H\ket{k}\}_{k\in[K]}$, every one-query adversary with workspace dimension $M$ satisfies
\[
\mathbb{E}_{f_1,f_2}\bigl[\Win(\cA\mid f_1,f_2)\bigr]
=O\!\left(\frac{\log M \cdot \log (MN)}{K}\right).
\]
\end{theorem}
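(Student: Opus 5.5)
A one-query adversary can be written as: apply an isometry $V$ to the challenge state together with workspace (dimension $M$), make one phase-oracle query $O_{f_2}\otimes O_{f_1}$ (we may take the query to $(f_1,f_2)$ jointly as a single binary phase query on an enlarged index register of dimension $O(MN)$), apply a second isometry $W$, and measure to obtain $k'$. My plan is to show this adversary wins with probability $\widetilde O(1/K)$, as in the statement. The approach is to average over the randomness of $f_1,f_2$ and to exploit the fact that the challenge state $F_2 H F_1 H\ket{k}$ depends on both functions. To know $k$, the adversary must essentially "undo" $F_2$ and then $H$ and $F_1$. With a single query, it can learn phase information about only one layer coherently.

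\textbf{Step 1: Fourier/path expansion in the oracle.} Write $F_j=\sum_{x}(-1)^{f_j(x)}\ketbra{x}$ and expand both the challenge state and the query operator in the basis of characters $\chi_S(f)=(-1)^{\sum_{x\in S}f(x)}$. The win probability is
\[
\sum_k \frac1K\,\mathbb{E}_{f}\bigl\|\Pi_k W O_f V\,(F_2HF_1H\ket{k}\otimes\ket{0})\bigr\|^2 .
\]
Each term is a degree-$\le 2$ polynomial in the query's phases $(-1)^{f(\cdot)}$ and the state's phases. Because $f$ is uniform, the expectation kills every monomial whose characters do not cancel. In the state, $f_1$ and $f_2$ enter as $(-1)^{f_2(y)+f_1(x)}$ with amplitude $N^{-1}(-1)^{x\cdot y+k\cdot x}$. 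The query contributes at most one point of $f_1$ or $f_2$ per branch (ket and bra). Only a constrained set of index pairings survives.

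\textbf{Step 2: case split on which function the query "cancels".} The surviving pairings fall into three cases. In the first, the query hits neither function; the state is then effectively averaged over $f_1,f_2$ and is maximally mixed, so the win probability is at most $1/K$. In the second, the query cancels an $f_2$-phase; the remaining $F_1$ layer, after $H$, still hides $k$, since $HF_1H\ket{k}$ averaged over $f_1$ is nearly mixed conditioned on any single $y$. In the third, the query cancels an $f_1$-phase; the $F_2$ layer is then uncancelled, so $k$ is again hidden.

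The win probability can therefore be written as $1/K$ plus cross terms. I bound these by operator-norm arguments: a matrix indexed by query positions, with entries given by the overlap of the adversary's pre-query state on position $z$ with $\ket{x}$ or $\ket{y}$. Each such matrix has total Frobenius weight at most $1$ and rank at most $M$.

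\textbf{Step 3 (main obstacle): bounding the cross terms.} The cross terms look like $\frac1K\sum_k \langle\cdot\rangle$ in which the adversary concentrates its query amplitude to correlate with $k$. I will bound them by a matrix Chernoff / Khintchine-type inequality over the random signs: $\mathbb{E}_f\|\sum_x (-1)^{f(x)} A_x\| \lesssim \sqrt{\log(\dim)}\,\|\sum_x A_x^\dagger A_x\|^{1/2}$, with dimension $MN$ here. The factor $\log M$ enters through a separate operator-norm bound on the adversary's $M$-dimensional workspace. Combining the two random layers should give $\log M\cdot\log(MN)$, and comparing the $k$-sum against the $N$-dimensional trace normalization gives the $1/K$. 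The delicate point is the $F_2 H F_1$ entanglement of phases: the $y$-sum produced by $H$ must be controlled without losing a factor of $N$. I would first try conditioning on $f_2$, treating the state as $F_2$ applied to a random-phase state, and then using the noncommutative Khintchine bound in $f_1$ uniformly over $f_2$.
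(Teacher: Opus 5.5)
There is a genuine gap, and it lies in the model you analyze. In the search game, as in unitary synthesis, the oracle is not $f_1,f_2$. It is an \emph{arbitrary} Boolean function $f$ on the $M$-dimensional workspace, chosen after $(f_1,f_2)$ is sampled. The quantity to bound is $\mathbb{E}_{f_1,f_2}\max_f \frac1K\sum_k\|\Pi_k O_f V\ket{\psi_{R,k}}\|^2$, with the maximum inside the expectation. You replace $O_f$ by $O_{f_2}\otimes O_{f_1}$, and both Step~1 (``because $f$ is uniform, the expectation kills every monomial whose characters do not cancel'') and the case split of Step~2 depend on that replacement. For a general oracle, the phases $(-1)^{f(z)}$ are arbitrary, adversarially chosen functions of the entire truth tables of $f_1$ and $f_2$. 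A single such query can simulate many parallel queries to $f_1,f_2$, or directly encode, say, the signs of the Fourier coefficients $\alpha_u$ of $f_1$. So nothing averages away, and the claim that the query ``contributes at most one point of $f_1$ or $f_2$ per branch'' is false. As written, your argument targets a strictly weaker statement than the theorem.

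The missing idea is a way to remove the adversarial oracle before averaging over $(f_1,f_2)$. The paper does this with the weight-vector decomposition. Since $\mathbb{E}_{f_1,f_2}\ketbra{\psi_{R,k}}=\Id/N$ for every $k$, one can write $V\ket{\psi_{R,k}}=D_{R,k}\ket{\wt_V}$. Here $D_{R,k}$ is diagonal and linear in the amplitudes $R(k,x)=(-1)^{f_2(x)}\alpha_{k+x}$, and $\ket{\wt_V}$ is a unit vector independent of both $k$ and $R$. Because $O_f$ is diagonal, it commutes with $D_{R,k}$ and only acts on the fixed vector $\ket{\wt_V}$. Hence $\Win(\cA\mid R)\le\max_{\|w\|=1}\bra{w}\frac1K\sum_k D_{R,k}^\dagger\Pi_kD_{R,k}\ket{w}=\|M_R\|^2$, where $M_R=\frac{1}{\sqrt K}\sum_k\Pi_kD_{R,k}=\sum_{k,x}R(k,x)B_{k,x}$ no longer involves $f$. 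Only at this point does your Step~3 intuition have a concrete random matrix to act on.

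Even then, ``Khintchine in $f_1$ uniformly over $f_2$'' cannot work. Fix any $f_2^\ast$ and take the adversary $V=HF_2^\ast$, $\Pi_k=H\ketbra{k}H$. Whenever $f_2=f_2^\ast$ it wins with probability $1$ for every $f_1$, by spending its query on $F_1$. An analogous hard-coded adversary exists for any fixed $f_1$. So the conditional variance can only be controlled on average over the conditioned function, which is what the paper does:
\begin{itemize}
\item It conditions on $f_1$ and writes $M_R=\sum_x(-1)^{f_2(x)}Z_x$ with $Z_x=\sum_k\alpha_{k+x}B_{k,x}$, a Rademacher series in $f_2$; this gives the $\log M$.
\item It bounds $\mathbb{E}_{f_1}\|\sum_xZ_xZ_x^\dagger\|=\mathbb{E}_{f_1}\|N_{f_1}\|^2$ by a second Rademacher series in $f_1$ for the $M\times MN$ matrix $N_{f_1}=\sum_xZ_x\otimes\bra{x}$; this gives the $\log(MN)$.
\item It bounds $\mathbb{E}_{f_1}\|\sum_xZ_x^\dagger Z_x\|\le\frac1K\mathbb{E}_{f_1}\max_u|\alpha_u|^2=O(\log N/K)$, using $B_{k_1,x}^\dagger B_{k_2,x}=0$ for $k_1\neq k_2$ together with Hoeffding's inequality.
\end{itemize}
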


Interestingly, in the case of permutations, we show that the \emph{same} state family \textbf{fails to be pseudorandom} in the sense of \cite{STOC:LomMaWri24}, demonstrating the utility of the search game:

\begin{theorem}[see \cref{thm:perm-distinguishing-main}]\label{thm:perm-distinguishing-main-intro}

There exists a one-query adversary $\cA$ such that for every permutation unitary $P$ there is a classical oracle $f_\pi$ for which $\cA^{f_\pi}$ distinguishes the ensemble $\{PH\ket{k}\}_{k\in[2^{n/2}]}$ from Haar-random input with constant advantage.
\end{theorem}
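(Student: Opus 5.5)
The plan is to use the rank gap between the two ensembles. Write $N=2^n$, and read ``$k\in[2^{n/2}]$'' as $k=(0^{n/2},k_2)$, so that $k$ is supported on the last $n/2$ coordinates. Since $H\ket{k}=N^{-1/2}\sum_x(-1)^{k\cdot x}\ket{x}$, we have
\[
PH\ket{k}=\frac{1}{\sqrt N}\sum_{y}(-1)^{k_2\cdot h(y)}\ket{y},\qquad h(y):=\bigl(\pi^{-1}(y)\bigr)_{\mathrm{last}\ n/2\ \mathrm{bits}} .
\]
Each fiber $S_z=h^{-1}(z)$ has size exactly $2^{n/2}$. Let $\ket{S_z}$ be the uniform superposition over $S_z$. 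Then every $PH\ket{k}$ lies in $V=\mathrm{Span}\{\ket{S_z}\}_{z}$, which has dimension $K=2^{n/2}$.

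The ensemble average is $\rho=\Pi_V/K$. A Haar-random state has overlap only about $K/N=2^{-n/2}$ with $V$. So it suffices to build a one-query circuit $\cA^{f_\pi}$ that approximately implements the two-outcome measurement $\{\Pi_V,\mathbf{I}-\Pi_V\}$, up to constant error on states in $V$ and on typical states. Output ``pseudo'' on the $\Pi_V$ outcome.

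The key steps, in order:
\begin{enumerate}
\item Choose $f_\pi$. The natural choice is the bijection $\tau(y)=(h(y),i(y))$, where $i(y)\in\{0,1\}^{n/2}$ indexes $y$ inside its fiber. Under the isometry $\ket{y}\mapsto\ket{h(y)}\ket{i(y)}$, the subspace $V$ becomes exactly $\{\ket{\phi}\otimes\ket{+}^{\otimes n/2}\}$. After this isometry, membership in $V$ is tested with no further queries: apply $H^{\otimes n/2}$ to the index register and check that it reads $0$.
\item Implement $\ket{y}\mapsto\ket{\tau(y)}$ coherently using one query. Using the earlier observation that one classical query simulates polynomially many \emph{parallel} queries, the target is to fold ``compute $\tau(y)$'' and ``erase $y$'' into a single parallel batch. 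The first idea to try is to replace the uncomputation of $y$ with a one-query state-synthesis-style procedure in the spirit of \cite{SODA:Rosenthal24}, applied fiberwise. The oracle would be allowed to also return, alongside $\tau(y)$, classical data that lets $y$ be disentangled up to a phase that is constant on each fiber.
\item Analyze the two cases. For $\ket{\psi}\in V$, check that the test accepts with probability $\Omega(1)$. For Haar $\ket{\psi}$, bound the acceptance probability by $\approx\mathrm{tr}(\Pi_{\mathrm{acc}})/N$, using the fact that the accepting subspace has dimension $O(K)$ up to the approximation error. This gives constant advantage.
\end{enumerate}

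Step 2 is the main obstacle, and it is exactly the ``compute-then-uncompute'' issue raised in the introduction. A naive single query leaves $\ket{y}$ entangled with $\ket{\tau(y)}$. A less naive attempt computes $h(y)$, applies Hadamards to $y$ and postselects on outcome $0$; this works perfectly but only with probability $2^{-n/2}$.

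The fallback is to weaken the goal. We do not need to erase $y$ itself, only to coarse-grain it. I would look for an oracle whose single output lets us apply a fixed, query-free unitary that maps $\ket{S_z}$ to a product state $\ket{z}\ket{0}$ while sending most of $V^\perp$ elsewhere. The key point is that $\ket{S_z}$ are \emph{subset states of equal size}. For such states, the one-query preparation, and hence the one-query inverse that is needed here, should be available with constant fidelity. This is what makes constant (rather than negligible) advantage plausible.
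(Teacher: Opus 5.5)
There is a genuine gap. Step 2 is never carried out, and both routes you sketch for it fail.

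\textbf{The primary route is ruled out by the paper's own lower bound.} Consider a one-query coherent implementation of $\ket{y}\mapsto\ket{\tau(y)}$. More generally, consider any one-query map sending $\sum_z c_z\ket{S_z}$ to $\sum_z c_z\ket{z}\ket{0}$ while leaving all other registers in a $z$-independent state. Applied to $PH\ket{k}$ with $k=(0^{n/2},k_2)$, such a map produces $2^{-n/4}\sum_z(-1)^{k_2\cdot z}\ket{z}\otimes\ket{+}^{\otimes n/2}$. Applying $H^{\otimes n/2}$ to the first register then outputs $k_2$. This wins the search game for the key set $\{0^{n/2}\}\times\{0,1\}^{n/2}$ with constant probability. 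But \cref{thm:perm-search-main}, whose setup in \cref{sec:PH-search-formulation} allows an arbitrary key set containing $0$, caps the winning probability at $O(n^2\log^2 M/2^{n/2})$. So this step is not merely hard: it is an instance of one-query in-place permutation synthesis, which the paper shows is impossible.

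\textbf{The fallback inference also fails.} ``One-query preparation, hence one-query inverse'' does not go through, for two reasons. First, Rosenthal's synthesis \cite{SODA:Rosenthal24} yields $\ket{S_z}$ only as the reduced state on part of its output, alongside (generally $z$-dependent) junk that you do not hold. So the circuit cannot be run backwards on $\ket{S_z}$ alone. Second, any un-preparation must be controlled on $z=h(y)$. That label has to be extracted from the very register being un-prepared before the un-preparation acts on it, which costs a second, sequential query.

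\textbf{The missing idea is to compare rather than uncompute.} The construction is as follows.
\begin{itemize}
\item Use a single oracle $f_\pi(y,u,w)=(-1)^{h(y)\cdot u}f_{\pi,h(y)}(w)$, where $f_{\pi,z}$ is Rosenthal's oracle for $\ket{S_z}$.
\item Put the input in register 1, $\ket{+}^{\otimes n/2}$ in register 2, and run Rosenthal's circuit on a fresh register 3.
\item One query then simultaneously kicks the label $h(y)$ back into register 2 and prepares an approximate fresh copy of $\ket{S_{h(y)}}$ in register 3. This parallelism is possible precisely because register 1 is only used as a control and is never transformed.
\item Apply $H^{\otimes n/2}$ to register 2 and measure the label $z$, which projects the input onto the fiber $S_z$.
\item SWAP-test register 1 against the first $n$ qubits of register 3.
\end{itemize}
The acceptance probability is $\frac12+\frac12\sum_z|\langle S_z|\psi\rangle|^2=\frac12+\frac12\bra{\psi}\Pi_V\ket{\psi}$, up to $O(2^{-n})$. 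This is exactly a softened version of the measurement $\{\Pi_V,\Id-\Pi_V\}$ you wanted. It accepts with probability about $1$ on $V$ and about $\frac12+\frac12\cdot 2^{-n/2}$ on a random basis or Haar state, giving advantage close to $1/2$.

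This is the paper's proof. Your fiber-state decomposition and your Step 3 accounting are compatible with it. However, the one-query implementation rests on the SWAP test against a freshly synthesized copy, not on any coherent relabeling or un-preparation.
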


\paragraph{New proof of \cite{STOC:LomMaWri24}.} Another consequence of our approach, which we describe in the technical overview as well as \cref{app:iid-search}, is a simple proof of the hardness of general-purpose one-query unitary synthesis as in \cite{STOC:LomMaWri24}. This is accomplished by proving the hardness of the oracle state search game for i.i.d. random binary phase states $\ket{\psi_{R, k}} = \frac 1 {\sqrt N} \sum_x R(k,x) \ket{x}$; indeed, we can prove:

\begin{theorem}[see \cref{thm:iid-binary-phase-search}]\label{thm:iid-binary-phase-search-intro}
    For i.i.d. binary phase states $\ket{\psi_{R, k}}$, any one-query adversary $\mathcal A$ with workspace dimension $M$ wins the oracle state search game with probability at most $\frac{2\log M + O(1)}{K}$.
\end{theorem}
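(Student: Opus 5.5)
The plan is to reduce the search game to bounding the norm of an $M\times M$ random positive semidefinite matrix, and then apply a matrix Chernoff inequality; the matrix dimension $M$ is what should produce the $\log M$. I expect the normalization step below to be the main obstacle, and I have not checked that it goes through.

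\textbf{Step 1: normal form for the adversary.} Write a general one-query adversary as follows. It appends a workspace in state $\ket{0}$ to the challenge $\ket{\psi_{R,k}}$ and applies a fixed unitary $A$. It then applies the phase oracle $O_f=\sum_z (-1)^{f(z)}\ketbra{z}$ on a query register of dimension at most $M$, applies a fixed unitary $B$, and measures a projective POVM $\{\Pi_{k'}\}_{k'}$. The oracle $f=f_R$ may depend arbitrarily on the whole table $R$, but $A$, $B$ and $\{\Pi_{k'}\}$ do not. Set $P_k=B^\dagger \Pi_k B$ and expand $O_f=\sum_z s_z\ketbra{z}$ with signs $s\in\{\pm1\}^M$. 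The win probability is then the quadratic form
\[
\Win(\cA\mid R)=\frac1K\sum_{k}\bigl\|\Pi_k B O_f A\ket{\psi_{R,k}}\ket{0}\bigr\|^2 = s^{\mathsf T} G_R\, s,
\qquad
(G_R)_{z,z'}=\frac1K\sum_k \bra{\psi_{R,k},0}A^\dagger (\ketbra{z}) P_k(\ketbra{z'})A\ket{\psi_{R,k},0}.
\]
Here $G_R=\frac1K\sum_k G_{R,k}$, and each $G_{R,k}$ is the Gram matrix of the vectors $\{P_k(\ketbra{z})A\ket{\psi_{R,k},0}\}_z$, hence PSD. In this form the dependence of $f$ on $R$ only enters through the choice of $s$. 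So it suffices to bound $\mathbb E_R\max_{s} s^{\mathsf T}G_R s$.

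\textbf{Step 2: avoid a union bound over $2^M$ oracles.} A union bound over all sign vectors, combined with scalar Chernoff for the $K$ independent terms, only gives roughly $M/(K\log M)$, which is too weak. Instead I would use $s^{\mathsf T}G s\le \|s\|^2\|G\|_{\mathrm{op}}= M\|G_R\|_{\mathrm{op}}$ and control $\|G_R\|_{\mathrm{op}}$ by matrix concentration. The matrices $G_{R,k}$ are independent across $k$, since the rows $R(k,\cdot)$ are i.i.d. and the $P_k$ are fixed. They are also PSD. Matrix Chernoff (Tropp) in dimension $M$ then gives
\[
\mathbb E\|G_R\|_{\mathrm{op}}\le e\,\Bigl\|\mathbb E\, G_R\Bigr\|_{\mathrm{op}} + O(\log M)\cdot \max_k \frac{\|G_{R,k}\|_{\mathrm{op}}}{K}.
\]
The two terms on the right should correspond to the "$O(1)$" and the "$2\log M$" in the statement.

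\textbf{Step 3: compute the mean and the per-term bound (main obstacle).} For the mean, I would use $\mathbb E_R\ketbra{\psi_{R,k}}=\Id/N$, so that $\mathbb E\,G_{R,k}$ no longer depends on $k$ except through $P_k$. Then $\sum_k P_k\le \Id$ should give $\|\mathbb E G_R\|\lesssim 1/(KM)$, up to the correct normalization. For the second term I would aim for $\|G_{R,k}\|_{\mathrm{op}}\lesssim 1/M$.

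The difficulty is that the crude bound $\|s\|^2=M$ loses a factor of $M$. This has to be compensated exactly by showing that the query register's marginal is spread out, i.e.\ that both bounds above carry a $1/M$ factor. That is not automatic: an adversary that puts all of its amplitude on a single query position makes $G_{R,k}$ rank one with norm up to $1$.

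To handle this, I would first replace the plain Gram matrix by a reweighted form $D^{-1/2}G_RD^{-1/2}$. Here $D=\mathrm{diag}(\mathbb E\,\|(\ketbra{z})A\ket{\psi,0}\|^2)$ is the average query-amplitude distribution, and I would apply Cauchy--Schwarz in the form $s^{\mathsf T}Gs\le \mathrm{tr}(D)\,\|D^{-1/2}GD^{-1/2}\|$. The binary-phase structure would enter through a fourth-moment (hypercontractive) bound on $|\langle \psi_{R,k}|v\rangle|^2$. This should bound the per-$k$ term by $O(1)$ in the reweighted norm, so that the matrix Chernoff step yields the claimed $\frac{2\log M+O(1)}{K}$ after taking $\log$ of the dimension in base $2$.
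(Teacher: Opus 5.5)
Your Step 1 and the reweighting in Step 3 are sound. In fact, they reproduce the paper's spectral relaxation exactly. With the query acting on the whole workspace and $V=\sum_i \ket{i}\bra{v_i}$, you get $p_i=\|v_i\|^2/N$ and $\sum_i p_i=1$. Hence $s^{\mathsf T}G_R s\le \|\frac1K\sum_k D_{R,k}^\dagger P_k D_{R,k}\|$, where $D_{R,k}=\mathrm{diag}\bigl(\braket{v_i}{\psi_{R,k}}/\sqrt{p_i}\bigr)$, and the mean of this matrix is exactly $\frac1K\Id$. So the normalization is not where the difficulty lies.

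The gap is the concentration step. Matrix Chernoff needs an almost-sure bound $\lambda_{\max}(X_k)\le L$ on every summand. Here the only such bound is $\|P_kD_{R,k}\|^2/K\le \max_i |\braket{v_i}{\psi_{R,k}}|^2/(p_iK)\le N/K$. This bound is attained, for example, when the $\ket{v_i}$ are Hadamard basis vectors and $\ket{\psi_{R,k}}$ happens to coincide with one of them. A fourth-moment or hypercontractive estimate controls moments of $|\braket{v}{\psi_{R,k}}|^2$, not this uniform bound, so it cannot be fed into Chernoff. Your hope that each summand is $O(1)$ in the reweighted norm is also false typically. The quantity $\max_{i\in[M]}|\braket{v_i}{\psi_{R,k}}|^2/p_i$ is a maximum of $M$ unit-mean squared sub-Gaussians, and it can be of order $\log M$. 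If you truncate, using Hoeffding plus a union bound over $i\in[M]$ and $k\in[K]$, you get $L=O(\log(MK)/K)$. Chernoff then yields only $O(\log M\cdot\log(MK)/K)$, a logarithmic factor worse than $\frac{2\log M+O(1)}{K}$.

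The missing idea is to not concentrate the quadratic PSD sum at all, and to work with its linear square root instead. Since $\{P_k\}$ is projective, $P_kP_{k'}=\delta_{kk'}P_k$. Therefore $\frac1K\sum_k D_{R,k}^\dagger P_kD_{R,k}=M_R^\dagger M_R$ with $M_R=\frac{1}{\sqrt K}\sum_k P_kD_{R,k}$. Because $D_{R,k}$ is linear in the row $R(k,\cdot)$, you can write $M_R=\sum_{k,x}R(k,x)\,C_{k,x}$ with fixed matrices $C_{k,x}=\frac{1}{\sqrt{NK}}\sum_i \frac{\braket{v_i}{x}}{\sqrt{p_i}}P_k\ket{i}\bra{i}$. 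This is a matrix Rademacher series in the $NK$ independent signs. The Rademacher bound $\mathbb E\|Z\|^2\le 2v\bigl(1+\log(d_1+d_2)\bigr)$ needs only the variance parameter. The same calculation as your mean gives $\sum_{k,x}C_{k,x}C_{k,x}^\dagger\preceq\frac1K\Id$ and $\sum_{k,x}C_{k,x}^\dagger C_{k,x}\preceq\frac1K\Id$. Hence $\mathbb E_R\|M_R\|^2\le \frac{2(1+\log(2M))}{K}$, which is exactly the stated bound, with no almost-sure control of individual summands required.
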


With a little more work, this simple analysis also extends to the oracle state search game with states $\ket{\psi_k} = U\ket{k}$ defined by a Haar-random unitary (see \cref{thm:haar-basis-search}).

\paragraph{Unitary Synthesis vs. Quantum Programs.} Finally, we consider the state search game for extremely simple phase unitaries $F=\sum_x(-1)^{f(x)}\ketbra{x}$ and $\ket{\psi_k} = F H \ket{k}$ (analogous to the $t=2$ setting above), and prove its hardness for zero-query algorithms with \emph{quantum advice} about $f$. In other words, this is a quantitative separation between $1$-query unitary synthesis and (approximation by) ``quantum programs,'' or (approximately) synthesizing unitaries relative to an advice state. 

\begin{theorem}[see \cref{thm:fh-advice-main}]\label{thm:fh-advice-main-intro}
Let $f:\{0,1\}^n\to\{0,1\}$ be uniformly random. Suppose a (zero-query) non-uniform algorithm uses $S$ qubits of advice depending on $f$ and outputs $k$ from one copy of $FH\ket{k}$ with success probability $\varepsilon$, for $k \gets [K]$. Then,
$\varepsilon\le O(S/K)$.
\end{theorem}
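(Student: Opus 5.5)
The plan is to reduce the advice setting to a single-state decoding problem and then bound the Holevo-type capacity available to the advice. Write $\ket{\psi_{f,k}} = FH\ket{k} = \frac{1}{\sqrt N}\sum_x (-1)^{f(x)+k\cdot x}\ket{x}$. The non-uniform algorithm receives an advice state $\sigma_f$ on $S$ qubits together with one copy of $\ket{\psi_{f,k}}$ for $k\gets[K]$. It then applies a POVM $\{E_{k'}\}$, which is independent of $f$ and acts on advice plus input, so
\[
\varepsilon = \mathbb{E}_f\,\frac1K\sum_{k\in[K]}\Tr\bigl[E_k(\sigma_f\otimes\ketbra{\psi_{f,k}})\bigr].
\]

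The first step removes the advice with the standard trick of replacing it by the maximally mixed state. Since $\sigma_f \preceq 2^S\cdot \Id/2^S$, we get
\[
\varepsilon \le 2^S\,\mathbb{E}_f\frac1K\sum_k\Tr\bigl[E_k(\tfrac{\Id}{2^S}\otimes\ketbra{\psi_{f,k}})\bigr].
\]
This costs a factor $2^S$, which is far too lossy: it would only yield $\varepsilon\le 2^S\cdot O(1/K)$, whereas the statement asks for $S/K$. So a direct replacement will not suffice.

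Instead I will use an information-theoretic argument in the style of the state search game analyses. Think of a challenger that samples $f$ and $k$, and let $\rho_{f,k}=\sigma_f\otimes\ketbra{\psi_{f,k}}$.

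1. Since the decoder recovers $k$ with probability $\varepsilon$, Fano's inequality, or more sharply a union bound over guesses, lower-bounds the mutual information $I(K:\text{output})$ by roughly $\varepsilon\log K$ minus lower-order terms. A cleaner variant to try is a "guessing with repetition" amplification. Independent copies of $k$ for the same $f$ cannot be handled by reusing the advice, since a measurement disturbs it. So the cleaner route is the averaged state for fixed advice: for each $f$, the state $\frac1K\sum_k \ketbra{\psi_{f,k}}$ has $k$ encoded in Fourier phase masked by $f$.

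2. Averaging over $f$, the register $\ket{\psi_{f,k}}$ alone gives no information about $k$. Indeed $\mathbb{E}_f\ketbra{\psi_{f,k}}=\Id/N$, because random sign flips kill the off-diagonal terms. Hence all correlation with $k$ must come through the $S$-qubit advice $\sigma_f$.

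3. Bound the relevant mutual information via the chain rule:
\[
I(k : \text{advice}, \text{input}) \le I(k:\text{input}) + I(k f : \text{advice}\mid \text{input}).
\]
The first term is $0$ by step 2. The second term is at most $2S$ by the dimension bound on quantum mutual information; the factor 2 accounts for entanglement with the input, and it can likely be sharpened to $S$ because $\sigma_f$ depends only on $f$.

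4. Convert this into a success probability. For an $N$-dimensional ensemble in which $k$ is uniform on $[K]$ and the channel carries only $O(S)$ bits about $k$, success probability of order $S/K$ does not follow from Fano alone, which would give $(S+1)/\log K$. I therefore expect to need a sharper argument: compare the guessing probability $\sum_k \Tr[E_k\rho_k]/K$ to the null state $\rho_\emptyset=\mathbb{E}_f \sigma_f\otimes \Id/N$, using the operator inequality $\rho_k \preceq 2^{D_{\max}}\rho_\emptyset$. Under the null state the guessing probability is exactly $1/K$.

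The main obstacle is step 4, namely obtaining a linear $S/K$ bound rather than $2^S/K$ or $S/\log K$. My first attempt would bound the advantage by a relative-entropy-to-guessing inequality. If $\rho_{k}$ are states with $\frac1K\sum_k D(\rho_k\|\rho_\emptyset)\le D$ and guessing under $\rho_\emptyset$ succeeds with probability $1/K$, then data processing on the binary outcome "guess correct" gives
\[
d(\varepsilon\,\|\,1/K)\le D,
\]
which yields $\varepsilon \le (D+\log 2)/\log K$ — still logarithmic. To get $1/K$ scaling I would exploit the structure of the game. Conditioned on $f$, the states $\ket{\psi_{f,k}}$ are orthonormal, so the adversary's optimal strategy measures in the $f$-twisted Fourier basis, and succeeding means effectively learning $f$ on the relevant structure. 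I would then use the $K$-fold independence via a "multiple instances" argument: run the optimal decoder hypothetically on $K$ registers sharing one advice, where gentle measurement is applicable if $\varepsilon$ is large. Treating the advice as an $S$-qubit memory, the expected number of correctly decoded $k$'s, $\varepsilon K$, is bounded by $O(S)$ because each correct answer reveals fresh information about $f$ absent from the input. Making this rigorous despite measurement disturbance of the advice — perhaps via the paper's search-game techniques or a quantum random-access-code bound à la Nayak — is where I expect the real work.
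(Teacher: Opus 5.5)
There is a genuine gap, and it is the step you flag yourself: nothing in the proposal produces an $S/K$ bound. Steps 1--4 cannot get there even in principle, because they use only the fact that the advice carries $O(S)$ bits about $k$. An $O(S)$ information bound cannot give more than Fano's $(S+1)/\log K$. For example, side information that reveals $k$ outright with probability $S/\log K$, and nothing otherwise, has about $S$ bits of mutual information with $k$ and succeeds with probability about $S/\log K$. The multi-instance fallback also fails as sketched. Gentle measurement only helps when the success probability is close to $1$, whereas here $\varepsilon=O(S/K)$ is tiny. A failed decoding can therefore wreck the advice, and nothing guarantees that later instances still succeed with probability $\varepsilon$. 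A Nayak-style random-access-code bound does not directly control the number of correctly decoded instances in this low-success regime either.

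The missing idea is an amplification that reuses the same advice without fresh instances. With it, the $2^S$ maximally-mixed replacement you discarded becomes exactly the right tool.
\begin{itemize}
\item \emph{Amplification.} The paper purifies the challenge as $\ket{\psi_{\mathrm{init}}}=K^{-1/2}\sum_k\ket{k}_{\mathsf{C}}H\ket{k}_{\mathsf{X}}$. It then alternates the projectors $\Pi_{\mathrm{Win}}=\sum_k\ketbra{k}_{\mathsf{C}}\otimes\Pi_k$ and $F\Pi_{\mathrm{init}}F$ (Liu's alternating measurement game). By Jordan's lemma and Jensen's inequality, surviving $t$ rounds has probability $\mathbb{E}[p^t]\ge\varepsilon^t$.
\item \emph{Removing the advice.} Only after this amplification is the advice replaced by the maximally mixed state. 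This gives $\varepsilon^t\le 2^S W_t$, where $W_t$ is the advice-free $t$-round success probability. After taking $t$-th roots the loss is only $2^{S/t}=O(1)$ at $t=S$.
\item \emph{Bounding $W_t\le(8t/K)^t$.} This uses Zhandry's compressed oracle. Each $F\Pi_{\mathrm{init}}F$ costs two queries, so the database has $O(t)$ entries. Right after a successful projection onto $\ket{\psi_{\mathrm{init}}}$, register $\mathsf{CX}$ is unentangled from the rest. The next application of $F$ either deletes an $x$ already in the database, which has weight $O(t/N)$, or inserts $x$. In the insertion case, for each new database $D'$ the states $\ket{\phi_{k,D'}}$ all lie in a span of $|D'|=O(t)$ vectors. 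Hence $\frac1K\sum_k\|\Pi_k\ket{\phi_{k,D'}}\|^2\le|D'|/K$.
\item \emph{Concluding.} The conditional round probabilities are monotone, via $\mathbb{E}[p^{i-1}]\,\mathbb{E}[p^{i+1}]\ge\mathbb{E}[p^i]^2$. This extends the $O(t/K)$ per-round bound from odd rounds to all rounds, and $t=S$ gives $\varepsilon\le 16S/K$.
\end{itemize}
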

\cref{thm:fh-advice-main-intro} is tight up to constants when $K = \Omega(N)$, as a trivial algorithm without advice wins the search game with probability $1/N$ while memorizing the $N$-size truth table of $f$ would allow for winning the search game with probability $1$. In fact, we show in \cref{sec:fh-tightness} that the lower bound is tight over the entire range of $S$.

Notably, this search game is asymptotically \emph{harder} to win than for the classical states $\ket{\psi_{x,y}} = \ket{x} \ket{y\oplus f(x)}$. This is because the ``classical'' search game has a trivial algorithm whose win probability is equal to $2^{-\ell}$, where $\ell$ is the output length of $f$, which is always much larger than $2^{-(n+\ell)}$ (one over the relevant Hilbert space dimension).

\paragraph{An open question.} With these results in mind, a natural ``frontier question'' on the boundary of our current understanding is proving a one-query lower bound against algorithms that additionally receive quantum advice (before making their query); our proofs are currently limited to handling classical advice. This question is open for any family of unitaries, including permutations, $F_2 H^{\otimes n} F_1$, and Haar-random $U$.

\subsection{Acknowledgements} We thank William Kretschmer, Gregory Rosenthal, and John Wright for many helpful discussions, and in particular for posing the questions of whether there are one-query algorithms for permutation synthesis and complex phase unitary synthesis. %

F.D., A.L., and F.M. were all supported in part by a grant from the UC Noyce Initiative to the Simons Institute for the Theory of Computing. F.D. and A.L. were supported in part by NSF CAREER award CNS-2541300 and an E. Lawrence Keyes, Jr./Emerson Electric Co. Faculty Award.

\section{Technical Overview}

We begin with a recap of the approach of \cite{STOC:LomMaWri24} and why it does not appear capable of proving \cref{thm:main-permutation-intro,thm:main-f2hf1-intro}, including a discussion of \cref{thm:perm-distinguishing-main-intro}. 

Then, we introduce the oracle state search game and prove that it is hard when the input states are i.i.d. binary phase states. As a bonus, this gives an alternative proof of the original 1-query unitary synthesis lower bound, which we believe is simpler than the proof from \cite{STOC:LomMaWri24}.%

Finally, we discuss how to extend this new approach to prove \cref{thm:main-permutation-intro,thm:main-f2hf1-intro}. We leave discussion of \cref{thm:diagonal-positive-main-intro,thm:fh-advice-main-intro} to the body of the paper. {}

\subsection{Recap of LMW}
As discussed in the introduction, \cite{STOC:LomMaWri24} prove their unitary synthesis lower bound by studying a distinguishing task for families of states $\{\ket{\psi_{R, k}}\}_{k\in [K]}$ defined relative to an oracle $R$. The task, which corresponds to the security of single-copy pseudorandom states \cite{C:JiLiuSon18}, is to distinguish
\begin{itemize}
    \item $\sum_{k\in [K]} \ketbra{\psi_{R, k}}$ from
    \item the maximally mixed state
\end{itemize}
in an attack model where the adversary can make one Boolean function oracle query. If this game is hard for some $K<N/2$, then implementing the reflection about $\mathrm{Span}\{\ket{\psi_{R,k}}\}$ must be hard. 

In \cite{STOC:LomMaWri24}, they considered the case where
\[ \ket{\psi_{R, k}} = \frac 1 {\sqrt N} \sum_{x\in [N]} R(k, x) \ket{x}
\]
is a binary phase state, where $R(\cdot, \cdot)$ assigns an independent random sign to each input (or, alternatively, each $R(k,x)$ could be an independent complex Gaussian).

To analyze the maximum win probability in this game, they modeled an arbitrary adversary as having the form $\Pi \cdot \mathcal O_f \cdot V$, where $V: \mathbb C^N \rightarrow \mathbb C^M$ is a fixed isometry, $(\Pi, \Id - \Pi)$ is a fixed binary projective measurement, and $\mathcal O_f = \mathcal O_{f_R}$ is a binary phase oracle that can depend on the choice of $R$. Then, the adversary's win probability (for a fixed $R$) is given by
\[ \underset{f: [M]\rightarrow \{0,1\}}{\max}\Big|\underset{k \gets [K]}{\mathbb E} \bra{\psi_{R, k}} V^\dagger \mathcal O_f \Pi \mathcal O_f V \ket{\psi_{R, k}} - \underset{\ket{\psi}}{\mathbb E} \bra{\psi} V^\dagger \mathcal O_f \Pi \mathcal O_f V \ket{\psi} \Big|.
\]
To analyze this optimization problem, using the intuition that the states $\ket{\psi_{R,k}}$ (for a random choice of $R$) are Haar-random, \cite{STOC:LomMaWri24} introduce a \textbf{weight vector decomposition} 
\[ V \ket{\psi_{R,k}} = D_{R, k} \cdot \ket{\wt_V},
\]
where $\ket{\wt_V}$ is a fixed ``weight vector'' whose coordinates track the expected weight when $V$ is applied to a Haar-random state, and $D_{R,k}$ is an $(R, k)$-dependent rescaling matrix whose entries are \emph{linear functions} of the variables $\{R(k,x)\}_x$. The above expression can then be upper bounded by a \emph{spectral norm}  
\[ \Big|\Big|\underset{k \gets [K]}{\mathbb E} D_{R,k}^\dagger \Pi D_{R,k} -  \underset{\mathbf R}{\mathbb E} \hspace{.1cm} D_{\mathbf R}^\dagger \Pi D_{\mathbf R}  \Big| \Big|,
\]
where $D_{\mathbf R} = D_{\mathbf R,k}$ denotes the rescaling matrix distribution for a random $\mathbf R$ (which is independent of $k$).

This matrix norm can then be bounded --- either in expectation or with high probability over $R$ --- in one of two ways, but both methods \emph{crucially} rely on the fact that the variables $R(k, \cdot)$ are \textbf{independent across different choices of $k$.}

\begin{itemize}
    \item A matrix Bernstein inequality can bound the expression using only independence across different $k$ as well as a bound on $||D_{R, k}||$ with high probability over $R$. 
    \item Sharper bounds were proved by first ``decoupling'' the $D^\dagger_{R,k}$ from the $D_{R, k}$ (even before passing to the spectral relaxation), which is only possible for very specific distributions over $R$ (such as i.i.d. Gaussian or binary phase). 
\end{itemize}

\subsection{What goes wrong for other unitaries?}\label{sec:tech-overview-distinguishing-fails}
Suppose that we now want to prove lower bounds for synthesizing some family of unitaries $U$, such as $U = P$ or $U = F_2 H^{\otimes n} F_1$. Following \cite{STOC:LomMaWri24}, the natural idea would be to describe a family of states $\ket{\psi_{U, k}}$ exhibiting pseudorandomness properties.

Unfortunately, we immediately run into an issue: for a given $U$, what family of states $\ket{\psi_{U, k}}$ should we consider? A natural choice would be $\ket{\psi_{U, k}} = U \ket{k}$ (which works for Haar-random $U$), but for both $U = P$ and $U = F_2 H^{\otimes n} F_1$ such families fail to be pseudorandom for very simple reasons:

\begin{itemize}
    \item For permutations, it is easy to distinguish $P \ket{k} = \ket{\pi(k)}$ for $k \gets [K]$ from Haar-random with just a single query to $\pi^{-1}$: on input $\ket{x}$, compute $\pi^{-1}(x)$ and check whether it lies in the range $[K]$. 
    \item For $F_2 H F_1$, the state $F_2 H F_1 \ket{k} = (-1)^{f_1(k)} \cdot F_2 H^{\otimes n} \ket{k}$ can be synthesized (and therefore recognized) with a single query to $F_2$.
\end{itemize}

So in both cases, the distinguishing game with state family $\ket{\psi_{U,k}} = U\ket{k}$ is (possibly) much easier than the full-fledged synthesis task.

On the other hand, there is a natural alternative proposal for the state family: instead define $\ket{\psi_{U, k}} = U H^{\otimes n} \ket{k}$. At first glance, this choice appears to be promising, as the trivial attacks above (for our cases of interest) no longer apply. 

Unfortunately, it is completely unclear how to analyze the spectral norm of the matrix
\[ \underset{k \gets [K]}{\mathbb E} D_{R,k}^\dagger \Pi D_{R,k} -  \underset{\mathbf R}{\mathbb E} \hspace{.1cm} D_{\mathbf R}^\dagger \Pi D_{\mathbf R}
\]
arising from the \cite{STOC:LomMaWri24} argument. Superficially, the reason for this is that the matrices $\{D_{R,k}\}_k$, whose entries describe the amplitudes of the state $V \ket{\psi_{U,k}}$, are now \emph{highly dependent} across different choices of $k$. This rules out approaches based on Bernstein's inequality, and more generally, it seems very unclear how to argue about the concentration of such a random matrix. 

\paragraph{An attack.} In fact, this uncertainty is warranted, because we have a non-trivial attack on this pseudorandomness property! Specifically, we consider the permutation case, with states
\[ P H^{\otimes n} \ket{k} = \frac 1 {\sqrt N} \sum_{x\in [N]} (-1)^{\langle k, x\rangle } \ket{\pi(x)} = \frac 1 {\sqrt N} \sum_{x\in [N]} (-1)^{\langle k, \pi^{-1}(x)\rangle } \ket{x}.
\]
We claim that it is easy, with one function query, to distinguish $P H^{\otimes n} \ket{k}$ for $k \gets [\sqrt N]$ from a maximally mixed state. Specifically, we prove this when identifying $[N] = \{0,1\}^n, [K] = \{0^{n/2}\} \times \{0,1\}^{n/2}$ via binary representation. In this case, we re-name $k\in [K]$ as $(0^{n/2}, k) \in \{0,1\}^n$ and define $\pi^{-1}(x) = (g_1(x), g_2(x)) \in \{0,1\}^{n/2} \times \{0,1\}^{n/2}$, and write
\begin{align*} P H^{\otimes n} \ket{k} &= \frac 1 {\sqrt N} \sum_{x\in \{0,1\}^n} (-1)^{\langle k, g_2(x)\rangle } \ket{x} \\
&= \frac 1 {\sqrt N} \sum_{y\in \{0,1\}^{n/2}} (-1)^{\langle k, y\rangle } \cdot \sum_{x: g_2(x) = y} \ket{x} \\
&:= \frac 1 {N^{1/4}} \sum_y (-1)^{\langle k, y\rangle} \cdot \ket{\phi_y}
\end{align*}
The idea behind the attack is (just like before) to break pseudorandomness without fully inverting the unitary $P$. In this case, we make use of Rosenthal's one-query state synthesis algorithm \cite{SODA:Rosenthal24}: for any family of states $\ket{\psi_y}$ indexed by $y$, there is a classical oracle $f(\cdot, \cdot)$ relative to which $\ket{\phi_y}$ can be synthesized by querying $f(\cdot, y)$ (possibly along with some auxiliary junk state). This means that a single query to the function $f^*(\cdot, x) = f(\cdot, g_2(x))$ allows synthesizing $\ket{\phi_{g_2(x)}}$.

Applying this algorithm (in superposition) for $\ket{\psi_y} = \ket{y} \otimes \ket{\phi_y}$ gives us our attack, mapping $P H^{\otimes n} \ket{k}$ to 
\begin{align*} \frac 1 {N^{1/4}} \sum_{y\in \{0,1\}^{n/2}} (-1)^{\langle k, y\rangle} \cdot \ket{\phi_y} \otimes \ket{y} \otimes \ket{\phi_y} \otimes \ket{\mathrm{junk}_y},
\end{align*}
which we can recognize by applying a SWAP test to the first and third registers. 

In our opinion, this suggests that understanding one-query pseudorandomness properties of states of the form $\ket{\psi_{R, k}}$ (for structured randomness $R$) is extremely subtle!

\subsection{From decision to search}
With serious obstacles and negative results for generalizing \cite{STOC:LomMaWri24} outside of the setting of ``fully random'' states $\ket{\psi_k}$, we introduce the oracle state \emph{search} game as a new method for proving unitary synthesis lower bounds. As stated in the introduction, in the oracle state search game:

\begin{itemize}
    \item The challenger generates and sends $\ket{\psi_{R, k}}$ to the adversary for a random $k \gets [K]$.
    \item The adversary outputs a string $k' \in [K]$ and wins if $k = k'$. 
\end{itemize}
For example, if we have $\ket{\psi_k} = U\ket{k}$ for some unitary $U$, then synthesizing $U^\dagger$ is at least as hard as winning this game. 

To demonstrate our methodology, we now give a simple proof of \cref{thm:iid-binary-phase-search-intro}: that this game is hard for one-query adversaries when $\ket{\psi_{R,k}} = \frac 1 {\sqrt N} \sum_x R(k,x) \ket{x}$ for i.i.d. binary phases $R(k,x)$. 

\paragraph{Proof of \cref{thm:iid-binary-phase-search-intro}.} Without loss of generality, one-query adversaries for the oracle state search game have the following form:

\begin{itemize}
    \item Apply an isometry $V: \mathbb C^N \rightarrow \mathbb C^M$.
    \item Apply a phase unitary $F = F_R$ depending on $R$. 
    \item \textbf{Perform a projective $K$-outcome measurement} $\{\Pi_k\}_{k\in [K]}$. 
\end{itemize}
With this notation, the adversary's success probability is given by
\[ \Win(\mathcal A \mid R) = \underset{F}{\max}\underset{k \gets [K]}{\mathbb E}\Big[ \Big|\Big| \Pi_k F V\ket{\psi_{R, k}} \Big| \Big|^2 \Big]. 
\]
At first glance, this may appear more unwieldy than the adversary's advantage in the distinguishing game. However, a simple observation helps us a great deal: because $\{\Pi_k\}_k$ form a projective measurement, the states $\{\Pi_k F V \ket{\psi_{R,k}}\}$ are always orthogonal, so 
\[ \Win(\mathcal A \mid R) = \underset{F}{\max}\Big|\Big| \frac 1 {\sqrt K} \sum_{k\in [K]}\Pi_k F V\ket{\psi_{R, k}} \Big| \Big|^2 . 
\]
Now, using the \cite{STOC:LomMaWri24} diagonal decomposition for the state family
\[ \ket{\psi_{R, k}} = D_{R,k} \ket{\wt_V},
\]
we can upper bound this probability by the spectral relaxation
\begin{align*} \Win(\mathcal A \mid R) &=  \underset{F}{\max}\Big|\Big| \frac 1 {\sqrt K} \sum_{k\in [K]}\Pi_k F D_{R,k} \ket{\wt_V} \Big| \Big|^2 \\
&\leq \Big| \Big| \frac 1 {\sqrt K} \sum_{k \in [K]} \Pi_k D_{R, k} \Big| \Big|^2.
\end{align*}
Thus, we wish to upper bound the value
\[ \underset{R}{\mathbb E}\Big[ \big| \big| M_R\big| \big|^2\Big],
\]
where $M_R = \frac 1 {\sqrt K} \sum_k \Pi_k D_{R,k}$ is a mean zero random matrix. The big question is, \textbf{should we expect this quantity to be small}? To start with, we can calculate the \emph{matrix variance}, an important proxy for how large we expect this quantity to be:
\[ \underset{R}{\Var}(M_R) = \max\Big( \Big|\Big|\underset{R}{\mathbb E} \hspace{.1cm} M_R^\dagger M_R \Big| \Big|, \Big|\Big|\underset{R}{\mathbb E} \hspace{.1cm}  M_R M_R^\dagger \Big| \Big| \Big).
\]
Fortunately, the random matrix $M_R$ is quite well-behaved, and we can calculate
\begin{align*} \underset{R}{\mathbb E} \hspace{.1cm} M_R^\dagger M_R  &= \frac 1 K \sum_k \underset{R}{\mathbb E} [ D_{R,k}^\dagger \Pi_k D_{R,k}] \\
&= \frac 1 K \sum_k \underset{R}{\mathbb E} [ D_{R,0}^\dagger \Pi_k D_{R,0}] \\
&= \frac 1 K \underset{R}{\mathbb E}[ D^\dagger_{R,0} \cdot \Id \cdot D_{R,0}] = \frac 1 K \cdot \Id,
\end{align*}
as $\sum_k \Pi_k = \Id$ and the rescaling terms have been defined so that they square to $1$ on average. Similarly,
\begin{align*} \underset{R}{\mathbb E} \hspace{.1cm} M_R M_R^\dagger  &= \frac 1 K \sum_{k, k'} \underset{R}{\mathbb E} [  \Pi_{k'} D_{R,k'}D_{R,k}^\dagger \Pi_k] \\
&=\frac 1 K \sum_k \underset{R}{\mathbb E} [  \Pi_k D_{R,k}D_{R,k}^\dagger \Pi_k] \\
&= \frac 1 K \sum_k \underset{R}{\mathbb E} [  \Pi_k D_{R,0}D_{R,0}^\dagger \Pi_k] \\
&= \frac 1 K \sum_k \Pi_k = \frac 1 K \cdot \Id,
\end{align*}
where we additionally make use of the fact that $\underset{R}{\mathbb E}[ D_{R, k'} D_{R, k}^\dagger] = 0$ for $k\neq k'$. Thus, the variance parameter predicts the quantity $\big|\big| M_R\big|\big|^2$ to be roughly bounded by $\frac{1}{K}$ in expectation (up to $\poly(\log \dim M_R)$ factors). This is exactly what we are looking for!

To complete the proof in the case of i.i.d. binary phase states, we simply observe that since the entries of $D_{R,k}$ are linear combinations of the $R(k,x)$, the entire matrix $M_R$ is a ``matrix Rademacher series,'' or a Rademacher combination of fixed matrices, which is well-known to exhibit concentration governed by the matrix variance parameter \cite{tropp2015book}. This proves an $O(\frac{\log M}{K})$ win probability upper bound for adversaries acting on $M$ qubits. 

\subsection{Search game hardness beyond the random case}
While the analysis from the previous section was done with i.i.d. binary phase states in mind, it turns out that two very promising parts of the analysis hold under \emph{mild assumptions} on the distribution of coefficients $R$. That is: {}

\begin{itemize}
    \item The entries of the matrix $M_R$ always have a linear dependence on the coefficients $R(k,x)$. 
    \item The matrix variance $\underset{R}{\Var}(M_R) \leq \frac 1 K$, provided that (1) in expectation over $R$, for every $k$, $\underset{R}{\mathbb E} \ketbra{\psi_{R,k}}$ is maximally mixed over $\mathbb C^N$, and (2) for every $k\neq k'$ and every pair $(x, x')$, $\underset{R}{\mathbb E}[\overline{R(k, x)} R(k', x')] = 0$. 
\end{itemize}
Note that condition (1) is only about the \emph{marginal} mixed state $\underset{R}{\mathbb E} \ketbra{\psi_{R,k}}$, and does not require any level of independence between different $R(k, \cdot)$. In fact, we can relax this condition further, so that $\underset{R}{\mathbb E} \ketbra{\psi_{R,k}}$ is only required to be maximally mixed over some subspace independent of $k$.

Of course, the matrix variance statistic is a useful heuristic but does not guarantee that $ \underset{R}{\mathbb E} \hspace{.1cm} \big|\big|M_R \big|\big|^2$ is small (let alone on the order of $1/K$). Nevertheless, we are able to argue concentration for the two most prominent (much lower randomness complexity) distributions of unitaries one can ask about.

\paragraph{Permutations.} As before, we consider the family of states
\[\ket{\psi_{R,k}} = PH^{\otimes n}\ket{k} = \frac 1 {\sqrt N} \underset{x}{\sum} (-1)^{\langle k, \pi^{-1}(x)\rangle} \ket{x}
\]
corresponding to the function $R(k,x) = (-1)^{\langle k, \pi^{-1}(x)\rangle}$ for a random permutation $\pi$. Evidently, these values are highly correlated across $k$. Nevertheless, using the linearity of $M_R$, we can write
\[ M_R = \sum_{k,x} R(k,x) \cdot B_{k,x}
\]
for some fixed, reasonably explicit and well-behaved matrices $B_{k,x}$. And while our randomness has a lot of ``cross-$k$'' dependency, the functions $R(\cdot, x)$ are still mean zero\footnote{This requires excluding $k=0^n$ from the set of keys.} and extremely close to independent!  This conveniently means that the matrix variance parameter we calculated earlier is roughly bounded by $\max\Big(\big|\big|\sum_{k,x} B^\dagger_{k,x} B_{k,x}\big|\big|, \big|\big|\sum_{k,x} B_{k,x} B^\dagger_{k,x}\big|\big|\Big) \approx \frac 1 K$, so our heuristic is still good. 

\noindent We are ultimately able to analyze $M_R$ by writing it as a ``combinatorial matrix sum'' \cite{Annals:MJCFT14}
\[ M_R = \sum_{x} A_{x,\pi^{-1}(x)},
\]
for a $2^n \times 2^n$ family of matrices
\[ A_{x,y} = \sum_k (-1)^{\langle k, y\rangle} B_{k,x}
\]
satisfying two important properties:
\begin{itemize}
    \item For a random choice of $y$, the matrix $A_{x,y}$ is zero in expectation.
    \item The individual matrices $A_{x,y}$ are ``small'' in the expected sense: they have spectral norm at most $1/\sqrt{K}$, roughly speaking because the $B_{k,x}$ are mutually orthogonal.
\end{itemize}
It turns out that this information, plus a very similar calculation to the matrix variance bound from earlier, is enough to guarantee concentration \cite{Annals:MJCFT14}, so this proves \cref{thm:perm-search-main-intro}!

\paragraph{Alternating phases.} To rule out 1-query unitary synthesis of $F_2 H F_1$, we consider the states
\begin{align*}\ket{\psi_{R, k}} &= F_2 H F_1H \ket{k} \\
&= F_2 H \cdot \frac 1 {\sqrt N} \sum_y (-1)^{f_1(y) + \langle y, k\rangle } \ket{y} \\
&= \frac 1 N \sum_{x,y} (-1)^{f_2(x) + f_1(y) + \langle y, k\rangle  + \langle y, x\rangle} \ket{x},
\end{align*}
so our coefficients have the form
\[ R(k,x) = \frac 1 {\sqrt N} (-1)^{f_2(x)} \sum_y (-1)^{f_1(y) + \langle y, k+x\rangle} := (-1)^{f_2(x)} \cdot \alpha_{k+x}, 
\]
where the coefficients $\alpha_{k+x}$ depend on $f_1$ but not $f_2$. Thus, again writing
\[ M_R = \sum_{k,x} R(k,x) \cdot B_{k,x},
\]
we observe that although the $R(k,x)$ are not all independent, this \emph{is} a Rademacher matrix sum for every fixed $f_1$. This implies that
\[  \mathbb E_{f_1, f_2} \Big| \Big|M_R\Big|\Big|^2 \lesssim \mathbb E_{f_1} \Big|\Big| \underset{f_2}{\Var}(M_R) \Big|\Big| \lesssim \mathbb E_{f_1} \Big|\Big| \mathbb E_{f_2} M_R^\dagger M_R \Big|\Big| + \mathbb E_{f_1} \Big|\Big| \mathbb E_{f_2} M_R M_R^\dagger \Big|\Big|,
\]
so we have reduced the problem to another matrix concentration problem. Finally, while it may appear that arguing the concentration of $\widetilde M_{f_1} = \mathbb E_{f_2} M_R M_R^\dagger$ may be challenging because of a quadratic dependence on $f_1$, it turns out that there is a simple \emph{rectangular square root} $M_{f_1}$ that depends linearly on $f_1$ and satisfies $M_{f_1}^\dagger M_{f_1} = \widetilde M_{f_1}$. This allows us to bound $|| \widetilde M_{f_1}|| = || M_{f_1}||^2$ via a second matrix concentration inequality. We refer the reader to \cref{sec:f2hf1h-lb} for more details.

\section{Preliminaries}
\label{sec:preliminaries}

We use $N=2^n$ for the dimension of a main $n$-qubit register and $M=2^m$ for the dimension of a potentially larger workspace. For a positive integer $K$, we write $[K]:=\{0,1,\dots,K-1\}$. %

\subsection{The unitary synthesis problem}

We recall the oracle-circuit formulation from \cite{STOC:LomMaWri24}.

\begin{definition}[Approximating a unitary, \cite{STOC:LomMaWri24}]\label[definition]{def:approx_unitary}
Let $U$ be an $n$-qubit unitary, and let $\Phi_U$ be the associated quantum channel. Let $\Phi_{\mathrm{approx}}$ be a quantum channel with $n$-qubit input and output registers. We say that $\Phi_{\mathrm{approx}}$ $\varepsilon$-approximates $U$ if
\[
D_\diamond(\Phi_{\mathrm{approx}},\Phi_U)\le \varepsilon.
\]
\end{definition}

\begin{definition}[Channel implemented by an oracle circuit, \cite{STOC:LomMaWri24}]
\label{def:channel_implemented_by_oracle_circuit}
Given a $t$-query oracle circuit $\cA^{(\cdot)}$ with an $n$-qubit input register, an $m$-qubit workspace register, intermediate unitaries $U_1,\dots,U_{t+1}$ on $m$ qubits, and a Boolean function $f:\{0,1\}^{m}\to\{\pm 1\}$, the induced $n$-qubit channel $\Phi_{\cA^f}$ acts as follows.
\begin{enumerate}[leftmargin=1.5em]
    \item On input $\ket{\psi}$, prepare
    \[
    U_{t+1} \cdot O_f \cdot U_t \cdots O_f \cdot U_2 \cdot O_f \cdot U_1(\ket{\psi}\ket{0^{m-n}}).
    \]
    \item Output the first $n$ qubits and discard the remaining $m-n$ workspace qubits.
\end{enumerate}
More generally, if $W$ is an $m$-qubit unitary, we write $\Phi_{\cA^W}$ for the channel obtained by replacing each occurrence of $O_f$ above by $W$.
\end{definition}

\begin{definition}[(\(\mathcal C_1,\mathcal C_2\))-unitary synthesis]
Let $\mathcal C_1$ be a class of $n$-qubit unitaries and let $\mathcal C_2$ be a class of $m$-qubit unitaries. A $t$-query oracle circuit $\cA^{(\cdot)}$ is an $(\varepsilon,t)$-approximate $(\mathcal C_1,\mathcal C_2)$-synthesis algorithm if, for every $U\in \mathcal C_1$, there exists $W_U\in \mathcal C_2$ such that
\[
D_\diamond(\Phi_{\cA^{W_U}},\Phi_U)\le \varepsilon.
\]
\end{definition}
\noindent The standard variant of unitary synthesis concerns the case where $\mathcal C_2$ consists of binary phase unitaries (which implement Boolean functions). 

\begin{definition}[Unitary synthesis for a class $\mathcal C$]\label{def:unitary-synthesis}
Let $\mathcal F_{\{\pm 1\}}$ denote the class of binary phase oracles, i.e., unitaries of the form $O_f$.
We say that $\cA^{(\cdot)}$ is an $(\varepsilon,t)$-approximate synthesis algorithm for $\mathcal C$ if it is an $(\varepsilon,t)$-approximate $(\mathcal C,\mathcal F_{\{\pm 1\}})$-synthesis algorithm.
\end{definition}

\noindent A slight modification of the standard variant considers $\mathcal C_2$ to be the class of \emph{all} (not necessarily binary) phase unitaries. Since phase unitaries can be implemented to arbitrary precision given two queries to a binary phase oracle, it follows that $t$-query algorithms in this model can be simulated by $2t$-query algorithms in the standard model.

We observe that $(\mathcal C_1, \mathcal C_2)$-relative unitary synthesis obeys a simple composition theorem.

\begin{proposition}[Composition of relative unitary synthesis]
\label{prop:composition-relative-synthesis}
Let $\mathcal A^{(\cdot)}$ be an $(\varepsilon_1,t_1)$-approximate $(\mathcal C_1,\mathcal C_2)$-synthesis algorithm, and let $\mathcal B^{(\cdot)}$ be an $(\varepsilon_2,t_2)$-approximate $(\mathcal C_2,\mathcal C_3)$-synthesis algorithm. Then there exists an $(\varepsilon_1+t_1\varepsilon_2, t_1t_2)$-approximate $(\mathcal C_1,\mathcal C_3)$-synthesis algorithm.
\end{proposition}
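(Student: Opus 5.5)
The plan is to substitute $\mathcal B$ for each oracle call in $\mathcal A$ and control the accumulated error with a hybrid argument in diamond norm.

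\textbf{Construction.} Define the composed circuit $\mathcal C^{(\cdot)}$ by replacing each of the $t_1$ occurrences of the oracle in $\mathcal A^{(\cdot)}$ with a copy of the circuit $\mathcal B^{(\cdot)}$. Each copy acts on the $m$-qubit register that $\mathcal A$'s oracle would act on, and gets its own fresh workspace of $\mathcal B$, initialized to $\ket{0}$. The intermediate unitaries of $\mathcal C$ are those of $\mathcal A$ and $\mathcal B$, tensored with identity on the unused registers, so the whole thing is a single oracle circuit. It makes $t_1 t_2$ queries to an oracle from $\mathcal C_3$. For $U\in\mathcal C_1$, take $W_U\in\mathcal C_2$ witnessing $\mathcal A$'s guarantee, and $X=X_{W_U}\in\mathcal C_3$ witnessing $\mathcal B$'s guarantee for $W_U$. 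Supply $X$ to every query of $\mathcal C$.

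\textbf{Deferred discarding.} The circuit formalism keeps all workspace until the end. So each copy of $\mathcal B$'s workspace is simply carried along untouched and traced out only at the end. This is legitimate because later operations never act on it, and discarding commutes with operations on other registers. Hence $\Phi_{\mathcal C^{X}}$ equals the channel obtained from $\mathcal A$ by replacing each call to the unitary channel $W_U$ with the channel $\Phi_{\mathcal B^{X}}$ on $m$ qubits.

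\textbf{Hybrid argument.} Consider hybrids $H_0,\dots,H_{t_1}$, where $H_j$ uses $\Phi_{\mathcal B^X}$ in the first $j$ oracle slots and $W_U$ in the rest. Then $H_0=\Phi_{\mathcal A^{W_U}}$ and $H_{t_1}=\Phi_{\mathcal C^X}$. The hybrids $H_{j-1}$ and $H_j$ differ only in slot $j$. The rest of the computation is a channel applied before and after that slot, possibly with the other registers passed as reference systems. The diamond norm is contractive under composition with channels and stable under tensoring with identity, so
\[
D_\diamond(H_{j-1},H_j)\le D_\diamond(\Phi_{\mathcal B^X},\Phi_{W_U})\le \varepsilon_2 .
\]
By the triangle inequality, $D_\diamond(\Phi_{\mathcal C^X},\Phi_U)\le \varepsilon_1+t_1\varepsilon_2$.

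\textbf{Main subtlety.} The key point is that the guarantee for $\mathcal B$ is stated in diamond distance. This is what lets it absorb $\mathcal A$'s other registers, which are entangled with the queried register, as a reference system. The only other care needed is the bookkeeping of workspaces and initializations, which is routine.
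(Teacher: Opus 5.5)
Your proposal is correct and follows the paper's proof essentially verbatim. Like the paper, you substitute a fresh copy of $\mathcal B^{(\cdot)}$ for each of the $t_1$ oracle calls in $\mathcal A^{(\cdot)}$ and run a $t_1$-step hybrid in which consecutive hybrids differ in one slot. You bound each step by $\varepsilon_2$ using monotonicity of diamond distance under pre- and post-composition with channels, then finish with the triangle inequality against $\varepsilon_1$. Your explicit ``deferred discarding'' step, which identifies $\Phi_{\mathcal C^{X}}$ with $\mathcal A$'s circuit using the $m$-qubit channel $\Phi_{\mathcal B^{X}}$ in each slot, is bookkeeping the paper leaves implicit rather than a different route.
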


\begin{proof}
Fix $U\in \mathcal C_1$. Let $W_U\in \mathcal C_2$ witness the approximation guarantee for $\mathcal A^{(\cdot)}$, so that
\[
D_\diamond(\Phi_{\mathcal A^{W_U}},\Phi_U)\le \varepsilon_1.
\]
Let $Z_U\in \mathcal C_3$ witness the approximation guarantee for $\mathcal B^{(\cdot)}$ applied to $W_U$, so that
\[
D_\diamond(\Phi_{\mathcal B^{Z_U}},\Phi_{W_U})\le \varepsilon_2.
\]
Construct a new oracle circuit $\mathcal C^{(\cdot)}$ by replacing each of the $t_1$ query gates to $W$ inside $\mathcal A^{(\cdot)}$ by a fresh copy of $\mathcal B^{(\cdot)}$. Then $\mathcal C^{(\cdot)}$ makes $t_1t_2$ queries to its oracle.

For $j\in\{0,1,\dots,t_1\}$, let $\Gamma_j$ denote the channel obtained from $\mathcal A^{W_U}$ by replacing the first $j$ query gates to $W_U$ by $\mathcal B^{Z_U}$, while leaving the remaining $t_1-j$ query gates ideal. Thus $\Gamma_0=\Phi_{\mathcal A^{W_U}}$ and $\Gamma_{t_1}=\Phi_{\mathcal C^{Z_U}}$.
For each $j\in[t_1]$, the channels $\Gamma_{j-1}$ and $\Gamma_j$ differ only in a single query slot, so by monotonicity of diamond distance under pre- and post-composition with channels,
\[
D_\diamond(\Gamma_{j-1},\Gamma_j)\le D_\diamond(\Phi_{\mathcal B^{Z_U}},\Phi_{W_U})\le \varepsilon_2.
\]
By the triangle inequality,
\[
D_\diamond(\Phi_{\mathcal C^{Z_U}},\Phi_{\mathcal A^{W_U}})
\le \sum_{j=1}^{t_1} D_\diamond(\Gamma_{j-1},\Gamma_j)
\le t_1\varepsilon_2.
\]
Combining this with the outer approximation error gives
\[
D_\diamond(\Phi_{\mathcal C^{Z_U}},\Phi_U)
\le D_\diamond(\Phi_{\mathcal C^{Z_U}},\Phi_{\mathcal A^{W_U}})
+ D_\diamond(\Phi_{\mathcal A^{W_U}},\Phi_U)
\le t_1\varepsilon_2+\varepsilon_1.
\]
Since $U\in\mathcal C_1$ was arbitrary, the claim follows.
\end{proof}

We will also use a different but related notion of closeness between the implemented channel and the target unitary channel, based on worst-case fidelity on worst-case inputs, possibly entangled with an auxiliary register. We call this notion \emph{auxiliary-input correctness}.

\begin{definition}[Auxiliary-input correctness of a synthesis algorithm]\label{def:aux-input-correctness}
Let $\cA^{(\cdot)}$ be a universal oracle circuit with induced channel $\Phi_{\cA^{(\cdot)}}$ on the $n$-qubit input register. We say that $\cA^{(\cdot)}$ has correctness $\eta$ for synthesizing a family $\{U_n\}_n$ of $n$-qubit unitary if, for every $U$ in the family, there exists an oracle $f_U$ such that for every pure state $|\Psi\rangle_{X,\mathrm{Aux}}$ on the input register $X$ together with an arbitrary auxiliary register $\mathrm{Aux}$,
\[
F\!\left(
(\Phi_U \otimes \mathrm{Id}_{\mathrm{Aux}})(\ketbra{\Psi}),
\;
(\Phi_{\cA^{f_U}} \otimes \mathrm{Id}_{\mathrm{Aux}})(\ketbra{\Psi})
\right)\ge \eta .
\]
\end{definition}

This fidelity-based notion of correctness is equivalent to the diamond-norm approximate formulation in \cref{def:approx_unitary}, up to constant factor parameter loss, due to the following standard fact from quantum information theory \cite[Theorem~3.33]{Watrous-book}.

\begin{proposition}[Diamond distance versus aux-input correctness]
\label{prop:diamond-vs-correctness}
Let \(U\) be an \(n\)-qubit unitary, and let \(\Phi\) be an \(n\)-qubit channel.
Then,

\begin{enumerate}
    \item If \(D_\diamond(\Phi,\Phi_U)\le \varepsilon\), then \(\Phi\) has aux-input correctness at least \(1-\varepsilon\) for \(U\).

    \item If \(\Phi\) has aux-input correctness at least \(\eta\) for \(U\), then
    \[
    D_\diamond(\Phi,\Phi_U)\le \sqrt{1-\eta}.
    \]
\end{enumerate}

\end{proposition}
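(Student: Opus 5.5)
The plan is to derive both directions from standard relations between trace distance and fidelity for pure states, applied to the Stinespring-type structure of the target channel. Throughout, $D_\diamond(\Phi,\Phi_U)$ is defined as the maximum over pure states $\ket{\Psi}_{X,\mathrm{Aux}}$ of
\[
\tfrac12\big\|(\Phi\otimes\mathrm{Id})(\ketbra{\Psi})-(\Phi_U\otimes\mathrm{Id})(\ketbra{\Psi})\big\|_1 .
\]
It suffices to take pure inputs with $\dim\mathrm{Aux}=2^n$, by convexity and purification. The key simplification is that $(\Phi_U\otimes\mathrm{Id})(\ketbra{\Psi})=\ketbra{\Phi_\Psi}$ is pure, where $\ket{\Phi_\Psi}=(U\otimes\Id)\ket{\Psi}$. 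We use the convention $F(\ketbra{\phi},\sigma)=\bra{\phi}\sigma\ket{\phi}$, the squared fidelity; this is the convention under which the stated bounds come out exactly.

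For item 1, fix $\ket{\Psi}$ and write $\sigma=(\Phi\otimes\mathrm{Id})(\ketbra{\Psi})$. Consider the two-outcome measurement $\{\ketbra{\Phi_\Psi},\Id-\ketbra{\Phi_\Psi}\}$. On the pure target state it gives outcome $0$ with probability $1$, and on $\sigma$ it gives outcome $0$ with probability $\bra{\Phi_\Psi}\sigma\ket{\Phi_\Psi}=F$. Since trace distance upper bounds the total variation distance of any measurement's outcome distributions, we get $1-F\le \tfrac12\|\ketbra{\Phi_\Psi}-\sigma\|_1\le D_\diamond(\Phi,\Phi_U)\le\varepsilon$. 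Hence $F\ge 1-\varepsilon$ for every $\ket{\Psi}$, which is exactly aux-input correctness $1-\varepsilon$.

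For item 2, fix $\ket{\Psi}$ and apply the Fuchs--van de Graaf inequality with one argument pure. If $\rho=\ketbra{\phi}$, then $\tfrac12\|\rho-\sigma\|_1\le\sqrt{1-\bra{\phi}\sigma\ket{\phi}}$. One way to see this: decompose $\sigma$ into pure states $\ket{\sigma_i}$ with weights $p_i$. Then use joint convexity of trace distance, the pure-state identity $\tfrac12\|\ketbra{\phi}-\ketbra{\sigma_i}\|_1=\sqrt{1-|\langle\phi|\sigma_i\rangle|^2}$, and concavity of $t\mapsto\sqrt{1-t}$ via Jensen. (Alternatively, cite \cite[Theorem~3.33]{Watrous-book} directly.) By hypothesis $\bra{\Phi_\Psi}\sigma\ket{\Phi_\Psi}\ge\eta$ for every $\ket{\Psi}$, so each trace distance is at most $\sqrt{1-\eta}$. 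Maximizing over $\ket{\Psi}$ then gives $D_\diamond(\Phi,\Phi_U)\le\sqrt{1-\eta}$.

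The only real obstacle is bookkeeping rather than mathematics. First, the fidelity convention must be fixed, since the root fidelity would change item 1 to $1-\varepsilon$ for $\sqrt F$, a different constant. Second, we must confirm that restricting the diamond norm to pure, finite-dimensional auxiliary inputs is without loss of generality, so that the quantifiers in \cref{def:aux-input-correctness} and in the diamond distance match. Both are standard.
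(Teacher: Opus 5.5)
Your proof is correct and takes essentially the paper's approach: the paper simply invokes the Fuchs--van de Graaf inequalities \cite[Theorem~3.33]{Watrous-book}, and your convexity-plus-Jensen argument for item 2 is exactly that upper bound specialized to a pure target output. For item 1, your measurement argument is slightly sharper than a literal citation. Under the paper's squared-fidelity convention (cf.\ the Bures angle $\arccos\sqrt{F}$ in \cref{prop:composition-aux-correctness}), the general lower bound $1-D\le\sqrt{F}$ would only give $F\ge(1-\varepsilon)^2$; it is the purity of $(U\otimes\Id)\ket{\Psi}$ that yields the stated $1-\varepsilon$.
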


This means that aux-input correctness obeys a composition theorem due to \cref{prop:composition-relative-synthesis}. However, we observe that at least for the case $t_1 = t_2 = 1$, there is a tighter composition theorem without passing through \cref{prop:diamond-vs-correctness}.

\begin{proposition}[Composition of aux-input correctness]
\label{prop:composition-aux-correctness}
Let $\mathcal A^{(\cdot)}$ and $\mathcal B^{(\cdot)}$ be one-query oracle circuits, and let $\mathcal C_1,\mathcal C_2,\mathcal C_3$ be classes of unitaries. Assume that the following hold:
\begin{enumerate}[leftmargin=1.5em]
    \item For every $U\in \mathcal C_1$, there exists $W_U\in \mathcal C_2$ such that for every pure state $\ket{\Psi}_{X,\mathrm{Aux}}$,
    \[
    F\!\left(
    (\Phi_U \otimes \mathrm{Id}_{\mathrm{Aux}})(\ketbra{\Psi}),
    \;
    (\Phi_{\mathcal A^{W_U}} \otimes \mathrm{Id}_{\mathrm{Aux}})(\ketbra{\Psi})
    \right)\ge \eta_1.
    \]
    \item For every $W\in \mathcal C_2$, there exists $Z_W\in \mathcal C_3$ such that for every pure state $\ket{\Phi}_{Y,\mathrm{Aux}}$,
    \[
    F\!\left(
    (\Phi_W \otimes \mathrm{Id}_{\mathrm{Aux}})(\ketbra{\Phi}),
    \;
    (\Phi_{\mathcal B^{Z_W}} \otimes \mathrm{Id}_{\mathrm{Aux}})(\ketbra{\Phi})
    \right)\ge \eta_2.
    \]
\end{enumerate}
Then there exists a one-query oracle circuit $\mathcal D^{(\cdot)}$ such that for every $U\in \mathcal C_1$, there exists $Z_U\in \mathcal C_3$ for which, for every pure state $\ket{\Psi}_{X,\mathrm{Aux}}$,
\[
F\!\left(
(\Phi_U \otimes \mathrm{Id}_{\mathrm{Aux}})(\ketbra{\Psi}),
\;
(\Phi_{\mathcal D^{Z_U}} \otimes \mathrm{Id}_{\mathrm{Aux}})(\ketbra{\Psi})
\right)\ge \eta_\star,
\]
where
\[
\eta_\star := \cos^2\!\left(\min\!\left\{\frac{\pi}{2},\arccos\sqrt{\eta_1}+\arccos\sqrt{\eta_2}\right\}\right).
\]
\end{proposition}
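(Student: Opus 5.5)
The circuit $\mathcal D^{(\cdot)}$ is the obvious composition: run $\mathcal A^{(\cdot)}$, and at its single query slot run $\mathcal B^{(\cdot)}$ instead of calling $W$. This makes exactly one query to $Z$. Fix $U\in\mathcal C_1$, let $W=W_U$, and set $Z_U := Z_{W_U}$. The argument works in terms of the Bures angle $A(\rho,\sigma) := \arccos \sqrt{F(\rho,\sigma)}$, which is a metric on states and is monotone under CPTP maps: $A(\Lambda(\rho),\Lambda(\sigma))\le A(\rho,\sigma)$. The hypotheses translate to $A\le \arccos\sqrt{\eta_1}$ for $\mathcal A$ and $A\le\arccos\sqrt{\eta_2}$ for $\mathcal B$, uniformly over all pure inputs with arbitrary auxiliary register. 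Since fidelity of mixed inputs is handled by purification, the hypotheses also hold for mixed inputs on $X\otimes\mathrm{Aux}$.

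Fix an input $\ket{\Psi}_{X,\mathrm{Aux}}$ and write $\mathcal A^{W}$ as $\Lambda_2\circ (W\otimes \mathrm{Id})\circ \Lambda_1$. Here $\Lambda_1$ appends the workspace and applies $U_1$; $\Lambda_2$ applies $U_2$ and traces out the workspace. Let $\rho := (\Lambda_1\otimes \mathrm{Id}_{\mathrm{Aux}})(\ketbra{\Psi})$, a state on the $m$-qubit query register together with $\mathrm{Aux}$. The plan is to apply the triangle inequality for $A$ through the intermediate state $\sigma_{\mathrm{mid}} := (\Phi_{\mathcal A^{W}}\otimes\mathrm{Id})(\ketbra{\Psi})$:
\[
A\bigl((\Phi_U\otimes \mathrm{Id})(\ketbra{\Psi}),\,(\Phi_{\mathcal D^{Z_U}}\otimes\mathrm{Id})(\ketbra{\Psi})\bigr)\le A\bigl(\Phi_U(\Psi),\sigma_{\mathrm{mid}}\bigr)+A\bigl(\sigma_{\mathrm{mid}},\Phi_{\mathcal D^{Z_U}}(\Psi)\bigr).
\]
The first term is at most $\arccos\sqrt{\eta_1}$ by hypothesis 1. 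For the second term, note that $\sigma_{\mathrm{mid}}=(\Lambda_2\otimes\mathrm{Id})\bigl((\Phi_W\otimes\mathrm{Id})(\rho)\bigr)$ and $\Phi_{\mathcal D^{Z_U}}(\Psi)=(\Lambda_2\otimes\mathrm{Id})\bigl((\Phi_{\mathcal B^{Z_U}}\otimes\mathrm{Id})(\rho)\bigr)$. Monotonicity under $\Lambda_2$ reduces the second term to $A\bigl((\Phi_W\otimes\mathrm{Id})(\rho),(\Phi_{\mathcal B^{Z_U}}\otimes\mathrm{Id})(\rho)\bigr)$. To bound this, purify $\rho$ into a further auxiliary register; tracing that register out is again monotone. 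Hypothesis 2, applied with $Y$ equal to the query register and the auxiliary register equal to $\mathrm{Aux}$ plus the purifier, then bounds it by $\arccos\sqrt{\eta_2}$.

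Combining the two bounds, the total Bures angle is at most $\arccos\sqrt{\eta_1}+\arccos\sqrt{\eta_2}$. It is also trivially at most $\pi/2$. Since $\cos^2$ is decreasing on $[0,\pi/2]$, this gives $F\ge \eta_\star$.

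The main point requiring care is the triangle inequality for the Bures angle on mixed states, together with its monotonicity; both are standard, and I would cite them, e.g.\ from Nielsen–Chuang or Watrous. The remaining subtlety is matching registers. $\mathcal B$ is a synthesis algorithm for an $m$-qubit unitary $W$, so $\mathcal B$'s input register is the full $m$-qubit query register of $\mathcal A$, and $\mathcal B$'s own workspace is fresh ancilla. Because the hypothesis quantifies over arbitrary auxiliary registers, entanglement with $\mathcal A$'s other registers causes no problem.
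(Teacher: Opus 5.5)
Your proposal is correct and follows essentially the same route as the paper: substitute $\mathcal B^{(\cdot)}$ into the query slot of $\mathcal A^{(\cdot)}$, bound the angle between the outputs of $\mathcal A^{W_U}$ and $\mathcal D^{Z_U}$ by applying hypothesis 2 to the pre-query state (absorbing all other registers into the auxiliary register) together with monotonicity under the common post-query channel, and combine with hypothesis 1 via the triangle inequality for the Bures angle. Your purification of $\rho$ is harmless but unnecessary: since $\Lambda_1$ is an isometry and $\ket{\Psi}$ is pure, $\rho$ is already pure, which is the form in which the paper uses it.
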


\begin{proof}
Let $\mathcal D^{(\cdot)}$ be obtained by replacing the unique oracle call inside $\mathcal A^{(\cdot)}$ by $\mathcal B^{(\cdot)}$. Fix $U\in \mathcal C_1$, and choose $W_U\in \mathcal C_2$ and $Z_U\in \mathcal C_3$ as in the hypotheses. Fix any pure state $\ket{\Psi}_{X,\mathrm{Aux}}$, and define
\[
\rho := (\Phi_{\mathcal D^{Z_U}}\otimes \mathrm{Id}_{\mathrm{Aux}})(\ketbra{\Psi}),
\qquad
\sigma := (\Phi_{\mathcal A^{W_U}}\otimes \mathrm{Id}_{\mathrm{Aux}})(\ketbra{\Psi}),
\qquad
\tau := (\Phi_U\otimes \mathrm{Id}_{\mathrm{Aux}})(\ketbra{\Psi}).
\]
By the first hypothesis, $F(\sigma,\tau)\ge \eta_1$.

It remains to lower bound $F(\rho,\sigma)$. Let $\ket{\Omega}_{Q,\mathrm{Aux}}$ be the pure state of the queried register of $\mathcal A$, together with all remaining workspace registers and the auxiliary register $\mathrm{Aux}$, immediately before the unique query gate of $\mathcal A^{W_U}$ on input $\ket{\Psi}$. Replacing that ideal query $W_U$ by the one-query implementation $\mathcal B^{Z_U}$ acts on the queried register while leaving $\mathrm{Aux}$ untouched, so the second hypothesis gives fidelity at least $\eta_2$ between the corresponding post-query states. Applying the common post-query channel of $\mathcal A$ to both branches and using monotonicity of fidelity under channels, we obtain $F(\rho,\sigma)\ge \eta_2$.

Now define the Bures angle $A(\alpha,\beta):=\arccos \sqrt{F(\alpha,\beta)}$. By the triangle inequality for the Bures angle (see, e.g., \cite[Section~9.2]{Watrous-book}),
\[
A(\rho,\tau)\le A(\rho,\sigma)+A(\sigma,\tau)
\le \arccos\sqrt{\eta_2}+\arccos\sqrt{\eta_1}.
\]
Therefore,
\[
F(\rho,\tau)\ge \cos^2\!\left(\min\!\left\{\frac{\pi}{2},\arccos\sqrt{\eta_1}+\arccos\sqrt{\eta_2}\right\}\right)=\eta_\star.
\]
Since $\ket{\Psi}_{X,\mathrm{Aux}}$ was arbitrary, the claim follows.
\end{proof}

\subsection{One-query normal form}

The lower bounds in this paper all concern one-query algorithms, so we describe a normal form that will be used throughout, following \cite{STOC:LomMaWri24}.

\begin{definition}[One-query unitary synthesis algorithm]
A one-query unitary synthesis algorithm on $n$-qubit inputs is specified by
\begin{itemize}[leftmargin=1.5em]
    \item an oracle $O_f$,
    \item an isometry $V:\mathbb{C}^{N}\to\mathbb{C}^{M}$, representing the computation before the oracle query, and
    \item a unitary $U$ on $\mathbb{C}^{M}$, representing the computation after the oracle query.
\end{itemize}
On input $\ket{\psi}\in\mathbb{C}^N$, the corresponding quantum channel prepares $U\,O_f\,V\ket{\psi}$ and then outputs the designated $n$-qubit subsystem.
\end{definition}

Fixing the computational basis on the $M$-dimensional workspace, every isometry $V:\mathbb{C}^{N}\to\mathbb{C}^{M}$ can be written as
\[
V=\sum_{i\in[M]} \ketbra{i}{v_i},
\]
where the vectors $\ket{v_i}\in\mathbb{C}^N$ satisfy
\[
\sum_{i\in[M]} \ketbra{v_i}=\Id_N.
\]

\subsection{The weight vector decomposition relative to an input distribution}\label{sec:weight-vector-decomposition}

We next define the \textbf{diagonal (weight vector) decomposition} of an isometry $V$ with respect to a distribution on input states. This generalizes the diagonal decomposition of \cite{STOC:LomMaWri24} to an arbitrary input distribution (\cite{STOC:LomMaWri24} considered only maximally mixed inputs). In our setting, the relevant weight vector is attached not just to the isometry $V$, but to $V$ together with the input distribution.

\begin{lemma}[Weight vector decomposition]
\label{lem:weight-vector-decomposition}
Let $V:\mathbb{C}^N\to\mathbb{C}^M$ be an isometry, and write
\[
V=\sum_{i\in[M]} \ketbra{i}{v_i}.
\]
Let $\mu$ be a distribution on pure states in $\mathbb{C}^N$, and define
\[
p_i:=\mathbb{E}_{\psi\sim\mu}\bigl[\abs{ \braket{v_i}{\psi}}^2\bigr] = \bra{v_i} \cdot \mathbb{E}_{\psi}  \Big[ \ketbra{\psi} \Big] \cdot \ket{v_i},
\qquad
\ket{\wt_{V,\mu}}:=\sum_{i\in[M]} \sqrt{p_i}\,\ket{i}.
\]
For each pure state $\ket{\psi}$, define the diagonal matrix
\[
D_{V,\psi}^{(\mu)}
:=
\sum_{i\in[M]:\,p_i>0}
\frac{\braket{v_i}{\psi}}{\sqrt{p_i}}\ketbra{i},
\]
with diagonal entry $0$ when $p_i=0$. Then $\ket{\wt_{V,\mu}}$ is a unit vector, and for $\mu$-almost every $\ket{\psi}$ we have
\[
V\ket{\psi}=D_{V,\psi}^{(\mu)}\ket{\wt_{V,\mu}}.
\]
\end{lemma}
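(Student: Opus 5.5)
The plan is to verify the two claims directly from the definitions, in two short steps.

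\textbf{Step 1: $\ket{\wt_{V,\mu}}$ is a unit vector.} We compute
\[
\braket{\wt_{V,\mu}} \;=\; \sum_i p_i \;=\; \mathbb{E}_{\psi\sim\mu}\Bigl[\sum_i \abs{\braket{v_i}{\psi}}^2\Bigr] \;=\; \mathbb{E}_{\psi}\bigl[\bra{\psi}\textstyle\sum_i \ketbra{v_i}\ket{\psi}\bigr].
\]
Since $V$ is an isometry, $\sum_i \ketbra{v_i} = V^\dagger V = \Id_N$. Each $\ket{\psi}$ is a unit vector, so every term inside the expectation equals $1$, and hence $\sum_i p_i = 1$.

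\textbf{Step 2: the decomposition $V\ket{\psi}=D_{V,\psi}^{(\mu)}\ket{\wt_{V,\mu}}$.} Applying the diagonal matrix to the weight vector gives
\[
D_{V,\psi}^{(\mu)}\ket{\wt_{V,\mu}} \;=\; \sum_{i:\,p_i>0} \frac{\braket{v_i}{\psi}}{\sqrt{p_i}}\sqrt{p_i}\,\ket{i} \;=\; \sum_{i:\,p_i>0}\braket{v_i}{\psi}\ket{i}.
\]
On the other hand, $V\ket{\psi} = \sum_i \braket{v_i}{\psi}\ket{i}$. The two expressions therefore agree exactly when $\braket{v_i}{\psi}=0$ for every $i$ with $p_i=0$.

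\textbf{The only subtle point: the coordinates with $p_i = 0$.} We must show that the exceptional set is $\mu$-null. Fix $i$ with $p_i=0$. Then $\mathbb{E}_{\psi\sim\mu}\abs{\braket{v_i}{\psi}}^2 = 0$, and a nonnegative random variable with zero mean vanishes almost surely, so $\braket{v_i}{\psi}=0$ for $\mu$-almost every $\psi$. There are only $M$ indices, which is finitely many, so the union of the exceptional null sets is still null. Outside this null set the two expressions coincide, which proves the identity for $\mu$-almost every $\ket{\psi}$.
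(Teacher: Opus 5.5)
Your proposal is correct and follows the paper's proof essentially step for step: you normalize the weight vector using $\sum_i \ketbra{v_i} = \Id_N$, and you handle the coordinates with $p_i = 0$ by noting that a nonnegative random variable with zero mean vanishes almost surely. Your explicit union over the finitely many exceptional null sets, one for each index $i \in [M]$, is a small point that the paper leaves implicit.
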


\begin{proof}
Since $V$ is an isometry,
\[
\sum_{i\in[M]} \abs{\braket{v_i}{\psi}}^2
=
\bra{\psi}\left(\sum_{i\in[M]}\ketbra{v_i}\right)\ket{\psi}
=
\braket{\psi}
=1
\]
for every unit vector $\ket{\psi}$. Averaging over $\psi\sim\mu$ gives $\sum_i p_i=1$, so $\ket{\wt_{V,\mu}}$ has unit norm.

For the decomposition itself, if $p_i=0$ then the nonnegative random variable $\abs{\braket{v_i}{\psi}}^2$ has expectation $0$, and hence vanishes with probability $1$. Therefore, for $\mu$-almost every $\ket{\psi}$,
\[
D_{V,\psi}^{(\mu)}\ket{\wt_{V,\mu}}
=
\sum_{i\in[M]} \frac{\braket{v_i}{\psi}}{\sqrt{p_i}}\ketbra{i}
\left(\sum_{j\in[M]}\sqrt{p_j}\ket{j}\right)
=
\sum_{i\in[M]} \braket{v_i}{\psi}\ket{i}
=
V\ket{\psi}.
\]
\end{proof}

\subsection{Useful concentration inequalities}

In this section, we state two matrix concentration inequalities that are used in the proofs of \cref{thm:f2hf1h-search-main,thm:perm-search-main}, respectively. 

\begin{theorem}[Matrix Rademacher series, {\cite[Theorem~4.1.1 and Equation~(4.1.7)]{tropp2015book}}]
\label{thm:tropp-rademacher}
Let $\{\varepsilon_k\}_k$ be independent Rademacher random variables and let $\{B_k\}_k$ be fixed complex matrices of dimension $d_1\times d_2$. Define
\[
Z:=\sum_k \varepsilon_k B_k,
\qquad
v(Z):=\max\!\left\{
\left\|\sum_k B_k B_k^*\right\|,
\left\|\sum_k B_k^* B_k\right\|
\right\}.
\]
Then for all $t\ge 0$,
\[
\Pr[\norm{Z}\ge t]\le (d_1+d_2)\exp\!\left(-\frac{t^2}{2v(Z)}\right),
\]
and moreover
\[
v(Z)\le \mathbb{E}\norm{Z}^2\le 2\,v(Z)\bigl(1+\log(d_1+d_2)\bigr).
\]
\end{theorem}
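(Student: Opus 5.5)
The plan is the standard matrix Laplace transform argument, run on a Hermitian dilation so that the rectangular case reduces to the Hermitian one. For each $d_1\times d_2$ matrix $B$, let $\mathcal H(B)=\begin{pmatrix}0&B\\ B^*&0\end{pmatrix}$, a Hermitian matrix of dimension $d:=d_1+d_2$. It has three useful properties:
\begin{itemize}
\item it is linear in $B$;
\item $\lambda_{\max}(\mathcal H(B))=\norm{B}$, because its spectrum is symmetric about $0$;
\item $\mathcal H(B)^2=\mathrm{diag}(BB^*,B^*B)$.
\end{itemize}
Write $Y:=\mathcal H(Z)=\sum_k\varepsilon_k H_k$ with $H_k:=\mathcal H(B_k)$. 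Then $\norm{Z}=\lambda_{\max}(Y)$ and $\norm{\sum_k H_k^2}=v(Z)$. So it suffices to prove a tail bound for $\lambda_{\max}$ of a Hermitian Rademacher series in dimension $d$.

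For the tail bound, fix $\theta>0$. By Markov's inequality and $e^{\theta\lambda_{\max}(Y)}\le \mathrm{tr}\,e^{\theta Y}$,
\[
\Pr[\lambda_{\max}(Y)\ge t]\le e^{-\theta t}\,\mathbb E\,\mathrm{tr}\,e^{\theta Y}.
\]
The key step is subadditivity of matrix cumulant generating functions. It follows from Lieb's theorem, which says $A\mapsto \mathrm{tr}\exp(H+\log A)$ is concave on positive definite matrices. Applying this and Jensen's inequality one independent summand at a time gives
\[
\mathbb E\,\mathrm{tr}\,e^{\theta Y}\le \mathrm{tr}\exp\Bigl(\sum_k \log \mathbb E\, e^{\theta\varepsilon_k H_k}\Bigr).
\]
Next, $\mathbb E\,e^{\theta\varepsilon_k H_k}=\cosh(\theta H_k)\preceq e^{\theta^2H_k^2/2}$; this is a scalar inequality applied through the spectral theorem. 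Since $\log$ is operator monotone and $\mathrm{tr}\exp$ is monotone, we get
\[
\mathbb E\,\mathrm{tr}\,e^{\theta Y}\le \mathrm{tr}\exp\Bigl(\tfrac{\theta^2}{2}\sum_k H_k^2\Bigr)\le d\,e^{\theta^2 v(Z)/2}.
\]
Choosing $\theta=t/v(Z)$ gives $\Pr[\norm{Z}\ge t]\le d\,e^{-t^2/(2v(Z))}$. I expect this Lieb-concavity step to be the main obstacle. The rest is scalar calculus plus operator monotonicity, so I would cite Lieb's theorem as a black box rather than reprove it.

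For the expectation bounds, the lower bound uses Jensen's inequality with the convex map $X\mapsto \norm{X}^2$ (equivalently $\norm{ZZ^*}=\norm{Z}^2$):
\[
\mathbb E\norm{Z}^2\ge \norm{\mathbb E\,ZZ^*}=\Bigl\|\sum_k B_kB_k^*\Bigr\|,
\]
where the cross terms vanish because the $\varepsilon_k$ are independent and mean zero. The same argument with $Z^*Z$ gives the other term, so $\mathbb E\norm{Z}^2\ge v(Z)$. For the upper bound, integrate the tail: for any $a\ge 0$,
\[
\mathbb E\norm{Z}^2=\int_0^\infty\Pr[\norm{Z}^2\ge u]\,du\le a+\int_a^\infty d\,e^{-u/(2v)}\,du=a+2v\,d\,e^{-a/(2v)}.
\]
Taking $a=2v\log d$ yields $\mathbb E\norm{Z}^2\le 2v(Z)\bigl(1+\log(d_1+d_2)\bigr)$, which completes the plan.
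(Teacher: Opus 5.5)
Your proposal is correct and is essentially the standard proof from Tropp's book, which the paper cites rather than reproves. You follow the same steps: Hermitian dilation to a $(d_1+d_2)$-dimensional Hermitian series, then Lieb-based subadditivity of the matrix cumulant generating function with $\cosh(\theta H_k)\preceq e^{\theta^2 H_k^2/2}$ for the tail bound, then Jensen's inequality for $\mathbb{E}\|Z\|^2\ge v(Z)$, and finally integration of the tail with cutoff $a=2v(Z)\log(d_1+d_2)$ for the upper bound.
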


\begin{theorem}[Bernstein inequality for a combinatorial matrix sum, {\cite[Corollary~10.3]{Annals:MJCFT14}}]
\label{thm:matrix-bernstein-permutation}
Let $(A_{jk})_{j,k=1}^n$ be Hermitian $d\times d$ matrices such that
\[
\sum_{j,k=1}^n A_{jk}=0
\qquad\text{and}\qquad
\norm{A_{jk}}\le r
\quad\text{for all }(j,k).
\]
Let $\pi$ be a uniformly random permutation of $\{1,\dots,n\}$ and define
\[
X:=\sum_{j=1}^n A_{j,\pi(j)}.
\]
Then for every $t\ge 0$,
\[
\Pr\bigl[\lambda_{\max}(X)\ge t\bigr]
\le d\cdot \exp\!\left(
-\frac{t^2}{12\sigma^2+4\sqrt{2}\,rt}
\right),
\]
where
\[
\sigma^2:=\frac{1}{n}\left\|\sum_{j,k=1}^n A_{jk}^2\right\| = \frac{1}{n}\left\|\sum_{j,k=1}^n A_{jk}^\dagger A_{jk}\right\|.
\]
\end{theorem}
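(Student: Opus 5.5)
The statement above is quoted from \cite[Corollary~10.3]{Annals:MJCFT14}, and in the paper it is simply invoked. If I had to reprove it, I would use the \emph{method of exchangeable pairs} for matrix concentration, following Mackey et al.

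\textbf{Step 1: the exchangeable pair.} Let $\pi$ be uniform, and pick a uniformly random pair of distinct indices $I\neq J$, independent of $\pi$. Set $\pi'=\pi\circ(I\,J)$ and $X'=\sum_j A_{j,\pi'(j)}$. Then $(\pi,\pi')$ is exchangeable, and
\[
X-X' = A_{I\pi(I)}+A_{J\pi(J)}-A_{I\pi(J)}-A_{J\pi(I)}.
\]
Conditioning on $\pi$ and averaging over $(I,J)$:
\begin{itemize}
\item the diagonal terms contribute $\frac{2}{n}X$;
\item the cross terms contribute $\frac{1}{n(n-1)}\sum_{j\neq k}A_{j\pi(k)}$;
\item by the centering hypothesis $\sum_{j,k}A_{jk}=0$, this cross sum equals $-\sum_j A_{j\pi(j)}=-X$.
\end{itemize}
Hence $\mathbb E[X-X'\mid\pi]=\alpha X$ with $\alpha=\frac{2}{n-1}$ up to the exact constant. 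This makes $(X,X')$ a matrix Stein pair.

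\textbf{Step 2: moment generating function.} With the conditional variance
\[
\Delta_X:=\tfrac{1}{2\alpha}\,\mathbb E[(X-X')^2\mid\pi],
\]
the general exchangeable-pair machinery bounds the trace mgf. One writes $\frac{d}{d\theta}\mathbb E\operatorname{tr}e^{\theta X}=\frac{1}{2\alpha}\mathbb E\operatorname{tr}[(X-X')(e^{\theta X}-e^{\theta X'})]$. The mean value trace inequality then controls this by $\theta\,\mathbb E\operatorname{tr}[\Delta_X e^{\theta X}]$, with an additional term for the bounded jumps $\|X-X'\|\le 4r$. This gives a differential inequality for $m(\theta)=\mathbb E\operatorname{tr} e^{\theta X}$.

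\textbf{Step 3: bounding $\Delta_X$ (main obstacle).} Expanding $(X-X')^2$ gives sixteen products of the $A$'s, averaged over $(I,J)$. Products of the form $A_{j\pi(j)}^2$ and $A_{j\pi(k)}^2$ average to something dominated by $\frac{1}{n}\sum_{j,k}A_{jk}^2$, which is the $\sigma^2$ term. The mixed products such as $A_{I\pi(I)}A_{J\pi(J)}$ are not positive semidefinite, and they are random through $\pi$.

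My plan is to use $(a+b)^2\preceq 2a^2+2b^2$ to get
\[
\Delta_X\preceq c\,\Big(\tfrac1n\sum_j A_{j\pi(j)}^2+\tfrac{1}{n^2}\sum_{j,k}A_{j\pi(k)}^2\Big).
\]
The second sum is deterministic and at most $\sigma^2/n$ in norm. The first sum is random, so I would bound it in turn, e.g.\ by $r$ times a quantity controlled along with the mgf. The aim is a bound of the form $\mathbb E\operatorname{tr}[\Delta_X e^{\theta X}]\le c\,\sigma^2\, m(\theta)$, plus an $r$-dependent correction that produces the $rt$ term. This step is where the constants $12$ and $4\sqrt2$ would be fixed.

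\textbf{Step 4: conclusion.} Integrate the resulting Bernstein-type differential inequality to get
\[
\log\mathbb E\operatorname{tr}e^{\theta X}\le\log d+\frac{c\,\theta^2\sigma^2}{1-c'r\theta}.
\]
The matrix Laplace transform bound $\Pr[\lambda_{\max}(X)\ge t]\le \inf_\theta e^{-\theta t}\,\mathbb E\operatorname{tr}e^{\theta X}$, optimized at $\theta\approx t/(c\sigma^2+c'rt)$, then gives the stated tail. The identity $\sum A_{jk}^2=\sum A_{jk}^\dagger A_{jk}$ in the definition of $\sigma^2$ is immediate from Hermiticity.
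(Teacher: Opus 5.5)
\textbf{The gap is in Step~3.} You are right that the paper does not prove this statement; it imports it from Mackey et al., whose proof does use matrix Stein pairs. So your skeleton is the right one. The swap pair of Step~1 is exchangeable with $\alpha=2/(n-1)$ exactly. The mean value trace inequality gives $m'(\theta)\le\theta\,\mathbb E\operatorname{tr}[\Delta_X e^{\theta X}]$ for $\theta\ge 0$ with no extra ``bounded jump'' term; the bound $r$ enters only through $\Delta_X$.

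First, a bookkeeping slip: you dropped the factor $1/(2\alpha)=(n-1)/4$. The correct domination is
\[
\Delta_X\preceq 2W+2\bar W,\qquad W:=\sum_j A_{j\pi(j)}^2,\qquad \bar W:=\mathbb E\,W=\tfrac1n\sum_{j,k}A_{jk}^2,
\]
whose deterministic part has norm $2\sigma^2$, not $\sigma^2/n$. (Recall $\mathbb E\Delta_X=\mathbb E X^2$, which is typically of order $\sigma^2$.)

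The real problem is the term $\mathbb E\operatorname{tr}[We^{\theta X}]$. It is the entire difference between a Hoeffding-type and a Bernstein-type bound, and you give no mechanism for it.
\begin{itemize}
\item $W$ is random and correlated with $X$, so it cannot be replaced by $\bar W$.
\item No inequality $W\preceq cX+vI$ with $v=O(\sigma^2)$ holds in general. For example, take $d=1$, $A_{jj}=\pm r$ with equally many signs, other entries $0$, and $\pi=\mathrm{id}$: then $X=0$ but $W=n\sigma^2$. So the bounded-conditional-variance route is unavailable.
\item Dominating $W$ by $r$ times a first-order quantity such as $\sum_j|A_{j\pi(j)}|$ gives something the mgf of $X$ does not control.
\end{itemize}

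\textbf{What is missing is a way to decouple $W$ from $e^{\theta X}$.} The key fact is that $W-\bar W$ is itself a centered combinatorial sum. For the same pair, $\mathbb E[W-W'\mid\pi]=\alpha(W-\bar W)$. Since $A_{jk}^4\preceq r^2A_{jk}^2$, its conditional variance satisfies $\Delta_W\preceq 2r^2(W+\bar W)=2r^2(W-\bar W)+4r^2\bar W$. This is exactly the bounded-conditional-variance form that $\Delta_X$ itself lacks. Either of the following routes then closes the step.

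\emph{The source's route.} A matrix entropy (Gibbs variational) inequality gives
\[
\psi\,\mathbb E\operatorname{tr}[\Delta_X e^{\theta X}]\le m\log\bigl(\tfrac1d\mathbb E\operatorname{tr}e^{\psi\Delta_X}\bigr)+\theta m'-m\log(m/d).
\]
This yields $\Pr[\lambda_{\max}(X)\ge t]\le d\exp\bigl(-t^2/(2\rho(\psi)+2t/\sqrt\psi)\bigr)$ with $\rho(\psi):=\psi^{-1}\log\bigl(\tfrac1d\mathbb E\operatorname{tr}e^{\psi\Delta_X}\bigr)$. One then bounds the exponential moment of $W-\bar W$ by the Stein-pair argument above. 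Taking $\psi=1/(8r^2)$ gives $\rho(\psi)\le 6\sigma^2$, which is exactly where the constants $12$ and $4\sqrt2$ come from.

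\emph{A self-bounding route closer to your sketch.} Combine the Stein identity for $W-\bar W$ with the polarized mean value trace inequality
\[
\operatorname{tr}[C(e^A-e^B)]\le\tfrac14\operatorname{tr}[(s(A-B)^2+s^{-1}C^2)(e^A+e^B)],
\]
taking $C=W-W'$, $A=\theta X$, $B=\theta X'$ and $s=r/\theta$. This gives $\mathbb E\operatorname{tr}[(W-\bar W)e^{\theta X}]\le 2r\theta\,\mathbb E\operatorname{tr}[(W+\bar W)e^{\theta X}]$. Hence, for $\theta<1/(2r)$,
\[
\mathbb E\operatorname{tr}[We^{\theta X}]\le\sigma^2 m(\theta)\,\frac{1+2r\theta}{1-2r\theta}
\qquad\text{and}\qquad
(\log m)'\le \frac{4\sigma^2\theta}{1-2r\theta}.
\]
Integrating gives $\Pr[\lambda_{\max}(X)\ge t]\le d\exp\bigl(-t^2/(8\sigma^2+4rt)\bigr)$, which implies the stated bound.
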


\section{The Oracle State Search and Choi State Games}
\label{sec:search-game}
In this section, we define two new cryptographic games that will enable us to prove unitary synthesis lower bounds. First, we describe and study the \emph{oracle state search game}.

\begin{definition}[Oracle state search game]\label{def:state-search-game}
Fix a random variable $\mathbf R$ and, for every $R$ in its support, define a family of normalized states
\[
\bigl\{\ket{\psi_{R,k}}:k\in[K]\bigr\}\subseteq \mathbb{C}^{N}.
\]
Without loss of generality, we may describe $\ket{\psi_{R, k}}$ in the computational basis with the following normalization:

\[ \ket{\psi_{R, k}} = \frac 1 {\sqrt N}\sum_{x\in [N]} R(k,x) \ket{x}
\]
for some random variables $\{\mathbf R(k,x) \in \mathbb C\}_{k\in [K], x\in [N]}$. 

In the oracle state search game, the challenger samples $R$ together with a uniformly random key $k\in[K]$, gives the adversary one copy of $\ket{\psi_{R,k}}$, and the adversary must output $k$ after making one oracle query. The oracle can depend on the variable $R$ but not on $k$. 
\end{definition}

\begin{definition}[One-query search adversary]
A one-query adversary for the search game is specified by
\begin{itemize}[leftmargin=1.5em]
    \item a pre-query isometry $V:\mathbb{C}^{N}\to\mathbb{C}^{M}$, and
    \item a projective measurement $\{\Pi_k\}_{k\in[K]}$ on the $M$-dimensional post-query space.
\end{itemize}
For a fixed oracle $O_f$, the adversary applies $V$, makes one query to $O_f$, and then measures with $\{\Pi_k\}_{k\in[K]}$. It is shown in \cite{STOC:LomMaWri24} (Corollary 3.34) that this normal form is without loss of generality, where $\log(M) \leq n + \ell + \log K$ for $\ell$ equal to the \emph{length} of the adversary's oracle query. %
\end{definition}

\begin{definition}[Adversary's winning probability]
For a fixed state family defined by $R$, the adversary's winning probability is
\[
\Win(\cA\mid R)
:=
\max_f \frac{1}{K}\sum_{k\in[K]}
\bra{\psi_{R,k}}V^\dagger O_f^\dagger \Pi_k O_f V\ket{\psi_{R,k}}.
\]
\end{definition}

There are two related notions of hardness of the oracle state search game.

\begin{definition}[$(M, \epsilon)$-hardness in expectation]\label{def:M-eps-hardness}
    We say that the one-query oracle state search game is $(M, \epsilon)$-hard in expectation over a random variable $R$ if for all $t$-query adversaries acting on a Hilbert space of dimension $M$,
    \[ \mathbb E_R\Big[ \Win(\cA\mid R)\Big] \leq \epsilon.
    \]
\end{definition}

\begin{definition}[$(M, \epsilon,\delta)$-hardness]\label{def:M-eps-delta-hardness}
    We say that the one-query oracle state search game is $(M, \epsilon,\delta)$-hard over a random variable $R$ if for all $t$-query adversaries acting on a Hilbert space of dimension $M$,
    \[ \mathbb \Pr_R\Big[ \Win(\cA\mid R) > \epsilon\Big] \leq \delta.
    \]
\end{definition}

Of particular interest to us is the case where the states $\{\ket{\psi_{R, k}}\}$ are orthogonal, meaning that $\ket{\psi_{R,k}} = U_R \ket{k}$ for some unitary $U_R$ depending on $R$. In this case, we observe in \cref{app:haar-hardest} that the Haar-random distribution on $U_R$ is ``the hardest instance'' of the oracle search game: if \emph{any} distribution on $U_R$ is $(M, \epsilon)$-hard (respectively, $(M, \epsilon, \delta)$-hard), then so is the Haar distribution.

\subsection{Relationship to Unitary Synthesis 
}

In this subsection, we assume that for every $R$, the states $\bigl\{\ket{\psi_{R,k}}\bigr\}_{k\in[K]}$ are mutually orthogonal. Similar implications hold in relaxed settings where the states are only approximately orthogonal, but the orthogonal case is all that we will need in this paper.

We first formally state the fact that hardness of the oracle state search game implies the hardness of unitary synthesis. 

\begin{lemma}[Search hardness implies synthesis hardness]
\label{prop:search-implies-synthesis-hardness}
For any given $R$, let $U_R$ be any unitary satisfying
\[
U_R\ket{\psi_{R,k}}=\ket{k}
\qquad\text{for all }k\in[K].
\]
If there exists a $t$-query oracle circuit $\cA^{(\cdot)}$ that synthesizes the family $\{U_R\}_R$ with aux-input correctness $\eta$, then there exists a $t$-query search adversary whose winning probability in the oracle state search game is at least $\eta$ for every $R$.
In particular, if every $t$-query search adversary has expected winning probability at most $\delta$, then no $t$-query oracle circuit can synthesize the family $\{U_R\}_R$ with aux-input correctness greater than $\delta$.
\end{lemma}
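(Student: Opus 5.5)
The reduction is direct. Given the synthesis algorithm $\cA^{(\cdot)}$ for $\{U_R\}_R$, I would define the search adversary as follows: on input $\ket{\psi_{R,k}}$, run $\cA^{f_{U_R}}$ (with the oracle $f_{U_R}$ guaranteed by \cref{def:aux-input-correctness}, which depends on $R$ but not on $k$, as the search game allows), then measure the $n$-qubit output register in the computational basis and output the result. This adversary makes exactly $t$ queries. For the one-query case, the adversary is automatically of the normal form (isometry, query, projective measurement): absorb $U_{t+1}$ into the measurement $\Pi_k = U_{t+1}^\dagger(\ketbra{k}\otimes \Id)U_{t+1}$.

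Next I would compute the winning probability for fixed $R$ and $k$. Apply aux-input correctness with the trivial auxiliary register, taking $\ket{\Psi}=\ket{\psi_{R,k}}$. The ideal output is the pure state $U_R\ket{\psi_{R,k}}=\ket{k}$. The fidelity between a pure state $\ketbra{k}$ and the actual output $\rho_k=\Phi_{\cA^{f_{U_R}}}(\ketbra{\psi_{R,k}})$ is $\bra{k}\rho_k\ket{k}$ (squared-fidelity convention, consistent with \cref{prop:diamond-vs-correctness}). This quantity is exactly the probability that a computational-basis measurement of $\rho_k$ yields $k$. Therefore each $k$ is recovered with probability at least $\eta$, and averaging over $k\gets[K]$ gives $\Win\ge\eta$ for every $R$.

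The second statement is then the contrapositive. If correctness exceeded $\delta$, the constructed adversary would win with probability greater than $\delta$ for every $R$, hence also in expectation over $R$, contradicting the assumed bound.

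The only point that needs care is the fidelity convention. If $F$ denotes the root fidelity, then $\bra{k}\rho_k\ket{k}=F^2\ge\eta^2$ and the statement would need adjusting. I would therefore fix the convention $F(\ketbra{\phi},\rho)=\bra{\phi}\rho\ket{\phi}$, which matches the paper's use of $\cos^2$ of the Bures angle in \cref{prop:composition-aux-correctness}.
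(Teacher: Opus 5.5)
Your proof is correct and is essentially the paper's argument. Both run the synthesis circuit with the $R$-dependent oracle, replace the final unitary $U_{t+1}$ by the pulled-back computational-basis measurement, and use that fidelity with the pure target $\ket{k}$ equals the probability of observing $k$.

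The one detail the paper handles more carefully arises when $K<N$. There your projectors $U_{t+1}^\dagger(\ketbra{k}\otimes\Id)U_{t+1}$ for $k\in[K]$ do not sum to the identity. The paper fixes this by folding the leftover outcomes into one key, using sets $S_k$ with $S_0=\{0\}\cup([N]\setminus[K])$, which can only increase the win probability.
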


\begin{proof}
Fix a choice of $R$. Let $\cA^{(\cdot)}$ be a $t$-query synthesis algorithm for $U_R$, and let $f_R$ be an oracle witnessing aux-input correctness $\eta$ for this unitary. Write the corresponding circuit using the normal form from \cref{def:channel_implemented_by_oracle_circuit}: it makes queries to oracle $O_{f_R}$, interleaved with fixed unitaries $(U_1,...,U_{t+1})$.

Partition the computational basis of the designated $n$-qubit output register into $K$ disjoint sets $S_k$ so that $k\in S_k$ for every $k\in[K]$; for instance, one may take $S_k=\{k\}$ for $k\neq 0$ and $S_0=\{0\}\cup([N]\setminus[K])$. Define projectors
\[
\Pi_k:=U_{t+1}^\dagger\left(\sum_{j\in S_k}\ketbra{j}\otimes \Id\right)U_{t+1}
\qquad\text{for }k\in[K].
\]
Then $\{\Pi_k\}_{k\in[K]}$ is a projective measurement on the full workspace, and hence defines a $t$-query search adversary: run $\cA^{f_R}$, except that just after the $t$-th query, apply measurement $\{\Pi_k\}_{k\in [K]}$ (instead of post-query unitary $U_{t+1}$).

Now fix any key $k\in[K]$ and feed the search adversary the challenge state $\ket{\psi_{R,k}}$. By correctness, the reduced output state has fidelity at least $\eta$ with the pure state $\ket{k}$. Since fidelity against a pure state equals the corresponding overlap, the probability that the output register lands in the set $S_k$ is at least $\eta$. Therefore measuring the full workspace with $\{\Pi_k\}_{k\in[K]}$ outputs $k$ with probability at least $\eta$. Averaging over the uniformly random key gives winning probability at least $\eta$ for this fixed $R$.
\end{proof}

Moreover, we observe in \cref{sec:relationship-to-quantum-crypto}, the hardness of the oracle state search game also implies non-trivial forms of quantum cryptography.

\subsection{The Oracle Choi State Game}\label{sec:Choi-game-def}

In this section, we introduce what we consider to be the weakest natural formulation of average-case hardness of unitary synthesis, which we call the oracle Choi state game. 

\begin{definition}[Oracle Choi state game]\label{def:epr-state-game}
Fix a random variable $\mathbf R$ and a family of unitary $\{U_R\}_{R\in\mathbf{R}}$ defined by $\mathbf{R}$.

In the oracle Choi state game, the challenger samples $R$ and prepares the following Choi state of unitary $U_R$ on register $H$ and $A$:
\[
\ket{\Psi_R}=\frac{1}{\sqrt{N}}\sum_{k\in\{0,1\}^n}\ket{k}_{H}\otimes U_R\ket{k}_{A}.
\]
Then it sends register $A$ to the adversary, keeping register $H$ hidden. The adversary will perform some computation on register $A$, by making queries to an oracle that might depend arbitrarily on $R$. After that, the challenger will apply a projective measurement $\{\ketbra{\Psi_\EPR}, I-\ketbra{\Psi_\EPR}\}$ on register $H$ and $A$, for $\ket{\Psi_{\EPR}}:=\frac{1}{\sqrt{N}}\sum_{k\in\{0,1\}^n}\ket{k}_H\otimes \ket{k}_A$. The adversary wins the game if and only if the measurement outcome is accepting. 
\end{definition}
Similarly to the case of the oracle state search game, we also describe a canonical form for one-query adversaries in the Choi state game.

\begin{definition}[One-query Choi adversary]\label{def:one-query-epr-adversary}
A one-query adversary for the oracle Choi state game is specified by 
\begin{itemize}
    \item a pre-query isometry $V:\mathbb{C}^{N}\to\mathbb{C}^{M}$, and
    \item a post-query unitary $U:\mathbb{C}^{M}\to\mathbb{C}^{M}$ on the $M$-dimensional post-query space.
\end{itemize}
For a fixed oracle $O_f$, the adversary applies $V$ (which maps register $A$ to a larger register $AA'$), makes one query to $O_f$, and then applies $U$.  
It is shown in \cite{STOC:LomMaWri24} (Corollary 3.34) that this normal form is without loss of generality, where $\log(M) \leq n + \ell + 1$ for $\ell$ as the \emph{length} of the adversary's oracle query.
\end{definition}

\begin{definition}[Adversary's winning probability]
For a fixed unitary $U_R$ defined by $R$, the adversary's winning probability in the oracle Choi state game is 
\[
    \Win(\cA\mid R)
    :=
    \max_f \left\|
    \Pi_{\mathrm{EPR}}\frac{1}{\sqrt{N}}
    \left(\sum_{k\in \{0,1\}^n}\ket{k}\otimes U\cdot O_f\cdot V\cdot U_R\ket{k}
    \right)
    \right\|^2
\]
where $\Pi_\EPR:=\ketbra{\Psi_{\EPR}}\otimes I_{A'}$.
\end{definition}

\paragraph{Relationship to Unitary Synthesis.} We observe that if the oracle Choi state game is hard for some class of adversaries, then worst-case unitary synthesis is hard for the same class of adversaries. 

\begin{lemma}[Choi state game hardness implies synthesis hardness]
If there exists an oracle circuit $\cA^{(\cdot)}$ that synthesizes the family $\{U_R^\dagger\}_R$ with aux-input correctness $\eta$ within $t$ queries, then there exists a $t$-query Choi adversary whose winning probability in the Choi state game is at least $\eta$ for every $R$.

In particular, if every $t$-query Choi adversary has expected winning probability at most $\delta$, then no $t$-query oracle circuit can synthesize the family $\{U_R^\dagger\}_R$ with aux-input correctness greater than $\delta$.
\end{lemma}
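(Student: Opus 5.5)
The approach is to run the synthesis algorithm directly as the Choi adversary. Fix $R$, and let $f_R$ be the oracle witnessing aux-input correctness $\eta$ of $\cA^{(\cdot)}$ for $U_R^\dagger$. The Choi adversary receives register $A$ of $\ket{\Psi_R}$. It then runs the circuit $\cA^{f_R}$ on $A$, adjoining workspace $\ket{0^{m-n}}$, applying $U_1,O_{f_R},\dots,O_{f_R},U_{t+1}$, and returning the designated $n$-qubit output register as $A$. The remaining workspace is kept by the adversary as $A'$. This is a $t$-query adversary, and for $t=1$ it has exactly the normal form of \cref{def:one-query-epr-adversary}, with $V=U_1(\cdot\otimes\ket{0^{m-n}})$ and post-query unitary $U_2$.

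The key step is to apply \cref{def:aux-input-correctness} with input register $X=A$ and auxiliary register $\mathrm{Aux}=H$, on the pure state $\ket{\Psi}_{A,H}=\ket{\Psi_R}$. The ideal output is
\[
(\Id_H\otimes U_R^\dagger)\ket{\Psi_R}=\frac{1}{\sqrt N}\sum_k \ket{k}_H\otimes U_R^\dagger U_R\ket{k}_A=\ket{\Psi_{\EPR}},
\]
which is pure. Let $\rho_{HA}$ be the actual reduced output state on $HA$, obtained after tracing out $A'$. Aux-input correctness then gives $F(\ketbra{\Psi_{\EPR}},\rho_{HA})=\bra{\Psi_{\EPR}}\rho_{HA}\ket{\Psi_{\EPR}}\ge\eta$, using the fact that fidelity against a pure state is an overlap, as in the proof of \cref{prop:search-implies-synthesis-hardness}. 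This overlap is exactly the probability that the challenger's measurement $\Pi_{\EPR}=\ketbra{\Psi_{\EPR}}\otimes I_{A'}$ accepts on the global pure state. So the win probability for this $R$ is at least $\eta$, and since the maximum over $f$ in the definition of $\Win$ is at least the value at $f_R$, we get $\Win(\cA\mid R)\ge\eta$ for every $R$.

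The \emph{in particular} clause follows by contraposition. If some algorithm achieved correctness $\eta>\delta$, the adversary above would have expected win probability at least $\eta>\delta$, a contradiction.

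The main (mild) subtlety is bookkeeping. The Choi game definition applies $U\cdot O_f\cdot V\cdot U_R$ and projects with $\Pi_{\EPR}$ tensored with the identity on the discarded workspace. We must check that this coincides with measuring the reduced state on $HA$, which holds because the measurement acts trivially on $A'$. We must also check that the register ordering matches the designated output qubits; if needed, a fixed swap can be absorbed into the final unitary.
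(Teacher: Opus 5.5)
Your proposal is correct and follows essentially the same route as the paper: the synthesis circuit $(U_1,\dots,U_{t+1})$ with oracle $f_R$ is used directly as the Choi adversary, and aux-input correctness applied to $\ket{\Psi_R}$ with $\mathrm{Aux}=H$ gives fidelity at least $\eta$ with the pure state $\ket{\Psi_{\EPR}}$, which is exactly the acceptance probability. The only difference is that you spell out details the paper leaves implicit, namely the identity $(\Id_H\otimes U_R^\dagger)\ket{\Psi_R}=\ket{\Psi_{\EPR}}$ and the fact that $\Pi_{\EPR}$ acts trivially on the discarded workspace $A'$.
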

\begin{proof}
    Fix a choice of $R$. Let $\cA^{(\cdot)}$ be a $t$-query synthesis algorithm for $U_R^\dagger$, and let $f_R$ be an oracle witnessing aux-input correctness $\eta$ for this unitary. Write the corresponding circuit using the normal form from \Cref{def:channel_implemented_by_oracle_circuit}: it makes queries to oracle $O_{f_R}$, interleaved with fixed unitaries $(U_1,...,U_{t+1})$.

    This $(U_1,...,U_{t+1})$ and choice of $f_R$ actually define a $t$-query Choi adversary. By the aux-input correctness, the output state (together with the hidden state as the auxiliary state) has fidelity at least $\eta$ with the pure state $\ket{\Psi_\EPR}$. That is, the $t$-query Choi adversary will win with probability at least $\eta$.
\end{proof}

In fact, the oracle Choi state game has a natural interpretation as measuring
the Haar-average input correctness of a unitary-synthesis procedure.

For fixed $R$ and oracle $f$, let $\mathcal{E}_{R,f}$ denote the channel implemented
by the adversary $\mathcal A^f$ on register $A$, after tracing out any workspace
or ancilla. Since the adversary attempts to undo $U_R$ on register $A$ in the Choi state game, here we consider synthesizing 
$U_R^\dagger$. Its Haar-average correctness is naturally defined as 
\begin{align*}
    \eta_{R,f}
    :=
    \mathbb{E}_{\ket{\psi}\gets\mathrm{Haar}}
    \bra{\psi}
    \mathcal{E}_{R,f}
    \left(U_R\ketbra{\psi}U_R^\dagger\right)
    \ket{\psi}.
\end{align*}
Define $\Lambda_{R,f}(\rho)
    :=
    \mathcal{E}_{R,f}(U_R\rho U_R^\dagger)$, $\Phi_{\EPR}:=\ketbra{\Psi_{\EPR}}$,
we have
\begin{align*}
    \bra{\psi}\Lambda_{R,f}(\ketbra{\psi})\ket{\psi}
    =
    N\cdot \Tr\left[
        \left((\ketbra{\psi})^{\mathsf T}\otimes\ketbra{\psi}\right)
        (\Id\otimes\Lambda_{R,f})(\Phi_{\EPR})
    \right].
\end{align*}
Using the Haar second-moment identity
$\mathbb{E}_{\ket{\psi}\gets\mathrm{Haar}}
    [
        (\ketbra{\psi})^{\mathsf T}\otimes\ketbra{\psi}
    ]
    =
    \frac{I+N\cdot \Phi_{\EPR}}{N(N+1)}$,
we obtain
\begin{align*}
    \eta_{R,f}
    &=
    \frac{
        1+
        N\cdot \Tr\left[
            \Phi_{\EPR}
            (\Id\otimes\Lambda_{R,f})(\Phi_{\EPR})
        \right]
    }{N+1}
\end{align*}
where we used
$\Tr((\Id\otimes\Lambda_{R,f})(\Phi_{\EPR}))=1$.
The trace term above is exactly the winning probability in the oracle Choi
state game with oracle $f$. Therefore, maximizing over $f$ gives
\[
    \Win(\mathcal A\mid R)
    =
    \frac{(N+1)\max_f\eta_{R,f}-1}{N}.
\]
Thus, the Choi-game winning probability is an affine rescaling of the best
Haar-average correctness for adversary $\mathcal A$ to synthesize $U_R^\dagger$.

\paragraph{Relationship to the Oracle State Search Game.}
In this part, we will show that the search hardness implies the Choi state game hardness. Note that every unitary $U_R$ naturally defines state family as $\{U_R W\ket{k} \}_{k\in \{0,1\}^n}$ for some fixed unitary $W:\mathbb{C}^N\to \mathbb{C}^N$. 

In fact, the oracle Choi state game can be viewed as a coherent search game on state family $\{U_R\ket{k}\}_k$, trying to coherently map $U_R\ket{k}$ back to $\ket{k}$ on register $A$, as mapping $\sum_{k}\ket{k}_H\otimes U_R\ket{k}_A$ back to $\sum_{k}\ket{k}_H\otimes\ket{k}_A$ in the oracle Choi state game.

\begin{lemma}[Search hardness implies Choi state game hardness]\label{lem:search-hard->choi-hard}
If there exists a $t$-query Choi adversary $\cA^{(\cdot)}$ that wins the oracle Choi state game on unitary family $\{U_R\}_R$ with winning probability $\eta$, then for every fixed unitary $W:\mathbb{C}^N\to\mathbb{C}^N$, for state family $\{\ket{\psi_{R,k}}:=U_RW\ket{k}\}$, there exists a $t$-query search adversary whose expected winning probability in the oracle state search game is at least $\eta$.

That is, if for some fixed unitary $W_0:\mathbb{C}^N\to\mathbb{C}^N$ with correspondingly defined state family $\{\ket{\psi_{R,k}}:U_RW_0\ket{k}\}$, every $t$-query search adversary has expected winning probability at most $\delta$, then no $t$-query Choi adversary can win the oracle Choi state game on unitary family $\{U_R\}_R$ with expected winning probability greater than $\delta$.
\end{lemma}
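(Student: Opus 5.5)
The plan is to turn a Choi adversary directly into a search adversary by running it on the search challenge and then measuring in a rotated basis. Fix $W$ and the $t$-query Choi adversary $\cA$, given by fixed unitaries $(U_1,\dots,U_{t+1})$ and, for each $R$, an oracle $f_R$ attaining its winning probability $\eta_R$. I will write $\Lambda_R$ for the channel $\rho\mapsto \mathrm{Tr}_{A'}\bigl[\cA^{f_R}(U_R\rho U_R^\dagger)\bigr]$ on register $A$. By definition of the Choi game, $\eta_R = \bra{\Psi_\EPR}(\Id\otimes\Lambda_R)(\Phi_\EPR)\ket{\Psi_\EPR}$.

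The search adversary, on input $\ket{\psi_{R,k}} = U_RW\ket{k}$, runs $\cA^{f_R}$ with the same oracle. It then applies $W^\dagger$ to register $A$ and measures in the computational basis, outputting the result. This is still a $t$-query circuit: the final $W^\dagger$ and the measurement fold into the post-query operation, giving projectors $\Pi_k = U_{t+1}^\dagger\bigl(W\ketbra{k}W^\dagger\otimes\Id\bigr)U_{t+1}$. (If $K<N$, outcomes outside $[K]$ are merged as in \cref{prop:search-implies-synthesis-hardness}; the natural setting here is $K=N$.)

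For fixed $R$, averaging over a uniformly random $k\in[N]$, the success probability is
\[
p_R=\frac1N\sum_k \bra{k}W^\dagger\Lambda_R\bigl(W\ketbra{k}W^\dagger\bigr)W\ket{k}.
\]
This equals the classical fidelity of $\Lambda_R$ in the basis $\{W\ket{k}\}$. The key step is to show $p_R\ge \eta_R$. I will write $\Lambda_R(X)=\sum_j E_jXE_j^\dagger$ in Kraus form and set $A_j := W^\dagger E_j W$. Then
\[
\eta_R=\frac1{N^2}\sum_j\Bigl|\sum_k \bra{k}A_j\ket{k}\Bigr|^2
=\frac1{N^2}\sum_j|\mathrm{Tr}\,A_j|^2.
\]
Here I use that $\Psi_\EPR$ in basis $\{\ket k\}$ on $H$ paired with $W\ket k$ on $A$ is, up to a local unitary on $H$, the same maximally entangled state; more directly, $\mathrm{Tr}\,E_j=\mathrm{Tr}\,A_j$. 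Meanwhile $p_R=\frac1N\sum_j\sum_k|\bra{k}A_j\ket{k}|^2$. Cauchy–Schwarz gives $|\sum_k a_k|^2\le N\sum_k|a_k|^2$, hence $\eta_R\le p_R$.

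Averaging over $R$ then gives $\mathbb E_R[p_R]\ge\mathbb E_R[\eta_R]$, which is the first claim. The second claim is its contrapositive. The main care point is the basis-invariance bookkeeping: the Choi game is stated with the basis $\{\ket k\}$, while the search states use $W\ket k$. Working with $\mathrm{Tr}\,E_j$, which is basis-independent, resolves this cleanly.
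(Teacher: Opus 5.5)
Your proof is correct and is essentially the paper's: the search adversary uses exactly the projectors $\Pi_k=U_{t+1}^\dagger(W\ketbra{k}W^\dagger\otimes\Id)U_{t+1}$. Where you derive $p_R\ge\eta_R$ by Cauchy--Schwarz on the diagonal of each $W^\dagger E_jW$ (using $\mathrm{Tr}\,E_j=\mathrm{Tr}\,W^\dagger E_jW$), the paper works with the purified post-query state and observes that the projector onto $\frac{1}{\sqrt N}\sum_k\ket{k}\otimes W\ket{k}$ is dominated by $\sum_k\ketbra{k}\otimes W\ketbra{k}W^\dagger$. It then removes $W$ via $\sum_k\overline{W}\ket{k}\otimes W\ket{k}=\sum_k\ket{k}\otimes\ket{k}$, which is the operator form of the same inequality.

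One caution: that inequality genuinely needs the average over all $N$ keys, so your merging remark for $K<N$ should not be read as extending the bound. For example, take $\Lambda_R$ to be the unitary channel of $X\oplus\Id$ on $\mathbb{C}^4$ (swap $\ket{0}\leftrightarrow\ket{1}$, fix $\ket{2},\ket{3}$), with $W=\Id$ and $K=2$; then $\eta_R=1/4$ but the merged adversary succeeds with probability $0$. This is harmless, since the lemma, like the paper's proof, concerns $K=N$.
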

\begin{proof}
    Fix a family of $\{U_R\}_R$. Let $\cA^{(\cdot)}$ be a $t$-query Choi adversary for $\{U_R\}_R$, and let $f_R$ be an oracle witnessing correctness $\eta_R$ for this unitary with $\eta=\mathbb{E}_{R}\eta_R$. Write the corresponding circuit using the normal form from \cref{def:channel_implemented_by_oracle_circuit}: it makes queries to oracle $O_{f_R}$, interleaved with fixed unitaries $(U_1,...,U_{t+1})$.
 
    For fixed unitary $W:\mathbb{C}^N\to\mathbb{C}^N$ and state family $\{\ket{\psi_{R,k}}:=U_RW\ket{k}\}$, define projectors
    \[
    \Pi_k:=U_{t+1}^\dagger \cdot \left(W\ketbra{k}W^\dagger\otimes\Id_{A'}\right)\cdot U_{t+1}
    \qquad\text{for }k\in\{0,1\}^n
    \]
    Then $\{\Pi_k\}_{k\in\{0,1\}^n}$ is a projective measurement on the full workspace, and hence defines a $t$-query search adversary: run $\cA^{f_R}$, except that just after the $t$-th query, apply measurement $\{\Pi_k\}_{k\in [K]}$ (instead of post-query unitary $U_{t+1}$).
    
    For a random $k\gets \{0,1\}^n$, the search adversary winning probability can be written as (define $U^{f_R}_{1\to t}:=O_{f_R}\cdot U_t\cdots O_{f_R}\cdot U_1$, as the part of the algorithm just after $t$ queries):
    \begin{align*}
        &\mathbb{E}_{k\in \{0,1\}^n}\left\|
        \Pi_k \cdot U^{f_R}_{1\to t}\cdot (\ket{\psi_{R,k}} \otimes \ket{0^{m-n}})
        \right\|^2
        \\
        &=
        \left\|
        \frac{1}{\sqrt{N}}\sum_{k\in \{0,1\}^n}\ket{k}\otimes \Pi_k\cdot U^{f_R}_{1\to t}\cdot (U_RW\ket{k}\otimes \ket{0^{m-n}})
        \right\|^2
        \\
        &=
        \left\|
        \left(
        \sum_k\ketbra{k}\otimes W\ketbra{k}W^\dagger\otimes \Id_{A'}
        \right)
        \frac{1}{\sqrt{N}}\sum_{k}\ket{k}\otimes U_{t+1}\cdot U^{f_R}_{1\to t}\cdot (U_RW\ket{k}\otimes \ket{0^{m-n}})
        \right\|^2
        \\
        &\ge 
        \left\|
        \left(
        (\Id\otimes W)\ketbra{\Psi_\EPR}(\Id\otimes W^\dagger)
        \right)\otimes \Id_{A'}\cdot 
        \frac{1}{\sqrt{N}}\sum_{k}\ket{k}\otimes U_{t+1}\cdot U^{f_R}_{1\to t}\cdot (U_RW\ket{k}\otimes \ket{0^{m-n}})
        \right\|^2
        \\
        &= 
        \left\|
        \bra{\Psi_\EPR}(W^*\otimes \Id)\cdot 
        \frac{1}{\sqrt{N}}\sum_{k}\ket{k}\otimes U_{t+1}\cdot U^{f_R}_{1\to t}\cdot (U_RW\ket{k}\otimes \ket{0^{m-n}})
        \right\|^2
        \\
        &= 
        \left\|
        \bra{\Psi_\EPR}\cdot 
        \frac{1}{\sqrt{N}}\sum_{k}\ket{k}\otimes U_{t+1}\cdot O_{f_R}\cdot U_t\cdots O_{f_R}\cdot U_1\cdot (U_R\ket{k}\otimes \ket{0^{m-n}})
        \right\|^2.
    \end{align*}
By the winning definition of the oracle Choi state game, for fixed $R$ this is at least $\eta_R$. This means that for fixed $R$, the $t$-query search adversary can win with probability at least $\eta_R$. Averaging over $R$, this will give expected winning probability in the oracle state search game at least $\eta$.
\end{proof}

\subsection{Relationship to Quantum Cryptography}\label{sec:relationship-to-quantum-crypto}

We conclude this section by explaining how hardness of the oracle state
search game and the Choi state game gives rise to quantum-cryptographic primitives. In particular, their hardness will imply the security of a quantum bit commitment scheme.

\paragraph{From the Choi game.} The oracle Choi state game can also be viewed as the task of breaking the binding security of the following commitment scheme (relative to $\mathbf R$):
\begin{align*}
    \ket{\Psi_0} &:=\ket{\Psi_{\EPR}} =\frac{1}{\sqrt{N}}\sum_k\ket{k}_H\otimes \ket{k}_A,
    \\
    \ket{\Psi_1}
    &:=\ket{\Psi_R} =\frac{1}{\sqrt{N}}\sum_k\ket{k}_H\otimes U_R\ket{k}_A.
\end{align*}
To commit to bit $b$, the sender prepares $\ket{\Psi_b}$ (note that one does not need to synthesize $U_R$ to prepare $\ket{\Psi_1}$; synthesizing a state can be easier \cite{SODA:Rosenthal24}). Then it sends register $H$ to the receiver. To open the commitment, the sender announces $b$ and sends register $A$. 

This commitment scheme is perfectly hiding. For binding security (see \cite{AC:Yan22,ITCS:BraCanQia23,STOC:GJMZ23} for discussion), an adversarial sender that starts from an honestly generated commitment $\ket{\Psi_1}$, acts only on register $A$ and successfully opens it as a commitment to $0$ as $\ket{\Psi_0}$, is exactly an adversary for the oracle Choi state game. Therefore, hardness of the oracle Choi state game implies the security of the above perfectly-hiding computationally-binding quantum bit commitment scheme.

\paragraph{From the search game.} By \Cref{lem:search-hard->choi-hard}, hardness of the oracle state search game implies hardness of the corresponding oracle Choi
state game. Therefore, hardness of the oracle state search game also implies quantum bit commitment through the construction similar as above:
\begin{align*}
    \ket{\Psi_0} &:=\ket{\Psi_{\EPR}} =\frac{1}{\sqrt{N}}\sum_k\ket{k}_H\otimes \ket{k}_A,
    \\
    \ket{\Psi_1}
    &:=\frac{1}{\sqrt{N}}\sum_k\ket{k}_H\otimes \ket{\psi_{R,k}}_A.
\end{align*}

\section{Main Theorems}
In this section, we formally state (or re-state) the results that were outlined in the introduction. 

\begin{theorem}[Permutation family search bound]
\label{thm:perm-search-main}
Let $P$ be the in-place permutation unitary associated with a uniformly random permutation $\pi$ on $\{0,1\}^n$, and consider the search game for the state family $\{PH\ket{k}\}_{k\in[K]\setminus\{0\}}$. Then every one-query adversary with workspace dimension $M$ satisfies
\[
\mathbb{E}_{\pi}\bigl[\Win(\cA\mid \pi)\bigr]
=O\!\left(\frac{\log^2 M\,\log^2 K}{K}\right).
\]
\end{theorem}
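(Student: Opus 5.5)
Following the technical overview, I reduce the win probability to a spectral norm and then apply the combinatorial matrix Bernstein inequality (\cref{thm:matrix-bernstein-permutation}).

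\textbf{Step 1: spectral relaxation.} Fix an adversary $(V,\{\Pi_k\})$ and let $\mu$ be the uniform distribution on $\{0,1\}^n$-indexed inputs, whose marginal for each fixed $k\neq 0$ is maximally mixed. \cref{lem:weight-vector-decomposition} then gives
\[
V\ket{\psi_{\pi,k}}=D_{\pi,k}\ket{\wt_V},\qquad D_{\pi,k}=\sum_i \frac{\braket{v_i}{\psi_{\pi,k}}}{\sqrt{p_i}}\ketbra{i},\qquad p_i=\|v_i\|^2/N,
\]
with $\ket{\wt_V}$ independent of $k$. The vectors $\Pi_k O_f D_{\pi,k}\ket{\wt_V}$ are mutually orthogonal, and $O_f$ is a diagonal unitary that commutes with $D_{\pi,k}$. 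Hence
\[
\Win(\cA\mid\pi)\le \|M_\pi\|^2,\qquad M_\pi:=\frac{1}{\sqrt{K'}}\sum_{k}\Pi_k D_{\pi,k},\qquad K'=K-1.
\]
Writing $\ket{\psi_{\pi,k}}=N^{-1/2}\sum_x(-1)^{\langle k,\pi^{-1}(x)\rangle}\ket{x}$ gives $D_{\pi,k}=\sum_x (-1)^{\langle k,\pi^{-1}(x)\rangle}\widetilde B_x$, where $\widetilde B_x=N^{-1/2}\sum_i \frac{\overline{\braket{x}{v_i}}}{\sqrt{p_i}}\ketbra{i}$. Therefore
\[
M_\pi=\sum_x A_{x,\pi^{-1}(x)},\qquad A_{x,y}:=\frac{1}{\sqrt{K'}}\sum_{k\neq0}(-1)^{\langle k,y\rangle}\Pi_k\widetilde B_x .
\]
This is a combinatorial sum over the uniformly random permutation $\pi^{-1}$.

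\textbf{Step 2: Hermitian dilation and centering.} \cref{thm:matrix-bernstein-permutation} requires Hermitian matrices with $\sum_{x,y}A_{x,y}=0$, so I use the dilation $\mathcal H(A)=\begin{pmatrix}0&A\\A^\dagger&0\end{pmatrix}$, which has dimension $2M$. Centering holds because $\sum_y(-1)^{\langle k,y\rangle}=0$ for every $k\neq0$; this is exactly where $k=0$ must be excluded. For the variance, $\sum_{x,y}A_{x,y}^\dagger A_{x,y}$ and $\sum_{x,y}A_{x,y}A_{x,y}^\dagger$ follow from the computation in the overview. Summing over $y$ orthogonalizes distinct $k,k'$, since $\sum_y(-1)^{\langle k+k',y\rangle}=N\delta_{kk'}$. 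We also have $\sum_x \widetilde B_x^\dagger\Pi\widetilde B_x\preceq \frac{1}{N}\mathrm{diag}(\|v_i\|^2/p_i)=\Id$. Together these should give $\frac1N\|\sum_{x,y}\mathcal H(A_{x,y})^2\|\le O(1/K')$, so $\sigma^2=O(1/K)$.

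\textbf{Step 3: the uniform bound $r$ (main obstacle).} For fixed $(x,y)$, the matrix $A_{x,y}=K'^{-1/2}\,S_y\widetilde B_x$ with $S_y=\sum_k (-1)^{\langle k,y\rangle}\Pi_k$ is unitary-times-diagonal. So $\|A_{x,y}\|=K'^{-1/2}\max_i |\braket{x}{v_i}|/\sqrt{N p_i}$. The ratio $|\braket{x}{v_i}|^2/\|v_i\|^2$ can be as large as $1$, which gives $r$ of order $1/\sqrt K$. That is too weak on its own: Bernstein would produce a term $rt$, and with $t\approx \sqrt{\log M/K}$ this $rt$ term dominates $\sigma^2$.

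I would first try truncation. Split the coordinates $i$ into dyadic classes according to $\max_x|\braket{x}{v_i}|^2/\|v_i\|^2\in(2^{-j-1},2^{-j}]$, for $j\le O(\log M)$ after discarding negligible mass. This writes $M_\pi=\sum_j M_\pi P_j$ and gives $\|M_\pi\|\le\sum_j\|M_\pi P_j\|$. On class $j$ we have $r_j\le \sqrt{2^{-j}N/K}$. The variance should also improve, because coordinates with a spiky $v_i$ carry weight concentrated on few $x$; one should be able to check that $\sigma_j^2$ scales accordingly. Failing that, I would use a second split on the $k$-side, grouping $\Pi_k$ into $O(\log K)$ dyadic pieces, and accept the resulting logarithmic losses.

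Each class then gets a tail bound of order $\sqrt{\log M}\,\sigma_j+r_j\log M$. Integrating the tail to bound $\mathbb E\|M_\pi P_j\|^2$, summing the $O(\log M)$ classes with Cauchy--Schwarz, and absorbing the $\log K$ factors from the $k$-grouping should give $\mathbb E_\pi\|M_\pi\|^2=O(\log^2M\log^2K/K)$. Making $r_j$ and $\sigma_j$ balance within each class is the step I expect to require the real work.
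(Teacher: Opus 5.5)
Your Steps 1--2 and the norm computation at the start of Step 3 match the paper's argument exactly: spectral relaxation, $M_\pi=\sum_x A_{x,\pi^{-1}(x)}$, Hermitian dilation, centering via $\sum_y(-1)^{\langle k,y\rangle}=0$ for $k\neq 0$, and the parameters $\sigma^2\le 1/K'$ and $r\le 1/\sqrt{K'}$. Your Haar weights $p_i=\|v_i\|^2/N$ differ from the paper's projected weights $\|\Pi_\perp v_i\|^2/(N-1)$, but this is harmless. With your weights, $V\ket{\psi}=D_\psi\ket{\wt_V}$ holds for every $\psi$, and the variance computation is unchanged. Note, though, that the marginal of $PH\ket{k}$ is $\Pi_\perp/(N-1)$, not the maximally mixed state on $\mathbb{C}^N$ as you assert.

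The real problem is the rest of Step 3. Your claim that $r\approx 1/\sqrt{K}$ is ``too weak'' is false for the bound being proved. As a result, the dyadic truncation you substitute is unnecessary, you leave it unverified, and the proposal never actually concludes. Plug $d=2M$, $\sigma^2\le 1/K'$, $r\le 1/\sqrt{K'}$ directly into \cref{thm:matrix-bernstein-permutation} with $t=C\log(MK)/\sqrt{K'}$. The exponent is at least $C^2\log^2(MK)/(12+4\sqrt{2}\,C\log(MK))=\Omega(C\log(MK))$, so $\Pr_\pi[\|M_\pi\|\ge t]\le 1/K$ for a large constant $C$. Since $\Win(\cA\mid\pi)\le\min\{1,\|M_\pi\|^2\}$, this gives $\mathbb{E}_\pi\Win(\cA\mid\pi)\le t^2+1/K=O(\log^2 M/K)$, using $K\le N\le M$. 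That is already inside the target.

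The $rt$ term dominating $\sigma^2$ only forces $t\approx\log M/\sqrt{K}$ instead of $\sqrt{\log M/K}$. This costs one extra $\log M$ factor, which is exactly why the theorem carries $\log^2 M$; the paper takes $t=C\log M\log K/\sqrt{K}$.

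Your truncation would not have helped anyway. Restricting to a coordinate class $P_j$ gives $\sum_{x,y}(A_{x,y}P_j)^\dagger(A_{x,y}P_j)=\frac{N}{K'}P_j$, so $\sigma_j^2$ stays $1/K'$ and does not shrink as you hoped. Summing $O(\log N)$ per-class bounds then only adds logarithmic losses. Also, the per-class norm bound is $r_j\le\sqrt{2^{-j}/K'}$, not $\sqrt{2^{-j}N/K}$. Drop the detour and finish with the direct Bernstein bound.
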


\begin{theorem}[$F_2HF_1H$ search bound]
\label{thm:f2hf1h-search-main}
Let $f_1,f_2:\{0,1\}^n\to\{0,1\}$ be uniformly random Boolean functions, and let $F_j=\sum_x (-1)^{f_j(x)}\ketbra{x}$ for $j\in\{1,2\}$. For the search game on the family $\{F_2HF_1H\ket{k}\}_{k\in[K]}$, every one-query adversary with workspace dimension $M$ satisfies
\[
\mathbb{E}_{f_1,f_2}\bigl[\Win(\cA\mid f_1,f_2)\bigr]
=O\!\left(\frac{\log M \cdot \log (MN)}{K}\right).
\]
\end{theorem}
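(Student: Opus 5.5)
We will follow the template sketched in the technical overview, specialized to the states $\ket{\psi_{k}}=F_2HF_1H\ket{k}$. Fix a one-query adversary $(V,\{\Pi_k\})$ and apply \cref{lem:weight-vector-decomposition} with $\mu$ the uniform mixture over $k$ and $f_1,f_2$. Since $\mathbb E_{f_1,f_2}\ketbra{\psi_k}=\Id/N$, the weight vector is the one for the maximally mixed input, and $V\ket{\psi_k}=D_k\ket{\wt_V}$ with $D_k$ diagonal and linear in the amplitudes $R(k,x)$. Because the $\Pi_k$ are mutually orthogonal projectors, the win probability equals $\max_F\|K^{-1/2}\sum_k\Pi_kFD_k\ket{\wt_V}\|^2$. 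Pulling the diagonal phase $F$ through the diagonal $D_k$ and dropping the unit vector gives the bound $\Win\le\|M\|^2$, where $M:=K^{-1/2}\sum_k\Pi_kD_k$.

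Next we expand the amplitudes as in the overview: $R(k,x)=(-1)^{f_2(x)}\alpha_{k\oplus x}$ with $\alpha_z=N^{-1/2}\sum_y(-1)^{f_1(y)+\langle y,z\rangle}$. Writing $D_k=\sum_x R(k,x)E_x$ with $E_x:=\sum_i\frac{\overline{\langle v_i|x\rangle}\cdots}{\sqrt{p_i}}$-type fixed diagonal matrices (precisely, $E_x=N^{-1/2}\sum_i \braket{v_i}{x}p_i^{-1/2}\ketbra{i}$), we get $M=\sum_x(-1)^{f_2(x)}C_x$ with $C_x:=K^{-1/2}\sum_k\alpha_{k\oplus x}\Pi_kE_x$. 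For fixed $f_1$ this is a matrix Rademacher series in the independent signs $(-1)^{f_2(x)}$. Applying \cref{thm:tropp-rademacher} conditionally gives
\[
\mathbb E_{f_2}\|M\|^2\le 2(1+\log 2M)\max\Bigl(\Bigl\|\sum_xC_x^\dagger C_x\Bigr\|,\Bigl\|\sum_xC_xC_x^\dagger\Bigr\|\Bigr).
\]
This yields one $\log M$ factor.

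It remains to bound the two variance terms in expectation over $f_1$ by $O(\log(MN)/K)$.

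For the first term, orthogonality of the $\Pi_k$ gives $\sum_xC_x^\dagger C_x=K^{-1}\sum_x\sum_k|\alpha_{k\oplus x}|^2E_x^\dagger\Pi_kE_x$. Here $|\alpha_z|^2$ depends on $f_1$, so we bound it by $\max_z|\alpha_z|^2$, which is $O(\log N)$ with high probability by Hoeffding. This gives $O(\log N/K)\cdot\|\sum_x E_x^\dagger E_x\|$, and $\sum_xE_x^\dagger E_x=\Id$ by the normalization $\sum_x|\braket{v_i}{x}|^2/(Np_i)=1$. An alternative that avoids the max is to pair $\Pi_k$ with $\Id\ge\Pi_k$ more carefully.

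The second term is the main obstacle: $\sum_xC_xC_x^\dagger=K^{-1}\sum_x\sum_{k,k'}\alpha_{k\oplus x}\overline{\alpha_{k'\oplus x}}\Pi_kE_xE_x^\dagger\Pi_{k'}$ has cross terms and is quadratic in $f_1$. Following the overview, we write it as $L^\dagger L$ (up to conjugation) with a rectangular square root
\[
L=\sum_x \bra{x}\otimes K^{-1/2}\sum_k\alpha_{k\oplus x}E_x^\dagger\Pi_k ,
\]
so that $\|\sum_xC_xC_x^\dagger\|=\|L\|^2$. We then substitute $\alpha_z=N^{-1/2}\sum_y(-1)^{f_1(y)}(-1)^{\langle y,z\rangle}$, which makes $L$ a Rademacher series in the signs $(-1)^{f_1(y)}$ with fixed coefficients
\[
G_y=N^{-1/2}\sum_x\bra{x}\otimes K^{-1/2}\sum_k(-1)^{\langle y,k\oplus x\rangle}E_x^\dagger\Pi_k .
\]
Applying \cref{thm:tropp-rademacher} again, now in dimension about $MN$, gives $\mathbb E\|L\|^2\lesssim\log(MN)\cdot v(L)$. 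The key computation is $v(L)=O(1/K)$. Summing over $y$ with character orthogonality collapses $\sum_y G_y^\dagger G_y$ to $\mathbb E_{f_1}$ of the same expression, and $\mathbb E|\alpha_z|^2=1$ with $\mathbb E\alpha_z\overline{\alpha_{z'}}=\delta_{zz'}$ kills the cross terms $k\ne k'$, leaving $K^{-1}\sum_k\Pi_k(\sum_xE_xE_x^\dagger)\Pi_k$. Its norm is at most $K^{-1}\max_k\|\Pi_k(\sum_x E_xE_x^\dagger)\Pi_k\|\le 1/K$, using that $\sum_x E_xE_x^\dagger$ is diagonal with entries $\sum_x|\braket{v_i}{x}|^2/(Np_i)=1$. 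The other side, $\sum_yG_yG_y^\dagger$, should similarly reduce to $\sum_x\ketbra{x}\otimes K^{-1}\sum_kE_x^\dagger\Pi_kE_x$, which has norm at most $\max_x\|E_x\|^2/K$. This is the delicate point, since $\|E_x\|$ need not be small. If it fails, we plan to restrict to the ``heavy-light'' split of LMW, or to take the square root on the other side, where the bound $\sum_xE_x^\dagger E_x=\Id$ applies. Combining the two stages gives $O(\log M\cdot\log(MN)/K)$.
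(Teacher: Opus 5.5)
Your route is the paper's proof: the same spectral relaxation, the factorization $R(k,x)=(-1)^{f_2(x)}\alpha_{k\oplus x}$, a conditional matrix Rademacher bound in $f_2$, the bound $\max_z|\alpha_z|^2=O(\log N)$ for $\sum_x C_x^\dagger C_x$, and a rectangular square root that is a second Rademacher series in $f_1$. (Your $C_x$ is the paper's $Z_x$, with $B_{k,x}=K^{-1/2}\Pi_kE_x$.)

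The one step you leave open is not actually delicate. You only need $\max_x\|E_x\|^2=O(1)$, not that $\|E_x\|$ is small. Since the $E_x^\dagger E_x$ are positive semidefinite, $E_x^\dagger E_x\preceq\sum_{x'}E_{x'}^\dagger E_{x'}\preceq\Id$, which is the identity you already derived. Equivalently, Cauchy--Schwarz gives $|\braket{v_i}{x}|^2\le\|v_i\|^2=Np_i$. Hence $\|K^{-1}\sum_x\ketbra{x}\otimes E_x^\dagger E_x\|\le 1/K$. This is exactly how the paper closes this term, via $\max_x\|\sum_k B_{k,x}^\dagger B_{k,x}\|\le\|\sum_{k,x}B_{k,x}^\dagger B_{k,x}\|$. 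No heavy/light split is needed. Your other fallback, taking the square root on the other side, would not help, since it just reproduces the other variance term.

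There are also two small corrections.

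\begin{itemize}
\item As written, $L=\sum_x\bra{x}\otimes C_x^\dagger$ satisfies $LL^\dagger=\sum_xC_x^\dagger C_x$, which is the wrong variance term, and passing to $L^\dagger$ does not change $\|L\|$. You want $L=\sum_x\ket{x}\otimes C_x^\dagger$, the adjoint of the paper's $N_{f_1}=\sum_x Z_x\otimes\bra{x}$, for which $L^\dagger L=\sum_xC_xC_x^\dagger$. Your computations of $\sum_yG_y^\dagger G_y$ and $\sum_yG_yG_y^\dagger$ are in fact the correct ones for this fixed $L$.
\item The Hoeffding statement about $\max_z|\alpha_z|^2$ needs to become an expectation bound. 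For example, use $|\alpha_z|^2\le N$ together with the $O(1/N)$ failure probability of the union bound, as the paper does.
\end{itemize}
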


\begin{corollary}[$F_tHF_{t-1}H\cdots  F_2HF_1$ search bound]
\label{cor:f2hf1-t-case-search-main}
Let $f_1,\dots,f_t:\{0,1\}^n\to\{0,1\}$ be uniformly random Boolean functions, and let $F_j=\sum_x (-1)^{f_j(x)}\ketbra{x}$ for $j\in\{1,\dots,t\}$. For the search game on the family $\{F_tHF_{t-1}H\cdots  F_2HF_1H\ket{k}\}_{k\in[K]}$, every one-query adversary with workspace dimension $M$ satisfies
\[
\mathbb{E}_{f_1,\ldots,f_t}\bigl[\Win(\cA\mid f_1,\ldots,f_t)\bigr]
=O\!\left(\frac{\log M \cdot \log (MN)}{K}\right).
\]
\end{corollary}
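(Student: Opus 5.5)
The plan is to reduce the corollary to \cref{thm:f2hf1h-search-main} by absorbing the first $t-2$ layers into the randomness of $f_{t-1}$ and $f_t$. Write the state as
\[
\ket{\psi_{f,k}} = F_t H F_{t-1} H \cdot W_{f_1,\dots,f_{t-2}} \ket{k}, \qquad W := F_{t-2}H\cdots F_1 H,
\]
where $W$ is a unitary that depends only on $f_1,\dots,f_{t-2}$. The point to exploit is that the proof of \cref{thm:f2hf1h-search-main}, as sketched in the technical overview, uses only two facts. First, $R(k,x) = (-1)^{f_2(x)}\alpha_{k}(x)$ is a Rademacher series in $f_2$ once $f_1$ is fixed. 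Second, the coefficients $\alpha$ are linear in the signs $(-1)^{f_1(y)}$, with the Hadamard structure giving them a suitable square root. I will therefore condition on $f_1,\dots,f_{t-2}$ and rerun that argument with the fixed input basis $\{H W\ket{k}\}_k$ in place of $\{H\ket{k}\}_k$. Equivalently, I prove a slightly stronger version of \cref{thm:f2hf1h-search-main}: for every fixed unitary $W_0$, the family $\{F_2 H F_1 H W_0\ket{k}\}_{k\in[K]}$ has win probability $O(\log M\log(MN)/K)$ uniformly in $W_0$. The corollary then follows by averaging over $f_1,\dots,f_{t-2}$, since the adversary's oracle may depend on all of the $f_j$ and the bound holds for each fixed conditioning.

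The steps, in order, are as follows.

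\begin{enumerate}
\item Apply the spectral relaxation $\Win \le \|M\|^2$ with $M = K^{-1/2}\sum_k \Pi_k D_{k}$. The diagonal matrices $D_k$ come from \cref{lem:weight-vector-decomposition}, taken with respect to the maximally mixed input distribution. This is legitimate because for each fixed $k$ the average of $\ketbra{\psi_k}$ over $f_2$ (and $f_1$) is still maximally mixed: averaging over $f_2$ kills all off-diagonal terms, and the diagonal entries average to $1/N$ once we also average over $f_1$.
\item Conditioned on $f_1$ (and $W_0$), $M$ is a matrix Rademacher series in $(-1)^{f_2(x)}$. \cref{thm:tropp-rademacher} then bounds $\mathbb{E}_{f_2}\|M\|^2$ by $O(\log M)$ times the larger of $\|\mathbb{E}_{f_2}M^\dagger M\|$ and $\|\mathbb{E}_{f_2}MM^\dagger\|$.
\item Handle each variance term. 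After averaging over $f_2$, it is a sum over $x$ of rank-structured pieces weighted by $|\alpha_k(x)|^2$ or by products $\overline{\alpha_{k}(x)}\alpha_{k'}(x)$, where now $\alpha_k(x) = \bra{x}HF_1HW_0\ket{k}$. Following the overview, I write each variance matrix as $\widetilde M_{f_1} = M_{f_1}^\dagger M_{f_1}$ with $M_{f_1}$ linear in the signs of $f_1$. Because $H F_1 H W_0 = \sum_y (-1)^{f_1(y)} H\ketbra{y}H W_0$, this square root is again a Rademacher series.
\item Apply \cref{thm:tropp-rademacher} a second time. This gives $\mathbb{E}_{f_1}\|M_{f_1}\|^2 = O(\log(MN)/K)$, provided the variance of $M_{f_1}$ is $O(1/K)$.
\end{enumerate}

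The main obstacle is step 4: checking that the second-level variance is still $O(1/K)$ when $W_0$ is an arbitrary unitary rather than the identity. The hope is that this variance only involves $\sum_y |\bra{x}H\ket{y}|^2\,|\bra{y}HW_0\ket{k}|^2$-type quantities. These are controlled by the flatness of $H$ ($|\bra{x}H\ket{y}|^2 = 1/N$) together with unitarity of $HW_0$, which makes $\sum_k |\bra{y}HW_0\ket{k}|^2 \le 1$. So the $W_0$-dependence should enter only through such column and row norms.

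If this fails for general $W_0$, the fallback is to use the specific structure of $W$. Concretely, I would replace $f_{t-1}$ by $f_{t-1}\oplus g$ for a fresh random $g$ (which does not change the distribution) and exploit that $H F_{t-2}$ is flat, in order to recover the exact Hadamard-type bounds that the $t=2$ proof uses.
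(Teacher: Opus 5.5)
\textbf{There is a genuine gap.} Your plan rests on a strengthened statement: hardness of $\{F_2HF_1HW_0\ket{k}\}_{k\in[K]}$ uniformly over all fixed unitaries $W_0$. That statement is false. For $W_0=H$ the state is $F_2HF_1\ket{k}=(-1)^{f_1(k)}F_2H\ket{k}$. The adversary that makes one phase query to $f_2$ and then measures $\{H\ketbra{k}H\}_k$ wins with probability $1$.

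Steps 1--3 survive for arbitrary $W_0$. The marginal is still $\Id/N$. The first-level term $\|\mathbb{E}_{f_2}M^\dagger M\|$ is still $O(\log(NK)/K)$ with high probability. Unitarity of $HW_0$ still gives $\sum_y Q_yQ_y^\dagger=\sum_{k,x}B_{k,x}B_{k,x}^\dagger$.

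Step 4 fails on the other second-level variance term. Write $g_k(y):=|\bra{y}HW_0\ket{k}|^2$ and $Q_y:=\sum_{k,x}(-1)^{x\cdot y}\bra{y}HW_0\ket{k}\,B_{k,x}\otimes\bra{x}$. After the cross terms in $k$ vanish, you get
\[
\sum_y Q_y^\dagger Q_y=\sum_{k}\sum_{x_1,x_2}\Bigl(\sum_{y}(-1)^{y\cdot(x_1+x_2)}g_k(y)\Bigr)\,B_{k,x_1}^\dagger B_{k,x_2}\otimes\ket{x_1}\bra{x_2}.
\]
Because the $x$-register stays coherent, what enters is the Fourier transform of $g_k$, not just its total mass. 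Your hoped-for quantity $\sum_y|\bra{x}H\ket{y}|^2 g_k(y)$ only sees the total mass.

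The Fourier transform collapses to $\delta_{x_1,x_2}$ only when $HW_0\ket{k}$ is flat in the computational basis. In general the best bound is $\max_{k,y}Ng_k(y)/K$, which equals $N/K$ for $W_0=H$. The term genuinely is large there: otherwise your chain of inequalities would show that the perfect adversary above wins with probability $o(1)$.

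To rescue your route you would need the randomness of $W=F_{t-2}H\cdots F_1H$ to show $\max_{k,y}N|\bra{y}HW\ket{k}|^2=O(\log(NK))$ with high probability. This is true, since each entry of $HF_{t-2}$ applied to a vector independent of $f_{t-2}$ is a Rademacher sum. But it is not what you propose, and done this way it loses an extra $\log N$ factor relative to the stated bound. Your fallback does not supply this argument.

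\textbf{How the paper avoids this.} The paper peels off the other end. Write the state as $U_{[3:t]}\,F_2HF_1H\ket{k}$ with $U_{[3:t]}:=F_tHF_{t-1}H\cdots F_3H$. This unitary acts after the random core and is independent of $(f_1,f_2)$.

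Moving $\mathbb{E}_{f_3,\dots,f_t}$ outside the supremum over $(V,\{\Pi_k\})$ only helps the adversary. For each fixed $f_3,\dots,f_t$, the adversary $(V,\{\Pi_k\})$ on the $t$-layer family is exactly the adversary $(VU_{[3:t]},\{\Pi_k\})$ on the $t=2$ family, with the same workspace dimension $M$. Then \cref{thm:f2hf1h-search-main} applies verbatim, with no new matrix analysis.

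The lesson is that a unitary applied after the hard core can be absorbed into the adversary's pre-query isometry. A unitary applied before it changes the game, as the $W_0=H$ example shows.
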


\begin{theorem}[One-query distinguishing attack for a structured subset of $PH\ket{k}$]
\label{thm:perm-distinguishing-main}
There exists a one-query adversary $\cA$ such that for every in-place permutation unitary $P$ there is a classical oracle $f_\pi$ for which $\cA^{f_\pi}$ distinguishes the ensemble $\{PH\ket{k}\}_{k\in[2^{n/2}]}$ from Haar-random input with constant advantage.
\end{theorem}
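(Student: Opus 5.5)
Following the attack sketched in the technical overview, identify $[N]=\{0,1\}^n$ and $[2^{n/2}]$ with $\{0^{n/2}\}\times\{0,1\}^{n/2}$. Write $\pi^{-1}(x)=(g_1(x),g_2(x))\in\{0,1\}^{n/2}\times\{0,1\}^{n/2}$. For $k$ in the structured subset,
\[
PH\ket{k}=\frac{1}{N^{1/4}}\sum_{y\in\{0,1\}^{n/2}}(-1)^{\langle k,y\rangle}\ket{\phi_y},\qquad \ket{\phi_y}:=N^{-1/4}\sum_{x:g_2(x)=y}\ket{x}.
\]
Each $\ket{\phi_y}$ is a unit vector, since $|g_2^{-1}(y)|=2^{n/2}$, and the $\ket{\phi_y}$ have disjoint supports. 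The adversary will check, coherently, that its input lies in $\mathcal S:=\mathrm{Span}\{\ket{\phi_y}\otimes\ket{y}\otimes\ket{\phi_y}\otimes\ket{\mathrm{junk}_y}\}$ after a one-query embedding. The mixed ensemble $\frac1{\sqrt N}\sum_k \ketbra{PH k}$ is exactly supported on $\mathrm{Span}\{\ket{\phi_y}\}$, which has dimension $2^{n/2}$. A Haar-random input, by contrast, has only $\approx 2^{-n/2}$ weight there. So it suffices to implement, with one query, a test that accepts every vector in $\mathrm{Span}\{\ket{\phi_y}\}$ with constant probability and accepts a Haar state with probability at most a constant bounded away from that.

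The concrete adversary works in three steps. (1) Append registers and compute $\ket{x}\mapsto\ket{x}\ket{\ket{\psi_{g_2(x)}}\text{-synthesis}}$ using a single query to the oracle $f_\pi(\cdot,x):=f_{\mathrm{Ros}}(\cdot,g_2(x))$. Here $f_{\mathrm{Ros}}$ is Rosenthal's one-query state-synthesis oracle \cite{SODA:Rosenthal24} for the family $\ket{\psi_y}=\ket{y}\otimes\ket{\phi_y}$. Because Rosenthal's algorithm is a fixed isometry plus one query to $f(\cdot,y)$, controlling it on the classical register $x$ is still one query to $f_\pi$. The result maps $\ket{x}\mapsto\ket{x}\otimes\ket{g_2(x)}\otimes\ket{\phi_{g_2(x)}}\otimes\ket{\mathrm{junk}_{g_2(x)}}$, up to small error; I would use the version with error $o(1)$, or tolerate constant fidelity. (2) Apply a SWAP test between the input register and the third register. (3) Accept if the SWAP test accepts.

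On input $PH\ket{k}$, step (1) yields $N^{-1/4}\sum_y(-1)^{\langle k,y\rangle}\ket{\phi_y}\ket{y}\ket{\phi_y}\ket{\mathrm{junk}_y}$. This is symmetric under swapping registers 1 and 3, so it is accepted with probability $1$ (minus the synthesis error). On a Haar input $\ket{\psi}=\sum_x a_x\ket{x}$, step (1) gives $\sum_x a_x\ket{x}\ket{g_2(x)}\ket{\phi_{g_2(x)}}\ket{\mathrm{junk}}$. The SWAP test accepts with probability $\tfrac12(1+\langle\cdot\rangle)$, where the overlap term is $\sum_{y}\sum_{x,x'\in g_2^{-1}(y)}\bar a_{x'}a_x\langle x'|\phi_y\rangle\langle\phi_y|x\rangle=\sum_y|\langle\phi_y|\psi\rangle|^2$. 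The junk registers agree because they depend only on $y$. In expectation over Haar, this is $2^{n/2}/N=2^{-n/2}$. The acceptance probability is therefore $\approx 1/2$, and the advantage is $\approx 1/2$, a constant. A Markov or concentration bound is not even needed for advantage, since the distinguishing advantage is defined by the difference of expectations.

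The main obstacle is step (1): making sure Rosenthal's one-query synthesis can be run coherently in superposition over $y$ while keeping the junk register dependent only on $y$ (not on internal randomness that differs between the branches), and controlling its approximation error. I would handle the error by noting that Rosenthal's guarantee gives the state up to trace/fidelity error $\epsilon$ for each fixed $y$. Since the branches for different $y$ are orthogonal (the classical $\ket{y}$ register separates them), the coherent error is at most $\epsilon$ in norm, which shifts acceptance probabilities by $O(\epsilon)$ and leaves a constant advantage. The one-query accounting is immediate: the adversary's only oracle access is the single call inside the synthesis, with the oracle defined from $\pi$ via $g_2$.
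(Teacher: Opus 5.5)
Your proposal is correct and is essentially the paper's proof. It uses the same decomposition $PH\ket{k}=N^{-1/4}\sum_y(-1)^{k\cdot y}\ket{\phi_y}$ over the last $n/2$ bits of $\pi^{-1}(x)$, runs Rosenthal's universal one-query synthesis circuit coherently with its oracle indexed by $g_2(x)$, and applies a SWAP test between the input register and the synthesized copy. The differences are minor. The paper writes $y=g_\pi(x)$ into a register exactly by folding a phase $(-1)^{g_\pi(x)\cdot u}$ on an ancilla $\ket{+}^{\otimes n/2}$ into the same query, then measures $y$, and it analyzes a uniformly random basis-state input. You instead synthesize $\ket{y}\otimes\ket{\phi_y}$ directly and compute the acceptance probability $\tfrac12\bigl(1+\sum_y|\langle\phi_y|\psi\rangle|^2\bigr)$ for a general input $\ket{\psi}$. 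One small correction to your error analysis: the cross-$y$ orthogonality should be attributed to the input register, which is supported on $g_2^{-1}(y)$, not to the $\ket{y}$ register, since that register is itself only approximately synthesized. This is immaterial given Rosenthal's exponentially small synthesis error.
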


\begin{theorem}[Constant-correctness synthesis for phase unitaries]
\label{thm:diagonal-positive-main}
For phase unitaries of the form $D(F)=\sum_x \alpha(x)\cdot \ketbra{x}$, there is a one-query synthesis algorithm with constant correctness (\cref{def:aux-input-correctness}). In the special case of $\alpha(x) \in \{1, i, -1, -i\}$, the achieved correctness is at least $1/2$; for the general case, the achieved correctness is at least $1/4$.
\end{theorem}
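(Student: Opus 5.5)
The plan is to give an explicit one-query circuit, using the binary phase oracle $O_f$, in which a small ancilla is put in uniform superposition. The oracle is queried on $\ket{x}\ket{b}$ and returns a sign $s_b(x)\in\{\pm 1\}$; the oracle value is chosen to depend on $\alpha(x)$. Fixed phases $\omega_b$ are then applied to the ancilla, and the ancilla is discarded. The analysis projects the ancilla onto the uniform state $\ket{u}$. On the $\ket{u}$-branch the input $\ket{x}$ picks up the amplitude $a(x)=\frac1L\sum_b s_b(x)\omega_b$. The signs will be chosen so that $\mathrm{Re}\bigl(\overline{\alpha(x)}\,a(x)\bigr)\ge c$ for every $x$. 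Fidelity is then lower bounded by comparing with the pure state $D(F)\ket{\Psi}\otimes\ket{u}$.

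\textbf{Circuit.} Fix $L$ and phases $\omega_b$ for $b\in[L]$. On input $\ket{\Psi}_{X,\mathrm{Aux}}=\sum_x \ket{x}\ket{\phi_x}$ (with $\ket{\phi_x}$ unnormalized), the circuit does the following.
\begin{itemize}
\item Prepare $\ket{u}=\frac{1}{\sqrt L}\sum_b\ket{b}$.
\item Query $O_f$ on $(x,b)$, with $(-1)^{f(x,b)}=s_b(x)$.
\item Apply the fixed diagonal unitary $\ket{b}\mapsto\omega_b\ket{b}$.
\item Trace out the ancilla.
\end{itemize}
Before tracing, the joint pure state is $\ket{\Omega}=\sum_x\ket{x}\ket{\phi_x}\otimes\frac{1}{\sqrt L}\sum_b s_b(x)\omega_b\ket{b}$.

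\textbf{Fidelity bound.} Fidelity of reduced states is at least the squared overlap of any purifications. Comparing $\ket{\Omega}$ with $D(F)\ket{\Psi}\otimes\ket{u}$ gives
\[F\ge\Bigl|\sum_x \|\phi_x\|^2\,\overline{\alpha(x)}\,a(x)\Bigr|^2.\]
If $\mathrm{Re}(\overline{\alpha(x)}a(x))\ge c$ for every $x$, then the real part of the sum is at least $c\sum_x\|\phi_x\|^2=c$. Hence $F\ge c^2$ for every input, which is exactly aux-input correctness $c^2$ in the sense of \cref{def:aux-input-correctness}.

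\textbf{Choice of signs.} Two cases cover the theorem.
\begin{itemize}
\item \emph{Fourth roots of unity.} Take $L=2$ and $(\omega_0,\omega_1)=(1,i)$. Then $s_0+is_1$ ranges over $\sqrt2\,e^{i\pi/4}\cdot\{1,i,-1,-i\}$. Since the fixed global phase $e^{i\pi/4}$ can be absorbed into the post-processing (or into $\omega$), every $\alpha(x)\in\{1,i,-1,-i\}$ is hit exactly. This gives $a(x)=\alpha(x)/\sqrt2$ up to that fixed phase, so $c=1/\sqrt2$ and the correctness is $1/2$.
\item \emph{General $\alpha(x)=e^{i\theta}$.} Take $L=4$ and $\omega_b=i^b$, and set $s_b(x)=\mathrm{sign}\,\mathrm{Re}(\overline{\alpha(x)}\,i^b)$, breaking ties arbitrarily. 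Then
\[\mathrm{Re}\bigl(\overline{\alpha(x)}\,a(x)\bigr)=\frac14\sum_b|\cos(\theta-b\pi/2)|=\frac12\bigl(|\cos\theta|+|\sin\theta|\bigr)\ge\frac12.\]
So $c=1/2$ and the correctness is $1/4$.
\end{itemize}

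\textbf{Main obstacle.} The step needing care is the fidelity bound. The discarded ancilla is entangled with $x$ (its $\ket{u}^\perp$ component carries an $x$-dependent unwanted phase), so one cannot argue branch by branch. The argument must instead use the purification and monotonicity bound together with the \emph{uniform-in-$x$} lower bound on the real part of $\overline{\alpha}a$. That uniform bound is what makes the bound hold for arbitrary superposed and entangled inputs.
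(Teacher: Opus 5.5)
Your proof is correct, and the circuit is essentially the paper's. With $L=2$ and $\omega=(1,i)$, your ancilla ends in $\pm\ket{+i}$ or $\pm\ket{-i}$ according to whether $s_0(x)=s_1(x)$. These are exactly the two orthogonal branch states that the paper's $\ket{-}\ket{+i}$ ancilla reaches under the XOR oracle. Your $L=4$ register is redundant: away from ties, $s_{b+2}\omega_{b+2}=s_b\omega_b$. So the general case amounts to rounding $\alpha(x)$ to the nearest point of $e^{i\pi/4}\{1,i,-1,-i\}$. That is the paper's nearest-fourth-root rounding applied to $e^{-i\pi/4}\alpha$, which changes the target only by an irrelevant global phase.

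Where you genuinely differ is the analysis. The paper computes the reduced output exactly as the two-branch mixture $p_0\,D(F)\ketbra{\psi^{(0)}}D(F)^\dagger+p_1\,D(F)\ketbra{\psi^{(1)}}D(F)^\dagger$. It obtains fidelity exactly $p_0^2+p_1^2$, and at least $(p_0^2+p_1^2)/2$ after rounding; this uses that the two branch states are orthogonal. You instead fix one purification of the target (ancilla in $\ket{u}$) and reduce everything to the pointwise condition $\mathrm{Re}(\overline{\alpha(x)}\,a(x))\ge c$.

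Your criterion is more reusable: it applies verbatim to any phase-kickback design (any $L$, $\omega_b$, signs) and needs no orthogonality. The paper's exact computation gives a sharper input-dependent value, which approaches $1$ when the input concentrates on one branch. The worst-case constants $1/2$ and $1/4$ coincide. Your closing remark that one cannot argue branch by branch is therefore overstated. Per-$x$ arguments do fail, but here the decomposition by the two orthogonal ancilla states is exact, and that is precisely the argument the paper uses.
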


\begin{corollary}\label{cor:diagonal-synthesis}
Any family of unitaries with a $(3/4+\epsilon)$ correct 1-query unitary synthesis algorithm relative to the class of complex phase unitaries also has an $\Omega(\epsilon^2)$-correct 1-query unitary synthesis algorithm relative to binary phase unitaries (or Boolean functions). 
\end{corollary}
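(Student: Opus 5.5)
We prove \cref{cor:diagonal-synthesis} by composing two one-query synthesis steps. The first is the hypothesized $(3/4+\epsilon)$-correct algorithm $\mathcal A$ relative to complex phase unitaries. The second is the one-query synthesis of complex phase unitaries relative to binary phase oracles from \cref{thm:diagonal-positive-main}, which has correctness $\eta_2 \ge 1/4$. Both algorithms make a single query, so the tight composition rule \cref{prop:composition-aux-correctness} applies and yields a one-query algorithm $\mathcal D$ relative to binary phase oracles. Its correctness is
\[
\eta_\star=\cos^2\!\left(\min\!\left\{\tfrac{\pi}{2},\arccos\sqrt{\eta_1}+\arccos\sqrt{\eta_2}\right\}\right),
\]
where $\eta_1 = 3/4+\epsilon$.

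The remaining work is the trigonometric estimate. With $\eta_2=1/4$ we get $\arccos\sqrt{1/4}=\arccos(1/2)=\pi/3$. With $\eta_1=3/4$ exactly we would get $\arccos(\sqrt3/2)=\pi/6$, so the sum is exactly $\pi/2$ and the bound degenerates to $0$. This is why the threshold $3/4$ appears. Write $\theta_1=\arccos\sqrt{3/4+\epsilon}=\pi/6-\delta$. We need $\delta=\Omega(\epsilon)$. Since $\arccos$ has derivative of magnitude at least $1$ on $[0,1)$, we have
\[
\delta \ge \sqrt{3/4+\epsilon}-\sqrt{3/4}\ge \epsilon/(2\sqrt{3/4+\epsilon}) \ge \epsilon/2 .
\]
Then $\theta_1+\pi/3=\pi/2-\delta$, and
\[
\eta_\star=\cos^2(\pi/2-\delta)=\sin^2\delta .
\]
For $\delta\le\pi/2$, we have $\sin\delta\ge 2\delta/\pi$, so $\eta_\star\ge 4\delta^2/\pi^2=\Omega(\epsilon^2)$. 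If $\epsilon$ is so large that $\eta_1 = 1$, then $\theta_1 = 0$ and the bound holds directly with $\eta_\star \ge \cos^2(\pi/3) = 1/4$.

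The main point to check is that the hypotheses of \cref{prop:composition-aux-correctness} really match. The class $\mathcal C_2$ must be the class of all complex phase unitaries on the query register, of whatever dimension $\mathcal A$ queries. \Cref{thm:diagonal-positive-main} must provide aux-input correctness $\ge 1/4$ uniformly for every such phase unitary, with $\mathcal C_3$ the binary phase oracles. The statement of \cref{thm:diagonal-positive-main} is phrased in exactly this aux-input correctness sense (\cref{def:aux-input-correctness}), so no conversion through diamond distance (\cref{prop:diamond-vs-correctness}) is needed. Avoiding that conversion is essential: passing through diamond distance would lose the $3/4$ threshold entirely.
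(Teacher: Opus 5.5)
Your proposal is correct and follows the paper's route: compose the hypothesized algorithm with the $1/4$-correct phase-unitary synthesis of \cref{thm:diagonal-positive-main} via the tight one-query composition rule \cref{prop:composition-aux-correctness}. The identity $\arccos(\sqrt{3}/2)+\arccos(1/2)=\pi/2$ accounts for the $3/4$ threshold, and your estimate $\eta_\star=\sin^2\delta$ with $\delta\ge\epsilon/2$ supplies the $\Omega(\epsilon^2)$ calculation that the paper leaves implicit.
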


\begin{theorem}[Quantum-advice lower bound for $FH\ket{k}$]
\label{thm:fh-advice-main}
Let $f:\{0,1\}^n\to\{0,1\}$ be uniformly random and let $F=\sum_x(-1)^{f(x)}\ketbra{x}$. Suppose a (zero-query) non-uniform algorithm uses $S$ qubits of advice depending on $f$ and outputs $k$ from one copy of $FH\ket{k}$ with success probability $\varepsilon$. Then
\[
\varepsilon^{t}\le 2^S\left(\frac{8t}{N}\right)^{t}
\qquad\text{for every integer }t\ge 1.
\]
In particular, setting $t=S$ gives $\varepsilon\le 16S/N$.
\end{theorem}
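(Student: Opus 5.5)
We plan to reduce the advice problem to a moment bound on a single positive operator that depends only on $f$. Fix the algorithm's measurement $\{\Pi_k\}_k$, which acts on the challenge register $\mathbb C^N$ together with the $S$-qubit advice register; it is independent of $f$. For each $f$ define the operator on the advice space
\[
A_f := \frac1N\sum_{k}\mathrm{Tr}_1\!\left[\Pi_k\bigl(FH\ketbra{k}HF\otimes \Id_{2^S}\bigr)\right].
\]
Then $A_f\succeq 0$ and the success probability is $\varepsilon=\mathbb E_f \mathrm{Tr}[A_f\sigma_f]\le \mathbb E_f\|A_f\|$, where $\sigma_f$ is the advice state. Because $A_f$ is PSD, Jensen's inequality gives
\[
\varepsilon^t\le \mathbb E_f\|A_f\|^t\le \mathbb E_f \mathrm{Tr}[A_f^t].
\]
It therefore suffices to show $\mathbb E_f\mathrm{Tr}[A_f^t]\le 2^S(8t/N)^t$. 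The factor $2^S$ is the dimension cost of passing from the operator norm to the trace. Setting $t=S$ then gives $\varepsilon\le 2\cdot 8S/N=16S/N$.

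To compute the moment, expand $FH\ket{k}=N^{-1/2}\sum_x(-1)^{f(x)+k\cdot x}\ket{x}$ and write $B^{k}_{y,x}:=(\bra{y}\otimes\Id)\Pi_k(\ket{x}\otimes\Id)$. This gives
\[
A_f=\frac{1}{N^2}\sum_{k,x,y}(-1)^{f(x)+f(y)+k\cdot(x+y)}B^{k}_{y,x},
\]
and hence $\mathrm{Tr}[A_f^t]$ is a sum over $(k_i,x_i,y_i)_{i\le t}$ of $N^{-2t}$ times a sign times $\mathrm{Tr}[B^{k_1}_{y_1x_1}\cdots B^{k_t}_{y_tx_t}]$. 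Averaging over uniformly random $f$ kills every term in which some string occurs an odd number of times among $x_1,y_1,\dots,x_t,y_t$. We dominate the surviving terms by a sum over perfect matchings of the $2t$ slots, which overcounts only by coincidences. There are at most $(2t-1)!!\le (2t)^t$ such matchings.

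For a fixed matching, we bound the summed trace using only the structure of the $B$'s. We have $\sum_k\Pi_k=\Id$ with each $\Pi_k$ a projector, and $\sum_x\ketbra{x}=\Id$. A matching that pairs $x_i$ with $y_i$ collapses, after summing over $k_i$, to a factor of order $N\cdot\Id$. Cross pairings identify indices across different $B$ factors and can be resolved as contractions of the form $\sum_x (\bra{x}\otimes\Id)(\cdot)(\ket{x}\otimes\Id)$, which act as partial traces. Each pairing should then contribute at most a factor $O(N)\cdot\|\cdot\|$ with no further loss. The target is $|\sum \mathrm{Tr}[\cdots]|\le 2^S\cdot c^tN^{t}$ per matching, using $|\mathrm{Tr}\,X|\le 2^S\|X\|$ on the advice space, which should give $\mathbb E_f\mathrm{Tr}[A_f^t]\le 2^S(2t)^t c^t N^{t}/N^{2t}$ with $c\le 4$.

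We expect the main obstacle to be this last step: handling an arbitrary matching uniformly, including the characters $(-1)^{k\cdot(x+y)}$ that no longer cancel under cross pairings. The first thing to try is bounding those characters in absolute value by $1$ while keeping positivity. One would write each $B^{k}_{y,x}$ via $\Pi_k=\Pi_k^\dagger\Pi_k$ as a product $(\bra{y}\otimes \Id)\Pi_k^\dagger\cdot\Pi_k(\ket{x}\otimes\Id)$, and then apply Cauchy--Schwarz along the cyclic trace so that every matched pair becomes a completeness relation. If the phases cannot be dropped cleanly, the fallback is to sum over $k$ before taking absolute values, using $\sum_k(-1)^{k\cdot z}\Pi_k$, whose norm is at most $1$ (it is a $\pm1$-weighted sum of orthogonal projectors). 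This is where the constant $8$ will have to be tracked.
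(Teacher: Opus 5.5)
Your reduction is sound, and up to absorbing the ancilla it is the paper's first half in linear-algebraic form. You have $\varepsilon=\mathbb{E}_f\mathrm{Tr}[A_f\sigma_f]$ with $0\preceq A_f\preceq\Id$, and $\mathrm{Tr}[A_f^t\sigma]$ is the $t$-round survival probability in the paper's alternating-measurement game. So your Jensen step plays the role of the Jordan-lemma argument, and passing to $\mathrm{Tr}[A_f^t]$ is the maximally-mixed-advice step. (Absorb the workspace ancilla so that $\{\Pi_k\}$ is a POVM on challenge$\otimes$advice; otherwise the dimension factor becomes $2^{S+m}$.)

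The gap is everything after that. The inequality $\mathbb{E}_f\mathrm{Tr}[A_f^t]\le 2^S(8t/N)^t$, which is the whole content of the theorem, is only sketched, and two steps fail as written.
\begin{itemize}
\item ``Dominating the surviving terms by a sum over perfect matchings'' is not a valid inequality. The summands $(-1)^{\sum_i k_i\cdot(x_i+y_i)}\mathrm{Tr}[B^{k_1}_{y_1x_1}\cdots B^{k_t}_{y_tx_t}]$ are complex numbers of arbitrary sign. The pointwise bound (indicator of even multiplicities) $\le$ (number of compatible matchings) transfers to the sum only when the summands are nonnegative. The matching sum is exactly the \emph{Gaussian} moment (Wick), and you would have to reach it through a Rademacher-to-Gaussian comparison, e.g.\ by convexity at a cost of $(\pi/2)^t$.
\item For crossing matchings, the claim that each contributes at most $2^S c^tN^t$ with $c\le 4$ is asserted, not shown. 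Dropping the characters $(-1)^{k\cdot(x+y)}$ inside a coherent sum is not a legitimate inequality. Showing that crossing pairings are controlled by the non-crossing ones is precisely the nontrivial content of a noncommutative Khintchine inequality.
\end{itemize}

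The cleanest fix is to linearize in the signs rather than expand a quadratic chaos. Set $T_x:=\frac1N\sum_k(-1)^{k\cdot x}\ket{k}\otimes\Pi_k^{1/2}(\ket{x}\otimes\Id)$ and $T_f:=\sum_x(-1)^{f(x)}T_x$. Then $A_f=T_f^\dagger T_f$, so $T_f$ is a matrix Rademacher series.
\begin{itemize}
\item One checks $\sum_xT_x^\dagger T_x=\Id/N$.
\item $\sum_xT_xT_x^\dagger$ has the same nonzero spectrum as $\frac{1}{N^2}\sum_kZ_k\Pi_kZ_k$, where $Z_k:=\sum_x(-1)^{k\cdot x}\ket{x}\bra{x}\otimes\Id$. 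This operator has norm at most $1/N$ and trace $2^S/N$.
\item Hence both $\mathrm{Tr}[(\sum_xT_x^\dagger T_x)^t]$ and $\mathrm{Tr}[(\sum_xT_xT_x^\dagger)^t]$ are at most $2^SN^{-t}$.
\end{itemize}
The matrix Khintchine inequality $\mathbb{E}\,\mathrm{Tr}(\sum_x\epsilon_xH_x)^{2t}\le(2t-1)^t\,\mathrm{Tr}(\sum_xH_x^2)^t$, applied to the Hermitian dilation of $T_f$, then gives $\mathbb{E}_f\mathrm{Tr}[A_f^t]\le 2^S((2t-1)/N)^t$. This is stronger than you need. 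Already \cref{thm:tropp-rademacher} applied to $T_f$ yields the weaker $\varepsilon=O((S+n)/N)$.

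With that step supplied, your argument becomes a genuinely different proof from the paper's, which uses no matrix concentration here. The paper bounds the same $t$-round quantity with compressed oracles. It uses that the conditional success probabilities are non-decreasing. At each odd round the database has fewer than $t$ entries, which confines the relevant states to a $(t-1)$-dimensional subspace and caps the conditional success at $4t/N$.
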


\section{A Generic Search Reduction}
In this section, we generically reduce the problem of upper bounding $\underset{R}{\mathbb E}[\Win(\cA\mid R)]$ to the calculation of the expected (squared) spectral norm of a random matrix. We require only mild assumptions on the distribution over $R$:
\begin{itemize}
    \item For the first step, we require only that $\mathbb E_R \ketbra{\psi_{R, k}}$ is independent of $k$.
    \item For the second step, we require that $\mathbb E_R \ketbra{\psi_{R, k}}$ is maximally mixed over a linear subspace of $\mathbb C^N$. 
\end{itemize}
\subsection{Generic spectral relaxation under identical marginals}

We now utilize the weight-vector decomposition from \cref{sec:weight-vector-decomposition} to analyze the search game from \cref{sec:search-game}. 

\begin{lemma}[Generic spectral relaxation]
\label{lem:generic-spectral-reduction}
Assume that for each $k\in[K]$, the mixed state $\mathbb E_R \ketbra{\psi_{R, k}}$ is independent of $k$. Let $\cA=(V,\{\Pi_k\}_{k\in[K]})$ be a one-query adversary for the search game, and let
\[
D_{V,R,k}:=D_{V,\psi_{R,k}}^{(\mathbf R,k)}
\]
be the diagonal matrix from \cref{lem:weight-vector-decomposition}, formed using the distribution on $\ket{\psi_{\mathbf R,k}}$ for a random choice of $\mathbf R$. Define
\[
M_R:=\frac{1}{\sqrt{K}}\sum_{k\in[K]} \Pi_k D_{V,R,k}.
\]
Then, for every $R$,
\[
\Win(\cA\mid R)\le \norm{M_R}^2.
\]
Consequently,
\[
\underset{R}{\mathbb E}\bigl[\Win(\cA\mid R)\bigr]\le \underset{R}{\mathbb E}\norm{M_R}^2.
\]
\end{lemma}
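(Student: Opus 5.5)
Fix $R$ and an oracle $f$, and argue directly from the weight-vector decomposition. The hypothesis says that the marginal $\mathbb E_R\ketbra{\psi_{R,k}}$ does not depend on $k$. Hence the weights $p_i$ in \cref{lem:weight-vector-decomposition}, computed from the distribution of $\ket{\psi_{\mathbf R,k}}$, are the same for every $k$. So there is a single unit vector $\ket{\wt}:=\ket{\wt_{V,\mu}}$ with
\[
V\ket{\psi_{R,k}}=D_{V,R,k}\ket{\wt}
\]
for every $k$ (for almost every $R$, which suffices for the expectation statement). This common weight vector is exactly where the identical-marginals assumption is used, and it is the only real subtlety.

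Next, rewrite the win probability as a single squared norm. For fixed $f$, the vectors $\Pi_k O_f V\ket{\psi_{R,k}}$ lie in the ranges of the mutually orthogonal projectors $\Pi_k$, so they are pairwise orthogonal. Therefore
\[
\frac1K\sum_k \bigl\|\Pi_k O_f V\ket{\psi_{R,k}}\bigr\|^2=\Bigl\|\frac1{\sqrt K}\sum_k \Pi_k O_f D_{V,R,k}\ket{\wt}\Bigr\|^2 .
\]
Since $O_f$ is a unitary phase oracle, it is diagonal in the computational basis of the query register and so commutes with every diagonal $D_{V,R,k}$. If the oracle acts on only part of the workspace, it is still diagonal in the workspace computational basis, because $O_f=\sum_i \pm\ketbra{i}$ in the normal form. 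This gives $O_f D_{V,R,k}\ket{\wt}=D_{V,R,k}\,O_f\ket{\wt}$.

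The sum then equals $\|M_R\,O_f\ket{\wt}\|^2$. Because $O_f\ket{\wt}$ is a unit vector, this is at most $\|M_R\|^2$, independent of $f$. Taking the maximum over $f$ gives $\Win(\cA\mid R)\le\|M_R\|^2$. Taking the expectation over $R$ gives the second claim.
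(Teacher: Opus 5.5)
Your proof is correct and follows essentially the same route as the paper: the identical-marginals hypothesis yields a common weight vector, the diagonal phase oracle commutes with each $D_{V,R,k}$, and the win probability is then bounded by the operator norm of $M_R$. The only cosmetic difference is in how orthogonality of the $\Pi_k$ enters. The paper writes the sum as the quadratic form $\bra{\wt_V}O_f^\dagger\bigl(\frac1K\sum_k D_{V,R,k}^\dagger\Pi_k D_{V,R,k}\bigr)O_f\ket{\wt_V}$ and identifies the middle operator with $M_R^\dagger M_R$, whereas you use orthogonality of the ranges of the $\Pi_k$ to write it directly as $\|M_R O_f\ket{\wt}\|^2$, as in the technical overview. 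Your almost-every-$R$ caveat is a fair refinement that the paper glosses over.
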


\begin{proof}
Fix $R$. By \cref{lem:weight-vector-decomposition},
\[
V\ket{\psi_{R,k}} = D_{V,R,k}\ket{\wt_{V,\mathbf R,k}}
\qquad\text{for each }k\in[K].
\]
By our assumption that $\mathbb E_{\mathbf R} \ketbra{\psi_{\mathbf R, k}}$ is independent of $k$, we see that $\ket{\wt_{V,\mathbf R,k}} = \ket{\wt_{V}}$ is independent of $k$. Hence
\begin{align*}
\Win(\cA\mid R)
&=\max_f \frac{1}{K}\sum_{k\in[K]}
\bra{\wt_V} D_{V,R,k}^\dagger O_f^\dagger \Pi_k O_f D_{V,R,k}\ket{\wt_V}\\
&=\max_f \bra{\wt_V} O_f^\dagger
\left(\frac{1}{K}\sum_{k\in[K]} D_{V,R,k}^\dagger \Pi_k D_{V,R,k}\right)
O_f\ket{\wt_V}\\
&\le \left\|\frac{1}{K}\sum_{k\in[K]} D_{V,R,k}^\dagger \Pi_k D_{V,R,k}\right\|\\
&=\left\|\left(\frac{1}{\sqrt{K}}\sum_{k\in[K]} D_{V,R,k}^\dagger \Pi_k\right)
\left(\frac{1}{\sqrt{K}}\sum_{k\in[K]} \Pi_k D_{V,R,k}\right)\right\|\\
&=\norm{M_R^\dagger M_R}
=\norm{M_R}^2.
\end{align*}
Averaging over $R$ gives the final inequality.
\end{proof}

\subsection{Description of $M_R$ for subspace-uniform state families}

Our applications will rely on \cref{lem:generic-spectral-reduction} in a more concrete setting: the challenge state distribution is, in expectation, maximally mixed on a fixed subspace. In this case, there is a simple description of the random matrix $M_R$.

\begin{lemma}
\label{lem:subspace-specialization}
Assume the hypotheses of \cref{lem:generic-spectral-reduction}. In addition, suppose there is a subspace $S\subseteq \mathbb{C}^N$ of dimension $L$ with projector $\Pi_S$ such that for every $k\in [K]$,

\[
    \underset{R}{\mathbb E}\bigl[\ketbra{\psi_{R,k}}\bigr]=\frac{1}{L}\Pi_S
\]

Let $\ket{\widetilde{v}_i}:=\Pi_S \ket{v_i}$ and
\[
p_i:=\frac{\norm{\ket{\widetilde{v}_i}}^2}{L}.
\]
Then
\[
M_R=\sum_{k\in[K],\,x\in[N]} R(k,x)\,B_{k,x},
\]
where
\[
B_{k,x}:=\frac{1}{\sqrt{NK}}\sum_{i\in[M]:\,p_i>0} \frac{\braket{\widetilde{v}_i}{x}}{\sqrt{p_i}}\,\Pi_k\ketbra{i},
\]
and these matrices satisfy
\begin{align}
\sum_{k,x} B_{k,x}B_{k,x}^\dagger \preceq \frac{L}{NK}\,\Id,
\label{eq:generic-BBd}\\
\sum_{k,x} B_{k,x}^\dagger B_{k,x} \preceq \frac{L}{NK}\,\Id.
\label{eq:generic-BdB} \\
B_{k_1,x}^\dagger B_{k_2, y} = 0 \text{ for }k_1\neq k_2 \label{eq:generic-orthogonal}
\end{align}
\end{lemma}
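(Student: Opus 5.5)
First I compute the weights $p_i$ and the diagonal matrices $D_{V,R,k}$ explicitly under the subspace hypothesis. By \cref{lem:weight-vector-decomposition}, $p_i=\bra{v_i}\mathbb E_R\ketbra{\psi_{R,k}}\ket{v_i}=\frac1L\bra{v_i}\Pi_S\ket{v_i}=\norm{\ket{\widetilde v_i}}^2/L$, which matches the stated $p_i$ and is independent of $k$. Almost surely $\ket{\psi_{R,k}}\in S$, because $\mathbb E_R\bra{\psi_{R,k}}(\Id-\Pi_S)\ket{\psi_{R,k}}=0$. Hence $\braket{v_i}{\psi_{R,k}}=\braket{\widetilde v_i}{\psi_{R,k}}$, and expanding $\ket{\psi_{R,k}}=\frac1{\sqrt N}\sum_x R(k,x)\ket{x}$ gives
\[
D_{V,R,k}=\frac{1}{\sqrt N}\sum_{x}R(k,x)\sum_{i:\,p_i>0}\frac{\braket{\widetilde v_i}{x}}{\sqrt{p_i}}\ketbra{i}.
\]
Substituting into $M_R=\frac1{\sqrt K}\sum_k\Pi_kD_{V,R,k}$ yields the claimed linear expansion with the stated $B_{k,x}$.

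Next I prove the orthogonality relation \eqref{eq:generic-orthogonal}. Each $B_{k,x}$ has the form $\Pi_k E_x$ with $E_x$ diagonal. Therefore $B_{k_1,x}^\dagger B_{k_2,y}=E_x^\dagger\Pi_{k_1}\Pi_{k_2}E_y=0$ for $k_1\neq k_2$, since the $\Pi_k$ are orthogonal projectors.

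For the two variance bounds, write $E_x=\frac{1}{\sqrt{NK}}\sum_{i:\,p_i>0}\frac{\braket{\widetilde v_i}{x}}{\sqrt{p_i}}\ketbra{i}$.
\begin{itemize}
\item For \eqref{eq:generic-BdB}: $\sum_{k,x}E_x^\dagger\Pi_kE_x=\sum_xE_x^\dagger E_x$, using $\sum_k\Pi_k=\Id$. This matrix is diagonal with $i$-th entry
\[
\frac{1}{NKp_i}\sum_x|\braket{\widetilde v_i}{x}|^2=\frac{\norm{\ket{\widetilde v_i}}^2}{NKp_i}=\frac{L}{NK}.
\]
So \eqref{eq:generic-BdB} holds, in fact with equality on the support $\{i:p_i>0\}$.
\item For \eqref{eq:generic-BBd}: $\sum_{k,x}\Pi_kE_xE_x^\dagger\Pi_k=\sum_k\Pi_k\big(\sum_xE_xE_x^\dagger\big)\Pi_k$. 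The inner sum is the same diagonal matrix, bounded by $\frac{L}{NK}\Id$. Conjugating by $\Pi_k$ and summing gives $\preceq\frac{L}{NK}\sum_k\Pi_k=\frac{L}{NK}\Id$.
\end{itemize}

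I expect no real obstacle. The only delicate point is justifying the replacement of $\ket{v_i}$ by $\ket{\widetilde v_i}$ almost surely, together with the $p_i=0$ convention: when $p_i=0$ we have $\widetilde v_i=0$, so these coordinates contribute nothing on either side.
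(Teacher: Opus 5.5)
Your proof is correct and follows essentially the same route as the paper: compute $p_i$ from the marginal hypothesis, replace $\ket{v_i}$ by $\Pi_S\ket{v_i}$, expand $D_{V,R,k}$ linearly in $R(k,\cdot)$, and check the two Gram bounds and the orthogonality relation directly from the diagonal structure and $\Pi_{k_1}\Pi_{k_2}=0$. Your version is slightly tidier in two places: you justify that $\ket{\psi_{R,k}}\in S$ almost surely, which the paper simply asserts, and you use $\sum_k \Pi_k=\Id$ to get the $\sum_{k,x}B_{k,x}^\dagger B_{k,x}$ bound with equality on the support, whereas the paper's intermediate display for that bound carries a stray sum over $k$.
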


\begin{proof}
Fix $k\in[K]$. By assumption,
\[
\underset{R}{\mathbb E}\bigl[\ketbra{\psi_{R,k}}\bigr]=\frac{1}{L}\Pi_S.
\]
Hence
\[
\underset{R}{\mathbb E}\bigl[\abs{\braket{v_i}{\psi_{R,k}}}^2\bigr]
=
\bra{v_i}\left(\frac{1}{L}\Pi_S\right)\ket{v_i}
=
\frac{\norm{\Pi_S \ket{v_i}}^2}{L}
=
\frac{\norm{\ket{\widetilde{v}_i}}^2}{L}
= p_i.
\]
Moreover, because every challenge state lies in $S$, we have
\[
\braket{v_i}{\psi_{R,k}} = \braket{\widetilde{v}_i}{\psi_{R,k}} 
\qquad\text{for all }R,k.
\]
Therefore
\begin{align*}
D_{V,R,k}
&=
\sum_{i\in[M]:\,p_i>0}
\frac{\braket{\widetilde{v}_i}{\psi_{R,k}}}{\sqrt{p_i}}\ketbra{i}\\
&=
\frac{1}{\sqrt{N}}
\sum_{i\in[M]:\,p_i>0}\sum_{x\in[N]}
R(k,x)\,\frac{\braket{\widetilde{v}_i}{x}}{\sqrt{p_i}}\ketbra{i}.
\end{align*}
Substituting this into the definition of $M_R$ yields
\begin{align*}
M_R
&=\frac{1}{\sqrt{K}}\sum_{k\in[K]} \Pi_k D_{V,R,k}\\
&=
\sum_{k\in[K],\,x\in[N]} R(k,x)
\left(
\frac{1}{\sqrt{NK}}\sum_{i\in[M]:\,p_i>0} \frac{\braket{\widetilde{v}_i}{x}}{\sqrt{p_i}}\,\Pi_k\ketbra{i}
\right)\\
&=\sum_{k,x} R(k,x) B_{k,x}.
\end{align*}

It remains to prove \cref{eq:generic-BBd} and \cref{eq:generic-BdB}. For the first bound,
\begin{align*}
\sum_{k,x} B_{k,x}B_{k,x}^\dagger
&=\frac{1}{NK}\sum_{k,x}
\Pi_k\left(\sum_{i\in[M]:\,p_i>0} \frac{\abs{\braket{\widetilde{v}_i}{x}}^2}{p_i}\ketbra{i}\right)\Pi_k\\
&=\frac{1}{NK}\sum_{k\in[K]} \Pi_k\left(\sum_{i\in[M]:\,p_i>0} \frac{\norm{\ket{\widetilde{v}_i}}^2}{p_i}\ketbra{i}\right)\Pi_k\\
&=\frac{1}{NK}\sum_{k\in[K]} \Pi_k\left(\sum_{i\in[M]:\,p_i>0} L\ketbra{i}\right)\Pi_k
\preceq \frac{L}{NK}\sum_{k\in[K]}\Pi_k
=\frac{L}{NK} \cdot \Id.
\end{align*}
For the second bound,
\begin{align*}
\sum_{k,x} B_{k,x}^\dagger B_{k,x}
&=\frac{1}{NK}\sum_{k,x}
\left(\sum_{i_1\in[M]:\,p_{i_1}>0}\frac{\braket{x}{\widetilde{v}_{i_1}}}{\sqrt{p_{i_1}}}\ketbra{i_1}\Pi_k\right)
\left(\sum_{i_2\in[M]:\,p_{i_2}>0}\frac{\braket{\widetilde{v}_{i_2}}{x}}{\sqrt{p_{i_2}}}\Pi_k\ketbra{i_2}\right)\\
&\preceq \frac{1}{NK}\sum_{k,x}\sum_{i\in[M]:\,p_i>0} \frac{\abs{\braket{\widetilde{v}_i}{x}}^2}{p_i}\ketbra{i}\\
&=\frac{1}{NK}\sum_{i\in[M]:\,p_i>0} \frac{\norm{\ket{\widetilde{v}_i}}^2}{p_i}\ketbra{i}
\preceq\frac{L}{NK}\Id.
\end{align*}
For the third identity, note that $B_{k,x} = \Pi_k B_{k,x}$ for all $k, x$ and that $\Pi_{k_1} \Pi_{k_2} = 0$ for $k_1\neq k_2$. 
\end{proof}

\cref{thm:perm-search-main,thm:f2hf1h-search-main} (as well as \cref{thm:iid-binary-phase-search,thm:haar-basis-search}) prove upper bounds on the search game win probability by invoking \cref{lem:generic-spectral-reduction,lem:subspace-specialization}, and then upper bounding $\underset{R}{\mathbb E}[\norm{M_R}^2]$.

\section{One-query lower bound for permutation unitaries}
\label{sec:permutation-lb}
In this section, we analyze the permutation state family $\{PH\ket{k}\}_k$, for the permutation unitary $P$
associated with a uniformly random permutation $\pi$, $P:\ket{x}\mapsto \ket{\pi(x)}$. 

We will first give a one-query algorithm for a distinguishing game for this state family in \cref{sec:PH-distinguishing}. This motivates our focus on the oracle state search game, with the formulation in \cref{sec:PH-search-formulation}. We then prove the one-query lower bound by analyzing the oracle state search game. The analysis proceeds by first writing the relevant random matrix as a combinatorial matrix sum over permutations in \cref{sec:PH-combinatorial-sum}, computing variance parameters for this sum in \cref{sec:PH-parameters}, and applying the matrix Bernstein inequality for combinatorial matrix sums \cite{Annals:MJCFT14} in \cref{sec:PH-bound-prob}. By invoking the appropriate matrix tail inequalities, we also prove a classical advice lower bound for non-uniform one-query algorithms in \cref{sec:PH-classical-advice}.

\subsection{A one-query distinguishing attack}
\label{sec:PH-distinguishing}
We consider the task of distinguishing a single copy of a phase state generated by an in-place permutation (applied to a fixed subspace of phase states in $\mathbb C^N$) from a Haar random state. This distinguishing game is played as follows.
\begin{enumerate}
    \item The challenger samples a permutation $\pi$ over $\{0,1\}^n$ together with a random bit $b\in\{0,1\}$.
    \item The challenger generates and sends to the adversary one copy of a state $\ket{\psi}$:
    \begin{itemize}
        \item If $b=0$, the challenger samples a uniformly random key $k\in[K]\subset [N]$, and gives the adversary one copy of $\ket{\psi}=PH\ket{k}$ (unitary $P$ is defined by $\pi$, $P:\ket{x}\mapsto\ket{\pi(x)}$).
        \item If $b=1$, the challenger samples a uniformly random $x\in[N]$ and gives the adversary one copy of $\ket{\psi}=\ket{x}$. 
    \end{itemize}
    \item The adversary is asked to output $b$ after making one oracle query, where the oracle can depend only on $\pi$.
\end{enumerate}  
This is similar to the oracle state distinguishing game in \cite[Definition~3.8]{STOC:LomMaWri24}, where the adversary wishes to distinguish a single copy of a random binary phase state $\{\ket{\psi_{R_k}}\}_k$ from Haar random.

While our eventual goal is to show that synthesizing in-place permutations is infeasible, we first prove that for states generated by in-place permutation unitaries, winning the one-query distinguishing game can be easy! This indicates that analyzing oracle state distinguishing games may be insufficient for a one-query permutation synthesis lower bound. 

\begin{theorem}[\cref{thm:perm-distinguishing-main} restated]
\label{clm:perm-distinguishing}
There exists a one-query adversary $\cA$ such that, for every in-place permutation unitary $P:\ket{x}\mapsto \ket{\pi(x)}$ on $\{0,1\}^n$, there is a classical oracle $f_\pi:\{0,1\}^{\ell(n)}\to\{0,1\}$ for some polynomially bounded $\ell(n)$ for which $\cA^{f_\pi}$ distinguishes the ensemble $\{PH\ket{k}\}_{k\in[2^{n/2}]}$ from Haar-random input with constant advantage.
\end{theorem}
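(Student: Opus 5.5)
We follow the attack sketched in the technical overview. Identify $[K]$ with $\{0^{n/2}\}\times\{0,1\}^{n/2}$ and write $\pi^{-1}(x)=(g_1(x),g_2(x))$ with $g_1(x),g_2(x)\in\{0,1\}^{n/2}$. For $y\in\{0,1\}^{n/2}$ let $S_y:=g_2^{-1}(y)$; these sets have size exactly $2^{n/2}$ because $\pi$ is a bijection. Set $\ket{\phi_y}:=2^{-n/4}\sum_{x\in S_y}\ket{x}$, which are orthonormal. Then
\[
PH\ket{(0,k)}=\frac{1}{N^{1/4}}\sum_{y}(-1)^{\langle k,y\rangle}\ket{\phi_y}.
\]
Each $x$ lies in exactly one $S_y$, so the map $\ket{x}\mapsto\ket{x}\ket{g_2(x)}$ is a legitimate step.

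The adversary works in three steps.
\begin{enumerate}
\item Starting from $\ket{x}$, it makes one query to compute $y=g_2(x)$ into a fresh register.
\item Conditioned on $y$, it runs Rosenthal's one-query state synthesis algorithm \cite{SODA:Rosenthal24} to prepare $\ket{\phi_y}\ket{\mathrm{junk}_y}$ in further fresh registers. Here we use a single oracle $f(\cdot,y)$ indexed by $y$, so Rosenthal's construction is applied to the family $\{\ket{\phi_y}\}_y$.
\item It performs a SWAP test between the input register and the freshly synthesized register, and outputs $b=0$ if the test accepts.
\end{enumerate}

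The two queries in steps 1 and 2 are parallel to a combined classical function, namely $x\mapsto g_2(x)$ together with $f(\cdot,g_2(x))$. By the parallel-query simulation of \cite{STOC:LomMaWri24} noted in the introduction, they merge into a single query to some $f_\pi$ with polynomial input length. Alternatively, we can query the composite function $f^*(r,x)=f(r,g_2(x))$ and also output $g_2(x)$ in one query, since all of this is a single Boolean function of $(r,x)$ once encoded bitwise.

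For the analysis, take the $b=0$ case. After step 1 the state is $N^{-1/4}\sum_y(-1)^{\langle k,y\rangle}\ket{\phi_y}\ket{y}$, and after step 2 it is $N^{-1/4}\sum_y(-1)^{\langle k,y\rangle}\ket{\phi_y}\ket{y}\ket{\phi_y}\ket{\mathrm{junk}_y}$. Every branch is symmetric under swapping registers 1 and 3, so the SWAP test accepts with probability 1. In the $b=1$ case the input is $\ket{x}$, and we get $\ket{x}\ket{y}\ket{\phi_y}\ket{\mathrm{junk}_y}$ with $y=g_2(x)$. The acceptance probability is $\tfrac12+\tfrac12|\braket{x}{\phi_y}|^2=\tfrac12+\tfrac12 2^{-n/2}$. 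The advantage is therefore about $1/2$, which is constant. The statement says ``Haar-random input'', and by linearity the acceptance probability on a Haar-random state equals the average over the maximally mixed state, which is the same $\tfrac12+O(2^{-n/2})$.

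The main obstacle is step 2 together with the single-query accounting. Rosenthal's algorithm produces $\ket{\phi_y}$ only up to an auxiliary junk register, and it may be approximate or need the auxiliary state chosen per $y$. We must check that its output has the form $\ket{\phi_y}\otimes\ket{\mathrm{junk}_y}$, with the junk unentangled from register 3, or at least that fidelity to $\ket{\phi_y}$ on register 3 stays $\ge 1-o(1)$ uniformly in $y$. With fidelity at least $1-\delta$, the $b=0$ acceptance probability is still at least $1-O(\delta)$, which is enough. We must also check that computing $g_2(x)$ and running Rosenthal's controlled query fit into one query. The approach I would try first is to define one oracle on inputs $(x,r)$ returning $(g_2(x), f(r,g_2(x)))$, and make it phase/XOR-compatible by the standard binary-phase simulation.
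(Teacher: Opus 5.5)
Your proposal is correct and follows the paper's proof: the same states $\ket{\phi_{\pi,y}}$ indexed by the last $n/2$ bits of $\pi^{-1}(x)$, Rosenthal's one-query synthesis controlled on $g_\pi(x)$, and a swap test. The single-query accounting you flag is resolved exactly as you suggest, through the combined phase oracle $f_\pi(x,u,z)=(-1)^{g_\pi(x)\cdot u}f_{\pi,g_\pi(x)}(z)$ with $u$ prepared in $\ket{+}^{\otimes n/2}$ and Hadamard-ed afterward; the approximation error is absorbed by conditioning on $y$ (the paper measures it) and bounding the swap-test acceptance by the fidelity of the synthesized register with $\ket{\phi_{\pi,y}}$.
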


\begin{proof}
Let
\[
\mathcal{K}_{\mathrm{small}}:=\{0^{n/2}z:z\in\{0,1\}^{n/2}\}\subseteq \{0,1\}^n,
\]
so that $|\mathcal{K}_{\mathrm{small}}|=2^{n/2}$. Define
\[
g_\pi(x):=\pi^{-1}(x)_{[n/2+1,n]},
\]
namely the last $n/2$ bits of $\pi^{-1}(x)$. For every $y\in\{0,1\}^{n/2}$, define the state
\[
\ket{\phi_{\pi,y}}:=2^{-n/4}\sum_{x:\,g_\pi(x)=y}\ket{x}.
\]
By the one-query state-synthesis algorithm of Rosenthal \cite[Theorem~4.1]{SODA:Rosenthal24}, there is a polynomial-size quantum circuit $C_n$ and, for each pair $(\pi,y)$, a classical oracle $f_{\pi,y}$ such that the reduced state on the first $n$ qubits of $C_n^{f_{\pi,y}}\ket{0^{\poly(n)}}$ is within trace distance $2^{-n}$ of $\ket{\phi_{\pi,y}}$.

We now combine all of these oracles into a single oracle
\[
f_\pi(x,u,z):=(-1)^{g_\pi(x)\cdot u} f_{\pi,g_\pi(x)}(z).
\]
On input $\ket{\psi}=\sum_x \alpha_x\ket{x}$, the adversary proceeds as follows.
\begin{enumerate}[leftmargin=1.5em]
    \item Append ancilla $\ket{+}^{\otimes n/2}\ket{0^{\poly(n)}}$, producing three registers:
    \[
    \sum_x \alpha_x \ket{x}_1\otimes \ket{+}^{\otimes n/2}_2\otimes \ket{0^{\poly(n)}}_3.
    \]
    \item Run $C_n$ on the third register, answering its oracle query using $f_\pi$ on the joint state. On basis states $\ket{x,u,\cdot}$, this applies the phase $(-1)^{g_\pi(x)\cdot u}$ together with the oracle $f_{\pi,g_\pi(x)}$ needed by $C_n$.
    \item Apply $H^{\otimes n/2}$ to the second register and measure it in the computational basis, obtaining some $y\in\{0,1\}^{n/2}$.
    \item Perform a swap test between the first register and the first $n$ qubits of the third register. Output ``structured'' if and only if the swap test accepts.
\end{enumerate}

Suppose first that the input is $PH\ket{k}$ for some $k\in\mathcal{K}_{\mathrm{small}}$. Because the first $n/2$ bits of $k$ vanish,
\[
PH\ket{k}
=\frac{1}{\sqrt{N}}\sum_x (-1)^{k\cdot \pi^{-1}(x)}\ket{x}
=\frac{1}{\sqrt{N}}\sum_x (-1)^{k\cdot g_\pi(x)}\ket{x}.
\]
After step~2, the joint state is proportional to
\[
\sum_x (-1)^{k\cdot g_\pi(x)}\ket{x}
\otimes \sum_u (-1)^{g_\pi(x)\cdot u}\ket{u}
\otimes C_n^{f_{\pi,g_\pi(x)}}\ket{0^{\poly(n)}}.
\]
Applying $H^{\otimes n/2}$ to the second register maps this to a superposition proportional to
\[
\sum_{y\in\{0,1\}^{n/2}} (-1)^{k\cdot y}
\ket{\phi_{\pi,y}}\otimes \ket{y} \otimes C_n^{f_{\pi,y}}\ket{0^{\poly(n)}}.
\]
Conditioned on measuring $y$, the first register is exactly $\ket{\phi_{\pi,y}}$, while the first $n$ qubits of the third register $\rho_{\pi, y}$ are within trace distance $2^{-n}$ of $\ket{\phi_{\pi,y}}$. Therefore the swap test accepts with probability at least
\[
\frac12+\frac12\bra{\phi_{\pi,y}}\rho_{\pi,y}\ket{\phi_{\pi,y}}
\ge 1-\frac 1 2 \cdot 2^{-n}.
\]

Now suppose the input is a uniformly random computational basis state $\ket{x}$. After step~3, the first two registers are $\ket{x}\ket{g_\pi(x)}$, and the first $n$ qubits of the third register are still within trace distance $2^{-n}$ of $\ket{\phi_{\pi,g_\pi(x)}}$. The overlap between $\ket{x}$ and $\ket{\phi_{\pi,g_\pi(x)}}$ is exactly $2^{-n/4}$, so
~the swap test accepts with probability at most
\[
\frac12+\frac12\bra{x}\rho_{x, g_\pi(x)}\ket{x}
\le \frac12 + \frac 1 2 \cdot 2^{-n/2}.
\]
This gives constant distinguishing advantage.
\end{proof}

\subsection{Search formulation}\label{sec:PH-search-formulation}

We now turn to the search problem. Fix a key set $\mathcal{K}\subseteq \{0,1\}^n$ of size $K$ containing $0$, and identify the search key space with $\mathcal{K}\setminus\{0\}$. Given one copy of $PH\ket{k}$ for uniformly random $k\in \mathcal{K}\setminus\{0\}$, the goal is to output $k$ using one oracle query.

For a uniformly random permutation $\pi$, define
\[
R_\pi(k,x):=(-1)^{k\cdot \pi^{-1}(x)}
\qquad\text{so that}\qquad
PH\ket{k}=\frac{1}{\sqrt{N}}\sum_x R_\pi(k,x)\ket{x}.
\]

In order to invoke \cref{lem:generic-spectral-reduction,lem:subspace-specialization}, we take advantage of one additional property of this family. Let
\[
\Pi_\perp := \Id - \ketbra{+_N},
\]
where $\ket{+_N}=N^{-1/2}\sum_x \ket{x}$. For every nonzero key $k$, the state $PH\ket{k}$ is orthogonal to $\ket{+_N}$, and its marginal over random $\pi$ is the maximally mixed state on $\ket{+_N}^\perp$:
\[
\mathbb{E}_\pi\bigl[PH\ketbra{k} H P^{-1}\bigr]=\frac{1}{N-1}\Pi_\perp.
\]
Indeed, permutation symmetry forces this density matrix to commute with every permutation matrix, hence to have the form $a\ketbra{+_N}+b\Pi_\perp$; since every $PH\ket{k}$ with $k\neq 0$ lies in $\ket{+_N}^\perp$, we have $a=0$, and the trace-one condition gives $b=1/(N-1)$.

Write $V=\sum_i \ketbra{i}{v_i}$ as usual, and define
\[
\ket{\widetilde{v}_i}:=\Pi_\perp\ket{v_i},
\qquad
p_i:=\frac{\norm{\ket{\widetilde{v}_i}}^2}{N-1}.
\]
Because the challenge states lie in $\ket{+_N}^\perp$, only the projected vectors $\widetilde{v}_i$ matter in the overlap computation. Therefore, by \cref{lem:generic-spectral-reduction,lem:subspace-specialization} with $S=\ket{+_N}^\perp$ and $L=N-1$, it is enough to bound
\[
\mathbb{E}_\pi \norm{M_\pi}^2,
\qquad
M_\pi:=\sum_{k,x} R_\pi(k,x) B_{k,x}.
\]
Here
\[
B_{k,x}=\frac{1}{\sqrt{NK}}\sum_{i\in[M]:\,p_i>0} \frac{\braket{\widetilde{v}_i}{x}}{\sqrt{p_i}}\,\Pi_k\ketbra{i},
\]
which satisfies
\[
\sum_{k,x} B_{k,x}B_{k,x}^\dagger
\preceq \frac{N-1}{NK} \cdot \Id, \qquad 
\sum_{k,x} B_{k,x}^\dagger B_{k,x}
\preceq
\frac{N-1}{NK} \cdot \Id,
\]
and
\[B_{k_1,x}^\dagger B_{k_2, y} = 0 \text{ for }k_1\neq k_2.
\]
Because $\pi$ and $\pi^{-1}$ are identically distributed, we will freely replace $\pi^{-1}$ by $\pi$ in the calculations below.

\subsection{Rewriting $M_\pi$ as a combinatorial matrix sum}\label{sec:PH-combinatorial-sum}

\paragraph{Step 1: write in terms of deterministic matrices.}
From the definition of $R_\pi$,
\[
M_\pi = \sum_{k,x} (-1)^{k\cdot \pi(x)} B_{k,x}.
\]
For $x,y\in[N]$, define
\begin{equation}
\label{eq:perm-Axy}
A_{x,y}:=\sum_{k} (-1)^{k\cdot y} B_{k,x}.
\end{equation}
Then
\[
M_\pi = \sum_x A_{x,\pi(x)}.
\]
Thus the random permutation now appears only through the combinatorial matrix sum $\sum_x A_{x,\pi(x)}$.

\paragraph{Step 2: pass to a Hermitian dilation.}
The matrices $A_{x,y}$ need not be Hermitian, so we replace them with their Hermitian dilations
\[
\widehat{B}_{k,x}:=
\begin{pmatrix}
0 & B_{k,x}\\
B_{k,x}^\dagger & 0
\end{pmatrix},
\qquad
\widehat{A}_{x,y}:=
\begin{pmatrix}
0 & A_{x,y}\\
A_{x,y}^\dagger & 0
\end{pmatrix}.
\]
Then
\begin{equation}
\label{eq:perm-hermitian-dilation}
\mathbb{E}_\pi\norm{M_\pi}^2
=
\mathbb{E}_\pi\left\|\sum_x \widehat{A}_{x,\pi(x)}\right\|^2.
\end{equation}

\subsection{Parameter estimates for the combinatorial matrix sum}\label{sec:PH-parameters}

We now verify the hypotheses of \cref{thm:matrix-bernstein-permutation} for the family $\{\widehat{A}_{x,y}\}_{x,y}$.

\paragraph{Zero total sum.}
Using \cref{eq:perm-Axy},
\begin{align*}
\sum_{x,y} A_{x,y}
&=\sum_{x,y}\sum_k (-1)^{k\cdot y} B_{k,x}
=\sum_{k,x} B_{k,x}\left(\sum_y (-1)^{k\cdot y}\right)
= N\sum_x B_{0,x}.
\end{align*}
As 0 is not in the key space, we set $\Pi_0=0$ by convention, so $B_{0,x}=0$ for every $x$, and therefore $\sum_{x,y}A_{x,y}=0$. The same holds for the Hermitian dilations.

\paragraph{Uniform norm bound.}
Since $\norm{\widehat{A}_{x,y}}=\norm{A_{x,y}}$, it suffices to bound $\norm{A_{x,y}}$. We compute
\begin{align*}
A_{x,y}^\dagger A_{x,y}
&=
\left(\sum_{k_1} (-1)^{k_1\cdot y} B_{k_1,x}^\dagger\right)
\left(\sum_{k_2} (-1)^{k_2\cdot y} B_{k_2,x}\right)\\
&=
\sum_{k_1,k_2} (-1)^{(k_1+k_2)\cdot y} B_{k_1,x}^\dagger B_{k_2,x}
=
\sum_{k} B_{k,x}^\dagger B_{k,x} \preceq \frac 1 K \cdot \Id.
\end{align*}
Thus, we obtain that 
\begin{align}
\norm{\widehat{A}_{x,y}}=\norm{A_{x,y}}
&=\sqrt{\lambda_{\max}(A_{x,y}^\dagger A_{x,y})} \leq \frac{1}{\sqrt{K}},
\label{eq:perm-R-bound}
\end{align}
so the role of $r$ in \cref{thm:matrix-bernstein-permutation} is played by $1/\sqrt{K}$.

\paragraph{Variance bound.}
The matrix variance parameter is
\[
\sigma^2
=
\frac{1}{N}\left\|\sum_{x,y}\widehat{A}_{x,y}^2\right\|
=
\frac{1}{N}
\max\!\left\{
\left\|\sum_{x,y} A_{x,y}A_{x,y}^\dagger\right\|,
\left\|\sum_{x,y} A_{x,y}^\dagger A_{x,y}\right\|
\right\}.
\]
For the first term,
\begin{align*}
\sum_{x,y} A_{x,y}A_{x,y}^\dagger
&=
\sum_{x,y}\sum_{k_1,k_2} (-1)^{(k_1+k_2)\cdot y} B_{k_1,x} B_{k_2,x}^\dagger\\
&=
N\sum_{k,x} B_{k,x}B_{k,x}^\dagger
\preceq
\frac{N-1}{K} \cdot \Id,
\end{align*}
using \cref{eq:generic-BBd} with $L=N-1$. The second term is identical and equals $N\sum_{k,x} B_{k,x}^\dagger B_{k,x} \preceq \frac{N-1}{K} \cdot \Id$ by \cref{eq:generic-BdB}. Therefore
\begin{equation}
\label{eq:perm-variance-bound}
\sigma^2=\frac{N-1}{NK}\le \frac{1}{K}.
\end{equation}

\subsection{Upper bounding the search game winning probability}\label{sec:PH-bound-prob}

Apply \cref{thm:matrix-bernstein-permutation} to the Hermitian matrix
\[
X:=\sum_x \widehat{A}_{x,\pi(x)}.
\]
Using \cref{eq:perm-R-bound} and \cref{eq:perm-variance-bound}, we obtain the 
\[
\Pr\bigl[\lambda_{\max}(X)\ge t\bigr]
\le 2M\cdot \exp\!\left(
-\frac{t^2}{12/K + 4\sqrt{2}\,t/\sqrt{K}}
\right).
\]
Choose
\[
t:= C\,\frac{\log M\,\log K}{\sqrt{K}}
\qquad\text{for a sufficiently large universal constant }C.
\]
Then
\[
\Pr\!\left[\lambda_{\max}(X)^2 \ge C^2\frac{\log^2 M\,\log^2 K}{K}\right]
\le \frac{1}{K}.
\]
Combining this with \cref{eq:perm-hermitian-dilation},
\begin{align*}
\mathbb{E}_\pi \norm{M_\pi}^2
&=
\mathbb{E}\norm{X}^2
=
\mathbb{E}\bigl[\lambda_{\max}(X)^2\bigr]\\
&\le \Pr[\lambda_{\max}(X)^2\ge t^2]\cdot 1
+ \Pr[\lambda_{\max}(X)^2\le t^2]\cdot t^2\\
&=O\!\left(\frac{\log^2 M\,\log^2 K}{K}\right).
\end{align*}
By \cref{lem:generic-spectral-reduction}, this is also an upper bound on the average search success probability, and thus proves \cref{thm:perm-search-main}.

\subsection{One-query lower bound with classical advice}\label{sec:PH-classical-advice}

The same tail bound also derives the classical-advice lower bound. For a fixed advice string, the spectral reduction from  \cref{lem:generic-spectral-reduction} suggests that constant winning probability requires $\norm{M_\pi}^2=\Omega(1)$. By setting $t=c$ for some constant $c=\Omega(1)$, this occurs with probability bounded by
\[
\Pr\bigl[\lambda_{\max}(X)\ge c\bigr]
\le 2M\cdot \exp\!\left(
-\frac{c^2}{12/K + 4\sqrt{2}\,c/\sqrt{K}}
\right)=2M\cdot \exp(-\Omega(\sqrt{K})),
\]
which is exponentially small if $\log M\ll \sqrt{K}$. A union bound over all $2^S$ advice strings then yields the lower bound
\[
S + \log(M)=\Omega(\sqrt{K})
\]
in order to achieve constant win probability.

\section{One-query lower bound for $F_2HF_1$}
\label{sec:f2hf1h-lb}
In this section, we analyze the state family $\{F_2HF_1H\ket{k}\}_k$  through the oracle state search game. 

We start with the formulation of the search game in \cref{sec:f2hf1-search-formulation}. 
The proof of lower bound also makes use of the spectral relaxation from \cref{lem:generic-spectral-reduction}. However, unlike the permutation case, the random matrix $M_R$ is no longer a combinatorial sum. Instead, we exploit the two independent sources of randomness (from $f_1$ and $f_2$) in two steps. First, we condition on $f_1$ and use the randomness of $f_2$ to invoke a matrix Rademacher series concentration inequality in \cref{sec:f2hf1-fix-f1-Rademacher}. Then, we analyze the \emph{matrix variance terms} from \cref{sec:f2hf1-fix-f1-Rademacher}, in expectation over $f_1$, by a second matrix-concentration argument in \cref{sec:fhf-MMd,sec:fhf-MdM}. We conclude our winning probability upper bound in \cref{sec:f2hf1-final-bound}. Finally, we present a classical advice lower bound for non-uniform one-query algorithms in \cref{sec:f2hf1-classical-advice}. We also extend our lower bound to the $t$-case state family $\{F_tHF_{t-1}H\cdots F_2HF_1H\ket{k}\}_k$ for $t\ge 3$, as in \cref{sec:f2hf1-t-case}.

\subsection{Search formulation}\label{sec:f2hf1-search-formulation}

Let $f_1,f_2:\{0,1\}^n\to\{0,1\}$ be uniformly random Boolean functions, and define the phase unitaries
\[
F_j:=\sum_{x\in\{0,1\}^n} (-1)^{f_j(x)}\ketbra{x}
\qquad (j\in\{1,2\}).
\]
The search problem is: given one copy of $F_2HF_1H\ket{k}$ for uniformly random $k\in[K]$, recover $k$ using one oracle query.

Write the challenge state as
\[
F_2HF_1H\ket{k}
=\frac{1}{\sqrt{N}}\sum_{x\in[N]} R(k,x)\ket{x},
\]
where
\begin{equation}
\label{eq:fhf-R-definition}
R(k,x):=(-1)^{f_2(x)}\cdot
\left(
\frac{1}{\sqrt{N}}\sum_{y\in[N]} (-1)^{f_1(y)+y\cdot (k+x)}
\right).
\end{equation}
Here and throughout this section, $k+x$ denotes addition in $\mathbb{F}_2^n$. Define
\begin{equation}
\label{eq:fhf-alpha-definition}
\alpha_u:=\frac{1}{\sqrt{N}}\sum_{y\in[N]} (-1)^{f_1(y)+y\cdot u}
\qquad (u\in\{0,1\}^n),
\end{equation}
so that
\[
R(k,x)=(-1)^{f_2(x)}\alpha_{k+x}.
\]
We observe that the marginal distribution of the challenge state $\ket{\psi_{R,k}}$ over random $(f_1,f_2)$ is independent of $k$: replacing $f_1$ by
\[
f_1^{(k)}(y):=f_1(y)\oplus (y\cdot k)
\]
transforms the $k$-th family into the $0$-th, and $(f_1^{(k)},f_2)$ has the same distribution as $(f_1,f_2)$. Moreover,
\[
\mathbb{E}_{f_1,f_2}\bigl[F_2 H F_1 H \ketbra{k} H F_1 H F_2\bigr]=\frac{1}{N}\cdot \Id
\qquad\text{for every }k,
\]
because averaging over $f_2$ kills all off-diagonal entries while each diagonal entry is $1/N$.

Therefore, \cref{lem:generic-spectral-reduction,lem:subspace-specialization} apply with $S=\mathbb{C}^N$ and $L=N$. 
~This means that
\[
\mathbb{E}_{f_1,f_2}[\Win(\cA\mid f_1,f_2)]
\le
\mathbb{E}_{f_1,f_2}\norm{M_R}^2,
\qquad
M_R:=\sum_{k,x} R(k,x) B_{k,x},
\]
where
\[
\sum_{k,x} B_{k,x}B_{k,x}^\dagger \preceq \frac 1 K \cdot \Id, \qquad 
\sum_{k,x} B_{k,x}^\dagger B_{k,x}
\preceq
\frac{1}{K}\Id,
\]
and $B^\dagger_{k_1, x} B_{k_2, y} = 0$ for all $k_1\neq k_2$. 

\subsection{Conditioning on $f_1$: a matrix Rademacher series}\label{sec:f2hf1-fix-f1-Rademacher}

For each fixed $f_1$ and each $x\in[N]$, define
\[
Z_x:=\sum_k \alpha_{k+x} B_{k,x}.
\]
Then
\begin{equation}
\label{eq:fhf-M-as-rademacher}
M_R = \sum_x (-1)^{f_2(x)} Z_x.
\end{equation}
Conditioned on $f_1$, the signs $\{(-1)^{f_2(x)}\}_x$ are independent Rademacher variables. Therefore, \cref{thm:tropp-rademacher} implies
\begin{align}
\mathbb{E}_{f_2}\norm{M_R}^2
&\le O(\log M)\cdot \Var(M_R)
\notag\\
&\le O(\log M)\cdot
\left(
\left\|\mathbb{E}_{f_2}[M_RM_R^\dagger]\right\|+
\left\|\mathbb{E}_{f_2}[M_R^\dagger M_R]\right\|
\right).
\label{ineq:fhf-first-rademacher}
\end{align}
Averaging over $f_1$ gives
\begin{equation}
\label{ineq:fhf-master-decomposition}
\mathbb{E}_{f_1,f_2}\norm{M_R}^2
=O(\log M)\cdot
\left(
\underbrace{\mathbb{E}_{f_1}\left\|\mathbb{E}_{f_2}[M_RM_R^\dagger]\right\|}_{\text{Section~\ref{sec:fhf-MMd}}}
+
\underbrace{\mathbb{E}_{f_1}\left\|\mathbb{E}_{f_2}[M_R^\dagger M_R]\right\|}_{\text{Section~\ref{sec:fhf-MdM}}}
\right).
\end{equation}
In the next section, we bound these two terms separately.

\subsection{\texorpdfstring{Bounding $\mathbb{E}_{f_1}\|\mathbb{E}_{f_2}[M_RM_R^\dagger]\|$}{Bounding E_{f_1}\|E_{f_2}[M_R M_R^\dagger]}}
\label{sec:fhf-MMd}

In terms of equation \cref{eq:fhf-M-as-rademacher}, the $f_2$-average kills all cross terms in $x$, so
\begin{equation}
\label{eq:fhf-MMd-expand}
\mathbb{E}_{f_1}\left\|\mathbb{E}_{f_2}[M_RM_R^\dagger]\right\|
=
\mathbb{E}_{f_1}\left\|\sum_x Z_x Z_x^\dagger\right\|.
\end{equation}
It is convenient to package these matrices into a single rectangular matrix. Define
\[
N_{f_1}:=\sum_x Z_x\otimes \bra{x}.
\]
Then
\begin{equation}
\label{eq:fhf-N-reduction}
\sum_x Z_x Z_x^\dagger = N_{f_1}N_{f_1}^\dagger,
\qquad\text{so}
\qquad
\mathbb{E}_{f_1}\left\|\sum_x Z_x Z_x^\dagger\right\|
=
\mathbb{E}_{f_1}\norm{N_{f_1}}^2.
\end{equation}

Next we rewrite $N_{f_1}$ as a second matrix Rademacher series. Expanding \cref{eq:fhf-alpha-definition},
\begin{align}
N_{f_1}
&=
\sum_x Z_x\otimes \bra{x}
=
\sum_{k,x} \alpha_{k+x} B_{k,x}\otimes \bra{x}
\notag\\
&=
\frac{1}{\sqrt{N}}\sum_{k,x,y} (-1)^{f_1(y)+y\cdot(k+x)} B_{k,x}\otimes \bra{x}
\notag\\
&=
\sum_y (-1)^{f_1(y)} Q_y,
\label{eq:fhf-N-as-rademacher}
\end{align}
where
\[
Q_y:=\frac{1}{\sqrt{N}}\sum_{k,x} (-1)^{y\cdot(k+x)} B_{k,x}\otimes \bra{x}.
\]
The signs $\{(-1)^{f_1(y)}\}_y$ are again independent Rademacher variables, so a second application of \cref{thm:tropp-rademacher} yields
\begin{align}
\mathbb{E}_{f_1}\norm{N_{f_1}}^2
&\le O(\log(MN))\cdot \Var(N_{f_1})
\notag\\
&\le O(\log(MN))\cdot
\left(
\left\|\mathbb{E}_{f_1}[N_{f_1}N_{f_1}^\dagger]\right\|+
\left\|\mathbb{E}_{f_1}[N_{f_1}^\dagger N_{f_1}]\right\|
\right).
\label{ineq:fhf-second-rademacher}
\end{align}

\paragraph{First variance term.}
Averaging \cref{eq:fhf-N-as-rademacher} over $f_1$ removes the cross terms in $y$, so
\begin{align*}
\left\|\mathbb{E}_{f_1}[N_{f_1}N_{f_1}^\dagger]\right\|
&=
\left\|\sum_y Q_yQ_y^\dagger\right\|\\
&=
\frac{1}{N}
\left\|
\sum_y \sum_{k_1,x_1}\sum_{k_2,x_2}
(-1)^{y\cdot(k_1+x_1)+y\cdot(k_2+x_2)}
B_{k_1,x_1} B_{k_2,x_2}^\dagger \langle x_1|x_2\rangle
\right\|\\
&=
\frac{1}{N}
\left\|
\sum_y \sum_x \sum_{k_1,k_2}
(-1)^{y\cdot(k_1+k_2)} B_{k_1,x} B_{k_2,x}^\dagger
\right\|\\
&=
\left\|\sum_{k,x} B_{k,x}B_{k,x}^\dagger\right\|
\leq 
\frac{1}{K}.
\end{align*}

\paragraph{Second variance term.}
Similarly,
\begin{align*}
\left\|\mathbb{E}_{f_1}[N_{f_1}^\dagger N_{f_1}]\right\|
&=
\left\|\sum_y Q_y^\dagger Q_y\right\|\\
&=
\frac{1}{N}
\left\|
\sum_y \sum_{k_1,x_1}\sum_{k_2,x_2}
(-1)^{y\cdot(k_1+x_1)+y\cdot(k_2+x_2)}
B_{k_1,x_1}^\dagger B_{k_2,x_2}\otimes \ket{x_1}\bra{x_2}
\right\|\\
&=
\left\|\sum_{k,x} B_{k,x}^\dagger B_{k,x}\otimes \ket{x}\bra{x}\right\|\\
&=
\max_x \left\|\sum_k B_{k,x}^\dagger B_{k,x}\right\|
\le
\left\|\sum_{k,x} B_{k,x}^\dagger B_{k,x}\right\|
\leq 
\frac{1}{K}.
\end{align*}

Combining these estimates with \cref{eq:fhf-MMd-expand}, \cref{eq:fhf-N-reduction}, and \cref{ineq:fhf-second-rademacher}, we obtain
\begin{equation}
\label{eq:fhf-MMd-final}
\mathbb{E}_{f_1}\left\|\mathbb{E}_{f_2}[M_RM_R^\dagger]\right\|
=O\!\left(\frac{\log(MN)}{K}\right).
\end{equation}

\subsection{\texorpdfstring{Bounding $\mathbb{E}_{f_1}\|\mathbb{E}_{f_2}[M_R^\dagger M_R]\|$}{Bounding E_{f_1}\|E_{f_2}[M_R^\dagger M_R]}}
\label{sec:fhf-MdM}

Again using \cref{eq:fhf-M-as-rademacher}, averaging over $f_2$ removes the cross terms in $x$ and gives
\begin{align*}
\mathbb{E}_{f_1}\left\|\mathbb{E}_{f_2}[M_R^\dagger M_R]\right\|
&=
\mathbb{E}_{f_1}\left\|\sum_x Z_x^\dagger Z_x\right\|\\
&=
\mathbb{E}_{f_1}\left\|
\sum_x \sum_{k_1,k_2} \alpha_{k_1+x}^*\alpha_{k_2+x}
B_{k_1,x}^\dagger B_{k_2,x}
\right\|\\
&=
\mathbb{E}_{f_1}\left\|
\sum_x \sum_k |\alpha_{k+x}|^2 B_{k,x}^\dagger B_{k,x}
\right\|,
\end{align*}
where the $(k_1, k_2)$ cross terms vanish by \cref{eq:generic-orthogonal}. 

Next, we note that 
\[
\sum_x \sum_k |\alpha_{k+x}|^2 B_{k,x}^\dagger B_{k,x} \preceq \max_{u\in\{0,1\}^n}|\alpha_u|^2
 \cdot \sum_{k,x}B_{k,x}^\dagger B_{k,x} \preceq \max_{u\in\{0,1\}^n}|\alpha_u|^2 \cdot \frac 1 K \cdot \Id.
\]
Therefore,
\begin{equation}
\label{eq:fhf-MdM-reduction}
\mathbb{E}_{f_1}\left\|\mathbb{E}_{f_2}[M_R^\dagger M_R]\right\|
\le
\frac{1}{K} \cdot \mathbb{E}_{f_1}\left[\max_u |\alpha_u|^2\right].
\end{equation}

For each fixed $u\in\{0,1\}^n$, the quantity $\alpha_u$ is a sum of independent mean-zero random signs of magnitude $1/\sqrt{N}$, so Hoeffding's inequality gives
\[
\Pr_{f_1}\bigl[|\alpha_u|\ge \sqrt{4\log N}\bigr]
\le \frac{2}{N^2}.
\]
A union bound over all $u\in\{0,1\}^n$ yields
\[
\Pr_{f_1}\left[\max_u |\alpha_u|^2\ge 4\log N\right]
\le \frac{2}{N}.
\]
Since $|\alpha_u|\le \sqrt{N}$, we obtain the expectation bound
\begin{equation}
\label{eq:fhf-MdM-final}
\mathbb{E}_{f_1}\left[\max_u |\alpha_u|^2\right]
=O(\log N).
\end{equation}
Combining \cref{eq:fhf-MdM-reduction} and \cref{eq:fhf-MdM-final},
\[
\mathbb{E}_{f_1}\left\|\mathbb{E}_{f_2}[M_R^\dagger M_R]\right\|
=O\!\left(\frac{\log N}{K}\right).
\]

\subsection{Final bound on the search success probability}\label{sec:f2hf1-final-bound}

Plugging \cref{eq:fhf-MMd-final} and the bound from Section~\ref{sec:fhf-MdM} into \cref{ineq:fhf-master-decomposition} gives
\begin{align*}
\mathbb{E}_{f_1,f_2}[\Win(\cA\mid f_1,f_2)]
&\le \mathbb{E}_{f_1,f_2}\norm{M_R}^2\\
&= O(\log M)\cdot
\left(
O\!\left(\frac{\log(MN)}{K}\right)
+ O\!\left(\frac{\log N}{K}\right)
\right)\\
&= O\!\left(\frac{\log M \cdot \log (M N)}{K}\right).
\end{align*}
This proves the claimed one-query search bound for the family $F_2HF_1H\ket{k}$ in \cref{thm:f2hf1h-search-main}.

\subsection{One-query lower bound with classical advice}\label{sec:f2hf1-classical-advice}
The tail bound in \cref{thm:tropp-rademacher} also implies a classical-advice lower bound. 
For a fixed advice string, the spectral reduction from \cref{lem:generic-spectral-reduction} suggests that constant winning probability  would require $\norm{M_R}^2=\Omega(1)$. 
For fixed $f_1$, by setting $t=c$ for some constant $c=\Omega(1)$, \cref{thm:tropp-rademacher} implies
\[
    \Pr_{f_2}[\|M_R\|\ge c]\le 2M\cdot \exp\!\left(
-\frac{c^2}{2\cdot \Var(M_R)}
\right).
\]
Therefore, for parameter $c'$ to be defined later,
\begin{align*}
    \Pr_{f_1,f_2}\!\left[\|M_R\|\ge c\right]
    &\le
    \Pr_{f_1}\!\left[\Var(M_R)\ge c'\right]
    +
    \Pr_{f_1,f_2}\!\left[\Var(M_R)<c' \;\land\; \|M_R\|\ge c\right]\\
    &\le \Pr_{f_1}\!\left[\Var(M_R)\ge c'\right]
    +
    2M\cdot \exp\!\left(
    -\frac{c^2}{2\cdot c'}
    \right).
\end{align*}
From the definition of $\Var(M_R)$ and \cref{eq:fhf-N-reduction}, \cref{eq:fhf-MdM-reduction},
\[
    \Var(M_R)\le \left\|\mathbb{E}_{f_2}[M_RM_R^\dagger]\right\|+
\left\|\mathbb{E}_{f_2}[M_R^\dagger M_R]\right\|
\le \|N_{f_1}\|^2 +\frac1K\max_u|\alpha_u|^2.
\]
Therefore, by applying tail bound in \cref{thm:tropp-rademacher} for $N_{f_1}$, the bound of $\Var(N_{f_1})\le 2/K$ in \cref{sec:fhf-MMd}, together with a tail bound for \cref{eq:fhf-MdM-final},
\begin{align*}
    \Pr_{f_1}\!\left[\Var(M_R)\ge c'\right]
    &\le 
    \Pr_{f_1}\!\left[\|N_{f_1}\|\ge \sqrt{\frac{c'}{2}}\right]
    +
    \Pr\!\left[\max_u|\alpha_u|^2\ge \frac{c'K}{2}\right]\\
    &\le 2MN\cdot \exp\!\left(
    -\frac{c'}{4\cdot \Var(N_{f_1})}
    \right)
    + 
    N\cdot\exp\!\left(
    -\frac{c'K}{2}
    \right)\\
    &\le 2MN\cdot \exp\!\left(
    -\frac{c'K}{2}
    \right).
\end{align*}
Therefore, by setting $c'=1/\sqrt{K}$, we can bound the probability for $c=\Omega(1)$ by 
\begin{align*}
    \Pr_{f_1,f_2}\!\left[\|M_R\|\ge c\right]
    &\le
    \Pr_{f_1}\!\left[\Var(M_R)\ge c'\right]
    +
    2M\cdot \exp\!\left(
    -\frac{c^2}{2\cdot c'}
    \right)\\
    &\le 2MN\cdot \exp\!\left(
    -\frac{c'K}{2}
    \right)
    + 
    2M\cdot \exp\!\left(
    -\frac{c^2}{2\cdot c'}
    \right)\\
    &= 2MN\cdot \exp(-\sqrt{K}/2) + 2M\cdot \exp(-c^2\sqrt{K}/2),
\end{align*}
which is exponentially small if $\log MN\ll \sqrt{K}$. A union bound over all $2^S$ advice strings then yields the lower bound 
\[
S+\log (MN)= \Omega(\sqrt{K})
\]
in order to achieve constant win probability.

\subsection{\texorpdfstring{One-query lower bound for $F_tHF_{t-1}H\cdots F_2HF_1$}{One-query lower bound for FtH...F2HF1}}
\label{sec:f2hf1-t-case}
In this subsection, we extend our one-query lower bound to the oracle state search game with state family $\{F_tHF_{t-1}H\cdots F_2HF_1H\ket{k}\}_k$. 

\begin{corollary}[\cref{cor:f2hf1-t-case-search-main} restated]\label{thm:f2hf1-t-case-search-in7}
Let $f_1,\dots,f_t:\{0,1\}^n\to\{0,1\}$ be uniformly random Boolean functions, and let $F_j=\sum_x (-1)^{f_j(x)}\ketbra{x}$ for $j\in\{1,\dots,t\}$. For the search game on the family $\{F_tHF_{t-1}H\cdots  F_2HF_1H\ket{k}\}_{k\in[K]}$, every one-query adversary with workspace dimension $M$ satisfies
\[
\mathbb{E}_{f_1,\ldots,f_t}\bigl[\Win(\cA\mid f_1,\ldots,f_t)\bigr]
=O\!\left(\frac{\log M \cdot \log (MN)}{K}\right).
\]
\end{corollary}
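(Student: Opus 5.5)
The plan is to reduce the $t$-fold family to the $t=2$ analysis by conditioning on the inner functions. Write the challenge state as
\[
F_tHF_{t-1}H\cdots F_2HF_1H\ket{k} \;=\; F_t H\,\ket{\phi_k},\qquad \ket{\phi_k}:=F_{t-1}H\cdots F_1H\ket{k},
\]
so that its coefficients are $R(k,x)=(-1)^{f_t(x)}\beta_{k,x}$, where $\beta_{k,x}:=\bra{x}H\ket{\phi_k}\cdot\sqrt N$ depends only on $(f_1,\dots,f_{t-1})$. The hypotheses of \cref{lem:generic-spectral-reduction,lem:subspace-specialization} hold with $S=\mathbb{C}^N$ and $L=N$. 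Indeed, averaging over $f_t$ kills off-diagonal entries, and each diagonal entry equals $1/N$ because $H\ket{\phi_k}$ is a unit vector. Hence $\mathbb{E}\,\Win\le \mathbb{E}\norm{M_R}^2$ with $M_R=\sum_x(-1)^{f_t(x)}Z_x$ and $Z_x:=\sum_k\beta_{k,x}B_{k,x}$, where the $B_{k,x}$ satisfy \cref{eq:generic-BBd,eq:generic-BdB,eq:generic-orthogonal}.

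Conditioned on $f_1,\dots,f_{t-1}$, this is a matrix Rademacher series in the signs of $f_t$. So \cref{thm:tropp-rademacher} gives
\[
\mathbb{E}\norm{M_R}^2\le O(\log M)\Bigl(\mathbb{E}\norm{\textstyle\sum_x Z_xZ_x^\dagger}+\mathbb{E}\norm{\textstyle\sum_x Z_x^\dagger Z_x}\Bigr),
\]
exactly as in \cref{ineq:fhf-master-decomposition}. The second term is handled as in \cref{sec:fhf-MdM}. By \cref{eq:generic-orthogonal} the cross terms vanish, leaving
\[
\textstyle\sum_{x,k}|\beta_{k,x}|^2B_{k,x}^\dagger B_{k,x}\preceq \max_{k,x}|\beta_{k,x}|^2\cdot\frac1K\,\Id .
\]
Now $\beta_{k,x}=\frac{1}{\sqrt N}\sum_y(-1)^{f_{t-1}(y)+x\cdot y}\gamma_{k,y}$, where the vector $(\gamma_{k,y})_y$ has $\ell_2$-norm $\sqrt N$ and is independent of $f_{t-1}$. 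Conditioned on $f_1,\dots,f_{t-2}$, Hoeffding's inequality therefore gives $|\beta_{k,x}|^2=O(\log(NK))$ except with probability $1/\poly(N)$, uniformly after a union bound over $(k,x)$. Since $|\beta_{k,x}|\le\sqrt N$ trivially, this yields $\mathbb{E}\max|\beta|^2=O(\log N)$ (using $K\le N$).

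For the first term, package the $Z_x$ as $N_{f}:=\sum_x Z_x\otimes\bra{x}$ and expand $\beta_{k,x}$ in the signs of $f_{t-1}$:
\[
N_f=\sum_y(-1)^{f_{t-1}(y)}Q_y,\qquad Q_y:=\frac{1}{\sqrt N}\sum_{k,x}(-1)^{x\cdot y}\gamma_{k,y}B_{k,x}\otimes\bra{x}.
\]
This is a Rademacher series conditioned on $f_1,\dots,f_{t-2}$. Apply \cref{thm:tropp-rademacher} again and compute the two variance terms. For $\sum_yQ_yQ_y^\dagger$, summing over $y$ produces the factor $\frac1N\sum_y\gamma_{k_1,y}\overline{\gamma_{k_2,y}}=\langle \phi'_{k_2}|\phi'_{k_1}\rangle$-type inner products, where $\phi'_k:=H F_{t-2}\cdots\ket{k}$ (with $\gamma_{k,y}=\sqrt N\bra{y}\phi'_k\rangle$). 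These vectors are orthonormal in $k$ because the map $k\mapsto\phi'_k$ is unitary, so only diagonal $k_1=k_2$ survives and we obtain $\norm{\sum_{k,x}B_{k,x}B_{k,x}^\dagger}\le1/K$. For $\sum_yQ_y^\dagger Q_y$, the $y$-sum gives $\frac1N\sum_y(-1)^{y\cdot(x_1+x_2)}\overline{\gamma_{k_1,y}}\gamma_{k_2,y}$. The $k_1\ne k_2$ terms vanish by \cref{eq:generic-orthogonal}. What remains is a block matrix $\sum_k C_k^\dagger (G_k\otimes\Id)C_k$-like operator with $G_k$ the convolution (circulant in $x_1+x_2$) matrix of $|\gamma_{k,\cdot}|^2/N$. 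Its spectral norm is at most $\max_y|\gamma_{k,y}|^2/N\cdot N\cdot(1/N)$ times $\norm{\sum_{k,x}B^\dagger B}$, i.e.\ bounded by $\max_{k,y}|\gamma_{k,y}|^2/K$, since a circulant matrix's eigenvalues are Fourier coefficients of its generating vector.

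The main obstacle is this last step. In the $t=2$ case $\gamma\equiv\pm1$ made the $y$-sum collapse to $\delta_{x_1x_2}$, whereas now it does not. I will bound it via the circulant/Fourier observation together with a Hoeffding bound $\max|\gamma|^2=O(\log N)$ from conditioning on $f_{t-2}$ (trivial when $t=3$ would need $f_{1}$; for general $t$ recurse one level). This may cost an extra $\log N$ factor, which I expect to absorb into $\log(MN)$. The alternative, which avoids losing factors, is to note that $\gamma_{k,y}$ enters only through $|\gamma|^2$ and use the pointwise bound directly. Combining the two Rademacher steps and the Hoeffding bounds gives $O(\log M\cdot\log(MN)/K)$, matching \cref{thm:f2hf1h-search-main}.
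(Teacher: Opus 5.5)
Your two-level Rademacher computation is structurally sound. Orthonormality of the $\phi'_k$ correctly kills the $k_1\neq k_2$ terms in $\sum_y Q_yQ_y^\dagger$. The kernel matrix is $G_k=H\,\mathrm{diag}(|\gamma_{k,\cdot}|^2)\,H\preceq \max_y|\gamma_{k,y}|^2\,\Id$, so you do get $\|\sum_y Q_y^\dagger Q_y\|\le \max_{k,y}|\gamma_{k,y}|^2/K$, even though your intermediate arithmetic is off by a factor of $N$.

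The gap is at the end: the argument does not prove the stated bound. For $t\ge 3$ the vector $\gamma_{k,\cdot}$ is no longer flat; for $t=3$, $\gamma_{k,y}=\alpha_{k+y}$ computed from $f_1$. So $\mathbb{E}\max_{k,y}|\gamma_{k,y}|^2$ is of order $\log N$, and it is multiplied by the $\log(MN)$ from the second application of \cref{thm:tropp-rademacher}. You therefore get $\mathbb{E}\|N_f\|^2=O(\log(MN)\log N/K)$, and in total $O(\log M\cdot\log(MN)\cdot\log N/K)$. A multiplicative $\log N$ cannot be ``absorbed into $\log(MN)$''.

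Your fallback does not help either. The only pointwise bound is $|\gamma_{k,y}|^2\le N$. Replacing $|\gamma|^2$ by its mean $1$ would require bounding $\mathbb{E}\,\|\sum_y Q_y^\dagger Q_y\|$ rather than the norm of its expectation, which is yet another concentration argument you have not supplied.

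The missing idea is to condition on the other end. Fixed inner layers change the structure of the state family, which is where the non-flat $\gamma$ comes from. Fixed outer layers are free, because they can be absorbed into the adversary. This is the paper's proof.

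For fixed $f_3,\dots,f_t$, the operator $U_{[3:t]}:=F_tH\cdots F_3H$ is a fixed unitary, and $F_tH\cdots F_1H\ket{k}=U_{[3:t]}\,F_2HF_1H\ket{k}$. So an adversary $(V,\{\Pi_k\})$ for the $t$-layer game is exactly the adversary $(VU_{[3:t]},\{\Pi_k\})$ for the $t=2$ game, with the same workspace dimension $M$. The oracle is unaffected: once $f_3,\dots,f_t$ are fixed, an oracle depending on all the $f_j$ is just an oracle depending on $(f_1,f_2)$. Applying \cref{thm:f2hf1h-search-main} for each fixed $f_3,\dots,f_t$ and averaging gives $O(\log M\cdot\log(MN)/K)$, with no loss and no new matrix computations. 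In your language, this puts the Rademacher randomness on $f_1,f_2$, where $\gamma\equiv\pm1$.

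A smaller point: ``$H\ket{\phi_k}$ is a unit vector'' only shows that the diagonal entries sum to $1$. Each entry has mean $1/N$ because $\mathbb{E}_{f_{t-1}}|\beta_{k,x}|^2=\frac1N\sum_y|\gamma_{k,y}|^2=1$.
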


\begin{proof}
We prove the lower bound for the search game on $\{F_tHF_{t-1}H\cdots F_2HF_1H\ket{k}\}_k$ by reducing it to the $t=2$ case. In fact, our reduction shows that the one-query oracle state search game on $\{F_tHF_{t-1}H\cdots F_2HF_1H\ket{k}\}_k$ is at least as $(M,\epsilon)$-hard as the corresponding $t=2$ case. 

For simplicity, for any $F_1,\dots, F_t$, we define a unitary 
\[
    U_{[3:t]}:=F_tHF_{t-1}H\cdots F_3H
\]
and states
\[
    \ket{\psi_{2,k}}:=F_2HF_1H\ket{k}, 
    \qquad
    \ket{\psi_{t,k}}:=F_tHF_{t-1}H\cdots F_1H\ket{k}=U_{[3:t]}\ket{\psi_{2,k}}.
\]

From the definition of the search game in \cref{sec:search-game}, we can write the adversary's winning probability on $\ket{\psi_{t,k}}$, in expectation over $f_1, \hdots, f_t$, as
\[
    \mathbb{E}_{f_1,\dots ,f_t} \left[\Win(\cA\mid f_1,\dots, f_t)\right]
    =\mathbb{E}_{f_1,\dots ,f_t}\left[
    \max_f 
    \mathbb{E}_{k} \bra{\psi_{t,k}}V^\dagger O_f^\dagger \Pi_k O_f V\ket{\psi_{t,k}}
    \right].
\]
The maximum winning probability for this $t$-case can thus be upper bounded by the $t=2$ bound:
\begin{align}
    &\sup_{\cA}\left\{\mathbb{E}_{f_1,\dots ,f_t} \left[\Win(\cA\mid f_1,\dots f_t)\right]\right\} \nonumber\\
    &\qquad=\sup_{V,\{\Pi_k\}}\left\{\mathbb{E}_{f_3,\dots ,f_t}\mathbb{E}_{f_1,f_2}
    \left[\max_f \mathbb{E}_k\bra{\psi_{t,k}}V^\dagger O_f^\dagger \Pi_kO_fV\ket{\psi_{t,k}}
    \right]\right\}\nonumber\\
    &\qquad\le
    \mathbb{E}_{f_3,\dots ,f_t} \sup_{V,\{\Pi_k\}}\left\{\mathbb{E}_{f_1,f_2}
    \left[\max_f \mathbb{E}_k\bra{\psi_{t,k}}V^\dagger O_f^\dagger \Pi_kO_fV\ket{\psi_{t,k}}\right]
    \right\}\nonumber\\
    &\qquad=
    \mathbb{E}_{f_3,\dots ,f_t} \sup_{V,\{\Pi_k\}}\left\{\mathbb{E}_{f_1,f_2}
    \left[\max_f \mathbb{E}_k\bra{\psi_{2,k}}U_{[3:t]}^\dagger V^\dagger O_f^\dagger \Pi_kO_fV U_{[3:t]}\ket{\psi_{2,k}}
    \right]\right\}\nonumber\\
    &\qquad= \mathbb{E}_{f_3,\dots ,f_t} \sup_{V,\{\Pi_k\}}\left\{\mathbb{E}_{f_1,f_2}
    \left[\max_f \mathbb{E}_k\bra{\psi_{2,k}}V^\dagger O_f^\dagger \Pi_kO_fV\ket{\psi_{2,k}}
    \right]\right\}\label{eq:f1hf2-change-adversary}\\
    &\qquad=
    \sup_{\cA}\left\{
        \mathbb{E}_{f_1,f_2}[\Win(\cA\mid f_1,f_2)]
    \right\} \nonumber.
\end{align}
In particular, \cref{eq:f1hf2-change-adversary} holds because for every fixed $f_3, \hdots, f_t$ and adversary $\cA$, the previous expression describes the effect of a modified adversary $\cA'$ that applies the isometry $V U_{[3:t]}$ instead of $V$. 

By \cref{sec:f2hf1-final-bound} (or \cref{thm:f2hf1h-search-main}), 
\[
    \mathbb{E}_{f_1,\dots ,f_t} \left[\Win(\cA\mid f_1,\dots f_t)\right]
    =
    O\!\left(
        \frac{\log M\cdot \log (MN)}{K}
    \right),
\]
and this proves the claimed one-query search bound for the family $F_tHF_{t-1}H\cdots F_1 H\ket{k}$.
\end{proof}

For search game over $\{F_tHF_{t-1}H\cdots F_2HF_1H\ket{k}\}_k$, this proof also shows a reduction from $t$ to $t-1$: if the search game for some fixed $t_0$ is $(M,\epsilon)$-hard, then it is also $(M,\epsilon)$-hard for any $t\ge t_0$. This reduction also holds for adversaries that make any fixed number of queries (such as $t_0-1$). 

A similar argument also holds for the $(M,\epsilon,\delta)$-hardness. Therefore, since the one-query lower bound with classical advice for $t=2$ (see \cref{sec:f2hf1-classical-advice}) is obtained by union bounding all classical advice over its $(M,\epsilon,\delta)$-hardness, the same one-query lower bound with classical advice extends to all $t\ge 2$.
\section{One-query synthesis for phase unitaries with constant correctness}
\label{sec:diagonal-positive}

In this section, we give a one-query algorithm for synthesizing (diagonal) phase  unitaries with constant correctness (\cref{thm:diagonal-positive-main}). By combining with our unitary synthesis composition theorem, the one-query algorithm also implies \cref{cor:diagonal-synthesis}.

\subsection{Phase unitary setup}

Any phase unitary on $n$ qubits can be written as
\[
D(F)=\sum_{x\in\{0,1\}^n} \omega_q^{F(x)}\ketbra{x},
\qquad
\omega_q:=e^{2\pi i/q},
\]
for a sufficiently fine phase discretization $F:\{0,1\}^n\to [q]$. The goal is to synthesize $D(F)$ using a single oracle query.

\begin{remark}[Oracle interface used in this section]
The constructive algorithm below is most naturally written in the standard function-oracle model
\[
\ket{x}\ket{y}\longmapsto \ket{x}\ket{y\oplus F(x)}.
\]
For $q=4$, the second register consists of two qubits. 

This interface can be reduced to the boolean phase-oracle model within one query: define $f:\{0,1\}^{n+\lceil\log q\rceil}\to\{0,1\}$ such that $f(x,y)=y\cdot F(x)$. Then $(\Id_n\otimes H^{\otimes \lceil\log q\rceil})\cdot O_f \cdot (\Id_n\otimes H^{\otimes \lceil\log q\rceil})$ will implement the above interface.
\end{remark}

\subsection{The special case $q=4$}
\label{sec:diagonal-q4}

When $q=4$, the four target phases are $\{1,i,-1,-i\}$. Write $F(x)=b_0(x)||b_1(x)\in\{0,1\}^2$, and define the ancilla states
\[
\ket{-}=\stminus,
\qquad
\ket{+i}=\stplusi,
\qquad
\ket{-i}=\stminusi.
\]
The key identities are
\[
X\ket{-}=-\ket{-},
\qquad
X\ket{+i}=i\ket{-i},
\qquad
\braket{+i}{-i}=0.
\]
Observe that if $F(x) \in \{0,1\}^{2}$ is identified as an integer, we have that $i^{F(x)} = (-1)^{b_0(x)} \cdot i^{b_1(x)}$. 

\begin{proposition}
\label{prop:diagonal-q4}
There is a one-query algorithm that synthesizes $D(F)$ for every $F:\{0,1\}^n\to\{0,1,2,3\}$ with correctness at least $1/2$.
\end{proposition}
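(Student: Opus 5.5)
The plan is to use the single query to the function oracle $\ket{x}\ket{y_0y_1}\mapsto\ket{x}\ket{y_0\oplus b_0(x),\,y_1\oplus b_1(x)}$ (available within one query by the preceding remark) on a cleverly chosen ancilla. The ancilla will be $\ket{-}\otimes\ket{+i}$. After the query, the ancilla is discarded and the first $n$ qubits are output.

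\textbf{Step 1: the effect of the query.} By the key identities, the first ancilla qubit kicks back the phase $(-1)^{b_0(x)}$ and returns to $\ket{-}$. The second qubit becomes $\ket{+i}$ if $b_1(x)=0$, and $i\ket{-i}$ if $b_1(x)=1$. Fix an arbitrary input $\ket{\Psi}=\sum_x \ket{x}\otimes\ket{\beta_x}_{\mathrm{Aux}}$, where the $\ket{\beta_x}$ are unnormalized aux vectors. For $c\in\{0,1\}$ define
\[
\ket{\phi_c}:=\sum_{x:\,b_1(x)=c} i^{F(x)}\ket{x}\otimes\ket{\beta_x}
=\sum_{x:\,b_1(x)=c}(-1)^{b_0(x)}\,i^{c}\ket{x}\otimes\ket{\beta_x}.
\]
The post-query state is then $\ket{\phi_0}\ket{-}\ket{+i}+\ket{\phi_1}\ket{-}\ket{-i}$. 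Here I use the identity $i^{F(x)}=(-1)^{b_0(x)}i^{b_1(x)}$ to absorb the factor $i$ into $\ket{\phi_1}$.

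\textbf{Step 2: the output state.} Since $\braket{+i}{-i}=0$, tracing out the ancilla gives the mixture
\[
\rho=\ketbra{\phi_0}+\ketbra{\phi_1}.
\]
The ideal output is the pure state $\ket{\tau}=(D(F)\otimes\Id)\ket{\Psi}=\ket{\phi_0}+\ket{\phi_1}$.

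\textbf{Step 3: bounding the fidelity.} Because the target is pure, the fidelity equals $\bra{\tau}\rho\ket{\tau}$. The vectors $\ket{\phi_0}$ and $\ket{\phi_1}$ have disjoint supports in $x$, so they are orthogonal. Writing $p_c=\norm{\ket{\phi_c}}^2$, so that $p_0+p_1=1$, this gives
\[
\bra{\tau}\rho\ket{\tau}=|\braket{\tau}{\phi_0}|^2+|\braket{\tau}{\phi_1}|^2=p_0^2+p_1^2\ge\tfrac12 .
\]
This holds for every $\ket{\Psi}$, which is exactly auxiliary-input correctness $1/2$ in the sense of \cref{def:aux-input-correctness}.

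\textbf{Main obstacle.} The only delicate point is the choice of the ancilla for the second bit. The ancilla must produce the relative phase $i$ while leaving the garbage for the two branches orthogonal rather than arbitrary. I then need to check that the resulting incoherence between the $b_1=0$ and $b_1=1$ branches costs at most a factor $1/2$ in fidelity, uniformly over entangled inputs. The orthogonal-support argument in Step 3 handles this.
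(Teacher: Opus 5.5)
Your proposal is correct and follows the paper's proof essentially step for step. It uses the same $\ket{-}\ket{+i}$ ancilla, the same split into two orthogonally tagged branches by the bit $b_1(x)$ (the paper's $S_0,S_1$), and the same fidelity computation $p_0^2+p_1^2\ge 1/2$. The only difference is cosmetic: you work with unnormalized branch vectors $\ket{\phi_c}$ instead of normalized states $\ket{\psi^{(c)}}$ weighted by $p_c$, which also quietly handles the degenerate case $p_c=0$.
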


\begin{proof}
Start from an arbitrary joint input state, with input register $X$ and auxiliary register $\mathrm{Aux}$,
\[
\ket{\psi}_{X,\mathrm{Aux}}=\sum_x \alpha_x \ket{x}_X\ket{\psi_x}_\mathrm{Aux}.
\]
Append the ancilla state $\ket{-}\ket{+i}$. Then query the oracle so that on computational basis, the query acts by
\[
\ket{x}\ket{y}\longmapsto \ket{x}\ket{y\oplus F(x)}.
\]
Equivalently, if $F(x)=b_0(x)||b_1(x)$ then the ancilla undergoes $X^{b_0(x)}\otimes X^{b_1(x)}$.

Let
\[
S_0:=\{x:F(x)\in\{0,2\}\},
\qquad
S_1:=\{x:F(x)\in\{1,3\}\}.
\]
Using the identities above, the post-query state is
\begin{align*}
&\sum_{x\in S_0} \alpha_x\, i^{F(x)}\ket{x}\ket{\psi_x}\ket{-}\ket{+i}
+
\sum_{x\in S_1} \alpha_x\, i^{F(x)}\ket{x}\ket{\psi_x}\ket{-}\ket{-i}\\
&\qquad=
\sum_{x\in S_0} \alpha_x\, D(F)\ket{x}\ket{\psi_x}\ket{-}\ket{+i}
+
\sum_{x\in S_1} \alpha_x\, D(F)\ket{x}\ket{\psi_x}\ket{-}\ket{-i}.
\end{align*}
Tracing out the ancilla destroys the coherence between the $S_0$ and $S_1$ parts but preserves each part exactly.

Define
\[
p_0:=\sum_{x\in S_0}|\alpha_x|^2,
\qquad
p_1:=1-p_0,
\]
and normalized states
\[
\ket{\psi^{(0)}}:=\frac{1}{\sqrt{p_0}}\sum_{x\in S_0}\alpha_x\ket{x}\ket{\psi_x},
\qquad
\ket{\psi^{(1)}}:=\frac{1}{\sqrt{p_1}}\sum_{x\in S_1}\alpha_x\ket{x}\ket{\psi_x}.
\]
The reduced output state is then given by
\[
\rho = p_0\, D(F)\ketbra{\psi^{(0)}} D(F)^\dagger
+ p_1\, D(F)\ketbra{\psi^{(1)}} D(F)^\dagger.
\]
Therefore the fidelity with the ideal output $D(F)\ket{\psi}$ is
\begin{align*}
\Fid\bigl(\rho, D(F)\ket{\psi}\bigr)
&=
\bra{\psi} D(F)^\dagger \rho D(F) \ket{\psi}\\
&= p_0\, |\langle \psi\mid \psi^{(0)}\rangle|^2 + p_1\, |\langle \psi\mid \psi^{(1)}\rangle|^2\\
&= p_0^2+p_1^2\\
&\ge \frac{1}{2},
\end{align*}
since $p_0+p_1=1$ and the minimum of $p_0^2+p_1^2$ occurs at $p_0=p_1=1/2$.
\end{proof}

\subsection{General $q$ via rounding to the nearest quadrant}

For general $q$, write
\[
\omega_q^{F(x)} = a(x)+i b(x)
\qquad\text{with }a(x),b(x)\in\mathbb{R}.
\]
Define a rounded phase function $G:\{0,1\}^n\to\{0,1,2,3\}$ by choosing the nearest fourth root of unity:
\begin{itemize}[leftmargin=1.5em]
    \item if $a(x)\ge 1/\sqrt{2}$, set $G(x)=0$;
    \item if $a(x)\le -1/\sqrt{2}$, set $G(x)=2$;
    \item if $b(x)\ge 1/\sqrt{2}$, set $G(x)=1$;
    \item otherwise set $G(x)=3$.
\end{itemize}
Equivalently, $i^{G(x)}$ is the fourth root of unity whose angle differs from $\omega_q^{F(x)}$ by at most $\pi/4$.

\begin{proposition}
\label{prop:diagonal-general-q}
Applying the $q=4$ algorithm to the rounded phase function $G$ yields a one-query synthesis algorithm for $D(F)$ with correctness at least $1/4$.
\end{proposition}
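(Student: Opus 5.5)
Write $\omega_q^{F(x)} = i^{G(x)} e^{i\theta(x)}$ with $|\theta(x)|\le \pi/4$, so that $D(F) = D(G)\,\Theta$ where $\Theta := \sum_x e^{i\theta(x)}\ketbra{x}$. Running the $q=4$ algorithm of \cref{prop:diagonal-q4} with oracle $G$ on input $\ket{\psi}=\sum_x\alpha_x\ket{x}\ket{\psi_x}$ produces, as computed in that proof, the mixed state
\[
\rho = p_0\, D(G)\ketbra{\psi^{(0)}} D(G)^\dagger + p_1\, D(G)\ketbra{\psi^{(1)}} D(G)^\dagger,
\]
where now $S_0,S_1$ are defined via $G$. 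The plan is to compute the fidelity with the ideal output $D(F)\ket{\psi}$ directly rather than go through a triangle inequality for Bures angles, since a direct calculation should give the sharper constant $1/4$.

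The fidelity equals $\sum_{b\in\{0,1\}} p_b\,\abs{\bra{\psi}\Theta^\dagger\ket{\psi^{(b)}}}^2$. Since $\ket{\psi^{(b)}}$ is supported on $S_b$, we get $\sqrt{p_b}\,\bra{\psi}\Theta^\dagger\ket{\psi^{(b)}} = \sum_{x\in S_b} |\alpha_x|^2 e^{-i\theta(x)}$, and hence
\[
\Fid = \sum_{b} \Bigl|\sum_{x\in S_b}|\alpha_x|^2 e^{-i\theta(x)}\Bigr|^2 .
\]
The key step is to lower bound each modulus by its real part. Because $|\theta(x)|\le\pi/4$, we have $\cos\theta(x)\ge 1/\sqrt2$, so each inner sum has modulus at least $p_b/\sqrt2$. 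This gives $\Fid \ge \tfrac12(p_0^2+p_1^2)\ge \tfrac14$, exactly as in the $q=4$ case.

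The main obstacle I expect is bookkeeping rather than anything conceptual. First, one must check that the rounding rule really gives $|\theta(x)|\le\pi/4$ in every case; this is immediate from the four quadrant conditions, including the fall-through case $G=3$, which corresponds to $b(x)\le -1/\sqrt2$. Second, one must verify that the extra phase $\Theta$ commutes appropriately with the $S_0/S_1$ decomposition. It does, because both $\Theta$ and $D(G)$ are diagonal in $x$ and act trivially on $\mathrm{Aux}$, so the $D(G)$ factors cancel inside the overlap.

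Finally, the oracle used is the binary function oracle for $G$. The required interface is the same as in \cref{sec:diagonal-q4}, so the algorithm still makes a single query.
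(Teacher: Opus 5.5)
Your proposal is correct and is essentially the paper's proof. Your $e^{-i\theta(x)}$ is exactly the paper's $\beta_x=\omega_q^{-F(x)}i^{G(x)}$, so $D(F)^\dagger D(G)=\Theta^\dagger$, and both arguments bound each overlap's modulus below by its real part using $\Re(\beta_x)\ge 1/\sqrt{2}$ to obtain $\tfrac12(p_0^2+p_1^2)\ge \tfrac14$; the paper likewise computes the fidelity directly rather than via a Bures-angle triangle inequality.
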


\begin{proof}
Run the $q=4$ construction from \cref{prop:diagonal-q4} using $G$ in place of $F$. As before, write
\[
S_0:=\{x:G(x)\in\{0,2\}\},
\qquad
S_1:=\{x:G(x)\in\{1,3\}\},
\]
and decompose the input state as
\[
\ket{\psi}=\sum_x \alpha_x \ket{x} \ket{\psi_x}
= \sqrt{p_0}\ket{\psi^{(0)}}+\sqrt{p_1}\ket{\psi^{(1)}},
\]
where
\[
p_0:=\sum_{x\in S_0}|\alpha_x|^2,
\qquad
p_1:=\sum_{x\in S_1}|\alpha_x|^2=1-p_0,
\]
and
\[
\ket{\psi^{(0)}}:=\frac{1}{\sqrt{p_0}}\sum_{x\in S_0}\alpha_x\ket{x}\ket{\psi_x},
\qquad
\ket{\psi^{(1)}}:=\frac{1}{\sqrt{p_1}}\sum_{x\in S_1}\alpha_x\ket{x}\ket{\psi_x}.
\]
The output state after tracing out the ancilla is
\[
\rho = p_0\, D(G)\ketbra{\psi^{(0)}} D(G)^\dagger
+ p_1\, D(G)\ketbra{\psi^{(1)}} D(G)^\dagger.
\]
To compare $D(G)$ with the target $D(F)$, define for each string $x$
\[
\beta_x:=\omega_q^{-F(x)} i^{G(x)}.
\]
Then $|\beta_x|=1$, and by construction of $G(x)$ the angle of $\beta_x$ is at most $\pi/4$ in absolute value. Equivalently,
\[
\Re(\beta_x)\ge \frac{1}{\sqrt{2}}
\qquad\text{for every }x\in\{0,1\}^n.
\]
Also,
\[
D(F)^\dagger D(G)=\sum_x \beta_x \ketbra{x}.
\]
Therefore
\begin{align*}
\Fid\bigl(\rho,D(F)\ket{\psi}\bigr)
&=
 p_0\left|\bra{\psi}D(F)^\dagger D(G)\ket{\psi^{(0)}}\right|^2
 + p_1\left|\bra{\psi}D(F)^\dagger D(G)\ket{\psi^{(1)}}\right|^2.
\end{align*}
We now bound the two overlap terms separately. For the $S_0$ term,
\begin{align*}
\bra{\psi}D(F)^\dagger D(G)\ket{\psi^{(0)}}
&=
\frac{1}{\sqrt{p_0}}
\sum_{x\in S_0}
|\alpha_x|^2 \beta_x.
\end{align*}
Indeed, all cross-terms vanish because $D(F)^\dagger D(G)$ is diagonal in the computational basis. Taking real parts and using $|z|\ge \Re(z)$, we get
\begin{align*}
\left|\bra{\psi}D(F)^\dagger D(G)\ket{\psi^{(0)}}\right|
&\ge
\Re\left(\bra{\psi}D(F)^\dagger D(G)\ket{\psi^{(0)}}\right)\\
&=
\frac{1}{\sqrt{p_0}}
\sum_{x\in S_0}|\alpha_x|^2\Re(\beta_x)\\
&\ge
\frac{1}{\sqrt{2p_0}}
\sum_{x\in S_0}|\alpha_x|^2
= \sqrt{\frac{p_0}{2}}.
\end{align*}
Squaring and multiplying by $p_0$ yields
\[
p_0\left|\bra{\psi}D(F)^\dagger D(G)\ket{\psi^{(0)}}\right|^2
\ge \frac{p_0^2}{2}.
\]
By the same argument,
\[
p_1\left|\bra{\psi}D(F)^\dagger D(G)\ket{\psi^{(1)}}\right|^2
\ge \frac{p_1^2}{2}.
\]
Substituting these two bounds gives
\[
\Fid\bigl(\rho,D(F)\ket{\psi}\bigr)
\ge \frac{p_0^2+p_1^2}{2}
\ge \frac{1}{4}. \qedhere
\]
\end{proof}

{} {}

\section{Quantum advice lower bound for $FH\ket{k}$}
\label{sec:fh-advice}

In this section, we prove a quantum advice lower bound for the state search game with state family $\{FH\ket{k}\}_{k\in[K]}$,
where $F$ is a binary phase unitary. That is, instead of making a query to a classical oracle, the adversary is only given quantum advice that may depend on the underlying state family (equivalently, on $F$) before receiving the input state $\ket{\psi_{F,k}}$.

Note that a binary phase unitary $F$ can be exactly synthesized with one query. Therefore, a quantum advice lower bound for zero-query synthesis gives a separation between one-query unitary synthesis and quantum programs (zero-query synthesis algorithms with quantum advice). By a similar reduction as in \cref{prop:search-implies-synthesis-hardness}, the result of this section implies such a lower bound/separation. 

We remark that a more straightforward but quantitatively weaker separation holds by considering unitaries of the form $\ket{x}\ket{y} \mapsto \ket{x} \ket{y\oplus f(x)}$ for a random (possibly long output) function $f$. The separation is weaker because these unitaries are only as hard as computing a function on $n$ input bits, they will not have the same quantitative hardness as binary phase unitaries in the same dimension: either $y$ is short and there is a non-trivial approximation by guessing $f(x)$ on input $\ket{x}$, or $y$ is long and $2^n$ advice length is sublinear in the Hilbert space dimension. 

Unlike the permutation and $F_2HF_1$ one-query lower bounds, %
~we do not build on the matrix concentration-based approach of \cite{STOC:LomMaWri24} for this result. Instead, we make use of the alternating measurement hardness approach to advice lower bounds of \cite{EC:Liu23}. 

\paragraph{Organization.} We start with the formulation of the search game in \cref{sec:FH-search-formulation}. Then, we prove the quantum advice lower bound in two steps. First, we reduce the one-instance search hardness to the hardness of an alternating measurement game in \cref{sec:fh-reduce-to-alternating-measurement}. Next, in \cref{sec:fh-upper-bounding-alternating-measurement}, we upper bound the maximum winning probability of this alternating measurement game. We combine these results and conclude the quantum advice lower bound in \cref{sec:fh-conclusion}. In \cref{sec:fh-tightness}, we show an algorithm that matches the lower bound (up to $\log N$ factor).

\subsection{Search formulation}\label{sec:FH-search-formulation}

Let $f:\{0,1\}^n\to\{0,1\}$ be uniformly random and let
\[
F:=\sum_{x\in\{0,1\}^n} (-1)^{f(x)}\ketbra{x}.
\]
The challenge state is $FH\ket{k}$ for uniformly random $k\in[K]$. A non-uniform (zero-query) algorithm is allowed to use an $S$-qubit advice state $\ket{\phi_f}$ depending only on $f$, and is asked to output $k$.

Equivalently, we will work in the following normal form. The adversary has an $S$-qubit advice together with some ancilla initialized as $\ket{0^m}$ on the adversary's workspace register $\mathsf{Z}$. The challenge state will be generated and sent to the adversary on input register $\mathsf{X}$. The adversary will then apply a fixed projective measurement $\{\Pi_k\}_{k\in [K]}$ to the input register $\mathsf{X}$ and the workspace $\mathsf{Z}$. Writing
\[
\ket{\psi_{f,k}}:=FH\ket{k},
\]
its winning probability is
\[
\varepsilon
:=
\mathbb{E}_{f,k}\left\|\Pi_k\bigl(\ket{\psi_{f,k}}_\mathsf{X}\ket{\phi_f,0^m}_{\mathsf{Z}}\bigr)\right\|^2.
\] 
The main result of this section is that an adversary's maximum winning probability is upper bounded by

\[
\varepsilon = O\!\left(\frac S K\right).
\]

\subsection{Reduction to alternating measurement game}\label{sec:fh-reduce-to-alternating-measurement}

Our first step is to reduce the one-instance winning probability to the winning probability of a $t$-round alternating measurement game. The alternating measurement game is  first introduced in \cite{EC:Liu23} in order to obtain better security in the presence of quantum advice. Within our context of the search game, we define our alternating measurement game as the following.

\begin{definition}[Alternating measurement game] A random boolean function $f$ is sampled at the beginning. For a (non-uniform) quantum algorithm $\mathcal{A}$ and any integer $t$, the alternating measurement game\footnote{Although we refer to it as a ``game,'' we remark that it is only a thought experiment, not a game that can physically be played between the challenger and adversary.} we consider here is defined as follows:
\begin{itemize}
    \item The challenger initializes its challenge register as $\ket{\psi_{\mathrm{init}}}:=\frac{1}{\sqrt{K}}\sum_{k}\ket{k}_{\mathsf{C}}\otimes H\ket{k}_{\mathsf{X}}$ on challenge register $\mathsf{C}$ and input register $\mathsf{X}$. 
    \item The adversary initializes its state (or an advice) on adversary's workspace register $\mathsf{Z}$. Their algorithm is defined by a $K$-outcome measurement $\{\Pi_k\}_{k\in [K]}$ on $\mathsf{X},\mathsf{Z}$.
    \item The challenger generates the first challenge by applying $F$ on register $\mathsf{X}$,
    \item They repeat the following procedure $t$ times, for $i=1,2,\ldots,t$:
    \begin{itemize}
        \item If $i$ is odd, apply the measurement defined by projection $\Pi_{\mathrm{Win}}:=\sum_k\ketbra{k}_{\mathsf{C}}\otimes (\Pi_k)_{\mathsf{XZ}}$ to $\mathsf{CXZ}$.
        \item If $i$ is even, apply the measurement defined by projection 
        $\Pi_{\mathrm{init},F}:=F \Pi_{\mathrm{init}}F$ to $\mathsf{CX}$, where $\Pi_{\mathrm{init}}=\ketbra{\psi_{\mathrm{init}}}$. %
    \end{itemize}
    \item The adversary wins the game if all measurement outcomes are $1$. 
\end{itemize}
\end{definition}

\begin{lemma}[Reducing to alternating measurement hardness]
\label{lem:fh-powering-by-alternating-measurement}
If a non-uniform algorithm with $S$ qubits of advice wins the one-instance search game with probability $\varepsilon$, then for every integer $t\ge 1$, there exists a $t$-round alternating measurement game, using the same $S$ qubits of advice, that wins with probability at least $\epsilon^t$.
\end{lemma}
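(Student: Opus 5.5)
The plan is to express the success probability of the alternating game through two projectors on a common space and then apply a Jordan-lemma / convexity bound, following \cite{EC:Liu23}. Fix $f$ and the advice $\ket{\phi_f}$. Consider the joint register $\mathsf{CXZ}$ and the two projectors $P := \Pi_{\mathrm{Win}}$ and $Q := \Pi_{\mathrm{init},F}\otimes \Id_{\mathsf Z}$. The initial state $\ket{u} := (\Id\otimes F)\ket{\psi_{\mathrm{init}}}\otimes\ket{\phi_f,0^m}$ lies in the image of $Q$, since $F\ket{\psi_{\mathrm{init}}}$ is exactly the state projected onto by $\Pi_{\mathrm{init},F}$ when $F$ acts on $\mathsf X$. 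First I will check that the one-instance success probability for this $f$ equals $\varepsilon_f := \|P\ket{u}\|^2$. This holds because $\|P\ket u\|^2 = \frac1K\sum_k \|\Pi_k(FH\ket{k}\ket{\phi_f,0^m})\|^2$ by orthogonality of the $\ket{k}_{\mathsf C}$. The probability that all $t$ alternating measurements accept is $\|\cdots QPQP\ket{u}\|^2$, where the product contains $t$ projections alternating and starting with $P$.

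Next I will apply Jordan's lemma to $(P,Q)$. The space decomposes into invariant blocks of dimension at most two. On each block $j$, $Q$ projects onto a unit vector $\ket{q_j}$ and $P$ onto a unit vector $\ket{p_j}$, with $|\langle p_j|q_j\rangle|^2 = \lambda_j$; one-dimensional blocks give $\lambda_j\in\{0,1\}$. Since $Q\ket u = \ket u$, we can write $\ket{u} = \sum_j c_j \ket{q_j}$ with $\sum_j |c_j|^2 = 1$. Within block $j$, each application of $P$ after $Q$, or of $Q$ after $P$, multiplies the squared norm by $\lambda_j$. The $t$-round acceptance probability is therefore $\sum_j |c_j|^2 \lambda_j^{t}$, while the one-instance probability is $\varepsilon_f = \sum_j |c_j|^2\lambda_j$. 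Jensen's inequality (convexity of $\lambda\mapsto\lambda^t$) then gives
\[
\sum_j |c_j|^2\lambda_j^t \ \ge\ \Bigl(\sum_j |c_j|^2\lambda_j\Bigr)^t = \varepsilon_f^t .
\]

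Finally I will average over $f$. Since the game samples $f$ at random, the overall $t$-round win probability is $\mathbb E_f[\varepsilon_f^t] \ge (\mathbb E_f \varepsilon_f)^t = \varepsilon^t$, again by Jensen. The adversary and advice used are unchanged, which gives the claim.

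The main obstacle is the bookkeeping in the Jordan decomposition: the projectors act on different register sets, with $Q$ trivial on $\mathsf Z$. I expect the cleanest route is to treat both as projectors on the full space $\mathsf{CXZ}$. One must confirm that the starting vector lies in $\mathrm{im}(Q)$, which it does because $F^2=\Id$ makes $\Pi_{\mathrm{init},F}$ a projector onto $F\ket{\psi_{\mathrm{init}}}$. One must also confirm that the exponent counts exactly $t$ factors of $\lambda_j$, i.e.\ one per measurement round after the initial state; this is where the parity conventions of the definition need care.
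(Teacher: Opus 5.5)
Your proposal is correct and follows essentially the same route as the paper's proof sketch. Both apply Jordan's lemma to $\Pi_{\mathrm{Win}}$ and $\Pi_{\mathrm{init},F}$, with the starting state $F\ket{\psi_{\mathrm{init}}}\ket{\phi_f,0^m}$ in the image of $\Pi_{\mathrm{init},F}$, which gives a $t$-round survival probability of $\sum_j |c_j|^2\lambda_j^t$, and then conclude with Jensen's inequality. Your explicit second application of Jensen, passing from each fixed $f$ to $\mathbb{E}_f[\varepsilon_f^t]\ge\varepsilon^t$, fills in a step that the paper's sketch leaves implicit.
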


\begin{proof}[Proof sketch] By the definition of these two projectors $\{\Pi_{\mathrm{Win}},\Pi_{\mathrm{init},F}\}$, we can rewrite our one-instance winning probability 
\[
\varepsilon
:=
\mathbb{E}_{f,k}\left\|\Pi_k\bigl(\ket{\psi_{f,k}}_\mathsf{X}\ket{\phi_f,0^m}_{\mathsf{Z}}\bigr)\right\|^2
=\mathbb{E}_f\|\Pi_{\mathrm{Win}}F\ket{\psi_{\mathrm{init}}}_{\mathsf{CX}}\ket{\phi_f,0^m}_\mathsf{Z}\|^2,
\] 
where the starting state $F\ket{\psi_{\mathrm{init}}}$ is in the image of $\Pi_{\mathrm{init},F}$. The lemma now follows by a standard rewinding argument \cite{FOCS:CMSZ21,EC:Liu23}. By Jordan's lemma, the two projectors decompose the space into orthogonal invariant subspaces of dimension at most two. 
On block with singular value $p$, if the initial state is on the corresponding singular vector in the image of $\Pi_{\mathrm{init},F}$, then its probability of surviving $t$ rounds is $p^t$. Thus, for a general initial state with overall success probability $\varepsilon = \mathbb{E}[p]$, the probability of surviving $t$ rounds is $\mathbb{E}[p^t]$, which is at least $(\mathbb{E}[p])^t = \varepsilon^t$ by Jensen's inequality.
\end{proof}

\subsection{Upper-bounding the alternating measurement game}\label{sec:fh-upper-bounding-alternating-measurement}
The second step is to upper bound the winning probability of a non-uniform $t$-round alternating measurement game with $S$-qubit quantum advice. The proof proceeds in two sub-steps: first reduce the non-uniform hardness to a uniform one by replacing the advice with the maximally mixed state; then prove uniform hardness of the $t$-round alternating measurement game by bounding the conditional winning probability at each round. To bound each conditional winning probability, we will rely on the randomness of $F$ and apply Zhandry's compressed oracle technique \cite{C:Zha19}.

Here we prove an upper bound for the winning probability of a uniform $t$-round alternating measurement game. This will also upper bound the non-uniform case: for any adversary with $S$-qubit quantum advice with winning probability $\epsilon_{S}$, a uniform algorithm can always sample an $S$-qubit maximally mixed state and run the non-uniform algorithm on the maximally mixed state, with winning probability at least $2^{-S}\cdot\epsilon_S$. Therefore, an upper bound for the winning probability of the uniform case will upper bound $2^{-S}\cdot \epsilon_S$, and thus give an upper bound for $\epsilon_S$.

\begin{proposition}[Winning probability of (uniform) $t$-round alternating measurement game]\label{prop:t-round-alternating-measurment}
For every uniform adversary in the alternating measurement game with $t$ measurement rounds, its maximum winning probability is upper bounded by $(8t/K)^t$.
\end{proposition}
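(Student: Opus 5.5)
The plan is to bound the $t$-round product of projections by a product of per-round conditional success probabilities. I will use Zhandry's compressed (Fourier) oracle \cite{C:Zha19} to purify the randomness of $f$. Concretely, $f$ is replaced by an oracle register $\mathsf{D}$ holding a database $D\in\{0,1\}^N$. In the Fourier basis this register starts at $\ket{0^N}$, and the phase unitary $F$ on $\mathsf{X}$ becomes the unitary $O:\ket{x}_{\mathsf X}\ket{D}_{\mathsf D}\mapsto\ket{x}\ket{D\oplus e_x}$. Averaging over $f$ is then the same as computing the squared norm of
\[
\Pi_{\ast}\cdots\Pi_{\mathrm{init},O}\,\Pi_{\mathrm{Win}}\,O\,\bigl(\ket{\psi_{\mathrm{init}}}_{\mathsf{CX}}\ket{z}_{\mathsf Z}\ket{0^N}_{\mathsf D}\bigr),
\]
where $\Pi_{\mathrm{init},O}:=O(\Pi_{\mathrm{init}}\otimes \Id)O^\dagger$ and the product contains $t$ alternating projections. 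Since $\Pi_{\mathrm{Win}}$ does not act on $\mathsf D$, this formulation is exact, and the winning probability telescopes as $\prod_{i=1}^t \Pr[\text{round } i \text{ accepts}\mid \text{rounds }<i\text{ accepted}]$. It therefore suffices to show that every odd (i.e.\ $\Pi_{\mathrm{Win}}$) round succeeds with conditional probability at most $O(t/K)$, crediting the even rounds trivially. Making the constants consistent with $(8t/K)^t$ will likely require a sharper accounting in which each of the $t$ rounds, odd or even, contributes a factor $8t/K$; either way I aim for an $O(t/K)$ factor per round.

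\textbf{Invariant (database weight).} After the first $O$, the state is $\frac1{\sqrt{K}}\sum_k\ket{k}\frac1{\sqrt N}\sum_x(-1)^{k\cdot x}\ket{x}\ket{e_x}$, so every branch has Hamming weight $|D|=1$. Each $\Pi_{\mathrm{init},O}$ is a conjugate by $O$ of a projector that does not act on $\mathsf D$, and expanding it changes $|D|$ by at most $2$ per round. By induction, after $j$ rounds the state lies in the span of branches with $|D|\le 2j+1$ (a crude but sufficient bound). Furthermore, $\mathsf C$ is correlated with the rest only through phases $(-1)^{k\cdot x}$ for the $x$'s recorded in $D$, together with the fixed projector structure.

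\textbf{Per-round bound (main obstacle).} I need to show that on the subspace $\mathcal S_w$ of states whose $\mathsf{CXZD}$ decomposition has $|D|\le w$ and has the above phase structure, the operator norm of $\Pi_{\mathrm{Win}}$ (restricted to the post-$\Pi_{\mathrm{init},O}$ image) satisfies $\|\Pi_{\mathrm{Win}}|_{\mathcal S_w}\|^2\le 4(w+1)/K$. The intuition is that the adversary sees only $\mathsf{XZ}$. The key $k$ enters through phases attached to the $\le w$ positions recorded in $\mathsf D$ plus the current $x$ in $\mathsf X$. These $w+1$ positions act like $w+1$ ``samples'' of $k$, and a projector $\sum_k\ketbra{k}\otimes\Pi_k$ can only exploit the one sample held in $\mathsf X$, with the remaining weight spread over $K$ values. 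My first attempt will write the state as $\sum_{D}\ket{\Phi_D}_{\mathsf{CXZ}}\ket{D}$ and use $\sum_k\|\Pi_k \cdot\|^2$ orthogonality, in the same way as the orthogonality trick in the technical overview. I will then bound the overlap by Cauchy--Schwarz over the $\le w+1$ records, obtaining a factor $(w+1)$ times the trivial $1/K$ bound that holds for a single Hadamard-basis sample. This step is delicate because of cross terms between branches with different $D$, and I expect most of the effort to go into it. A fallback is to bound instead the commutator $\|[\Pi_{\mathrm{Win}},O]\|$ restricted to low-weight databases, in the style of compressed-oracle ``progress measures''.

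\textbf{Conclusion.} Taking $w\le 2t+1$, each of the (at most $t$) accepting rounds contributes a factor at most $8t/K$, which gives $(8t/K)^t$. Combining this with \cref{lem:fh-powering-by-alternating-measurement} and the maximally-mixed-advice argument then yields \cref{thm:fh-advice-main}.
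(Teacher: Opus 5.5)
\textbf{There is a genuine gap in how you obtain the exponent $t$.} You bound only the $\Pi_{\mathrm{Win}}$ (odd) rounds and credit the $\Pi_{\mathrm{init},F}$ rounds trivially. That gives at most $(O(t/K))^{\lceil t/2\rceil}$. Fed into the $\varepsilon^t$ amplification, it yields only $\varepsilon=O(\sqrt{S/K})$, not the $O(S/K)$ the statement is built for. You say a ``sharper accounting'' is needed but do not supply one. A per-round operator-norm bound on $\Pi_{\mathrm{Win}}$ does not by itself control the even rounds, because their conditional probabilities are normalized by $\|\Pi_{\mathrm{Win}}\ket{\Phi}\|^2$ rather than by the norm of the pre-round state.

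The missing idea is monotonicity of the conditional probabilities. Apply Jordan's lemma to $\Pi_{\mathrm{Win}}$ and $\Pi_{\mathrm{init},F}$, with the starting state in the image of $\Pi_{\mathrm{init},F}$. Then $\Pr[b_{\le i}=1]=\mathbb{E}[p^i]$ for a distribution over block parameters $p$. Cauchy--Schwarz gives $\mathbb{E}[p^i]^2\le\mathbb{E}[p^{i-1}]\,\mathbb{E}[p^{i+1}]$, i.e.\ $\epsilon_i\le\epsilon_{i+1}$. Hence $\prod_{i\le t}\epsilon_i\le\epsilon_t^t$, and you only need to bound the conditional probability of the \emph{last} round. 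For even $t$, use $\epsilon_t\le\epsilon_{t+1}$ and the odd-round bound for $t+1$; this is where $8t/K$ comes from.

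\textbf{The per-round bound, which you call the main obstacle, is left unproved.} Your heuristic that $k$ enters through phases on all $\le w$ recorded positions is not accurate on the relevant states. The simplification you miss is this: a successful $\Pi_{\mathrm{init},F}=F\Pi_{\mathrm{init}}F$ projection leaves $\mathsf{CX}$ in the fixed pure state $\ket{\psi_{\mathrm{init}}}$ (before the final $F$), in product with $\mathsf{ZD}$. So the state entering the last $\Pi_{\mathrm{Win}}$ is $\frac{1}{\sqrt{KN}}\sum_{k,x}(-1)^{k\cdot x}\ket{k,x}\otimes\sum_{z,D}\beta_{z,D}\ket{z}\ket{D\oplus\{x\}}$. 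Here $k$ is correlated with everything else only through the single $x$.

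This also removes your cross-term worry. Split into the parts with $x\in D$ and $x\notin D$.
\begin{itemize}
\item The $x\in D$ part has weight at most $|D|/N$.
\item In the $x\notin D$ part, branches with distinct $D'=D\cup\{x\}$ are orthogonal and untouched by $\Pi_k$. For fixed $D'$, the $k$-dependent vectors $\sum_{x\in D'}(-1)^{k\cdot x}\ket{v_{x,D'}}$ lie in a span of dimension at most $|D'|$. Hence $\sum_k\|\Pi_k\phi_{k,D'}\|^2\le\mathrm{Tr}(\Pi_{D'})\le|D'|$.
\end{itemize}
Combining the two parts with a factor $2$ gives $\epsilon_t<4t/K$ for odd $t$. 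Together with monotonicity, this yields $(8t/K)^t$.
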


We denote the measurement outcome in the $i$-th round as $b_i$, and let $b_i=1$ if the state successfully projects onto $\Pi_{\mathrm{Win}}$ (if $i$ is odd), or $\Pi_{\mathrm{init},F}$ (if $i$ is even). We also define the conditional probability for successfully projecting on the $i$-th round as $\epsilon_i$, 
\[
\epsilon_i = \Pr[b_i = 1 \mid b_{<i}=1]=\frac{\Pr[b_{\le i}=1]}{\Pr[b_{\le i-1}=1]}.
\]

For alternating measurement game with $t$ rounds, we define $b_0=1$ always, and the winning probability can be written as 
\[
\Pr[b_t=1]=\prod_{i=1}^{t}\Pr[b_i=1\mid b_{<i}=1]=\prod_{i=1}^t \epsilon_i.
\]
The conditional probability is monotonically non-decreasing, as argued in \cite[Corollary~6.10]{EC:Liu23}. 

\begin{proposition}[Non-decreasing of $\epsilon_i$]\label{prop:alternating-measurement-non-decreasing}
$\{\epsilon_i\}_{i\in\{1,\ldots,t\}}$ is monotonically non-decreasing, i.e.,
for every $i\in\{2,\ldots,t\}$, $\epsilon_{i-1}\le \epsilon_{i}$.
\end{proposition}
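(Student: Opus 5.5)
This is the standard alternating-projections monotonicity, and I would prove it with Jordan's lemma. Let $P:=\Pi_{\mathrm{Win}}$ and $Q:=\Pi_{\mathrm{init},F}$, and fix $f$ (and the adversary's initial workspace state). Because the function $f$ is sampled once and then held fixed, the overall process is a classical mixture over $f$ of pure alternating-projection processes. So I would first prove the claim for a fixed $f$ with a pure starting state, and only afterwards handle the averaging, which is where the real care is needed.

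For fixed $f$, write the unnormalized post-round-$i$ state as $\ket{\phi_i}=R_i\cdots R_1\ket{\phi_0}$, where $R_j\in\{P,Q\}$ alternates starting with $P$. Here $\ket{\phi_0}=F\ket{\psi_{\mathrm{init}}}\ket{\phi_f,0^m}$ lies in the image of $Q$, since the advice register is untouched by $Q$. Then $\epsilon_i=\norm{\phi_i}^2/\norm{\phi_{i-1}}^2$. By Jordan's lemma, the space decomposes into orthogonal blocks that are invariant under both $P$ and $Q$, each of dimension at most two. In a two-dimensional block the two projectors are rank one with squared overlap $p_j$. In a one-dimensional block the overlap $p_j$ is either $0$ or $1$.

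Expand $\ket{\phi_0}$ as $\sum_j c_j\ket{u_j}$, where $\ket{u_j}$ is the image-of-$Q$ vector in block $j$. Components of $\ket{\phi_0}$ lying in $\ker Q$ are zero because $\ket{\phi_0}\in\mathrm{im}\,Q$. Each application of a projector then multiplies the block-$j$ squared weight by exactly $p_j$. Hence $\norm{\phi_i}^2=\sum_j w_j p_j^{\,i}$ with $w_j=|c_j|^2\ge 0$. The desired inequality $\epsilon_{i-1}\le\epsilon_i$ then reads $a_{i-1}^2\le a_{i-2}a_i$ for $a_i:=\sum_j w_jp_j^{\,i}$, which is Cauchy--Schwarz applied to the vectors $(\sqrt{w_j}p_j^{(i-2)/2})_j$ and $(\sqrt{w_j}p_j^{\,i/2})_j$. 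This settles the fixed-$f$ case.

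The expected obstacle is the averaging over $f$. The paper defines $\epsilon_i$ using the overall probabilities, so $\Pr[b_{\le i}=1]=\mathbb E_f a_i^{(f)}=\mathbb E_f\sum_j w^{(f)}_j (p^{(f)}_j)^i$. This is again a moment sequence $a_i=\int p^i\,d\mu(p)$ of a single nonnegative measure $\mu$ on $[0,1]$, obtained by merging the weights over all $f$ and all blocks. Log-convexity of moments, by the same Cauchy--Schwarz, still gives $a_{i-1}^2\le a_{i-2}a_i$, i.e.\ $\epsilon_{i-1}\le\epsilon_i$. The same merging absorbs mixed advice, which is a convex combination of pure states. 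The one remaining point to check is the edge case where some $a_{i-1}=0$; there the conditional probabilities are undefined, and I would handle this by convention, taking all later $\epsilon$'s to be $0$.
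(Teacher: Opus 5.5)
Your proposal is correct and follows the paper's argument. Jordan's lemma writes $\Pr[b_{\le i}=1]$ as a moment sequence $\mathbb{E}[p^i]$, and Cauchy--Schwarz (log-convexity of moments) then gives $\epsilon_{i-1}\le\epsilon_i$. Your explicit handling of the average over $f$ by merging into one measure, of mixed advice, and of the zero-denominator edge case fills in details that the paper's proof sketch leaves implicit.
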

\begin{proof}[Proof sketch]
$\epsilon_i = \Pr[b_i = 1 \mid b_{<i}=1]=\Pr[b_{\le i}=1]/\Pr[b_{\le i-1}=1]$. Similarly as in the proof for \cref{lem:fh-powering-by-alternating-measurement}, by Jordan's lemma, $\Pr[b_{\le i}=1]$ can be expressed as $\mathbb{E}[p^i]$. By the Cauchy-Schwarz inequality, $\mathbb{E}[p^{i-1}]\cdot \mathbb{E}[p^{i+1}]\ge \mathbb{E}[p^{i}]^2$, and thus $\epsilon_{i+1}\ge \epsilon_i$.
\end{proof}

With the non-decreasing property, it is sufficient to bound the conditional probability at only odd rounds. %
\begin{lemma}\label{lem:alternating-measurement-conditional-probability}
For every uniform adversary in the alternating measurement game, $\epsilon_t\le 8t/K$ for every $t$. Specifically, for every odd $t$, $\epsilon_t \le 4t/K$. By the non-decreasing property of $\epsilon_t$, for even $t$, we have $\epsilon_t \le \epsilon_{t+1}\le 8t/K$ with one more round of the alternating measurement game.
\end{lemma}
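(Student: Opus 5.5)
The plan is to bound the odd-round conditional probability $\epsilon_t$ directly by purifying the random function $f$ and tracking the oracle register with Zhandry's compressed (Fourier-basis) oracle \cite{C:Zha19}. The even case then follows from \cref{prop:alternating-measurement-non-decreasing}, as noted in the statement. Concretely, I would replace the classical choice of $f$ by an oracle register $\mathsf{O}$ initialized to $\ket{+}^{\otimes N}$. Then every application of $F$ on $\mathsf{X}$, including the two copies of $F$ inside each $\Pi_{\mathrm{init},F}=F\Pi_{\mathrm{init}}F$, becomes a controlled operation on $\mathsf{XO}$. In the Hadamard basis of $\mathsf{O}$, this operation acts as $\ket{x}\ket{D}\mapsto\ket{x}\ket{D\oplus e_x}$, where $D\in\{0,1\}^N$ is the database. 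Because the adversary's measurement $\{\Pi_k\}$ does not touch $\mathsf{O}$, the whole alternating game becomes a sequence of projections on a pure state over $\mathsf{CXZO}$, and $\epsilon_t$ is a ratio of squared norms of such projected states.

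The first step is a counting invariant. Before round $t$, the operators applied are the initial $F$, the $\Pi_{\mathrm{Win}}$ projections (which commute with $\mathsf{O}$), and at most $(t-1)/2$ projections $\Pi_{\mathrm{init},F}$. Each of the latter contributes two $F$'s, and the projection $\Pi_{\mathrm{init}}$ sandwiched between them can be expanded as a linear combination of operators on $\mathsf{CX}$ alone. So the unnormalized state before the round-$t$ measurement is supported on databases with Hamming weight at most $1+2\cdot\frac{t-1}{2}=t$. I would write it as $\sum_{D:|D|\le t}\ket{\Phi_D}_{\mathsf{CXZ}}\ket{D}_{\mathsf{O}}$.

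The second step uses that, when $t$ is odd and $t\ge 3$, round $t-1$ just projected onto $\Pi_{\mathrm{init},F}$. So the $\mathsf{CX}$ part lies in the span of $F\ket{\psi_{\mathrm{init}}}$, and the state factors as $F\ket{\psi_{\mathrm{init}}}_{\mathsf{CX}}\otimes$ (a state on $\mathsf{ZO}$); for $t=1$ this holds trivially. Passing to the compressed picture, $F\ket{\psi_{\mathrm{init}}}$ equals $\frac{1}{\sqrt{KN}}\sum_{k,x}(-1)^{k\cdot x}\ket{k}\ket{x}\ket{\cdot\oplus e_x}$. I would split each database component according to whether $x$ lies in a set $T_D$ of at most $t$ distinguished positions tied to the database, or is a fresh position. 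On fresh positions, the $\mathsf{X}$ value becomes uniformly entangled with $\mathsf{O}$. After tracing out $\mathsf{O}$ the sign $(-1)^{k\cdot x}$ is decohered, so $\mathsf{C}$ is maximally mixed given everything the adversary holds, and $\Pi_{\mathrm{Win}}$ succeeds with probability at most $1/K$. On distinguished positions, I would bound the win probability by the total weight there, roughly $t/K$ per component after normalizing by $\|\Pi_{\mathrm{init},F}(\cdot)\|^2$. Combining the two pieces with $\|a+b\|^2\le 2\|a\|^2+2\|b\|^2$ gives $\epsilon_t\le 2(t+1)/K\le 4t/K$.

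I expect the main obstacle to be the normalization: conditioning on previous successes can amplify rare branches. So the bound must be phrased as an operator inequality of the form $\Pi_{\mathrm{init},F}\,\Pi_{\mathrm{Win}}\,\Pi_{\mathrm{init},F}\preceq\frac{4t}{K}\Pi_{\mathrm{init},F}$, restricted to the subspace of $\mathsf{O}$ with database weight at most $t$, uniformly over the adversary's workspace state. Rather than estimating ratios of probabilities, I would first try to prove exactly this inequality by computing $\bra{\psi_{\mathrm{init}}}F\,\Pi_{\mathrm{Win}}\,F\ket{\psi_{\mathrm{init}}}$ as an operator on $\mathsf{ZO}$ in the database basis. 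The technical care lies in controlling the off-diagonal terms between databases that differ in at most $t$ positions.
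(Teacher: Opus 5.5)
\textbf{Gap in the fresh-position bound.} Your skeleton matches the paper's proof. You purify $f$ into a compressed-oracle register $\mathsf O$ and note that the state before round $t$ lives on small databases. You use the round-$(t-1)$ projection to write the state as $\widetilde F\bigl(\ket{\psi_{\mathrm{init}}}\otimes\ket{\beta}_{\mathsf{ZO}}\bigr)$, where $\widetilde F$ is the purified query and $\ket\beta$ is normalized with support on databases of size at most $t-1$. You then split by $x\in D$ versus $x\notin D$ and recombine with $\|a+b\|^2\le 2\|a\|^2+2\|b\|^2$.

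The gap is the fresh part $x\notin D$. You claim $\mathsf O$ fully decoheres the sign $(-1)^{k\cdot x}$, so $\Pi_{\mathrm{Win}}$ succeeds with probability at most $1/K$. But $\mathsf O$ stores only $D'=D\cup\{x\}$, not the pair $(x,D)$. Every $x\in D'$, paired with $D'\setminus\{x\}$, lands on the same database state. So for fixed $D'$ the adversary's conditional state is the coherent superposition $\sum_{x\in D'}(-1)^{k\cdot x}\ket{v_{x,D'}}$, with $k$-dependent relative phases.

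Here is a concrete failure. Take $K=N$, trivial $\mathsf Z$, and $\Pi_k=H\ketbra{k}H$. At $t=3$, round 2 leaves $\ket\beta$ with about two thirds of its weight spread uniformly over two-element databases. For $D'=\{a,b,c\}$ the $\mathsf X$ register is then proportional to $\sum_{x\in D'}(-1)^{k\cdot x}\ket{x}$, and the Hadamard measurement returns $k$ with probability $3/N$, not $1/N$. In total the fresh part alone succeeds with probability about $7/(3K)>1/K$. So the claim that $\mathsf C$ is maximally mixed given the adversary's view is false, and your intermediate bound $2(t+1)/K$ is unsupported.

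\textbf{The missing idea: a dimension count inside each database.} Write the fresh part as $\frac{1}{\sqrt K}\sum_{k,D'}\sqrt{p_{D'}}\,\ket{k}\ket{\phi_{k,D'}}\ket{D'}$, with unit vectors $\ket{\phi_{k,D'}}$ and $\sum_{D'}p_{D'}\le 1$. For each $D'$, all $\ket{\phi_{k,D'}}$ lie in the $k$-independent span of $\{\ket{v_{x,D'}}\}_{x\in D'}$. Its projector $\Pi_{D'}$ has rank at most $|D'|\le t$. Hence
\[
\sum_k\|\Pi_k\ket{\phi_{k,D'}}\|^2\le\sum_k\mathrm{Tr}(\Pi_{D'}\Pi_k\Pi_{D'})=\mathrm{Tr}(\Pi_{D'})\le t,
\]
so the fresh part contributes at most $t/K$.

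The in-database part needs no phase argument. Since $(x,D)\mapsto(x,D\setminus\{x\})$ is injective, its contribution is at most its squared norm, which is at most $(t-1)/N$. Together, $\epsilon_t\le 2\bigl(t/K+(t-1)/N\bigr)<4t/K$.

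The normalization issue you flag is not a real obstacle. The bound above is uniform over normalized $\ket\beta$, which is exactly the operator inequality you propose. Distinct databases $D'$ never interfere, because $\Pi_{\mathrm{Win}}$ acts trivially on $\mathsf O$. The only interference that matters is inside a single $D'$, and that is precisely what your decoherence claim missed.
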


To bound this conditional winning probability, we will need to apply Zhandry's compressed oracle framework \cite{C:Zha19}. Since $\Pi_{\mathrm{Win}}$ is independent of $F$, while $\Pi_{\mathrm{init},F}=F\Pi_{\mathrm{init}}F$, we can view the conditional probability as making several queries to $F$ (or phase queries to the underlying boolean function $f$), while performing some intermediate measurements in between. Since we are analyzing probability over a random $f$, we can purify the register for $f$ and view it in the Fourier basis as ``database''. The algorithm starts with a pure uniform superposition of all possible $f$, which corresponds to an initialized empty database; from the framework in  \cite{C:Zha19}, any query to $f$ performing $\ket{x}\ket{f}\mapsto (-1)^{f(x)}\ket{f}$ will correspond to $\ket{x}\ket{D}\mapsto \ket{x}\ket{D\oplus \{x\}}$ in the database view, where $D\oplus \{x\}=D\backslash\{x\}$ if $x\in D$, and $D\oplus \{x\}=D\cup\{x\}$ if $x\notin D$. 

Within the compressed oracle framework, our alternating measurement game can be viewed with one more register $\mathsf{D}$ for the database. It is initialized as the empty set, and each ``query to $F$'' is replaced by a compressed oracle update on the database. The database register $\mathsf{D}$ is not touched by the projectors $\Pi_{\mathrm{Win}}, \Pi_{\mathrm{init}}$, although it is affected by (the purification of) $F \Pi_{\mathrm{init}} F$.%

\begin{proof}[Proof of \cref{lem:alternating-measurement-conditional-probability}]
We prove the lemma for odd $t$ (as even $t$ follows from \cref{prop:alternating-measurement-non-decreasing}).

Define $\ket{\Phi_{t-1}}$ to be the normalized result state just after the $(t-1)$-th round with outcome $\{b_{i}=1\}_{i\in [t-1]}$. For odd $t$, by definition, $\epsilon_t=\|\Pi_{\mathrm{Win}}\ket{\Phi_{t-1}}\|^2$.

\paragraph{Result state after $(t-1)$ rounds.} We start by describing the state $\ket{\Phi_{t-1}}$. Since $\Pi_{\mathrm{init},F}=F\Pi_{\mathrm{init}}F$, any result state after $(t-2)$ rounds in the alternating measurement game can be viewed as obtained by making $(t-3)$ queries to $F$ and performing many intermediate measurements that are independent of $F$ (for odd $t$). Within the compressed oracle framework, this gives an upper bound on the size of the database on register $\mathsf{D}$. 

For the $(t-1)$-th round (for odd $t$), we can view the projector $\Pi_{\mathrm{init},F}=F\Pi_{\mathrm{init}}F$ as the following: the algorithm first makes 1 query to $F$, then it successfully measures on $\ket{\psi_{\mathrm{init}}}$ on register $\mathsf{C}\mathsf{X}$, and then it makes another query to $F$. Therefore, the result state after the first $(t-2)$ rounds along with the next query to $F$ has the form
\[
\propto \sum_{k}\ket{k}_{\mathsf{C}}\otimes \sum_{x,z,D}\alpha_{k,x,z,D}\ket{x}_\mathsf{X}\ket{z}_{\mathsf{Z}}\ket{D}_{\mathsf{D}}
\]
with database size $|D|\le t-2$. Next, within the $(t-1)$-th round, since the algorithm successfully measures on $\ket{\psi_{\mathrm{init}}}$ on register $\mathsf{C}\mathsf{X}$, the result state on $\mathsf{C}\mathsf{X}$ is a pure state $\ket{\psi_{\mathrm{init}}}$ and thus is unentangled with $\mathsf{Z}$ and $\mathsf{D}$, which has the form
\[
\propto \left(\sum_{k}\ket{k}_{\mathsf{C}}\otimes H\ket{k}_{\mathsf{X}}\right)
\otimes \left(\sum_{z,D}\beta_{z,D}\ket{z}_{\mathsf{Z}}\ket{D}_{\mathsf{D}}\right).
\]
Therefore, by making one query to $F$, we will end up with $\ket{\Phi_{t-1}}$, the result state just after the $(t-1)$-th round, 
\[
\ket{\Phi_{t-1}}=\frac{1}{\sqrt{KN}}
\sum_{k,x}(-1)^{k\cdot x}\ket{k,x}_{\mathsf{CX}}\otimes \sum_{z,D}\beta_{z,D}\ket{z}_{\mathsf{Z}}\ket{D\oplus\{x\}}_{\mathsf{D}}.
\]

Now our goal is to give an upper bound for $\epsilon_t=\|\Pi_{\mathrm{Win}}\ket{\Phi_{t-1}}\|^2$. In the below analysis we use $\ket{\Phi}:=\ket{\Phi_{t-1}}$, as we only analyze the result state after $(t-1)$-th round.

We define 2 parts for $\ket{\Phi}$ as the following $\ket{\Phi_1},\ket{\Phi_2}$, $\ket{\Phi}=\ket{\Phi_1}+\ket{\Phi_2}$:
\begin{align*}
    \ket{\Phi_1}
    &:=\frac{1}{\sqrt{KN}}\sum_{k,z,D,x\in D}(-1)^{k\cdot x }\cdot \beta_{z,D}\cdot \ket{k,x,z}\ket{D\backslash\{x\}}
    \\
    \ket{\Phi_2}
    &:=\frac{1}{\sqrt{KN}}\sum_{k,z,D,x\notin D}(-1)^{k\cdot x }\cdot \beta_{z,D}\cdot \ket{k,x,z}\ket{D\cup\{x\}}.
\end{align*}
Therefore,
\[
\|\Pi_{\mathrm{Win}}\ket{\Phi}\|^2
=\|\Pi_{\mathrm{Win}}(\ket{\Phi_1}+\ket{\Phi_2})\|^2
\le 2\cdot (\|\Pi_{\mathrm{Win}}\ket{\Phi_1}\|^2+\|\Pi_{\mathrm{Win}}\ket{\Phi_2}\|^2),
\]
and now our goal is to bound $\|\Pi_{\mathrm{Win}}\ket{\Phi_1}\|^2$ and $\|\Pi_{\mathrm{Win}}\ket{\Phi_2}\|^2$ separately.

\paragraph{For $\ket{\Phi_1}$.} Over all $x\in [N]$, for any database $D$ with size $|D|\le t-2$, only a small fraction of $x$ will lie in $D$. This intuition gives us the following upper bound:
\begin{align*}
    \|\Pi_{\mathrm{Win}}\ket{\Phi_1}\|^2
    &=\frac{1}{KN}\sum_k\left\|
    \sum_{z,D,x\in D}(-1)^{k\cdot x }\cdot \beta_{z,D}\cdot \Pi_k\ket{x,z,D\backslash\{x\}}
    \right\|^2\\
    &\le\frac{1}{KN}\sum_k\left\|
    \sum_{z,D,x\in D}(-1)^{k\cdot x }\cdot \beta_{z,D}\cdot \ket{x,z,D}
    \right\|^2\\
    &=\frac{1}{KN}\sum_k \sum_{z,D}|\beta_{z,D}|^2\cdot \left(\sum_{x\in D}[x\in D]\right)\\
    &\le \frac{t-2}{N}.
\end{align*}

\paragraph{For $\ket{\Phi_2}$.} Since $\Pi_k$ never acts on the database register $\mathsf{D}$, we will have orthogonality for different database, and thus we can expand the term as
\begin{align*}
    \|\Pi_{\mathrm{Win}}\ket{\Phi_2}\|^2
    &=\frac{1}{K}\sum_k\left\|
    \sum_{z,D,x\notin D}(-1)^{k\cdot x }\cdot \frac{\beta_{z,D}}{\sqrt{N}}\cdot (\Pi_k\ket{x,z})\otimes \ket{D\cup\{x\}}
    \right\|^2\\
    &=\frac{1}{K}\sum_k\left\|
    \sum_{z,D',x\in D'}(-1)^{k\cdot x }\cdot \frac{\beta_{z,D'\backslash\{x\}}}{\sqrt{N}}\cdot (\Pi_k\ket{x,z})\otimes \ket{D'}
    \right\|^2\\
    &=\frac{1}{K}\sum_k\sum_{D'}\left\|\Pi_k\cdot 
    \sum_{z,x\in D'}(-1)^{k\cdot x }\cdot \frac{\beta_{z,D'\backslash\{x\}}}{\sqrt{N}}\cdot \ket{x,z}
    \right\|^2
    \\
    &=\frac{1}{K}\sum_k \sum_{D'} p_{D'}\cdot\|\Pi_k\ket{\phi_{k,D'}}\|^2
\end{align*}
with the following definitions
\[
p_{D'}:=\sum_{z,x\in D'}\frac{|\beta_{z,D'\backslash\{x\}}|^2}{N},
\qquad
\ket{\phi_{k,D'}}:=\frac{1}{\sqrt{p_{D'}}}\sum_{z,x\in D'}(-1)^{k\cdot x }\cdot \frac{\beta_{z,D'\backslash\{x\}}}{\sqrt{N}}\cdot \ket{x,z}.
\]
Since $\|\ket{\Phi_2}\|\le 1$, $\sum_{D'}p_{D'}=\|\ket{\Phi_2}\|^2\le 1$.

Note that from the definition of $\ket{\phi_{k,D'}}$, it is normalized, and if we define state $\ket{v_{x,D'}}$ as
\[
    \ket{v_{x,D'}}=\frac{1}{\sqrt{p_{D'}}}\sum_{z}\frac{\beta_{z,D'\backslash\{x\}}}{\sqrt{N}}\ket{x,z},
\]
then for any $k\in \{0,1\}^n$, $\ket{\phi_{k,D'}}$ actually lies in a small subspace spanned by no more than $|D'|$ states,
\[
\ket{\phi_{k,D'}}=\sum_{x\in D'}(-1)^{k\cdot x}\ket{v_{x,D'}}\quad
\in \quad
\mathrm{Span}\{\ket{v_{x,D'}}\}_{x\in D'}.
\]
We define $\Pi_{D'}$ as the projection on this subspace, and $\Trace(\Pi_{D'})\le |D'|\le t-2+1=t-1$.

This limitation of subspace spanned by $\{\phi_{k,D'}\}_k$ for every $D'$ will help us upper bound $\|\Pi_{\mathrm{Win}}\ket{\Phi_2}\|^2$:
\begin{align*}
    \|\Pi_{\mathrm{Win}}\ket{\Phi_2}\|^2
    &=\frac{1}{K}\sum_k \sum_{D'} p_{D'}\cdot\|\Pi_k\ket{\phi_{k,D'}}\|^2
    =\frac{1}{K}\sum_k \sum_{D'} p_{D'}\cdot\|\Pi_k\Pi_D'\ket{\phi_{k,D'}}\|^2\\
    &=\frac{1}{K}\sum_k \sum_{D'} p_{D'}\cdot\Trace(\Pi_{D'}\Pi_k\Pi_{D'}\ketbra{\phi_{k,D'}})\\
    &\le \frac{1}{K}\sum_k \sum_{D'} p_{D'}\cdot\Trace(\Pi_{D'}\Pi_k\Pi_{D'})\\
    &=\frac{1}{K}\sum_{D'} p_{D'}\cdot\Trace(\Pi_{D'})\\
    &\le \frac{t-1}{K}.
\end{align*}

\paragraph{Concluding the proof.}
Since here the key space $K\le N$, we have 
\[
\|\Pi_{\mathrm{Win}}\ket{\Phi}\|^2
\le 2\cdot (\|\Pi_{\mathrm{Win}}\ket{\Phi_1}\|^2+\|\Pi_{\mathrm{Win}}\ket{\Phi_2}\|^2)
\le 2\cdot \left(\frac{t-1}{K}+\frac{t-2}{N}\right)<\frac{4t}{K}. \qedhere 
\]
\end{proof}

\begin{proof}[Proof of \cref{prop:t-round-alternating-measurment}]
With the upper bound of $\epsilon_s$ from \cref{lem:alternating-measurement-conditional-probability} and non-decreasing property from \cref{prop:alternating-measurement-non-decreasing}, 
\[
\Pr[b_t=1]=\prod_{i=1}^t\epsilon_i\le \epsilon_t^t \le \left(\frac{8t}{K}\right)^t. \qedhere
\]
\end{proof}

\subsection{Conclusion}\label{sec:fh-conclusion}
Combining \cref{lem:fh-powering-by-alternating-measurement} and \cref{prop:t-round-alternating-measurment}, we obtain the following theorem.

\begin{theorem}
\label{thm:fh-advice-final}
If a non-uniform algorithm with $S$ qubits of advice succeeds in recovering $k$ from one copy of $FH\ket{k}$ with probability $\epsilon$, then for every integer $t\ge 1$,
\[
\varepsilon^{t}\le 2^S\left(\frac{8t}{K}\right)^{t}
\qquad\text{for every integer }t\ge 1.
\]
In particular, setting $t=S$ gives 
\[
\epsilon\le \frac{16S}{K}.
\]
\end{theorem}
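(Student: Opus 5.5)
The plan is to chain the two results already established: the amplification of \cref{lem:fh-powering-by-alternating-measurement} and the uniform bound of \cref{prop:t-round-alternating-measurment}, with the advice removed by the maximally-mixed-state trick described at the start of \cref{sec:fh-upper-bounding-alternating-measurement}.

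First, fix a non-uniform adversary with $S$-qubit advice $\ket{\phi_f}$ and measurement $\{\Pi_k\}$ achieving success $\varepsilon$. Fix any integer $t\ge 1$. By \cref{lem:fh-powering-by-alternating-measurement}, the same advice and measurement win the $t$-round alternating measurement game with probability at least $\varepsilon^t$. This holds in expectation over $f$, since the Jordan-block argument gives $\mathbb E[p^t]\ge(\mathbb E p)^t$ and Jensen applies jointly over $f$ and the blocks.

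Second, remove the advice. A uniform adversary replaces $\ket{\phi_f}$ by the maximally mixed state $2^{-S}\Id$ on $S$ qubits. Since $2^{-S}\Id \succeq 2^{-S}\ketbra{\phi_f}$, and the $t$-round winning probability is linear in the initial workspace density matrix (a product of projections applied to it), this uniform adversary wins with probability at least $2^{-S}\varepsilon^t$. One checks that the game's operators never depend on the advice, so linearity is legitimate even though the advice depends on $f$: the maximally mixed state dominates every $f$-dependent advice simultaneously.

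Third, \cref{prop:t-round-alternating-measurment} bounds any uniform adversary by $(8t/K)^t$. This gives $\varepsilon^t\le 2^S(8t/K)^t$.

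For the particular case, set $t=S$ (assume $S\ge1$) and take $t$-th roots:
\[
\varepsilon\le 2^{S/t}\cdot\frac{8t}{K}=2\cdot\frac{8S}{K}=\frac{16S}{K}.
\]

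The only delicate point is the second step. The advice state is entangled with nothing but is chosen as a function of $f$, while the compressed-oracle analysis in \cref{prop:t-round-alternating-measurment} assumes an $f$-independent initial workspace. The maximally mixed replacement handles this pointwise in $f$ before averaging, so I expect no real obstacle beyond verifying that inequality.
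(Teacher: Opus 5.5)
Your proposal is correct and follows essentially the same route as the paper's proof. Both arguments amplify via \cref{lem:fh-powering-by-alternating-measurement}, replace the $f$-dependent advice by a maximally mixed state at a $2^{-S}$ loss, apply the uniform $(8t/K)^t$ bound of \cref{prop:t-round-alternating-measurment}, and then set $t=S$. Your justification of the advice-removal step, via pointwise domination $\Id\succeq\ketbra{\phi_f}$ for each $f$ together with linearity of the $t$-round success probability in the workspace state, makes explicit what the paper states only informally.
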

\begin{proof}
    By \cref{lem:fh-powering-by-alternating-measurement}, the $t$-round alternating measurement can be won with probability at least $\epsilon^t$. By \cref{prop:t-round-alternating-measurment}, every such game has success probability at most $2^S(8t/K)^t$. Combining the two inequalities proves the claim.
\end{proof}

\subsection{Matching Algorithm}
\label{sec:fh-tightness}

In this section, we give an algorithm with an $S$-qubit advice for the state search game on state family $\{FH\ket{k}\}_{k\in [K]}$. This algorithm can reach winning probability $\tilde{\Omega}(S/N)$ for $S=o(N/\log N)$, which matches our lower bound above (up to $\log N$ factor, for the case when $K=\Omega(N)$).

\paragraph{Algorithm.}
Suppose that the algorithm receives $q$ copies of
$\ket{\psi_{f,0}}:=FH\ket{0}$ as its $q\log N$-qubit advice. Denote the advice registers by $X_1,\ldots,X_q$ and the challenge
register by $Y$. 
\begin{enumerate}
    \item The algorithm coherently computes into an ancilla the
smallest index
\[
    I:=\min\{i\in[q]:X_i=Y\},
\]
setting $I=q+1$ if no such index exists, and then measures $I$.

\item If
$I=q+1$, the algorithm aborts. If $I=i\le q$, it applies a bitwise
CNOT from $X_i$ to $Y$, discards $Y$ and all advice registers other
than $X_1,\ldots,X_i$, and measures $X_i$ in the Hadamard basis to
obtain its output.
\end{enumerate}

\paragraph{Analysis.}
For every $i\in[q]$, $\Pr[I=i]
    =
    \frac1N(1-\frac1N)^{i-1}$.
Conditioned on $I=i$, the equality of the computational-basis values in
$X_i$ and $Y$ cancels their two copies of the phase $(-1)^{f(x)}$.
After the CNOT, the remaining state can be written as
\[
    \ket{\Gamma_{i,k}}
    =
    \frac1{\sqrt N}\sum_x
      (-1)^{k\cdot x}
      \ket{\theta_x}^{\otimes(i-1)}\otimes \ket{x},
\]
where $\ket{\theta_x}
    :=
    \frac1{\sqrt{N-1}}
    \sum_{z\ne x}(-1)^{f(z)}\ket{z}$.

Note that for distinct $x,y$, $\left<\theta_y|\theta_x\right>=\frac{N-2}{N-1}$.
Therefore, after tracing out the first $i-1$ registers, the state of
$X_i$ is
\[
    \rho_{i,k}
    =
    \alpha_i\cdot  H\ketbra{k}H
    +(1-\alpha_i)\cdot \frac{\Id}{N},
    \qquad
    \alpha_i:=
    \left(\frac{N-2}{N-1}\right)^{i-1}.
\]
The Hadamard-basis measurement thus outputs $k$ with probability at
least $\alpha_i$. 

Consequently, for $q=o(N)$, the overall winning probability is at
least
\begin{align*}
    \sum_{i=1}^q \Pr[I=i]\cdot \alpha_i
    =
    \frac1N\sum_{i=1}^q
      \left(1-\frac2N\right)^{i-1} 
    =\frac{1-(1-2/N)^q}{2}
    =\Omega\left(\frac{q}{N}\right).
\end{align*}

Therefore, in terms of the total advice length $S=q\log N$, the algorithm
uses $S$ advice qubits and achieves $\Omega\left(q/N\right)=\tilde{\Omega}(S/N)$ winning probability in the search game.

\bibliographystyle{alpha}
\bibliography{custom,abbrev3,crypto}

\appendix
\section{Simple search game upper bounds}
\label{app:iid-search}

This section describes two additional applications of the spectral reduction from \cref{lem:generic-spectral-reduction}. The first results in a simpler alternative proof of some of the main results of \cite{STOC:LomMaWri24} about random binary phase states. The second proves a quantitatively similar statement bounding the maximum winning probability of the search game for Haar-random unitaries, which corresponds to a one-query lower bound for synthesizing Haar-random unitaries. 

\subsection{Random binary phase states}

Let $R=(R(k,x))_{k\in[K],\,x\in[N]}$ be a family of independent Rademacher random variables, and define
\[
\ket{\psi_{R,k}}:=\frac{1}{\sqrt{N}}\sum_{x\in[N]} R(k,x)\ket{x}
\qquad\text{for each }k\in[K].
\]
Thus the challenge states are independent random binary phase states.

\begin{theorem}[Search bound for binary phase states]
\label{thm:iid-binary-phase-search}
For the oracle state search game associated with the family $\{\ket{\psi_{R,k}}\}_{k\in[K]}$ above, every one-query adversary with workspace dimension $M$ satisfies
\[
\mathbb{E}_R\bigl[\Win(\cA\mid R)\bigr]
\le \frac{2\bigl(1+\log(2M)\bigr)}{K}.
\]
\end{theorem}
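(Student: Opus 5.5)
The plan is to apply the generic spectral reduction directly and then invoke the matrix Rademacher bound of \cref{thm:tropp-rademacher}. First I will check the hypotheses of \cref{lem:generic-spectral-reduction,lem:subspace-specialization}. For each fixed $k$, the i.i.d.\ Rademacher coefficients give $\mathbb{E}_R\ketbra{\psi_{R,k}} = \frac1N\sum_{x,x'}\mathbb{E}[R(k,x)R(k,x')]\ketbra{x}{x'} = \frac1N\Id$. This is independent of $k$ and maximally mixed on the whole space, so we may take $S=\mathbb{C}^N$ and $L=N$. The two lemmas then give
\[
\mathbb{E}_R[\Win(\cA\mid R)]\le \mathbb{E}_R\norm{M_R}^2,\qquad M_R=\sum_{k,x}R(k,x)B_{k,x},
\]
with $\sum_{k,x}B_{k,x}B_{k,x}^\dagger\preceq \frac1K\Id$ and $\sum_{k,x}B_{k,x}^\dagger B_{k,x}\preceq\frac1K\Id$, by \cref{eq:generic-BBd,eq:generic-BdB} with $L/N=1$.

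Next, since the family $\{R(k,x)\}_{k,x}$ is independent Rademacher, $M_R$ is exactly a matrix Rademacher series. Its matrices $B_{k,x}$ are $M\times M$, so $d_1+d_2=2M$. By the bounds above, its variance parameter satisfies $v(M_R)\le 1/K$. The second half of \cref{thm:tropp-rademacher} then gives
\[
\mathbb{E}\norm{M_R}^2\le 2v(M_R)\bigl(1+\log(2M)\bigr)\le \frac{2(1+\log(2M))}{K},
\]
which is the claimed bound.

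There is no real obstacle here; the step needing care is the application of \cref{lem:subspace-specialization}. In particular, the weight vector $\ket{\wt_V}$ must be the same for every $k$ (here $p_i=\norm{v_i}^2/N$ for all $k$), and indices with $p_i=0$ must be handled. Both are already covered by \cref{lem:weight-vector-decomposition}: such coordinates vanish almost surely, and they contribute zero rows and columns to every $B_{k,x}$.
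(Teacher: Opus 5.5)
Your proposal is correct and follows the paper's proof essentially line for line: verify $\mathbb{E}_R\ketbra{\psi_{R,k}}=\frac1N\Id$, invoke \cref{lem:generic-spectral-reduction,lem:subspace-specialization} with $S=\mathbb{C}^N$ and $L=N$, and finish with the second-moment bound of \cref{thm:tropp-rademacher} using $v(M_R)\le 1/K$ and $d_1+d_2=2M$. One small inaccuracy, which plays no role in the argument: indices with $p_i=0$ give zero columns of $B_{k,x}$, but the corresponding rows need not vanish, since $\Pi_k$ acts on the left.
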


\begin{proof}
Fix a one-query adversary $\cA=(V,\{\Pi_k\}_{k\in[K]})$. For each $k\in[K]$, the random state $\ket{\psi_{R,k}}$ has marginal
\[
\mathbb{E}_R\bigl[\ketbra{\psi_{R,k}}\bigr]=\frac{1}{N}\cdot \Id.
\]
\cref{lem:generic-spectral-reduction,lem:subspace-specialization} then tell us that
\[
\mathbb{E}_R\bigl[\Win(\cA\mid R)\bigr] \leq \mathbb E_R\bigl[\norm{M_R}^2 \bigr],
\]
where
\[
M_R=\sum_{k\in[K],\,x\in[N]} R(k,x)\,B_{k,x}
\]
and the matrices $\{B_{k,x}\}_{k,x}$ satisfy
\[
\sum_{k,x} B_{k,x}B_{k,x}^\dagger=\frac{1}{K} \cdot \Id,
\qquad
\sum_{k,x} B_{k,x}^\dagger B_{k,x}\preceq \frac{1}{K} \cdot \Id.
\]
Since the coefficients $\{R(k,x)\}_{k,x}$ are independent Rademacher random variables, we may apply \cref{thm:tropp-rademacher} to the matrix Rademacher series $M_R$. Its matrix variance parameter is
\[
v(M_R)=\max\!\left\{
\left\|\sum_{k,x} B_{k,x}B_{k,x}^\dagger\right\|,
\left\|\sum_{k,x} B_{k,x}^\dagger B_{k,x}\right\|
\right\}=\frac{1}{K}.
\]
Because $M_R$ is an $M\times M$ matrix, \cref{thm:tropp-rademacher} gives
\[
\mathbb{E}_R\norm{M_R}^2
\le 2\bigl(1+\log(2M)\bigr)\cdot \frac{1}{K},
\]
completing the proof. 
\end{proof}
\cref{thm:iid-binary-phase-search} gives an alternative proof of the hardness of one-query unitary synthesis \cite{STOC:LomMaWri24}. 

\subsection{Haar-random unitaries}

Let $U=(U_{k,x})_{k,x\in[N]}$ be Haar-random in $U(N)$, and define
\[
\ket{\psi_{U,k}}:= U\ket{k} = \sum_{x\in[N]} U_{k,x}\ket{x}
\qquad\text{for each }k\in[N].
\]
Thus $\{\ket{\psi_{U,k}}\}_{k\in[N]}$ is a Haar-random orthonormal basis of $\mathbb C^N$. We wish to upper bound the probability of winning the state search game on input $\ket{\psi_{U, k}}$ for uniform $k \in [K]\subset [N]$. 

We will make use of the following inequality for matrix Gaussian series. 

\begin{lemma}[Matrix Gaussian series, {\cite[Theorem~4.1.1 and Equation~(4.1.7)]{tropp2015book}}]
\label{lem:tropp-gaussian-complex}
Let $\{\gamma_j\}_j$ be independent standard complex Gaussian random variables, and let $\{B_j\}_j$ be fixed complex matrices of dimension $d_1\times d_2$. Define
\[
Z:=\sum_j \gamma_j B_j,
\qquad
v(Z):=\max\!\left\{
\left\|\sum_j B_j B_j^\dagger\right\|,
\left\|\sum_j B_j^\dagger B_j\right\|
\right\}.
\]
Then
\[
\mathbb E\norm{Z}^2\le 2\,v(Z)\bigl(1+\log(d_1+d_2)\bigr).
\]
\end{lemma}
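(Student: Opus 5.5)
The plan is to reduce the complex statement to the real-Gaussian case of \cite[Theorem~4.1.1]{tropp2015book} and then integrate the tail bound. I take a standard complex Gaussian to have $\mathbb E|\gamma_j|^2=1$. Then I can write $\gamma_j=(g_j+i h_j)/\sqrt2$, where $\{g_j,h_j\}_j$ are independent real standard Gaussians. This gives
\[
Z=\sum_j g_j\,\tfrac{1}{\sqrt2}B_j+\sum_j h_j\,\tfrac{i}{\sqrt2}B_j ,
\]
which is a real matrix Gaussian series with $2\cdot\#\{j\}$ fixed coefficient matrices. Its variance parameter is unchanged:
\[
\tfrac12 B_jB_j^\dagger+\tfrac12 (iB_j)(iB_j)^\dagger=B_jB_j^\dagger,
\qquad
\tfrac12 B_j^\dagger B_j+\tfrac12 (iB_j)^\dagger(iB_j)=B_j^\dagger B_j .
\]
Summing over $j$, the new series has exactly the same $v(Z)$.

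Next I will establish the tail bound $\Pr[\norm{Z}\ge t]\le (d_1+d_2)\exp(-t^2/2v(Z))$. I pass to the Hermitian dilation $\widehat Z=\sum_j g_j \widehat C_j+\sum_j h_j\widehat C'_j$, which has dimension $d_1+d_2$ and satisfies $\norm{\widehat Z}=\lambda_{\max}(\widehat Z)$. Its squared coefficients sum to $\mathrm{diag}\bigl(\sum B_jB_j^\dagger,\ \sum B_j^\dagger B_j\bigr)$, whose norm is $v(Z)$. The matrix Laplace transform method then applies. Lieb's concavity, via Tropp's master bound, gives
\[
\mathbb E\,\mathrm{tr}\, e^{\theta\widehat Z}\le \mathrm{tr}\exp\!\Bigl(\tfrac{\theta^2}{2}\textstyle\sum \widehat C^2\Bigr)\le (d_1+d_2)\,e^{\theta^2 v/2},
\]
using the real Gaussian mgf $\mathbb E e^{\theta g C}=e^{\theta^2C^2/2}$. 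Optimizing over $\theta$ yields the tail. This is exactly Theorem~4.1.1, so I can simply cite it.

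The final step is the second moment. I write $\mathbb E\norm{Z}^2=\int_0^\infty \Pr[\norm{Z}^2\ge s]\,ds$ and split the integral at $s_0=2v\log(d_1+d_2)$, bounding the probability by $1$ below $s_0$. Above $s_0$ the integral is at most $(d_1+d_2)\int_{s_0}^\infty e^{-s/2v}\,ds=2v$. The total is $2v(1+\log(d_1+d_2))$, matching \cite[Equation~(4.1.7)]{tropp2015book}.

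The only step I expect to need care is the normalization convention. If instead $\mathbb E|\gamma|^2=2$ were intended, the constant would change, so I will state the convention explicitly. I also need to check that the complex-to-real splitting preserves both one-sided variance terms, which I verified above.
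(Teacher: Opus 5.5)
Your proposal is correct and follows the paper's proof. You write $\gamma_j=(g_j+ih_j)/\sqrt{2}$, which turns $Z$ into a real Gaussian series with coefficients $\{B_j/\sqrt{2},\, iB_j/\sqrt{2}\}_j$ whose variance parameter is exactly $v(Z)$, and then apply Tropp's Theorem~4.1.1 and Equation~(4.1.7). Your extra derivations, the tail bound via the Hermitian dilation and the integration giving $2v\log(d_1+d_2)+2v$, only re-derive those cited results, and both computations check out.
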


\begin{proof}
Write $\gamma_j=(g_j+ih_j)/\sqrt 2$, where $\{g_j\}_j$ and $\{h_j\}_j$ are independent families of real standard normal random variables. Then
\[
Z=
\sum_j g_j\,\frac{B_j}{\sqrt 2}
+
\sum_j h_j\,\frac{iB_j}{\sqrt 2},
\]
so $Z$ is a real Gaussian matrix series with coefficient family $\{B_j/\sqrt 2,\,iB_j/\sqrt 2\}_j$. The corresponding variance parameter is exactly $v(Z)$, because
\[
\sum_j \frac{B_j B_j^\dagger}{2}+
\sum_j \frac{(iB_j)(iB_j)^\dagger}{2}
=
\sum_j B_j B_j^\dagger
\]
and similarly on the right. The claimed bound therefore follows from \cite[Theorem~4.1.1 and Equation~(4.1.7)]{tropp2015book}.
\end{proof}

Next, by a reduction to the case of independent Gaussians --- analogous to comparison-based arguments of Tropp \cite{tropp-comparison} for Haar-random \emph{real} orthogonal matrices --- we analyze random matrices with coefficients coming from Haar-random unitaries (rather than fully independent coefficients). 

\begin{proposition}[Haar-unitary matrix series]
\label{prop:haar-unitary-matrix-series}
Let $1\le K\le N$, let $U\in U(N)$ be Haar-random, and let $\{B_{k,x}\}_{k\in[K],\,x\in[N]}$ be fixed complex matrices of dimension $d_1\times d_2$. Define
\[
Z_{U,K}:=\sum_{k\in[K],\,x\in[N]} \sqrt N\,U_{k,x} B_{k,x},
\qquad
v_K:=\max\!\left\{
\left\|\sum_{k\in[K],\,x\in[N]} B_{k,x}B_{k,x}^\dagger\right\|,
\left\|\sum_{k\in[K],\,x\in[N]} B_{k,x}^\dagger B_{k,x}\right\|
\right\}.
\]
Then
\[
\mathbb E_U\norm{Z_{U,K}}^2
\le 4\,v_K\bigl(1+\log(d_1+d_2)\bigr).
\]
\end{proposition}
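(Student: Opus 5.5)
The plan is to compare $Z_{U,K}$ with the independent complex Gaussian series of \cref{lem:tropp-gaussian-complex}, using a Gaussian representation of Haar-random unitaries. Let $G$ be an $N\times N$ matrix of i.i.d.\ standard complex Gaussians. Polar (or Gram--Schmidt) decomposition gives $G = U\,T$ in a form where $U$ is Haar-random and independent of a positive semidefinite factor. Concretely, I would write $G = U P$ with $P=(G^\dagger G)^{1/2}$, or use the row form $G = P' U$. The invariance $G\mapsto WG$ for fixed unitary $W$ shows that $U$ is Haar and independent of $P$. The orientation has to be chosen so that the factor multiplies the index $x$, or the index $k$, in the way needed below. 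Fix the indexing so that row $k$ of $U$ is $\langle k|U$, giving $G_{k,x}=\sum_{y}U_{k,y}P_{y,x}$.

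The key step is a Jensen/conditional-expectation argument. Define the linear map $L(A):=\sum_{k\in[K],x\in[N]} A_{k,x}B_{k,x}$, which maps $N\times N$ coefficient arrays to matrices. Then $Z_{U,K}=\sqrt N\,L(U)$ and $Z_G:=L(G)$ is a complex Gaussian series. Since $U$ is independent of $P$, we have $\mathbb E[G\mid U]=U\,\mathbb E[P]$. By unitary invariance of the distribution of $G^\dagger G$ under conjugation, $\mathbb E[P]=c\,\Id$ with $c=\frac1N\mathbb E\operatorname{Tr}(G^\dagger G)^{1/2}=\frac1N\mathbb E\sum_i s_i(G)$. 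So $\mathbb E[Z_G\mid U]=c\,L(U)$. The map $A\mapsto\|L(A)\|^2$ is convex, so Jensen gives
\[
c^2\,\mathbb E_U\|L(U)\|^2\le \mathbb E\|Z_G\|^2\le 2v_K\bigl(1+\log(d_1+d_2)\bigr),
\]
where the last step is \cref{lem:tropp-gaussian-complex}. Only the entries with $k\in[K]$ appear, and they are independent standard complex Gaussians, so the variance parameter is exactly $v_K$.

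It remains to show $c^2\ge N/2$, which gives $\mathbb E\|Z_{U,K}\|^2=N\,\mathbb E\|L(U)\|^2\le \frac{N}{c^2}\cdot 2v_K(1+\log(d_1+d_2))\le 4v_K(1+\log(d_1+d_2))$. This is the main quantitative obstacle: lower bounding the expected average singular value of an $N\times N$ complex Ginibre matrix. The normalization is $\mathbb E|G_{ij}|^2=1$, so $\mathbb E\sum_i s_i^2=N^2$.

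I would first try the elementary bound $\operatorname{Tr}P=\sum_i s_i\ge \|G\|_F^2/\|G\|$. Combined with Cauchy--Schwarz in the form $\mathbb E[X^2/Y]\ge(\mathbb E X)^2/\mathbb E Y$ with $X=\|G\|_F$, $Y=\|G\|$, this gives $\mathbb E\operatorname{Tr}P\ge(\mathbb E\|G\|_F)^2/\mathbb E\|G\|$. Then one needs $\mathbb E\|G\|_F\approx N$ and $\mathbb E\|G\|\le 2\sqrt N$. The latter is a standard Gordon/Slepian-type bound, adjusted for the complex normalization. This yields $c\gtrsim \sqrt N/2$, so $c^2\gtrsim N/4$, and the constant would come out slightly worse than $4$.

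To get exactly $c^2\ge N/2$, I would instead use the quarter-circle law heuristic, $c\approx\frac{8}{3\pi}\sqrt N$, which gives $c^2\approx 0.72N$. Making this rigorous would need a non-asymptotic estimate. One option is a direct diagonal bound: $\operatorname{Tr}P\ge\sum_i |(U^\dagger G)_{ii}|$ for the right choice of unitary, though that route is circular. The alternative is to accept a slightly larger absolute constant if the sharp constant proves elusive.
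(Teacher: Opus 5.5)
Your reduction is the paper's reduction. You take a Gaussian matrix $G=UP$ with $U$ Haar and independent of $P=(G^\dagger G)^{1/2}$, note that $\mathbb E[P]=c\,\Id$ by conjugation invariance, apply Jensen to the convex functional $A\mapsto\norm{L(A)}^2$, and finish with \cref{lem:tropp-gaussian-complex}. Your restriction to $k\in[K]$ is equivalent to the paper's zero-extension of the $B_{k,x}$.

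The gap is in the last numerical step. You need $c^2\ge N/2$. In the paper's normalization $G_{k,x}\sim\mathcal N_{\mathbb C}(0,1/N)$ this is $a_N\ge 1/\sqrt2$, which the paper cites rather than proves. Neither of your routes gives it.
\begin{itemize}
\item The bound $\Tr P\ge\norm{G}_F^2/\norm{G}$ interpolates between the $\ell_2$ and $\ell_\infty$ norms of the singular values. It needs $\mathbb E\norm{G}\le 2\sqrt N$ for complex Ginibre, which requires its own justification; splitting into real and imaginary parts only gives $2\sqrt{2N}$. Even granting that bound, you get $c^2\gtrsim N/4$, i.e.\ a constant near $8$ rather than $4$.
\item The quarter-circle value $8/(3\pi)$ is only asymptotic.
\end{itemize}
A worse absolute constant would not affect \cref{thm:haar-basis-search} up to constants. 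It does not, however, prove the proposition as stated.

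The missing estimate has a short exact proof if you interpolate using the fourth moment instead of the operator norm. Pointwise, H\"older with exponents $3/2$ and $3$ gives $\sum_i s_i^2\le(\sum_i s_i)^{2/3}(\sum_i s_i^4)^{1/3}$. Applying H\"older again in expectation gives
\[
N^2=\mathbb E\Tr P^2\le\bigl(\mathbb E\Tr P\bigr)^{2/3}\bigl(\mathbb E\Tr P^4\bigr)^{1/3}.
\]
By Wick's formula,
\[
\mathbb E\Tr(G^\dagger G)^2=\sum_{i,j,k,l}\mathbb E\bigl[\overline{G_{ij}}G_{ik}\overline{G_{lk}}G_{lj}\bigr]=2N^3.
\]
Only two pairings contribute: one forces $j=k$ and the other forces $i=l$, and each gives $N^3$. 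Hence $\mathbb E\Tr P\ge N^3/\sqrt{2N^3}=N^{3/2}/\sqrt2$, so $c\ge\sqrt{N/2}$ and $c^2\ge N/2$ exactly. This closes your argument with the constant $4$, and it gives a self-contained proof of the bound the paper imports from the literature.
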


\begin{proof}
Extend the family $(B_{k,x})_{k\in[K],\,x\in[N]}$ to indices $k\in[N]$ by setting $B_{k,x}:=0$ for $k\in[N]\setminus[K]$. Then
\[
Z_{U,K}=\sum_{k,x\in[N]} \sqrt N\,U_{k,x} B_{k,x},
\qquad
v_K=\max\!\left\{
\left\|\sum_{k,x\in[N]} B_{k,x}B_{k,x}^\dagger\right\|,
\left\|\sum_{k,x\in[N]} B_{k,x}^\dagger B_{k,x}\right\|
\right\}.
\]
Thus, it suffices to prove the bound in the special case $K=N$.

Let $G=(G_{k,x})_{k,x\in[N]}$ be a random matrix with independent complex Gaussian entries $G_{k,x}\sim \mathcal N_{\mathbb C}(0,1/N)$. Since $G$ is invertible with probability $1$, we write its polar decomposition
\[
G=U P,
\qquad
U\in U(N),
\qquad
P=(G^\dagger G)^{1/2}\succeq 0.
\]
Since $G$ is (left-) unitary invariant, it holds that $U$ is Haar-random and independent of $P$. Moreover, since $G = G\cdot W$ is invariant under \emph{right} multiplication of an arbitrary fixed unitary $W$, we have that $P = (G^\dagger G)^{1/2} = W^\dagger P W$ is conjugation-invariant. Thus, $\mathbb E[P]$ commutes with every unitary, and so
\[
\mathbb E[P]=a_N \cdot \Id,
\]
for
\[
a_N=\frac{1}{N} \cdot \mathbb E\Tr(P).
\]
Since $U$ and $P$ are independent, this lets us calculate the conditional expectation 
\[
\mathbb E[G\mid U]=U \cdot \mathbb E[P]=a_N \cdot U.
\]
Next, define the linear map
\[
T(G):=\sum_{k,x\in[N]} G_{k,x} B_{k,x}.
\]
The function
\[
f(G):=\norm{\sqrt N\,T(G)}^2
\]
is convex and satisfies $f(\lambda \cdot A)=\lambda^2 f(A)$ for every scalar $\lambda \ge 0$. Jensen's inequality therefore yields
\[
a_N^2 \cdot f(U) = f(a_N U)=f\bigl(\mathbb E[G\mid U]\bigr)
\le \mathbb E\bigl[f(G)\mid U\bigr].
\]
Taking expectations, this implies that
\[
a_N^2\,\mathbb E_U\norm{Z_{U,K}}^2
\le
\mathbb E\norm{\sum_{k,x\in[N]} \sqrt N\,G_{k,x} B_{k,x}}^2.
\]
Because $\sqrt N\,G_{k,x}$ are independent standard complex Gaussian variables, \cref{lem:tropp-gaussian-complex} gives
\[
\mathbb E\norm{\sum_{k,x\in[N]} \sqrt N\,G_{k,x} B_{k,x}}^2
\le 2\,v_K\bigl(1+\log(d_1+d_2)\bigr).
\]
Thus, all that remains is to lower bound $a_N$. Fortunately, it is known (see, e.g., \cite{unitary-vs-gaussian}) that $a_N \geq 1/\sqrt{2}$ for all $N$, which allows us to conclude that

\[
\mathbb E_U\norm{Z_{U,K}}^2
\le 4\,v_K\bigl(1+\log(d_1+d_2)\bigr),
\]
as claimed.
\end{proof}

\begin{theorem}[Search bound for Haar-random unitaries]
\label{thm:haar-basis-search}
For the oracle state search game associated with the family $\{\ket{\psi_{U,k}} = U\ket{k}\}_{k\in[K]}$ above, every one-query adversary with workspace dimension $M$ satisfies
\[
\mathbb E_U\bigl[\Win(\cA\mid U)\bigr]
\le \frac{4\bigl(1+\log(2M)\bigr)}{K}.
\]
\end{theorem}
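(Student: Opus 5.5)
The plan mirrors the proof of \cref{thm:iid-binary-phase-search}, with the Rademacher series bound replaced by \cref{prop:haar-unitary-matrix-series}. First fix a one-query adversary $\cA=(V,\{\Pi_k\}_{k\in[K]})$ and check the hypotheses of \cref{lem:generic-spectral-reduction,lem:subspace-specialization}. For each fixed $k$, the state $U\ket{k}$ is Haar-random when $U$ is Haar-random, so
\[
\mathbb E_U\bigl[U\ketbra{k}U^\dagger\bigr]=\frac1N\cdot \Id,
\]
which does not depend on $k$. This lets us take $S=\mathbb C^N$ and $L=N$. In the normalization of \cref{def:state-search-game}, the challenge state is $U\ket{k}=\sum_x U_{k,x}\ket{x}$ (with the paper's indexing convention), so $R(k,x)=\sqrt N\,U_{k,x}$. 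The lemmas then give
\[
\mathbb E_U[\Win(\cA\mid U)]\le \mathbb E_U\norm{M_U}^2,\qquad M_U=\sum_{k\in[K],x\in[N]}\sqrt N\,U_{k,x}B_{k,x},
\]
where $\sum_{k,x}B_{k,x}B_{k,x}^\dagger\preceq \frac1K\Id$ and $\sum_{k,x}B_{k,x}^\dagger B_{k,x}\preceq\frac1K\Id$, by \cref{eq:generic-BBd,eq:generic-BdB} with $L/N=1$.

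Second, observe that $M_U$ is exactly the random matrix $Z_{U,K}$ of \cref{prop:haar-unitary-matrix-series}, with $M\times M$ coefficient matrices, so $d_1=d_2=M$. Its variance parameter satisfies $v_K\le 1/K$. The proposition then gives
\[
\mathbb E_U\norm{M_U}^2\le \frac{4(1+\log(2M))}{K},
\]
which is the claimed bound.

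The only point needing care is the application of \cref{lem:weight-vector-decomposition}: the weight vector must be computed from the marginal distribution of $U\ket{k}$, and this marginal is the same for every $k$, which is exactly what was verified in the first step. The real difficulty, namely that the coefficients $U_{k,x}$ are dependent across $k$, has already been handled by the Gaussian comparison in \cref{prop:haar-unitary-matrix-series}. So no further obstacle is expected beyond bookkeeping the indexing convention for $U_{k,x}$ (row or column), and that choice does not affect the argument because the Haar measure is invariant under transposition.
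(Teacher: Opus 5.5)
Your proposal is correct and matches the paper's proof essentially step for step: you verify the maximally mixed marginal $\frac1N\Id$, apply \cref{lem:generic-spectral-reduction,lem:subspace-specialization} with $S=\mathbb C^N$ and $L=N$, and then invoke \cref{prop:haar-unitary-matrix-series} with $v_K\le 1/K$ and $d_1=d_2=M$. Your remark that the row/column convention for $U_{k,x}$ is harmless because Haar measure is invariant under transposition is a small point the paper leaves implicit.
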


\begin{proof}
Fix a one-query adversary $\cA=(V,\{\Pi_k\}_{k\in[N]})$. For each $k\in[N]$, the random state $\ket{\psi_{U,k}}$ is Haar-random in $\mathbb C^N$, and therefore
\[
\mathbb E_U\bigl[\ketbra{\psi_{U,k}}\bigr]=\frac{1}{N} \cdot \Id.
\]
By \cref{lem:generic-spectral-reduction,lem:subspace-specialization},
\[
\mathbb E_U\bigl[\Win(\cA\mid U)\bigr]
\le \mathbb E_U\norm{M_U}^2,
\qquad
M_U=\sum_{k \in [K], x\in[N]} \sqrt N\,U_{k,x} B_{k,x},
\]
where the matrices $\{B_{k,x}\}_{k,x}$ satisfy
\[
\sum_{k,x} B_{k,x}B_{k,x}^\dagger \preceq \frac{1}{K}\cdot \Id,
\qquad
\sum_{k,x} B_{k,x}^\dagger B_{k,x}\preceq \frac{1}{K} \cdot \Id.
\]
Hence the variance parameter in \cref{prop:haar-unitary-matrix-series} is at most $1/K$. Applying that proposition with $d_1=d_2=M$ yields
\[
\mathbb E_U\norm{M_U}^2
\le \frac{4\bigl(1+\log(2M)\bigr)}{K},
\]
which completes the proof.
\end{proof}
\section{Unitary search game is hardest on Haar random unitaries}
\label{app:haar-hardest}
This section describes a reduction from ($t$-query) oracle state search game hardness for Haar random unitaries to hardness for \emph{any} distribution over unitaries, under two notions of hardness of the oracle state search game. 

\begin{theorem}\label{thm:haar-is-the-hardest}
Fix the number of oracle queries $t\in\mathbb{N}$. For unitary $U_R$ depending on random variable $R$, consider the oracle state search game on the state family
\[
\{\ket{\psi_{R,k}} := U_R\ket{k}\}_{k\in [K]},
\]
where the adversary is given one copy of $\ket{\psi_{R,k}}$ for a random $k\in [K]$ and is asked to output $k$ after making $t$ oracle queries. Then the Haar-random unitary family is the hardest among all distribution $\{U_R\}$.

More specifically, if any distribution on $U_R$ is $(M,\epsilon)$-hard (respectively, $(M,\epsilon,\delta)$-hard) in the $t$-query search game, then so is the Haar distribution.
\end{theorem}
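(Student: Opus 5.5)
The plan is a randomized self-reduction. Sample $U_R$ from the given (hard) distribution and an independent Haar-random $W \in U(N)$. Then $U_R W$ is Haar-distributed, by invariance of the Haar measure under right multiplication by any fixed unitary, applied conditionally on $R$. Moreover, $W^\dagger \ket{k}$ is not a basis state. So instead we use the \emph{left} invariance: $W U_R$ is Haar-distributed for each fixed $R$. The Haar family $\{W U_R\ket{k}\}_k$ is then related to the $R$-family $\{U_R\ket{k}\}_k$ by a unitary that does not depend on $k$.

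Concretely, we will show the contrapositive. Let $\cA = (U_1,\dots,U_{t},\{\Pi_k\})$ be a $t$-query adversary on workspace dimension $M$ against Haar unitaries $U'$, with oracle $f_{U'}$. We build an adversary against $U_R$ as follows. Given $R$, the reduction chooses the Haar variable by pairing $R$ with an independent $W$, so that $U' = W U_R$. Here a subtlety appears: the new adversary's circuit must be fixed and cannot depend on $W$, while the oracle can depend only on $R$.

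We handle this by an averaging/fixing argument. For each fixed $W$, the adversary $\cA_W$ first applies $W$ to the input register, which is a fixed unitary absorbed into $U_1$. It then runs $\cA$ with the oracle $f_{W U_R}$, which is a legitimate oracle depending only on $R$ once $W$ is fixed. This gives
\[
\Win(\cA_W \mid R) \ge \Win(\cA \mid W U_R).
\]
Averaging over $W$ and $R$, the pair $(W, R)$ induces exactly the Haar distribution on $W U_R$. Therefore
\[
\mathbb E_W \mathbb E_R \Win(\cA_W\mid R) \ge \mathbb E_{U' \sim \mathrm{Haar}} \Win(\cA \mid U').
\]
Some fixed $W^\star$ attains at least the average, and $\cA_{W^\star}$ is a valid $t$-query adversary of the same workspace dimension. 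Hence if the $R$-family is $(M,\epsilon)$-hard, then the Haar family is too.

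For the $(M,\epsilon,\delta)$ notion, the same construction gives
\[
\Pr_{U'}[\Win(\cA\mid U') > \epsilon] \le \mathbb E_W \Pr_R[\Win(\cA_W\mid R) > \epsilon] \le \delta,
\]
since for each fixed $W$ the adversary $\cA_W$ is one to which the hypothesis applies. Taking expectations of the indicator over $W$ gives the bound.

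The main obstacle is checking that the hypothesis quantifies over all adversaries with workspace $M$, so that absorbing $W$ into the pre-query isometry keeps us inside that class. It is also essential to use left multiplication: the distribution of $W U_R$ must be exactly Haar conditionally on each $R$, not merely on average.
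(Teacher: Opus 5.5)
Your proof is correct and is essentially the paper's argument. Both write the Haar unitary as $WU_R$ using invariance of Haar measure, absorb the fixed $W$ into the pre-query isometry (so the query count and workspace dimension $M$ are unchanged), and average over $W$, with the same conditioning step for the $(M,\epsilon,\delta)$ version; the paper phrases the averaging as $\sup_{\mathcal A}\mathbb E_W \le \mathbb E_W \sup_{\mathcal A}$ rather than fixing a $W^\star$, and your opening detour through $U_RW$ is unnecessary but harmless.
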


The proof idea is similar to the one in \cref{thm:f2hf1-t-case-search-in7}, which shows the one-query $(M,\epsilon)$-hardness of the $\{F_tH\cdots F_1H\ket{k}\}$ search game based on the corresponding $t=2$ case. Here we first use the same idea to prove the $(M,\epsilon)$-hardness of the Haar distribution in expectation in \cref{app:haar-expectation}, and then extend to the $(M,\epsilon,\delta)$-hardness in \cref{app:haar-tail}. 

\subsection{$(M,\epsilon)$-hardness}\label{app:haar-expectation}

\begin{proof} For any $t$-query oracle circuit $\mathcal{A}^{(\cdot)}$ for an oracle state search game, it can be specified by a set of unitaries $\{U_i\}_{i\in [t]}$ between queries and final measurement projectors $\{\Pi_{k}\}_{k\in[K]}$, such that with oracle access to $f$, $\mathcal{A}^f$ will output $k$ with probability
\[
\|\Pi_k \cdot O_f\cdot U_t\cdot \cdots\cdot O_f \cdot U_2\cdot O_f\cdot U_1\ket{\psi}\ket{0^{m-n}}\|^2.
\]
For notational convenience, we absorb the fixed ancilla initialization into the first operation, and define $V_1$ to be the isometry $U_1\ket{0^{m-n}}$ mapping $n$ qubits to $m$ qubits. Thus, for a Haar random unitary $U\in U(N)$, define $\ket{\psi_{U,k}}:=U\ket{k}$, and we can write the adversary's winning probability as
\[
\mathbb{E}_{U\in U(N)}\Win(\mathcal{A}\mid U)
=\mathbb{E}_{U\in U(N)}\max_f\left[\mathbb{E}_k\bra{\psi_{U,k}}V_1^\dagger O_f^\dagger\cdots U_t O_f^\dagger  \Pi_kO_fU_t\cdots O_fV_1\ket{\psi_{U,k}}\right].
\]
The maximum winning probability for Haar random unitary $U\in U(N)$ can then be upper bounded by the one for $U_R$ from any distribution of $R$:
\begin{align}
    &\sup_{\mathcal{A}}\left\{\mathbb{E}_{U\in U(N)}\Win(\mathcal{A}\mid U)\right\} \nonumber\\
    &\qquad=\sup_{\{U_t\},\{\Pi_k\}}
    \left\{\mathbb{E}_{U\in U(N)}\left[\max_f\mathbb{E}_k\bra{\psi_{U,k}}V_1^\dagger O_f^\dagger\cdots U_t O_f^\dagger  \Pi_kO_fU_t\cdots O_fV_1\ket{\psi_{U,k}}\right] \right\} \nonumber \\
    &\qquad=
    \sup_{\{U_t\},\{\Pi_k\}}
    \left\{\mathbb{E}_{U\in U(N),U_R}\left[\max_f\mathbb{E}_k\bra{\psi_{UU_R,k}}V_1^\dagger O_f^\dagger\cdots U_t O_f^\dagger  \Pi_kO_fU_t\cdots O_fV_1\ket{\psi_{UU_R,k}}\right] \right\} \nonumber 
    \\
    &\qquad\le
    \mathbb{E}_{U\in U(N)} \sup_{\{U_t\},\{\Pi_k\}}\left\{\mathbb{E}_{R}\left[\max_f \mathbb{E}_k\bra{\psi_{R,k}}U^\dagger V_1^\dagger O_f^\dagger\cdots U_t O_f^\dagger  \Pi_kO_fU_t\cdots O_fV_1U\ket{\psi_{R,k}}\right] \right\} \nonumber \\
    &\qquad=
    \mathbb{E}_{U\in U(N)} \sup_{\{U_t\},\{\Pi_k\}}\left\{\mathbb{E}_{R}\left[\max_f \mathbb{E}_k\bra{\psi_{R,k}}V_1^\dagger O_f^\dagger\cdots U_t O_f^\dagger  \Pi_kO_fU_t\cdots O_fV_1\ket{\psi_{R,k}}\right] \right\} \label{eq:change-adversary}\\
    &\qquad=
    \sup_{\mathcal{A}}\left\{
        \mathbb{E}_{R} \Win(\mathcal{A}\mid R)
    \right\}, \nonumber 
\end{align}
where the supremum is taken over all $\mathcal A$ acting on $m$ total qubits post-isometry. Notably, \cref{eq:change-adversary} holds because for any fixed unitary $U$ and adversary strategy $\mathcal A = (\{U_t\}, \{\Pi_k\})$ the success probability described in the previous expression is the $\{\ket{\psi_{R,k}}\}$ success probability of the adversary $\mathcal A' = ( \{U'_t\}, \{\Pi_k\})$, where $V'_1 = V_1 \cdot U$ and the rest of the strategy is unchanged. 
\end{proof}

\subsection{$(M,\epsilon,\delta)$-hardness}\label{app:haar-tail}
\begin{proof} The proof is almost identical to that of \cref{app:haar-expectation}. 
\begin{align*}
    &\sup_{\mathcal{A}}\left\{\Pr_{U\in U(N)}[\Win(\mathcal{A}\mid U)>\epsilon]\right\}\\
    &\qquad=\sup_{\{U_t\},\{\Pi_k\}}
    \left\{\Pr_{U\in U(N)}
    \left[
    \max_f \mathbb{E}_k\bra{\psi_{U,k}}V_1^\dagger O_f^\dagger\cdots U_t O_f^\dagger  \Pi_kO_fU_t\cdots O_fV_1\ket{\psi_{U,k}} >\epsilon
    \right]
    \right\}\\
    &\qquad=\sup_{\{U_t\},\{\Pi_k\}}
    \left\{\mathbb{E}_{U\in U(N)}\Pr_{R}
    \left[
    \max_f \mathbb{E}_k\bra{\psi_{UU_R,k}}V_1^\dagger O_f^\dagger\cdots U_t O_f^\dagger  \Pi_kO_fU_t\cdots O_fV_1\ket{\psi_{UU_R,k}} >\epsilon
    \right]
    \right\}\\
    &\qquad\le \mathbb{E}_{U\in U(N)}
    \sup_{\{U_t\},\{\Pi_k\}}
    \left\{\Pr_{R}
    \left[
    \max_f\mathbb{E}_k\bra{\psi_{R,k}}U^\dagger V_1^\dagger O_f^\dagger\cdots U_t O_f^\dagger  \Pi_kO_fU_t\cdots O_fV_1U\ket{\psi_{R,k}} >\epsilon
    \right]
    \right\}\\
    &\qquad= \mathbb{E}_{U\in U(N)}
    \sup_{\{U_t\},\{\Pi_k\}}
    \left\{\Pr_{R}
    \left[
    \max_f\mathbb{E}_k\bra{\psi_{R,k}} V_1^\dagger O_f^\dagger\cdots U_t O_f^\dagger  \Pi_kO_fU_t\cdots O_fV_1\ket{\psi_{R,k}} >\epsilon
    \right]
    \right\}\\
    &\qquad= \sup_{\mathcal{A}}\left\{
        \Pr_{R}[\mathrm{Win}(\mathcal{A}\mid R)]>\epsilon
    \right\}. \qedhere
\end{align*}    
\end{proof}

\section{Hardness of the oracle Choi state game}\label{sec:choi-hardness}

This appendix presents proofs of one-query hardness of the oracle Choi state game. Two of the results (\cref{thm:permutation-epr,cor:fhf-epr}) rederive theorems that we proved using the oracle state search game in the body of the paper, while \cref{thm:fuf-epr} extends \cref{thm:f2hf1h-search-main} to analogous classes of unitaries where the Hadamard unitary $H^{\otimes n}$ has been replaced by a fairly general one-qubit unitary $U_0$. 

\subsection{Theorem statements}
\begin{theorem}[Permutation family Choi bound, see \cref{thm:perm-search-main}]\label{thm:permutation-epr}

Let $P$ be the in-place permutation unitary associated with a uniformly random permutation $\pi$ on $\{0,1\}^n$, $P = P_\pi:\ket{x}\mapsto\ket{\pi(x)}$. For the oracle Choi state game for unitary family $\{P_\pi \}_{\pi}$, every one-query adversary with workspace dimension $M$ satisfies
\[
\mathbb{E}_{\pi}\bigl[\Win(\cA\mid \pi)\bigr]
=O\!\left(\frac{\log^2 M\cdot\log^2 N}{N}\right).
\]
\end{theorem}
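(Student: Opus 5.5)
The plan is to derive this from the search-game bound of \cref{thm:perm-search-main} via \cref{lem:search-hard->choi-hard}, with the fixed unitary $W := H^{\otimes n}$. Given a one-query Choi adversary $\cA=(V,U)$ with workspace dimension $M$, that lemma builds a one-query search adversary on the same workspace. The new adversary keeps the pre-query isometry $V$ and replaces $U$ by the measurement $\Pi_k := U^\dagger\bigl(H\ketbra{k}H\otimes \Id_{A'}\bigr)U$, for $k\in\{0,1\}^n$. For every fixed $\pi$, its success probability on the family $\{P_\pi H\ket{k}\}_{k\in\{0,1\}^n}$, with $k$ uniform over all $N$ keys, is at least the Choi winning probability $\Win(\cA\mid\pi)$. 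So it suffices to bound the expected success of one-query search adversaries on this full family.

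The only mismatch with \cref{thm:perm-search-main} is the key $k=0$, which that theorem excludes. Its challenge state $P_\pi H\ket{0}=\ket{+_N}$ does not depend on $\pi$, so it can be recognized trivially. I would handle it by conditioning on $k$. The event $k=0$ has probability $1/N$, and the adversary succeeds on it with probability at most $1$. Conditioned on $k\neq 0$, $k$ is uniform on $\{0,1\}^n\setminus\{0\}$. Merge the outcome $\Pi_0$ into an arbitrary other outcome, say $\Pi_{k^*}$ with $k^*\neq 0$. This gives a valid projective measurement indexed by nonzero keys, with the same $V$ and workspace dimension $M$. Merging can only increase the conditional success probability. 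Hence the conditional success is at most the winning probability of a one-query search adversary for $\{P_\pi H\ket{k}\}_{k\neq 0}$. This is exactly the setting of \cref{thm:perm-search-main} with key set $\mathcal K=\{0,1\}^n$ and $K=N$ (so there are $N-1$ nonzero keys).

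Combining, and averaging over $\pi$ with the same oracle $f_\pi$ used throughout,
\[
\mathbb{E}_\pi\bigl[\Win(\cA\mid\pi)\bigr]\le \frac1N+\frac{N-1}{N}\cdot O\!\left(\frac{\log^2 M\,\log^2 N}{N}\right)=O\!\left(\frac{\log^2 M\cdot\log^2 N}{N}\right).
\]
The only step needing care is the bookkeeping between the two adversary models. I would check three points, none of which I expect to be a real obstacle:
\begin{itemize}
\item The search adversary from \cref{lem:search-hard->choi-hard} is in the one-query normal form (isometry, one oracle query, projective measurement) required by \cref{thm:perm-search-main}.
\item It acts on the same $M$-dimensional space.
\item The oracle may depend on $\pi$ but not on $k$, which holds since the Choi adversary's oracle depends only on $\pi$.
\end{itemize}
Once these are confirmed, the $\log$ factors carry over unchanged.
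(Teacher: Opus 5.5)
Your argument is correct, but it is not the route the paper takes. The paper proves \cref{thm:permutation-epr} directly inside the Choi game. It applies the Choi-specific relaxation \cref{lem:spectral-relax-epr}, with $k$-independent weights $p_i=\frac1N\braket{v_i}{v_i}$, and writes $M_\pi=\sum_k A_{k,\pi(k)}$ as a combinatorial matrix sum over the standard-basis images $\ket{\pi(k)}$. Since here $\sum_{k,x}A_{k,x}\neq 0$, it recenters the Hermitian dilations by their average. It then applies \cref{thm:matrix-bernstein-permutation} with $r\le 2/\sqrt N$ and $\sigma^2\le 2/N$, and absorbs the resulting shift, of norm at most $1/\sqrt N$, into the threshold.

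You instead obtain the Choi bound as a corollary of the search bound. Your application of \cref{lem:search-hard->choi-hard} with $W=H^{\otimes n}$ gives, for each fixed $\pi$, a search adversary on the same $M$-dimensional space. Conditioning on $k\neq 0$ and merging the $\Pi_0$ outcome does the job of the paper's recentering, at an additive cost of only $1/N$.

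Your route is shorter and modular. Because your inequality holds pointwise in $\pi$, it also transfers the $(M,\epsilon,\delta)$ tail bound and hence the classical-advice consequence. The paper's direct route does not need a basis change $W$ that makes the search game hard; with $W=\Id$ the search game is trivially easy. It also exercises the Choi-game relaxation that the paper needs anyway for \cref{thm:fuf-epr}, which it proves only via the Choi game, for general $U_0$.
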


\begin{theorem}[$F_2U_0F_1$ Choi bound]\label{thm:fuf-epr}

Let $f_1,f_2:\{0,1\}^n\to\{0,1\}$ be uniformly random Boolean functions, and let $F_j=\sum_x (-1)^{f_j(x)}\ketbra{x}$ for $j\in\{1,2\}$. $U_0:\mathbb{C}^N\to \mathbb{C}^N$ is a fixed unitary that is independent of $F_1,F_2$. For the oracle Choi state game for unitary family $\{U_{f_1,f_2}:=F_2U_0F_1\}_{f_1,f_2}$,  every one-query adversary with workspace dimension $M$ satisfies
\[
\mathbb{E}_{f_1,f_2}\bigl[\Win(\cA\mid f_1,f_2)\bigr]
=O\!\left(\frac{b^2\cdot \log M \cdot \log (MN)}{N}\right),
\]
for $b:=\sqrt{N}\cdot\max_{k,x\in[N]}|\left<x|U_0|k\right>|$.
\end{theorem}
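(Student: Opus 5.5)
The plan is to reduce to the oracle state search game via \cref{lem:search-hard->choi-hard}, with the interleaving unitary $W:=H^{\otimes n}$. Then I rerun the two-stage matrix Rademacher argument of \cref{sec:f2hf1h-lb}, replacing the Hadamard entries of $U_0$ by general entries $u_{x,y}:=\langle x|U_0|y\rangle$ satisfying $|u_{x,y}|\le b/\sqrt N$.

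\textbf{Setup.} By \cref{lem:search-hard->choi-hard}, it suffices to bound the expected one-query win probability of the search game on $\{F_2U_0F_1H\ket{k}\}_{k\in[N]}$, taking $K=N$. Writing the state as $\frac1{\sqrt N}\sum_x R(k,x)\ket{x}$ gives
\[
R(k,x)=(-1)^{f_2(x)}\alpha_{k,x},\qquad \alpha_{k,x}:=\sum_y u_{x,y}(-1)^{f_1(y)+y\cdot k}.
\]
Averaging over $f_2$ kills the off-diagonal entries of $\mathbb E\ketbra{\psi_{R,k}}$. Averaging over $f_1$ makes each diagonal entry equal to $\frac1N\sum_y|u_{x,y}|^2=\frac1N$, by unitarity of $U_0$. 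So $\mathbb E_R\ketbra{\psi_{R,k}}=\Id/N$ for every $k$, and \cref{lem:generic-spectral-reduction,lem:subspace-specialization} apply with $L=N$. This gives $M_R=\sum_x(-1)^{f_2(x)}Z_x$ with $Z_x=\sum_k\alpha_{k,x}B_{k,x}$, where the $B_{k,x}$ satisfy \cref{eq:generic-BBd,eq:generic-BdB,eq:generic-orthogonal}. Conditioning on $f_1$ and applying \cref{thm:tropp-rademacher} reduces $\mathbb E\|M_R\|^2$ to $O(\log M)$ times $\mathbb E_{f_1}\|\sum_x Z_xZ_x^\dagger\|+\mathbb E_{f_1}\|\sum_x Z_x^\dagger Z_x\|$, exactly as in \cref{ineq:fhf-master-decomposition}.

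\textbf{The $Z_x^\dagger Z_x$ term.} By \cref{eq:generic-orthogonal}, this term equals $\sum_{x,k}|\alpha_{k,x}|^2B_{k,x}^\dagger B_{k,x}\preceq \max_{k,x}|\alpha_{k,x}|^2\cdot\frac1K\Id$. Each $\alpha_{k,x}$ is a Rademacher sum whose coefficients have modulus at most $b/\sqrt N$ (apply Hoeffding to the real and imaginary parts). Combining this with a union bound over the $N^2$ pairs $(k,x)$ and the crude bound $|\alpha_{k,x}|\le b\sqrt N$ on the failure event gives $\mathbb E_{f_1}\max|\alpha_{k,x}|^2=O(b^2\log N)$.

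\textbf{The $Z_xZ_x^\dagger$ term (main obstacle).} Package $N_{f_1}=\sum_x Z_x\otimes\bra x=\sum_y(-1)^{f_1(y)}Q_y$ with
\[
Q_y=\sum_x u_{x,y}\,C_{x,y}\otimes\bra x,\qquad C_{x,y}:=\sum_k(-1)^{y\cdot k}B_{k,x},
\]
and apply \cref{thm:tropp-rademacher} a second time, at dimension $MN$. This is where the Hadamard calculation no longer goes through verbatim: the weights $|u_{x,y}|^2$ now depend on $y$, so the character sum over $y$ no longer cancels the cross terms $k_1\ne k_2$ in $\sum_y Q_yQ_y^\dagger$. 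I plan to handle the two variance terms separately.

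For $\sum_y Q_yQ_y^\dagger=\sum_{x,y}|u_{x,y}|^2C_{x,y}C_{x,y}^\dagger$, I would use positivity to bound it by $\frac{b^2}{N}\sum_{x,y}C_{x,y}C_{x,y}^\dagger$. Summing the characters over all $y$ then restores the cancellation, giving $b^2\sum_{k,x}B_{k,x}B_{k,x}^\dagger\preceq b^2/K$. This is the only place $b$ enters.

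For $\sum_y Q_y^\dagger Q_y$, I would instead use \cref{eq:generic-orthogonal}: the signs in $C_{x_1,y}^\dagger C_{x_2,y}$ cancel, leaving $\sum_kB_{k,x_1}^\dagger B_{k,x_2}$. Then $\sum_y\bar u_{x_1,y}u_{x_2,y}=\delta_{x_1x_2}$ by unitarity of $U_0$, so the sum collapses to $\sum_x\sum_kB_{k,x}^\dagger B_{k,x}\otimes\ketbra{x}$, which has norm at most $1/K$.

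\textbf{Conclusion.} Together these give $\mathbb E_{f_1}\|N_{f_1}\|^2=O(b^2\log(MN)/K)$. Combining with the $Z_x^\dagger Z_x$ term and using $b\ge1$ (since the $N$ entries of each column of $U_0$ have squared moduli summing to $1$, some entry has modulus at least $1/\sqrt N$), we get $\mathbb E\|M_R\|^2=O(b^2\log M\log(MN)/K)$. Setting $K=N$ and invoking \cref{lem:search-hard->choi-hard} yields the stated Choi bound.
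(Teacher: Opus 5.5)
Your proof is correct, but it takes a genuinely different route from the paper. The paper never passes to the search game. It applies its Choi-specific relaxation (\cref{lem:spectral-relax-epr}) directly to the states $U_R\ket{k}=(-1)^{f_1(k)}F_2U_0\ket{k}$. Because $F_1$ is diagonal in the key basis, $f_1$ contributes only one sign per key, so $M_R$ is a matrix Rademacher series in $f_1$ whose two variance matrices are block-diagonal in $k$. Both are bounded by $\frac1N\max_k\|D_kD_k^\dagger\|$. That quantity is a maximum over $(i,k)$ of squared \emph{scalar} Rademacher sums in $f_2$, each with variance proxy at most $N\max_x|\langle x|U_0|k\rangle|^2\le b^2$, so a scalar tail bound and a union bound over $MN$ pairs finish the argument. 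The paper thus uses a single matrix concentration step, with the roles of $f_1,f_2$ swapped relative to \cref{sec:f2hf1h-lb}. You instead twist by $W=H^{\otimes n}$ via \cref{lem:search-hard->choi-hard}, which recreates the structure of \cref{sec:f2hf1h-lb}, and then redo its two-level matrix Rademacher argument for general entries $u_{x,y}$. Your two new steps are both valid: dominating $\sum_{x,y}|u_{x,y}|^2C_{x,y}C_{x,y}^\dagger$ by $\frac{b^2}{N}\sum_{x,y}C_{x,y}C_{x,y}^\dagger$ via positivity, and collapsing $\sum_yQ_y^\dagger Q_y$ using $C_{x_1,y}^\dagger C_{x_2,y}=\sum_kB_{k,x_1}^\dagger B_{k,x_2}$ together with unitarity of $U_0$. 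Your route is longer, since it needs a second matrix inequality at dimension $MN$. In exchange it proves a stronger intermediate statement: one-query hardness of the \emph{search} game for $\{F_2U_0F_1H\ket{k}\}_k$, which implies the Choi bound and directly parallels the body's main theorem. The paper's route is shorter and stays entirely within the Choi framework. One minor point: in your $Z_x^\dagger Z_x$ term you do not need $b$ at all, because $\sum_y|u_{x,y}|^2=1$ already makes each $\alpha_{k,x}$ sub-Gaussian with constant variance proxy. So $b$ genuinely enters your argument only through $\sum_yQ_yQ_y^\dagger$.
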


\begin{corollary}[$F_2HF_1$ Choi bound, see \cref{thm:f2hf1h-search-main}]\label{cor:fhf-epr}

Let $f_1,f_2:\{0,1\}^n\to\{0,1\}$ be uniformly random Boolean functions, and let $F_j=\sum_x (-1)^{f_j(x)}\ketbra{x}$ for $j\in\{1,2\}$. For the oracle Choi state game for unitary family $\{F_2HF_1\}_{f_1,f_2}$, every one-query adversary with workspace dimension $M$ satisfies
\[
\mathbb{E}_{f_1,f_2}\bigl[\Win(\cA\mid f_1,f_2)\bigr]
=O\!\left(\frac{\log M \cdot \log (MN)}{N}\right).
\]
\end{corollary}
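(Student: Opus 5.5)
The statement to prove is \cref{cor:fhf-epr}. It should follow almost immediately from \cref{thm:fuf-epr} by specializing $U_0 = H^{\otimes n}$. The plan has two steps.

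First, compute the parameter $b$ for the Hadamard transform. Every entry of $H^{\otimes n}$ satisfies
\[
\bra{x}H^{\otimes n}\ket{k} = \frac{(-1)^{x\cdot k}}{\sqrt N},
\]
so $\max_{k,x}|\bra{x}H^{\otimes n}\ket{k}| = 1/\sqrt N$ and hence $b = \sqrt N\cdot (1/\sqrt N) = 1$. The unitary $H^{\otimes n}$ is fixed and independent of $f_1,f_2$, so the hypotheses of \cref{thm:fuf-epr} hold.

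Second, substitute $b=1$ into the bound of \cref{thm:fuf-epr}. This gives
\[
\mathbb{E}_{f_1,f_2}[\Win(\cA\mid f_1,f_2)] = O\!\left(\frac{\log M\cdot\log(MN)}{N}\right),
\]
which is exactly the claim.

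The only real content therefore lies in \cref{thm:fuf-epr} itself, whose proof comes later in the appendix. If one instead wanted a self-contained route, there is an alternative. Apply \cref{lem:search-hard->choi-hard} with $W=H$ to the family $\{F_2HF_1H\ket{k}\}_{k\in[N]}$. Then invoke \cref{thm:f2hf1h-search-main} with $K=N$. This gives the same bound, because the Choi winning probability is at most the search winning probability for that state family.

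I expect no real obstacle in either route. The one point to check in the alternative route is that \cref{lem:search-hard->choi-hard} requires the search game to be over all keys $k\in\{0,1\}^n$, i.e.\ $K=N$. \cref{thm:f2hf1h-search-main} allows this, since its proof only used that the marginal of $F_2HF_1H\ket{k}$ over $(f_1,f_2)$ is maximally mixed for every $k$.
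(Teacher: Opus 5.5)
Your main route is the paper's own argument: specialize \cref{thm:fuf-epr} to $U_0=H^{\otimes n}$, note that every entry has modulus $1/\sqrt N$ so $b=1$, and read off the bound. Your alternative route, combining \cref{lem:search-hard->choi-hard} with $W=H^{\otimes n}$ and \cref{thm:f2hf1h-search-main} at $K=N$, is also sound: the constructed search adversary keeps the same workspace dimension $M$, and nothing in the search proof requires $K<N$.
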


\subsection{General setup}

In this section, we generically reduce the problem of upper bounding $\mathbb{E}_R[\Win(\cA\mid R)]$ to the calculation of the expected (squared) spectral norm of a random matrix. This is similar to the search reduction as for the oracle state search game. While the search relaxation in \Cref{lem:generic-spectral-reduction} applies $V\ket{\psi_{R,k}} = D_{V,R,k}\ket{\wt_{V,\mathbf R,k}}$ from \cref{lem:weight-vector-decomposition}, viewing $\ket{\wt_{V,\mathbf R,k}}$ as an average over $\{\ket{\psi_{R,k}}\}_R$, here in the Choi state game, the algorithm will always receive a maximally mixed state on its register $A$, so we will set $\ket{\wt_{V,\mathbf R,k}}=\ket{\wt_V}$ independent of $k, \mathbf R$. 

\begin{lemma}[Generic spectral relaxation]\label{lem:spectral-relax-epr}
Let $\cA=(V,U)$ be a one-query adversary for the oracle Choi state game, and let 
\[
D_{V,R,k}:=
\sum_{i\in[M]:\,p_i>0}
\frac{\braket{v_i}{\psi_{R,k}}}{\sqrt{p_i}}\ketbra{i},
\]
be the diagonal matrix
similar to the one from \cref{lem:weight-vector-decomposition},  while $p_i=\frac1N\cdot\braket{v_i}{v_i}$, defined from the distribution on a Haar random state. Define
\[
M_R:=\frac{1}{\sqrt{N}}\sum_{k\in\{0,1\}^n} \Pi_\EPR \left(\ket{k}\otimes  U D_{V,R,k}\right).
\]
Then, for every $R$,
\[
\Win(\cA\mid R)\le \norm{M_R}^2.
\]
Consequently,
\[
\underset{R}{\mathbb E}\bigl[\Win(\cA\mid R)\bigr]\le \underset{R}{\mathbb E}\norm{M_R}^2.
\]
\end{lemma}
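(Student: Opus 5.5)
We follow the proof of \cref{lem:generic-spectral-reduction}, with the Haar weight vector playing the role of the $k$-independent weight vector. Here $\ket{\psi_{R,k}} := U_R\ket{k}$, and
\[
\ket{\wt_V} := \sum_i \sqrt{p_i}\,\ket{i}, \qquad p_i = \tfrac1N\braket{v_i}{v_i},
\]
is the weight vector of \cref{lem:weight-vector-decomposition} for the Haar (maximally mixed) input distribution. Since $\mathbb E_{\mathrm{Haar}}\ketbra{\psi} = \Id/N$, we have $p_i = \bra{v_i}(\Id/N)\ket{v_i}$, so $\ket{\wt_V}$ is a unit vector. If $p_i = 0$ then $\ket{v_i} = 0$, hence $\braket{v_i}{\psi_{R,k}} = 0$ for every $R,k$. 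The identity
\[
V\ket{\psi_{R,k}} = D_{V,R,k}\ket{\wt_V}
\]
therefore holds for all $R$ and $k$, not merely almost surely. This is the key simplification: the adversary's marginal on $A$ is always maximally mixed, so no distributional assumption on $R$ is needed.

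Next, fix $R$ and $f$ and expand the Choi game's winning amplitude. The post-adversary state is
\[
\frac{1}{\sqrt N}\sum_k \ket{k}\otimes U O_f V U_R\ket{k} = \frac{1}{\sqrt N}\sum_k \ket{k}\otimes U O_f D_{V,R,k}\ket{\wt_V}.
\]
The phase oracle $O_f$ is diagonal in the computational basis of the workspace, as is every $D_{V,R,k}$, so $O_f D_{V,R,k} = D_{V,R,k} O_f$. The state thus equals
\[
\Bigl(\frac{1}{\sqrt N}\sum_k \ket{k}\otimes U D_{V,R,k}\Bigr) O_f\ket{\wt_V},
\]
where $\ket{k}\otimes(\cdot)$ denotes the map $\mathbb C^M \to \mathbb C^N\otimes\mathbb C^M$ sending $\ket{w}\mapsto\ket{k}\otimes\ket{w}$. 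Applying $\Pi_\EPR$ on the left gives exactly $M_R\, O_f\ket{\wt_V}$.

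Since $O_f\ket{\wt_V}$ is a unit vector, the winning probability for this $f$ is at most $\norm{M_R}^2$. Maximizing over $f$ gives $\Win(\cA\mid R)\le\norm{M_R}^2$, and averaging over $R$ gives the expectation bound.

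The only step needing care is the commutation $O_f D_{V,R,k} = D_{V,R,k}O_f$, together with correctly treating $\ket{k}\otimes UD_{V,R,k}$ as a rectangular operator so that $\norm{M_R}$ is the operator norm from $\mathbb C^M$ to $\mathbb C^N\otimes\mathbb C^M$. Both are routine once written out. Unlike in the search game, there is no sum over orthogonal $\Pi_k$, so no squaring trick is needed: the winning probability is directly a single squared norm.
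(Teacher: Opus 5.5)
Your proposal is correct and follows essentially the same argument as the paper. You write $V U_R\ket{k} = D_{V,R,k}\ket{\wt_V}$ with the Haar weight vector, commute the diagonal $O_f$ past $D_{V,R,k}$, and bound the result by $\norm{M_R}$ acting on the unit vector $O_f\ket{\wt_V}$; your further observation that $p_i=0$ forces $\ket{v_i}=0$, so the decomposition holds for every $R,k$, is a useful tightening of the paper's one-line justification.
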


\begin{proof}
Fix $R$. For $p_i=\frac1N\cdot\braket{v_i}{v_i}$ modified from \cref{lem:weight-vector-decomposition}, we see that $\ket{\wt_{V,\mathbf R,k}} = \ket{\wt_{V}}$ is independent of $k$. Hence
\begin{align*}
\Win(\cA\mid R)
&=\max_f \left\|
    \Pi_{\mathrm{EPR}}\frac{1}{\sqrt{N}}
    \left(\sum_k\ket{k}\otimes U\cdot O_f\cdot V\ket{\psi_{R_k}}
    \right)
    \right\|^2
\\
&=
\max_f \left\|
    \Pi_{\mathrm{EPR}}\frac{1}{\sqrt{N}}
    \left(\sum_k\ket{k}\otimes U\cdot D_{V,R,k}\cdot O_f\ket{\wt_V}
    \right)
    \right\|^2
\\
&\le 
\left\|
\Pi_{\mathrm{EPR}}\frac{1}{\sqrt{N}}
\left(\sum_k\ket{k}\otimes U\cdot D_{V,R,k}\right)
\right\|^2
=\norm{M_R}^2.
\end{align*}
Averaging over $R$ gives the final inequality.
\end{proof}

\subsection{Permutation lower bound in the oracle Choi state game}

\paragraph{Choi formulation.} Here we consider the oracle Choi state game with unitary family $\{P\}_{\pi\in S_N}$ and the EPR state $\ket{\Psi_{\EPR}}\propto\sum_{k\in \{0,1\}^n}\ket{kk}$. We apply the general spectral reduction in \cref{lem:spectral-relax-epr}. 
Now our goal is to upper bound $\mathbb{E}_\pi \|M_\pi\|^2$, where
\[
M_\pi=
\Pi_{\mathrm{EPR}}
(I\otimes U)
\left[
\frac{1}{\sqrt{N}}\sum_k \ket{k}\otimes 
\left(
    \sum_i
    \frac{\left<v_i|\pi(k)\right>}{\sqrt{p_i}}\ketbra{i}
\right)
\right]
.
\]

\paragraph{$M_R$ as a combinatorial matrix sum.}
Define $\widehat{A}_{k,x}$ and $A_{k,x}$ as
\begin{align*}
    A_{k,x}&=
    \Pi_{\mathrm{EPR}}
    (I\otimes U)
\left[
\frac{1}{\sqrt{N}} \ket{k}\otimes 
\left(
    \sum_i
    \frac{\left<v_i|x\right>}{\sqrt{p_i}}\ketbra{i}
\right)
\right],
\\
    \widehat{A}_{k,x}&
    =\begin{pmatrix}
        0 & A_{k,x}\\
        A^\dagger_{k,x} & 0
    \end{pmatrix},
\end{align*}
and define $B_{k,x}$ as the shifted Hermitian version
\[
B_{k,x}=\widehat{A}_{k,x}-\frac{1}{N^2}\left(\sum_{k,x}\widehat{A}_{k,x}\right).
\]
Now $\|M_\pi\|^2=\left\|\sum_{k}A_{k,\pi(k)}\right\|^2
=\left\|\sum_{k}\widehat{A}_{k,\pi(k)}\right\|^2$. Now our goal is to bound $\left\|\sum_{k}\widehat{A}_{k,\pi(k)}\right\|^2$, by bounding its shifted version $\left\|\sum_{k}B_{k,\pi(k)}\right\|^2$.

\paragraph{Parameter estimates for $B_{k,x}$.} To apply  \cref{thm:matrix-bernstein-permutation} on $\sum_{k}B_{k,\pi(k)}$, we have the following parameter estimates. 
\begin{enumerate}
    \item \textbf{Sum to 0.} 
    
    As $\{B_{k,x}\}$ is the shifted version, $\sum_{k,x}B_{k,x}=\sum_{k,x}(\widehat{A}_{k,x}-\widehat{A}_{k,x})=0$.

    \item \textbf{Bounded norm.}

    The operator norm of $B_{k,x}$ can be bounded by some basic properties of $A_{k,x}$.
\begin{align*}
    \|A_{k,x}\|^2
    &=\|A_{k,x}A_{k,x}^\dagger\|
    =
    \frac{1}{N}
    \left\|
    \Pi_{\mathrm{EPR}}
    (I\otimes U)
    \left[
    \ketbra{k}
    \otimes
    \left(
        \sum_i
        \frac{
        \left<v_i|x\right>
        \left<x|v_i\right>
        }{p_i}
        \ketbra{i}
    \right)
    \right]
    (I\otimes U^\dagger)
    \Pi_{\mathrm{EPR}}
    \right\|
    \\
    &\le 
    \left\|
    \Pi_{\mathrm{EPR}}
    \cdot
    (\ketbra{k}\otimes I)
    \cdot
    \Pi_{\mathrm{EPR}}
    \right\|
    =\frac{1}{N},
    \\
    \left\|
    \sum_{k,x}A_{k,x}
    \right\|^2
    &=
    N\cdot \left\|
    \Pi_{\mathrm{EPR}}
    (I\otimes U)
    \left[
    \ket{+}_{k}\otimes 
    \sum_i
    \frac{\left<v_i|+\right>_x}{\sqrt{p_i}}
    \ketbra{i}
    \right]
    \right\|^2
    \\
    &\le 
    N^2\cdot
    \left\|
    \Pi_{\mathrm{EPR}}
    \cdot
    (\ketbra{+}_k\otimes I)
    \cdot
    \Pi_{\mathrm{EPR}}
    \right\|
    = N.
\end{align*}
With these, we can bound the operator norm of $B_{k,x}$ as follows:
\begin{align}
    \|B_{k,x}\|
    &\le 
    \|\widehat{A}_{k,x}\|
    +
    \frac{1}{N^2}\left\|\sum_{k,x}\widehat{A}_{k,x}\right\|
    =
    \|A_{k,x}\|
    +
    \frac{1}{N^2}\left\|\sum_{k,x}A_{k,x}\right\|
    \le
    \frac{1}{\sqrt{N}}+\frac{1}{N^2}\sqrt{N}
    \le \frac{2}{\sqrt{N}}.
    \label{ineq:epr-permutation-bounded-norm}
\end{align}

    \item \textbf{Bounded variance.}

    The variance of $B_{k,x}$ can be bounded by the following properties of $A_{k,x}$.
\begin{align*}
    \left\|
    \sum_{k,x}A_{k,x} A_{k,x}^\dagger
    \right\|
    &= \frac{N}{N}
    \left\|
    \Pi_{\mathrm{EPR}}
    \cdot
    (I\otimes I)
    \cdot
    \Pi_{\mathrm{EPR}}
    \right\|
    =1,
    \\
    \left\|
    \sum_{k,x}A_{k,x}^\dagger A_{k,x}
    \right\|
    &= \frac{1}{N}
    \left\|
    \sum_{k,x}\sum_{i,j}
    \ketbra{j}
    \frac{\left<x|v_j\right>}{\sqrt{p_j}}
    \bra{k}
    (I\otimes U^\dagger) 
    \Pi_{\mathrm{EPR}}
    (I\otimes U)
    \ket{k}
    \frac{\left<v_i|x\right>}{\sqrt{p_i}}
    \ketbra{i}
    \right\|
    =\frac{1}{N}.
\end{align*}
With these, we can bound the variance of $B_{k,x}$ as follows:
\begin{align}
    \sigma^2
    &=\frac{1}{N}
    \left\|
    \sum_{k,x}B_{k,x}^2
    \right\|
    =
    \frac{1}{N}\left\|
    \sum_{k,x}\widehat{A}_{k,x}^2
    -\frac{1}{N^2}\left(\sum_{k,x}\widehat{A}_{k,x}\right)^2
    \right\|
    \notag\\
    &\le 
    \frac{1}{N}\left\|
    \sum_{k,x}\widehat{A}_{k,x}^2
    \right\|
    +
    \frac{1}{N^3}\left\|
    \sum_{k,x}\widehat{A}_{k,x}
    \right\|^2
    \notag
    \\
    &=
    \frac1N\cdot \max\left\{
    \left\|
    \sum_{k,x}A_{k,x} A_{k,x}^\dagger
    \right\|,
    \left\|
    \sum_{k,x}A_{k,x}^\dagger A_{k,x}
    \right\|
    \right\}
    +
    \frac{1}{N^3}\cdot \left\|
    \sum_{k,x}A_{k,x}
    \right\|^2
    \notag
    \\
    &\le
    \frac{1}{N}+\frac{1}{N^3}\cdot N\le \frac{2}{N}.
    \label{ineq:epr-permutation-variance}
\end{align}
\end{enumerate}

\paragraph{Upper bounding the Choi state game winning probability.}
Apply \cref{thm:matrix-bernstein-permutation} to the Hermitian matrix
\[
X:=\sum_k B_{k,\pi(k)}.
\]
Using \cref{ineq:epr-permutation-bounded-norm} and \cref{ineq:epr-permutation-variance}, we obtain the
\begin{align*}
    \Pr\left[
        \lambda_{\max}(X)\ge t
    \right]
    \le 2M\cdot \exp\left(-
        \frac{t^2}{24/N+8\sqrt{2}t/\sqrt{N}}
    \right).
\end{align*}
Choose
\[
t:= C\,\frac{\log M\,\log N}{\sqrt{N}}
\qquad\text{for a sufficiently large universal constant }C.
\]
Then
\[
\Pr\!\left[\lambda_{\max}(X) \ge C\frac{\log M\,\log N}{\sqrt{N}}\right]
\le \frac{1}{N}.
\]

Combining this with $\|M_\pi\|^2=\left\|\sum_{k}A_{k,\pi(k)}\right\|^2= \left\|\sum_{k}\widehat{A}_{k,\pi(k)}\right\|^2$,
\begin{align}
    \Pr\left[
        \left\|M_\pi\right\|\ge t'
    \right]
    &=
    \Pr\left[
        \left\|\sum_{k}\widehat{A}_{k,\pi(k)}\right\|\ge t'
    \right]
    \le 
    \Pr\left[
        \left\|\sum_{k}\widehat{A}_{k,\pi(k)}\right\|\ge t'
        +
        \frac{N}{N^2}\left(\left\|\sum_{k,x}\widehat{A}_{k,x}\right\|-\sqrt{N}\right)
    \right]
    \notag
    \\
    &\le 
    \Pr\left[
        \left\|\sum_{k}\widehat{A}_{k,\pi(k)}-N\cdot \frac{1}{N^2}\sum_{k,x}\widehat{A}_{k,x}\right\|\ge t' -\frac{1}{\sqrt{N}}
    \right]
    \notag
    \\
    &=
    \Pr\left[
        \lambda_{\max}(X)\ge t' -\frac{1}{\sqrt{N}}
    \right]
    \le \frac{1}{N}\label{ineq:epr-permutation-relation-between-shifted}
\end{align}
by letting $t'=C\cdot \log M\log N/\sqrt{N} +1/\sqrt{N}$. 

From this tail bound, we can conclude that,
\begin{align*}
    \mathbb{E}_\pi \|M_\pi\|^2 
    &\le  
    \Pr[\|M_\pi\|\ge t']\cdot 1 + \Pr[\|M_\pi\|<t']\cdot (t')^2
    \\
    &\le 
    \frac{1}{N}+1\cdot C'\cdot 
    \left(
    \frac{\log^2 M \cdot \log^2 N}{N}+\frac{1}{N}
    \right)
    \\
    &=O\left(\frac{\log^2 M\cdot \log^2 N}{N}\right)
    .
\end{align*}
By \cref{lem:spectral-relax-epr}, this is also an upper bound on the average Choi state game winning probability, and thus proves \cref{thm:permutation-epr}.

\paragraph{Extending to game with classical advice.}
The same tail bound also derives the classical-advice lower bound. For a fixed advice string, the spectral reduction from  \cref{lem:spectral-relax-epr} suggests that constant winning probability requires $\norm{M_\pi}^2=\Omega(1)$. By setting $t'-\frac{1}{\sqrt{N}}=c$ for some $c=\Omega(1)$ as in \cref{ineq:epr-permutation-relation-between-shifted}, this occurs with probability bounded by
\[
\Pr\bigl[\lambda_{\max}(X)\ge c\bigr]
\le 2M\cdot \exp\!\left(
-\frac{c^2}{24/N + 8\sqrt{2}\,c/\sqrt{N}}
\right)=2M\cdot \exp(-\Omega(\sqrt{N})),
\]
which is exponentially small if $\log M\ll \sqrt{N}$. A union bound over all $2^S$ advice strings then yields the lower bound
\[
S + \log(M)=\Omega(\sqrt{N})
\]
in order to achieve constant win probability.

\subsection{$F_2U_0F_1$ lower bound in the oracle Choi state game}

\paragraph{Choi formulation.} Here we consider the oracle Choi state game with unitary family $\{F_2U_0F_1\}_{f_1,f_2:\{0,1\}^n\to\{0,1\}}$. We apply the general spectral reduction in \cref{lem:spectral-relax-epr}. Now our goal is to upper bound $\mathbb{E}_{f_1,f_2}\|M_{f_1,f_2}\|^2$, where
\begin{align*}
    \mathbb{E}_{f_1,f_2} \|M_{f_1,f_2}\|^2
    &=
    \mathbb{E}_{f_1,f_2}\left\|
    \Pi_{\mathrm{EPR}}(I\otimes U)\cdot \frac{1}{\sqrt{N}}\sum_k \ket{k}\otimes D_{f_1,f_2,k}
    \right\|^2,
    \\
    \text{for}\quad
    D_{f_1,f_2,k}
    &:= \frac{1}{\sqrt{N}}
    \sum_{x,i} \frac{\left<v_i|x\right>}{\sqrt{p_i}}\cdot R(k,x)\cdot \ketbra{i},\\
    \text{and}\quad
    R(k,x)&:= \sqrt{N}\cdot(-1)^{f_1(k)+f_2(x)}\cdot \left<x|U_0|k\right>.
\end{align*}

\paragraph{Conditioning on $f_2$, a matrix Rademacher series from $f_1$.}
For every fixed $f_2$, we can write $M_R:=M_{f_1,f_2}$ as a matrix Rademacher series in terms of $(-1)^{f_1(k)}$. Therefore, \cref{thm:tropp-rademacher} implies
\begin{align*}
    \mathbb{E}_{f_1,f_2} \|M_{f_1,f_2}\|^2 
    &\le O(\log M)\cdot \mathbb{E}_{f_2}\mathrm{Var}_{f_1}(M_{f_1,f_2})
    \\
    &\le O(\log M)
    \cdot \left(
    \mathbb{E}_{f_2} \left\|\mathbb{E}_{f_1}M_{f_1,f_2}M_{f_1,f_2}^\dagger\right\|
    +
    \mathbb{E}_{f_2} \left\|\mathbb{E}_{f_1}M_{f_1,f_2}^\dagger M_{f_1,f_2}\right\|
    \right).
\end{align*}

Note that for any $x_1,x_2,f_2$, if $k_1\neq k_2$,
\begin{align*}
    \mathbb{E}_{f_1}[R(k_1,x_1)R(k_2,x_2)]=0,
\end{align*}
and thus
\begin{align*}
    \|\mathbb{E}_{f_1} M_{f_1,f_2}^\dagger M_{f_1,f_2}\|
    &=
    \frac{1}{N^2}
    \left\|\mathbb{E}_{f_1} \sum_{k,k'}D^\dagger _{f_1,f_2,k}U^\dagger \ket{k}\bra{k'}UD_{f_1,f_2,k'}\right\|
    \notag\\
    &\le\frac{1}{N^2}\left\|\mathbb{E}_{f_1}\sum_k D_{f_1,f_2,k}^\dagger D_{f_1,f_2,k}\right\|
    \le \frac{1}{N}\max_{k\in \{0,1\}^n}\left\|
    \mathbb{E}_{f_1} D^\dagger_{f_1,f_2,k} D_{f_1,f_2,k}
    \right\|
    \\
    \|\mathbb{E}_{f_1}M_{f_1,f_2} M_{f_1,f_2}^\dagger\| 
    &= \frac{1}{N}\left\|\mathbb{E}_{f_1}
    \Pi_{\mathrm{EPR}}
    (I\otimes U)
    \left(
    \sum_{k,k'}\ket{k}\bra{k'}
    \otimes D_{f_1,f_2,k} D_{f_1,f_2,k'}^\dagger 
    \right)
    (I\otimes U^\dagger)
    \Pi_{\mathrm{EPR}}
    \right\|\notag\\
    &\le \frac{1}{N}
    \left\|\mathbb{E}_{f_1} \sum_k \ketbra{k}\otimes D_{f_1,f_2,k} D_{f_1,f_2,k}^\dagger\right\|
    =\frac{1}{N}\max_{k\in \{0,1\}^n} 
    \left\|\mathbb{E}_{f_1} D_{f_1,f_2,k} D_{f_1,f_2,k}^\dagger\right\|.
\end{align*}
For diagonal matrix $D_{f_1,f_2,k}$, $D_{f_1,f_2,k}D^\dagger_{f_1,f_2,k}$ is independent of $f_1$:
\begin{align}
    D_{f_1,f_2,k} D^\dagger_{f_1,f_2,k} &=D_{f_1,f_2,k}^\dagger D_{f_1,f_2,k}
    =\sum_i \sum_{x,x'}\frac{\left<v_i|x\right>\left<x'|v_i\right>}{p_i}\ketbra{i}\cdot (-1)^{f_2(x)+f_2(x')}\cdot \left<x|U_0|k\right>\cdot \bra{k}U_0^\dagger \ket{x'}
    \notag
    \\
    \left\|D_{f_1,f_2,k} D^\dagger_{f_1,f_2,k}\right\|&=\max_{i\in [M]}
    \left|
    \sum_x (-1)^{f_2(x)}\frac{\left<v_i|x\right>\cdot \left<x|U_0|k\right>}{\sqrt{p_i}}
    \right|^2.
    \label{eq:epr-DDd}
\end{align}

\paragraph{Another Rademacher series from $f_2$.} From \cref{eq:epr-DDd}, inside $\max$ it can be viewed as a Rademacher series from $f_2$, and therefore \cref{thm:tropp-rademacher} implies

\begin{align*}
    \Pr_{f_2}\left[
    \left|
    \sum_x (-1)^{f_2(x)}\cdot \left<x|U_0|k\right>\cdot\frac{\left<v_i|x\right>}{\sqrt{p_i}}
    \right|>t 
    \right]
    \le 2\cdot\exp\left(-\frac{t^2}{\sum_x \frac{|\left<v_i|x\right>|^2|\left<x|U_0|k\right>|^2}{p_i}}\right)\le 2\cdot \exp\left(-\frac{t^2}{N\cdot \max_{k,x}\left|\left<x|U_0|k\right>\right|^2}\right).
\end{align*}
Now the right hand side is independent of $k$. By union bound over all $i\in [M]$ and $k\in \{0,1\}^n$, 
\begin{align}
    \Pr_{f_2}\left[\left\|\mathbb{E}_{f_1} M_{f_1,f_2}^\dagger M_{f_1,f_2}\right\| >\frac{t^2}{N}\right]\le 2MN\cdot \exp\left(-\frac{t^2}{N\cdot\max_{k,x}\left|\left<x|U_0|k\right>\right|^2}\right),\label{ineq:epr-tail-mdm}\\
    \Pr_{f_2}\left[\left\|\mathbb{E}_{f_1} M_{f_1,f_2} M_{f_1,f_2}^\dagger\right\| >\frac{t^2}{N}\right]\le 2MN\cdot \exp\left(-\frac{t^2}{N\cdot \max_{k,x}\left|\left<x|U_0|k\right>\right|^2}\right).\label{ineq:epr-tail-mmd}
\end{align}
Denote $b:=\sqrt{N}\cdot \max_{k,x}\left|\left<x|U_0|k\right>\right|$. This will give
\begin{align*}
    \mathbb{E}_{f_1,f_2}\left\|M_{f_1,f_2}\right\|^2=O\left(\frac{b^2\cdot \log M\cdot \log MN}{N}\right)
\end{align*}
and proves \cref{thm:fuf-epr}. 

Specifically, for $U_0=H$, $\max_{k,x}\left|\left<x|U_0|k\right>\right|=1/\sqrt{N}$ and thus $b=1$, and this proves \cref{cor:fhf-epr}.

\paragraph{Extending to game with classical advice.} The tail bound in \cref{thm:tropp-rademacher} also implies a classical-advice lower bound. 
For a fixed advice string, the spectral reduction from \cref{lem:spectral-relax-epr} suggests that constant winning probability would require $\norm{M_R}^2=\Omega(1)$. For fixed $f_2$, by setting $t=c$ for some constant $c=\Omega(1)$, \cref{thm:tropp-rademacher} implies 
\[
    \Pr_{f_1}[\|M_R\|\ge c]\le 2M\cdot \exp\!\left(
    -\frac{c^2}{2\cdot \mathrm{Var}_{f_1}(M_R)}
    \right).
\]
Therefore, for parameter $c'$ to be defined later,
\begin{align*}
    \Pr_{f_1,f_2}\!\left[\|M_R\|\ge c\right]
    &\le
    \Pr_{f_2}\!\left[\Var_{f_1}(M_R)\ge c'\right]
    +
    \Pr_{f_1,f_2}\!\left[\Var_{f_1}(M_R)<c' \;\land\; \|M_R\|\ge c\right]\\
    &\le \Pr_{f_2}\!\left[\Var_{f_1}(M_R)\ge c'\right]
    +
    2M\cdot \exp\!\left(
    -\frac{c^2}{2\cdot c'}
    \right).
\end{align*}
From the definition of $\Var(M_R)$ and \cref{ineq:epr-tail-mdm}, \cref{ineq:epr-tail-mmd}, 
\begin{align*}
    \Pr_{f_2}[\Var_{f_1}(M_R)\ge c']
    &\le          
    \Pr_{f_2}\left[ \left\|\mathbb{E}_{f_1}M_{f_1,f_2}M_{f_1,f_2}^\dagger\right\|
    +
    \left\|\mathbb{E}_{f_1}M_{f_1,f_2}^\dagger M_{f_1,f_2}\right\|\ge c'
    \right]
    \\
    &\le  \Pr_{f_2}\left[ \left\|\mathbb{E}_{f_1}M_{f_1,f_2}M_{f_1,f_2}^\dagger\right\|
    \ge \frac{c'}{2}
    \right]
    +
    \Pr_{f_2}\left[ \left\|\mathbb{E}_{f_1}M_{f_1,f_2}^\dagger M_{f_1,f_2}\right\|
    \ge \frac{c'}{2}
    \right]
    \\
    &\le 4MN\cdot \exp\left(-\frac{c' N}{2N\cdot\max_{k,x}\left|\left<x|U_0|k\right>\right|^2}\right).
\end{align*}
Therefore, by setting $c'=1/\sqrt{N}$, we can bound the probability for $c=\Omega(1)$ by the following:
\begin{align*}
    \Pr_{f_1,f_2}\!\left[\|M_R\|\ge c\right]
    &\le \Pr_{f_2}\!\left[\Var_{f_1}(M_R)\ge c'\right]
    +
    M\cdot \exp\!\left(
    -\frac{c^2}{2\cdot c'}
    \right)
    \\
    &\le 
    4MN\cdot \exp\left(-\frac{\sqrt{N}}{2N\cdot\max_{k,x}\left|\left<x|U_0|k\right>\right|^2}\right)
    +
    2M\cdot \exp\!\left(
    -\frac{c^2\sqrt{N}}{2}
    \right)
\end{align*}
For $b:=\sqrt{N}\cdot \max_{k,x}\left|\left<x|U_0|k\right>\right|\ge 1$, a union bound over all $2^S$ advice strings will yield a lower bound 
\[
S+\log MN=\Omega(\sqrt{N}/b^2)
\]
in order to achieve constant win probability.

\end{document}